\newtheorem{theorem}{Theorem}[section]
\newtheorem{corollary}[theorem]{Corollary}
\newtheorem{lemma}[theorem]{Lemma}
\newtheorem{claim}[theorem]{Claim}
\newtheorem{fact}[theorem]{Fact}
\theoremstyle{definition}
\newtheorem{definition}[theorem]{Definition}
\newtheorem{remark}[theorem]{Remark}
\newenvironment{fminipage}%
  {\begin{Sbox}\begin{minipage}}%
  {\end{minipage}\end{Sbox}\fbox{\TheSbox}}
\def\defeq{\stackrel{\mathrm{def}}{=}}
\def\abs#1{\left|#1  \right|}
\newcommand{\wrap}[1]{\left(#1\right)}
\newcommand{\set}[1]{\left\{#1\right\}}
\newcommand{\cut}{\setminus}
\renewcommand{\subset}{\subseteq}
\newcommand{\cc}{\mathrm{CC}}
\renewcommand{\bar}[1]{\overline{#1}}
\newcommand{\calP}{\mathcal{P}}
\newcommand{\calQ}{\mathcal{Q}}
\newcommand{\calR}{\mathcal{R}}
\def\simple{\mathsf{Simple}}
\def\sparsifier{\mathsf{Sparsifier}}
\def\path{\mathsf{ConnectingPath}}
\def\superedge{\mathsf{SuperEdge}}
\def\contract{\mathsf{Contract}}
\def\insertt{\mathsf{insert}}
\def\delete{\mathsf{delete}}
\def\query{\mathsf{query}}
\def\datastructure{\mathfrak{DS}}
\def\oneleveldatastructure{\mathfrak{ODS}}
\def\multileveldatastructure{\mathfrak{MDS}}
\def\pruning{\textsc{Pruning}}
\def\expanderdecomposition{\textsc{Expander-Decomposition}}
\def\decsingleexpander{\textsc{Decremental-Single-Expander}}
\def\oneshotupdate{\textsc{One-Level-Update}}
\def\atomiccutverification{\textsc{Atomic-Cut-Verification}}
\def\enumeratecuts{\textsc{Enumerate-Simple-Cuts}}
\def\enumeratenonsimplecuts{\textsc{Enumerate-Cuts}}
\def\bipartitionsystem{\textsc{Bipartition-System}}
\def\enumerateparallelcuts{\textsc{Elimination}}
\def\typeonerepairset{\textsc{Type-One-Repair-Set}}
\def\typetworepairset{\textsc{Type-Two-Repair-Set}}
\def\typethreerepairset{\textsc{Type-Three-Repair-Set}}
\def\sequivalent{\textsc{Equivalent}}
\def\sequivalentsimplecut{\textsc{EquivalentSimpleCut}}
\def\repairsetalgo{\textsc{Repair-Set}}
\def\updatepartition{\textsc{Update-Partition}}
\def\vol{\mathbf{vol}}
\def\factor{2^{O(\log^{1/3} n \log^{2/3} \log n)}}
\def\truefactor{2^{\delta \log^{1/3} n \log^{2/3} \log n}}
\def\parametertimes{\xi}
\def\parametertimesub{\zeta}
\def\parameterlength{w}
\def\polylog{\operatorname{polylog}}
\def\poly{\operatorname{poly}}
\def\updateseq{\mathtt{UpdateSeq}}
\def\newupdateseq{\mathtt{NewUpdateSeq}}
\def\seqone{\mathsf{Seq}}
\def\endpoints{\mathsf{End}}
\newcommand{\IA}{\mathrm{IA}}
\newcommand{\eqdef}{\stackrel{\rm def}{=}}
\tikzstyle{dot}=[fill=black, draw=black, shape=circle, radius=0.1]
\tikzstyle{reddot}=[fill=red, draw=red, shape=circle, radius=0.1]
\tikzstyle{new edge style 0}=[-, draw=red, line width=0.25mm]
\tikzstyle{new edge style 1}=[-, draw=blue, line width=0.25mm]
\tikzstyle{new edge style 2}=[-, draw=black, line width=0.25mm]
\setlist[enumerate,1]{label={(\arabic*)}}
\title{Fully Dynamic $s$-$t$ Edge Connectivity in Subpolynomial Time}
\author{Wenyu Jin\\University of Illinois at Chicago\\{\tt wjin9@uic.edu} \and Xiaorui Sun\\University of Illinois at Chicago\\{\tt xiaorui@uic.edu}}
\date{} 
\begin{document}

\begin{titlepage}
\clearpage\maketitle
 \thispagestyle{empty}

\begin{abstract}
We present a deterministic fully dynamic algorithm to answer $c$-edge connectivity queries on pairs of vertices in $n^{o(1)}$ worst case update and query time for any positive integer $c = (\log n)^{o(1)}$ for a graph with $n$ vertices. Previously, only polylogarithmic and $O(\sqrt{n})$ worst case update time fully dynamic algorithms were known for answering $1$, $2$ and $3$-edge connectivity queries respectively [Henzinger and King 1995, Frederikson 1997, Galil and Italiano 1991].

Our result extends the  $c$-edge connectivity vertex sparsifier [Chalermsook et al. 2021] to a  multi-level sparsification framework. As our main technical contribution, we present a novel update algorithm for the multi-level $c$-edge connectivity vertex sparsifier with subpolynomial update time.

\end{abstract}

\newpage
\pagestyle{empty}
\tableofcontents
\end{titlepage}

\newpage

\section{Introduction}\label{sec:intro}
In the study of dynamic graphs, a 
fully dynamic algorithm for some property $\mathscr{P}$ supports the following operations on a given graph $G$:
\begin{enumerate}\itemsep -3pt
\item $\mathtt{preprocess}(G)$: initialize the algorithm with input graph $G$
\item $\insertt(u, v)$: insert edge $(u, v)$ into the graph
\item $\delete(u,v )$: delete edge $(u, v)$ from the graph
\item $\query(\mathscr{P})$: answer if property $\mathscr{P}$ holds for the graph 
\end{enumerate}
The goal is to process graph update operations (edge insertions and deletions) and query operations as efficiently as possible.
Update and query time can be  categorized into two types:
worst case, i.e.,  the upper bound on the running time of any  update or query operation, 
and amortized, i.e., the running time amortized over a sequence of operations.

In this paper, we study the fully dynamic $s$-$t$ edge connectivity problem.
Here, 
the graph receives edge insertion/deletion updates, as well as queries each containing two vertices. 
The goal is to determine if the two queried vertices are $c$-edge connected, i.e., the two queried vertices cannot be disconnected by removing fewer than $c$ edges. 
In the literature such as~\cite{galil1991fully, galil1991fully_3, eppstein1997sparsification, frederickson1997ambivalent, henzinger1997fully}, the dynamic $s$-$t$ edge connectivity problem was also referred to as the dynamic $c$-edge connectivity problem. In order to distinguish this problem from the dynamic problem of determining if every two vertices are $c$-edge connected, we refer to the problem studied in this paper the dynamic $s$-$t$ edge connectivity problem (see Section~\ref{sec:related_work}).


Fully dynamic $s$-$t$ edge connectivity has been studied for more than three decades~\cite{frederickson1985data, galil1991fully, galil1991fully_3, westbrook1992maintaining,
henzinger1995randomized,
eppstein1997sparsification, frederickson1997ambivalent, henzinger1997fully,  henzinger1997sampling, henzinger1999randomized, thorup2000near, holm2001poly, kapron2013dynamic, wulff2013faster, kejlberg2016faster, nanongkai2017dynamica, nanongkai2017dynamic,  wulff2017fully, holm2018dynamic, chuzhoy2019deterministic}. 
For $c = 1$, it is  the classic fully dynamic  connectivity problem. For a  graph with $n$ vertices and $m$ edges,
the best known algorithms have Monte Carlo randomized polylogarithmic worst case update and query time by Kapron et al.~\cite{kapron2013dynamic},
and $n^{o(1)}$ deterministic worst case update and query time by Chuzhoy et al.~\cite{chuzhoy2019deterministic}.

The study of fully dynamic $s$-$t$ 2-edge connectivity dates back to the work by Westbrook and Tarjan~\cite{westbrook1992maintaining} in a context of maintaining 2-edge connected components. 
Galil and Italiano~\cite{galil1991fully} obtained 
an  algorithm
with  $O(m^{2/3})$ update time and $O(1)$ query time. 
The update time  was improved to  $ O(\sqrt{m})$ by 
Frederickson~\cite{frederickson1997ambivalent}, and  $O(\sqrt{n})$ by Eppstein et al.~\cite{eppstein1997sparsification}. All these running times are worst case. 
To date, the best known worst case update time is still $O(\sqrt{n})$.
For the amortized case, 
polylogarithmic amortized update time algorithm were proposed in~\cite{henzinger1997fully,thorup2000near,holm2001poly,  holm2018dynamic}. 

For fully dynamic $s$-$t$ 3-edge connectivity, the best known result is an $O(n^{2/3})$ worst case update time and $O(\log n)$ query time algorithm by Galil and Italiano \cite{galil1991fully_3}, combining the sparsification technique by Eppstein et al.~\cite{eppstein1997sparsification}.

To our knowledge, 
for fully dynamic $s$-$t$ $c$-edge connectivity with  $c> 3$,
no algorithm with $o(n)$ update and query time was known, even for the amortized case.
Dinitz,  Westbrook, and Vainshtein~\cite{dinitz1994connectivity, vainshtein1995locally, dinitz1998maintaining} studied the incremental case (only edge insertions are allowed).
Molina and Sandlund~\cite{molina18} gave an offline fully dynamic algorithm for $c = 4, 5$  with $O(\sqrt{n})$ time per query.
Recently, Chalermsook et al.~\cite{chalermsook2020vertex} presented an offline fully dynamic algorithm for $s$-$t$ $c$-edge connectivity with $c^{O(c)}$ time per query.



We obtain a deterministic fully dynamic $s$-$t$ $c$-edge connectivity algorithm with subpolynomial worst case update and query time for any $c = (\log n)^{o(1)}$.

\begin{theorem}\label{thm:2-edge-connectivity}
There is a deterministic fully dynamic $s$-$t$ $c$-edge connectivity algorithm on a graph of $n$ vertices and $m$ edges
with $m^{1+o(1)}$ preprocessing time
and 
 $n^{o(1)}$ worst case update and query time for any positive integer $c = (\log n)^{o(1)}$.
\end{theorem}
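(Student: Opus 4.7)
The plan is to combine the three ingredients announced in the abstract: a hierarchical $c$-edge connectivity sparsifier, an online-batch dynamic algorithm that maintains it, and a generic reduction from online-batch amortized bounds to worst case fully dynamic bounds.

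First, I would introduce a one-level $c$-edge connectivity sparsifier that, given a graph $G$ with a terminal set $T$, produces a multigraph $H$ on $|T| \cdot n^{o(1)}$ vertices and preserves, for every pair $x, y \in T$, the minimum cut between $x$ and $y$ up to threshold $c$. A natural construction runs an expander decomposition on $G$, so that each piece is either small or a $\phi$-expander with $\phi = 1/n^{o(1)}$; inside a $\phi$-expander any cut of size less than $c$ is highly constrained, so each expander piece can be contracted into a small gadget that only remembers the terminal boundary together with the few ``bad'' low-degree vertices. Iterating this sparsifier $\Theta(\log^{1/3} m)$ times with shrinking factor $\factor$ per level gives a multi-level sparsifier whose top graph has $n^{o(1)}$ vertices, built in total preprocessing time $m^{1+o(1)}$.

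Next, I would show how to maintain this multi-level sparsifier under an online-batch stream of updates, where edge updates arrive in batches of size $B$ and one may spend $n^{o(1)}$ amortized time per update within a batch. For a batch touching edges in a small number of pieces, I would invoke a deterministic decremental expander subroutine (in the spirit of the expander pruning used in \cite{chuzhoy2019deterministic}) to carve off a pruned sub-piece, re-sparsify it, and then propagate the resulting changes upward through the hierarchy. Provided the pruning tool itself costs $n^{o(1)}$ amortized time per edge removed, the total amortized cost of maintaining all $\Theta(\log^{1/3} m)$ levels is also $n^{o(1)}$ per update, and the whole procedure is deterministic.

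Third, I would apply a general de-amortization transformation converting an online-batch dynamic algorithm with amortized update time $\tau$ into a fully dynamic algorithm with worst case update time $\tau \cdot n^{o(1)}$. The idea is to run two copies of the data structure in parallel and to spread the rebuild cost of the ``background'' copy over the next batch-length many operations, in the spirit of Frederickson's rebuild scheme and the de-amortization used in \cite{nanongkai2017dynamic, chuzhoy2019deterministic}. Combined with the batched sparsifier maintenance above, this yields $n^{o(1)}$ worst case update time; queries reduce to $c$-edge connectivity queries on the top-level sparsifier, which has only $n^{o(1)}$ vertices and can be answered in $n^{o(1)}$ time.

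The main obstacle I expect is the dynamic maintenance of the sparsifier under deletions: cuts of size less than $c$ interact non-locally, so a single deletion inside one expander piece can create new small cuts that cross many pieces and propagate to distant gadgets at higher levels. Handling this requires carefully designed families of repair sets at each level, one for each way an old small cut can split, merge, or cross the affected region under an update, so that every batched update only touches $n^{o(1)}$ vertices and edges of the sparsifier and the resulting changes remain mutually consistent across all $\Theta(\log^{1/3} m)$ levels of the hierarchy.
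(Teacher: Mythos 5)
Your high-level architecture matches the paper's (multi-level sparsifier built from expander decompositions plus contraction, online-batch maintenance, then a de-amortization), but the proposal leaves the actual technical core unproved and gets two quantitative steps wrong. The decisive missing piece is exactly what you defer as ``the main obstacle I expect'': an update procedure that, after vertices/edges are removed from an expander piece, restores a $c$-edge connectivity equivalent partition by adding only $|S|(10c)^{O(c)}$ new intercluster edges, where $S$ is the affected boundary. This is not achievable by ``repair sets'' starting from a $c$-equivalent partition at all: the paper gives an explicit counterexample (a cycle of $(2c{+}1)$-cliques joined by $c$ parallel edges) showing that any $c$-equivalent partition of the shrunken graph derived from a $c$-equivalent partition of the original graph can be forced to have $\Omega(|T_0|c)$ new intercluster edges. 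The fix is structural: one must maintain a \emph{strictly stronger} partition (a $(c^2+2c)$-equivalent partition, realized via composed $\IA$ sets) and let the connectivity parameter degrade with each update batch ($c_j = c_{j+1}(c_{j+1}+2)$). This degradation is what caps the admissible batch number at $O(\log(\log\log m/\log(4c)))$ and is the reason the theorem is restricted to $c=(\log n)^{o(1)}$; none of this appears in your plan, and without it the per-batch repair bound you need has no proof.

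Two further steps would fail as stated. First, your de-amortization (``run two copies and spread the rebuild cost over the next batch-length many operations'') is the one-level global-rebuild scheme: with preprocessing time $\rti=m^{1+o(1)}$ and amortized time $\rtu=m^{o(1)}$ it gives worst-case update time about $\rti/w + w\cdot\rtu$, which optimizes to $m^{1/2+o(1)}$, not $n^{o(1)}$. The paper needs the multi-level reduction of Lemma~\ref{lem:fully_framework} with $\xi=\omega(1)$ levels and $O(2^{\xi})$ maintained copies, yielding $O(4^{\xi}(\rti/w + w^{1/\xi}\rtu))$; this in turn requires the online-batch algorithm to tolerate $\xi=\omega(1)$ batches, which is exactly where the parameter-degradation constraint above must be paid. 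Second, answering a query ``on the top-level sparsifier'' is not sound for arbitrary query vertices: the sparsifier only preserves $c$-edge connectivity for vertices that are boundary terminals at every level, and a generic queried vertex is contracted away. The paper handles this by generating an $O(1)$-length artificial update (attaching pendant vertices via multiplicity-$(c{+}1)$ edges to the two queried vertices), pushing it through the one-level update algorithm at every level so the queried vertices become singletons of each expander decomposition, and only then running an offline $c$-edge connectivity computation on the resulting $n^{o(1)}$-size graph. You would need an analogous mechanism for your construction.
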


Instead of maintaining data structures that preserve $c$-edge connectivity between all vertices of the graph as previous works did on fully dynamic $s$-$t$ $c$-edge connectivity for $c = 2$ and $3$~\cite{galil1991fully, eppstein1997sparsification, henzinger1997fully, thorup2000near,holm2001poly,  holm2018dynamic, galil1991fully_3}, we carefully select a special subset of vertices, called terminals, and only maintain a $c$-edge connectivity vertex sparsifier on these terminals, i.e. a smaller graph that maintains $c$-edge connectivity between any two terminals.
In order to answer a $c$-edge connectivity query, 
we update the data structure by adding the queried vertices to the terminal set.

Our result builds upon $c$-edge connectivity vertex sparsifier recently proposed by Chalermsook et al. in~\cite{chalermsook2020vertex}. 
As our main technical contribution, we give a fully dynamic algorithm for the $c$-edge connectivity vertex sparsifier with subpolynomial update and query time. 
The backbone of our result is an efficient algorithm 
to update a set of edges  that capture all the minimum cuts of size at most $c$ between terminals (called cut containment set) in a decremental update setting. 
The fully dynamic algorithm is obtained by combining expander decomposition and pruning~\cite{chuzhoy2019deterministic, saranurak2019expander} to reduce general graph edge insertions and deletions to the decremental updates of cut containment set, and 
a multi-level sparsification framework. 

\vspace{-.2cm}
\subsection{Other related work}\label{sec:related_work}

\noindent \textbf{Dynamic $s$-$t$ $c$-edge connectivity v.s. dynamic $c$-edge connectivity\ \ }
A closely related problem is the fully dynamic $c$-edge connectivity problem, that is, after each update of the graph, dynamic algorithm determines if every two vertices in the graph are $c$-edge connected.
Fully dynamic $c$-edge connectivity for constant $c$ has been studied in~\cite{eppstein1997sparsification, frederickson1985data, frederickson1997ambivalent, galil1991fully_3}.  
In~\cite{10.1007/s00493-007-0045-2}, Thorup proposed the best known fully dynamic $c$-edge connectivity  algorithm for polylogarithmic $c$ with $O\wrap{\sqrt n}$ update time. 
Although these two problems are closely related, to our knowledge, the idea and algorithm presented in~\cite{10.1007/s00493-007-0045-2} cannot be used to answer  dynamic $s$-$t$ $c$-edge connectivity queries. 

\vspace{.15cm} 
\noindent \textbf{Vertex sparsifier\ \ } Sparsifiers of graphs that preserve certain properties have been extensively studied~\cite{aho1972transitive, althofer1993sparse, spielman2011spectral, andoni2019solving, bernstein2020fully, goranci2020improved}. 
Vertex sparsifiers have been applied to dynamic graph algorithms in~\cite{ goranci2017power, goranci2018dynamic, durfee2019fully, chen2020fast}. 
The thesis by Goranci~\cite{goranci2019} gives a 
comprehensive exposition of the connections between vertex sparsifiers and dynamic
graph data structures.
Recently, $c$-edge connectivity vertex sparsifiers have been proposed in~\cite{chalermsook2020vertex}, but prior to our work,  it was  unclear how to  use such $c$-edge connectivity sparsifiers to design fully dynamic $c$-edge connectivity algorithms. 

\vspace{.15cm} \noindent \textbf{Techniques used in this paper\ \ } This paper makes use of  developments on dynamic algorithms, including fully dynamic spanning tree~\cite{frederickson1985data, eppstein1997sparsification, henzinger1997fully, henzinger1997sampling, henzinger1999randomized, thorup2000near, holm2001poly, wulff2013faster,  holm2015faster,huang2017fully, nanongkai2017dynamica, nanongkai2017dynamic, wulff2017fully, chuzhoy2019deterministic}, tree contraction~\cite{henzinger1997fully,holm2001poly,nanongkai2017dynamic}, expander decomposition and pruning~\cite{kannan2004clusterings,nanongkai2017dynamica, nanongkai2017dynamic, wulff2017fully,goranci2020expander,chuzhoy2019deterministic, saranurak2019expander}, and techniques to turn amortized dynamic algorithms into worst case dynamic algorithms
\cite{henzinger2001maintaining,thorup2005worst, patrascu2007planning,demetrescu2008oracles, bernstein2009nearly,  duan2009dual, chechik2010f, khanna2010approximate, acar2011parallelism, duan2016faster, henzinger2016incremental, duan2017connectivity, henzinger2017conditional,
nanongkai2017dynamic,brand2019sensitive, simsiri2016work, acar2017brief, acar2019parallel, tseng2019batch}.

\vspace{.15cm} 
\noindent \textbf{Organization\ \ }
Section~\ref{sec:preliminary} gives notations used in the paper. 
Section~\ref{sec:review_cdl} reviews the $c$-edge connectivity sparsifier defined in~\cite{chalermsook2020vertex}.
Section~\ref{sec:overview} gives an overview of the main results of this paper.
Section~\ref{sec:update_ia_set} presents our main technical contribution, an update algorithm for the cut containment  sets in the decremental update setting.
Section~\ref{sec:sparsifier} defines $c$-edge connectivity sparsifiers and gives some useful properties. 
Section~\ref{sec:one_level_update} and Section~\ref{sec:multilevel_update} present our one-level and multi-level sparsifier update algorithms respectively. 
Section~\ref{sec:fully} gives the fully dynamic algorithm and proves Theorem~\ref{thm:2-edge-connectivity}.

\vspace{.15cm} 
\noindent \textbf{Acknowledgement\ \ }
We thank Richard Peng for helpful discussions.
We thank Monika Henzinger and  Mikkel Thorup for insightful comments for an early version of this paper. 
We also thank Thatchaphol Saranurak for clarifying several lemmas about tree contraction technique.

\section{Notations}\label{sec:preliminary}

Throughout this paper, unless specified, we assume the graphs are multigraphs. 
A multigraph $G = (V, E)$ is defined by a vertex set $V$ and an edge multiset $E$. 
The multiplicity of an edge is the number of appearances of the edge in $E$. 
$\simple(G)$ denotes the  simple graph derived from $G$, i.e., graph on the same set of vertices and edges with all edges having multiplicity 1. 
For a subset $S \subseteq V$, $G[S]$ denotes the induced sub-multigraph of $G$ on $S$ and $\partial_G(S)$ denotes the multiset of edges with one endpoint in $S$ and another endpoint in $V\setminus S$. 
For a multiset of edges $E' \subset E$,
$\endpoints(E')$ denotes the set of all endpoints of edges in $E'$ and $G\setminus E'$ denotes the graph $(V, E\cut E')$.
We also use $E' |_{G[S]}$ to denote the multiset of edges in $E'$ that are in $G[S]$, i.e., $E' |_{G[S]} = \{(x, y) \in E' : x, y \in S\}$.

A cut is a bipartition of graph vertices. For a cut $C$, the cut-set of $C$ is the multiset of
edges that have one endpoint on each side of the cut, and the size of $C$ is the cardinality of the cut-set.
For a set of vertices $T$ and a bipartition $(T', T\setminus T')$ of $T$ for some $T'\subset T$, a cut $C$ is a $(T', T\setminus T')$-cut if $C$ partitions $T$ into $T'$ and $T\setminus T'$. 
We use $mincut_G(T', T\setminus T')$ to denote the size of minimum $(T', T\setminus T')$-cut for graph $G$. If the size of the minimum cut is upper bounded by $c$, we call the bipartition $(T', T\cut T')$ of $T$  a $(T, c)$-bipartition.
A few useful properties of cuts are presented in Appendix~\ref{sec:useful_lemmas_cuts}.

For a vertex partition $\calP$ of graph $G$, 
a cluster is a vertex set in $\calP$. 
An edge of $G$ is an intercluster edge with respect to $\calP$ if the endpoints of the edge are in different clusters, otherwise, the edge is an inner edge.
We use $G[\calP]$ to denote the union of induced subgraph on each cluster, and $\partial_G(\calP)$ to denote the set of intercluster edges with respect to $\calP$. 
Boundary vertices of $G$ with respect to $\calP$ are the endpoints of intercluster edges, i.e., $\endpoints(\partial_G(\calP))$.

For a graph update sequence $\updateseq$ (consisting of vertex/edge insertions/deletions) on a dynamic graph,  $|\updateseq|$  denotes the number of updates in the update sequence.

\vspace{.2cm} \noindent \textbf{Expander and expander decomposition\ \ } For a graph $G = (V, E)$, the \emph{conductance} of $G$ 
is defined as 
\[
\min_{
\emptyset \subsetneq S \subsetneq V} \frac{|\partial_G(S)|}{\min\{\vol_G(S), \vol_G(V\cut  S)\}},\]
where 
$\vol_G(S)$ is the volume of $S$ in $G$, i.e.,  the sum of  degrees of vertices in $S$.
A graph $G$ is a $\phi$-expander if its conductance is at least $\phi$. 
A $\phi$\emph{-expander decomposition} of a graph $G = (V, E)$ is a vertex partition $\calP$ of $V$ such that for each $P \in \calP$,
the induced  subgraph $G[P]$ is a $\phi$-expander.
A $(\phi, \epsilon)$\emph{-expander decomposition} of  $G$ is a $\phi$-expander decomposition of $G$, 
and 
the number of intercluster edges 
is at most an $\epsilon$ factor of the total number of edges in $G$. 

\section{Review of \texorpdfstring{$c$}{c}-Edge Connectivity Sparsifier}\label{sec:review_cdl}
In \cite{chalermsook2020vertex}, 
Chalermsook et al. gave a construction of 
\textbf{$c$-edge connectivity vertex sparsifier}, which is a $c$-edge connectivity equivalent graph with respect to a terminal set $T$ of $|T|O(c)^{O(c)}$ vertices and edges.

\begin{definition}\label{def:c-edge-equivalent-graph-intro}
A multigraph $H = (V_H, E_H)$ is a \emph{$c$-edge connectivity equivalent graph} of multigraph $G = (V, E)$ with respect to a set of terminal vertices $T$
if $T$ is a subset of both $V$ and $V_H$ and 
the $c$-edge connectivity between any pair of vertices in $T$ are the same in $G$ and in $H$. 
\end{definition}

The construction of vertex sparsifier in \cite{chalermsook2020vertex} is based on finding a  \textbf{cut containment set} 
with respect to the given terminal set.

\begin{definition}[Definition 5.1 of \cite{chalermsook2020vertex}]\label{def:cut-containment-set}
    Given a graph $G=(V, E)$, a terminal set $T \subset V$ and a positive integer $c$,
    a set of edges $\cc \subset E$ is a $(T, c)$-\textit{cut containment set} if for every bipartition $(T', T \setminus T')$ of $T$ such that $mincut_G(T', T\setminus T') \leq c$, there is a minimum $(T', T\setminus T')$-cut whose cut-set is a subset of $\cc$. 
\end{definition}
Given a $(T, c)$-cut containment set $\cc$, 
the $c$-edge connectivity sparsifier with respect to $T$ was obtained by shrinking each connected component of $G \setminus \cc$ into a single vertex.
Hence, to construct a $c$-edge connectivity vertex sparsifier with respect to $T$ of $|T|O(c)^{O(c)}$ vertices and edges, it is sufficient to give a $(T, c)$-cut containment set of size $|T|O(c)^{O(c)}$. 
The cut-containment set is constructed 
from intermediate ``intersecting all'' sets. We denote these ``intersecting all'' sets by $\IA$.


\begin{definition}\label{def:intersect-all}\footnote{Definition~\ref{def:intersect-all} is generalized from Definition 5.2 of \cite{chalermsook2020vertex} with slight modification for convenience.}
Given a connected graph $G$ with terminals $T$ and integers $d \geq c > 0$,
a set of edges $R$ is an $\IA_G(T, d, c)$ set if
for every bipartition $(T', T\setminus T')$ satisfying $mincut_G(T', T\setminus T') \leq d$, there is a minimum $(T', T\setminus T')$-cut $C$ such that
every connected component of $G\setminus R$ contains at most $\max\{mincut_G(T', T\setminus T')-c, 0\}$ edges of the cut-set of $C$. 


For a graph $G$ with terminals $T$ such that $G$ is not necessarily connected, and two positive integer $d \geq c > 0$, 
a set of edges $R$ is an $\IA_G(T, d, c)$ set if 
$R$ is a union of $\IA_{G[Q]}(T\cap Q, d, c)$ sets for each $Q$ that corresponds to a connected component of $G$.

\end{definition}

Chalermsook et al. showed that 
any $\IA_{G}(T, d, c)$ set is a $(T, c)$-cut containment set for any $d \geq c$.
In addition, an $\IA_{G}(T, d, c)$ can be obtained by  
\begin{equation}\label{equ:recursive_IA}
\IA_G(T, d, i) \cup \IA_{G\setminus \IA_G(T,d, i)}(T\cup \endpoints(\IA(T, d, i)), d-i, c-i)
\end{equation}
for any $1 \leq i < c$. 
Hence, to construct an $\IA_G(T, d, c)$ set, 
one can first construct an $\IA_G(T, d, 1)$ set, and then construct an $\IA_{G\setminus \IA_G(T, d, 1)}(T \cup \endpoints(\IA_G(T, d, 1)), d-1, c-1)$ set recursively on each connected component of $G\setminus \IA_G(T, d, 1)$. 
Another way to view the edge set obtained is that
all edges taken in the recursion of depth at most $i$ form an $\IA_G(T, d, i)$ set for any $1 \leq i \leq c$. Together with the observation that there is always an $\IA_G(T, d, 1)$ set of size $O(|T|d)$, the $(T, c)$-cut containment set of size at most $|T|O(c)^{O(c)}$ is obtained by an $\IA_G(T, c, c)$ set.



Throughout this paper, we assume that cut containment sets are recursively constructed from $\IA$ sets. 
We also ensure that $\IA$ sets are intercluster edges of some vertex partition of the graph so that each connected component of the graph after removing a cut containment set or an $\IA$ set
is the induced subgraph on that vertex set.

\section{Overview of Our Technique}\label{sec:overview}
The main obstacle of using  $c$-edge connectivity sparsifier defined in \cite{chalermsook2020vertex} for dynamic algorithms
is that \cite{chalermsook2020vertex} did not give an update algorithm for the cut containment set with respect to the updates of the input graph.
As our main technical contribution, we give an efficient algorithm to update a cut containment set in a decremental update setting. 
The \textbf{decremental cut containment set update problem} in this paper is defined as follows: 
Let $G_0 = (V_0, E_0)$ be a graph,  $T_0 \subset V_0$ be a subset of vertices.
Suppose some vertices of $G_0$ are removed, and only the induced subgraph on $V \subset V_0$, denoted by $G = (V, E)$, is left. 
Let $S = \endpoints(\partial_{G_0}(V)) \cap V$ and $T = (T_0 \cap V) \setminus S$.
%
Given a $(T_0, c_0)$-cut containment set of $G_0$, 
the goal is to
obtain a $(S\cup T, c)$-cut containment set of $G$ by adding a set of additional edges to $\cc_0 \cap E$ for some $c \leq c_0$.
We call the newly added edge set a \textbf{cut containment repair set}. 

\begin{lemma}
\label{lem:representation_update}
If $c_0 \geq c^2 + 2c$, 
there is an algorithm to compute a cut containment repair set of size 
$(10c)^{O(c)}\cdot |S|$
in time $(1 / \phi)^{O(c^2)} \cdot |S| \cdot  n^{o(1)}$ time if both $G_0$ and $G$ are $\phi$-expanders.


\end{lemma}


Lemma~\ref{lem:representation_update} suggests that the cut containment set can be decrementally updated such that 
the running time is irrelevant to the number of terminals in the original cut containment set (i.e., $T_0$),
and is related to the number of new terminals (i.e., $S$).
This is crucial to our algorithm, because it relates the algorithm running time to the number of updates on the graph, which is required by worst case fully dynamic algorithm.

In the rest of this section, we assume Lemma~\ref{lem:representation_update} and use it to construct our fully dynamic $s$-$t$ $c$-edge connectivity algorithm. 
The existance of small cut containment repair set is given in Section~\ref{sec:overview:cut_containment}, and the proof of Lemma~\ref{lem:representation_update} is given in Section~\ref{sec:update_ia_set}. This ennables the possibliity of efficient update \\


Throughout the paper, we use the standard degree reduction technique~\cite{harary6graph} to turn the dynamic input graph of $n$ vertices and $m$ edges with arbitrary maximum degree
into a multigraph of $O(n+m)$ vertices and distinct edges~\footnote{Two edges are distinct if they connect different pairs of vertices.} such that in the new graph, every vertex has at most a constant number of different neighbors.
In addition, each update or query to the input graph can be transformed into a constant number of updates or queries to the multigraph (see Section~\ref{sec:sub_fully_dynamic}). 
Thus, we work on the multigraph for the updates and queries with the following update operations: 


\begin{itemize}\itemsep -3pt
\item $\insertt(u, v, \alpha)$: insert edge $(u, v)$ with edge multiplicity $\alpha$ into the graph
\item $\delete(u, v)$: delete edge $(u, v)$ from the graph (no matter what the edge multiplicity is)
\item $\insertt(v)$: insert a new vertex $v$  to the graph
\item $\delete(v)$: delete isolated vertex $v$ from the graph
\end{itemize}

As an intermediate step to the fully dynamic algorithm, 
we first give update algorithms for $c$-edge connectivity sparsifier in the online-batch update setting: The updates are partitioned into a few batches, and each batch may contain $O(m)$ updates. 
The batches come one-by-one, and 
the algorithm needs to maintain the data structure accordingly after receiving each batch (without seeing future update batches) as efficiently as possible. Ideally, the update time for each batch is proportional to the batch size.


In this subsection, we present an algorithm for $c$-edge connectivity vertex sparsifier 
in the online-batch setting for $O(\log \log \log n)$ update batches,
each with update time $n^{o(1)}$ times the batch size.
The fully dynamic algorithm is obtained by applying a general framework to turn an online-batch algorithm to a fully dynamic algorithm developed by Nanongkai et al.~\cite{nanongkai2017dynamic} in a blackbox way.

\paragraph{One-level $c$-edge connectivity sparsifier} 
Instead of defining a $c$-edge connectivity sparsifier on selected terminals directly, we define our sparsifier with respect to an expander decomposition of the input graph.
Formally, 
let $G = (V, E)$ be a multigraph,
$\mathcal{P}$ be an expander decomposition of $G$,
and $\cc$ be a union of $c$-cut containment sets for induced subgraphs $G[P]$ with respect to boundary vertices ($\endpoints(\partial_G(\mathcal P)) \cap P$) for all $P \in \mathcal{P}$. 
The sparsifier of $G, \calP$, and $\cc$, denoted by $\sparsifier(G, \calP, \cc)$, 
is obtained as follows: for each $P \in\calP$, 
replace the induced subgraph on $G[P]$ by 
a $c$-edge connectivity vertex sparsifier of $G[P]$ with boundary vertices as terminals\footnote{The $c$-edge connectivity sparsifier for a graph with respect to given terminals used in this paper is slightly different to  the sparsifier defined in Section~\ref{sec:review_cdl}, and is 
formally defined in Section~\ref{sec:sparsifier}. However, for simplicity, one can think of such a sparsifier as a sparsifier defined in Section~\ref{sec:review_cdl}}. All intercluster edges of $G$ with respect to $\calP$ are kept in $\sparsifier(G, \calP, \cc)$.

There are two reasons of using expander decomposition. 
First, the conductance of each cluster in the expander decomposition is good and thus the cut containment set for each cluster can be updated efficiently. 
Second, $\sparsifier(G, \calP, \cc)$ is significantly smaller than the original graph. 

Since the sparsifier for each cluster is obtained with respect to the boundary vertices of the cluster, and the intercluster edges are kept in $\sparsifier(G, \calP, \cc)$, $\sparsifier(G, \calP, \cc)$ is a $c$-edge connectivity equivalent graph for all the boundary vertices $\endpoints(\partial_G(\calP))$. 
Therefore, a $c$-edge connectivity query on any two boundary vertices of $G$ can be answered by $\sparsifier(G, \calP, \cc)$.

\paragraph{One-level sparsifier update algorithm}
We start with an online-batch update algorithm for one-level sparsifiers:  Given a multigraph $G$, a $\phi$-expander decomposition $\calP$ of $G$, and an edge set $\cc$ which is a union of $(c^2+2c)$-cut containment sets for each cluster with boundary vertices as terminals, and an update sequence $\updateseq$ of $G$, we want to update $G, \calP$ and  $\cc$ according to the update sequence such that 
updating 
$\sparsifier(G, \calP, \cc)$ is almost as efficient as updating $G$. 

We give an algorithm with running time $O(|\updateseq|n^{o(1)} / \phi^3 )$
to update $G, \calP$, and $ \cc$.
The algorithm also outputs an update sequence  $\newupdateseq$ of length  $|\updateseq|(10c)^{O(c)}$ such that  $\sparsifier(G, \calP, \cc)$ is updated according to the changes made to $G, \calP$, and $\cc$.
In addition, our algorithm has a nice property that every vertex involved in the update sequence becomes a boundary vertex in the updated expander decomposition. 


We give the high-level idea of our algorithm. 
We first update $\calP$ to a refined partition
$\calP'$  of $G$. (Every cluster of $\calP'$ is a subset of a cluster in $\calP$.)
We collect all edges incident to the vertices involved in $\updateseq$, and run the expander pruning algorithm~\cite{saranurak2019expander} with these edges. On the pruned vertices, we further run expander decomposition~\cite{chuzhoy2019deterministic} to make sure the number of distinct new intercluster edges is linear  in $\abs{\updateseq}$ (since each vertex has a constant number of neighbors). 
The updated expander decomposition, denoted by $\calP'$ satisfies the following properties:
(1) Any vertex involved in the multigraph update sequence $\updateseq$ is in a singleton (a cluster with one vertex) of $\calP'$, if the vertex is already in $G$ before the update.
(2) $\calP'$ is a $\phi / \factor$-expander decomposition of $G$.~\footnote{The $\factor$ factor is obtained by applying deterministic expander decomposition from  \cite{chuzhoy2019deterministic} so that the number of new intercluster edges is bounded.}
(3) The number of distinct new intercluster edges is $O(|\updateseq|)$.


With the updated partition $\calP'$, we further update $\cc$ to $\cc'$, which is a union of 
$c$-cut containment sets of $G[P]$ for all $P \in \calP'$ by applying the cut containment set update algorithm on every cluster of $\calP'$ that is updated in the first step, making use of the condition that $\calP'$ is a refinement of $\calP$. 
Since $O(|\updateseq|)$ distinct inner edges of $\calP$ become intercluster edges of $\calP'$, 
by Lemma~\ref{lem:representation_update}, $\cc'$ is obtained by adding at most $O(|\updateseq|(10c)^{O(c)})$ edges. 
Consequently, the total number of vertex and edge updates that transform
 $\sparsifier(G, \calP,\cc)$ into $\sparsifier(G, \calP',\cc')$ is at most $O(|\updateseq|(10c)^{O(c)})$.  

In the end, we apply the update sequence $\updateseq$ to $G$ to obtain the updated graph $G'$. 
Since all vertices  involved in the update sequence $\updateseq$ are in singletons of $\calP'$,
applying $\updateseq$ to $\sparsifier(G, \calP', \cc')$ results in $\sparsifier(G', \calP', \cc')$.
 The length of overall updating sequence for the sparsifier is $|\updateseq| (10c)^{O(c)}$.



\paragraph{Multi-level $c$-edge connectivity sparsifier} 
The multi-level sparsifier is motivated by the trade-off of conductance parameter used for expander decomposition. 
On one hand, we want the conductance of the $c$-edge connectivity sparsifier of $G$ to be at least $1/n^{o(1)}$ so that $c$-edge connectivity can be efficiently determined. 
On the other hand, we want 
the resulting sparsifier to have conductance at least $1/n^{o(1)}$ so that $c$-edge connectivity in the sparsifier can be determined efficiently. 
However, these two requirements can not be satisfied at the same time by a one-level sparsifier.
Hence, we make use of multi-level sparsification, where the graph gets sparsified gradually so that at the end, the sparsifier has a good conductance. 

Our multi-level sparsifier is a set of tuples $\{(G^{(i)}, \calP^{(i)}, \cc^{(i)})\}$.
$G^{(0)}$ is same as the input graph, and $G^{(i)}$
is the sparsifier of $G^{(i-1)}$ with respect to $\calP^{(i-1)}$ and $\cc^{(i-1)}$. 
$\calP^{(i)}$ is an expander decomposition of $G^{(i)}$ and $\cc^{(i)}$ is a  cut-containment set of $G^{(i)}[\mathcal P^{(i)}]$. 
The sparsification terminates when the graph becomes a good expander.


\paragraph{Multi-level sparsifier update algorithm}
Our online-batch update algorithm applies the one-level update algorithm iteratively.
Let $\updateseq^{(0)} = \updateseq$.
We  run one-level update algorithm on tuple $(G^{(i)}, \calP^{(i)}, \cc^{(i)})$ with update sequence $\updateseq^{(i)}$, 
and use the returned update sequence as $\updateseq^{(i + 1)}$ to update the one-level sparsifier at  level $i + 1$.
Note that if the length of $\updateseq$ is large, e.g., a polynomial of $m$, 
some one-level sparsifier might contain too many edges after the update. 
In this case, we directly reconstruct the expander decompositions, cut containment sets of each cluster, and  the one-level sparsifiers for all following levels iteratively. 
Based on the result of the one-level sparsifier update algorithm, 
the running time of our multi-level update algorithm is still $O(|\updateseq| n^{o(1)})$ when parameters are chosen appropriately. 



\paragraph{Fully dynamic algorithm for \texorpdfstring{$c$}{c}-edge connectivity} 
Now we are ready to discuss our fully dynamic algorithm for the $s$-$t$ $c$-edge connectivity problem.
The goal of our update algorithm is to provide  access to a multi-level $c$-edge connectivity sparsifier of the up-to-date graph 
after each update. 
Using a general framework to turn update algorithms for update sequences into fully dynamic algorithms, which was implicitly used  in~\cite{nanongkai2017dynamic} (see Section~\ref{sec:online_batch}), we  can achieve this goal by maintaining a set of  $n^{o(1)}$ multi-level sparsifiers 
with $m^{1+o(1)}$ preprocessing time 
and $m^{o(1)}$ update time.

For a $c$-edge connectivity query on two vertices, 
we first obtain access to the multi-level  sparsifier of the up-to-date input graph.
%
%
Then we make the two queried vertices  boundary vertices for each level so that the sparsifier on the last level can be used to answer the $c$-edge connectivity query.
To achieve this, 
we generate an update sequence of constant length for the input graph such that the two queried vertices are involved and after applying the update sequence to the up-to-date input graph, the $c$-edge connectivity of the queried vertices does not change.
Then 
we make use of 
the one-level sparsifier update algorithm to update the multi-level sparsifier
from $(G^{(0)}, \calP^{(0)}, \cc^{(0)})$ to 
$(G^{(\ell)}, \calP^{(\ell)}, \cc^{(\ell)})$ one-by-one. 
Since the one-level sparsifier algorithm has the property that every vertex involved in the update sequence is in a  singleton in the vertex partition, 
after the update, 
the two queried vertices are boundary vertices in each one-level sparsifier, and the $c$-edge connectivity between the two queried vertices in each one-level sparsifier is the same as it is in the input graph.
In the end, we use $G^{(\ell)}$ to answer the query, as $G^{(\ell)}$ is a good expander.



\section{Warm-up: Existence of Small Cut Containment Repair Set}\label{sec:overview:cut_containment}

In this section, we answer the question of why there is a small cut containment repair set if the cut containment set is constructed iteratively from the $\IA$ sets (as described in Section~\ref{sec:review_cdl}). In particular, we prove the following lemma in this section. 


\begin{lemma}\label{lem:warmup}
If $c_0 \geq c^2 + 2c$, 
there is a cut containment repair set of size 
$(10c)^{O(c)}\cdot |S|$ for arbitrary $G_0$ and $G$. 
\end{lemma}


Our goal is to show that 
given a $(T_0, c_0)$-cut containment set $\cc_0$ of $G_0$ (defined in Definition~\ref{def:cut-containment-set}),
one can obtain a $(S\cup T, c)$-cut containment set of $G$ by adding $|S|(10c)^{O(c)}$ edges to $\cc_0 \cap E$ for some $c_0 > c$. 
Recall that the $(T_0, c_0)$-cut containment set is actually an $\IA_{G_0}(T_0, c_0', c_0)$ set for some $c_0' \geq c_0$, which is defined as a set of edges such that for each bipartition $(T', T_0\setminus T')$ with a minimum cut of size at most $c_0'$, 
there is a minimum $(T', T_0\setminus T')$-cut $C$ such that
each connected component of $G_0\setminus \IA_{G_0}(T_0, c_0', c_0)$ contains at most $\max\{mincut_{G_0}(T_0, T)-c_0, 0\}$ edges of $C$'s cut-set. 
Hence, in the decremental update scenario, when updating a cut containment set, we are essentially updating an $\IA$ set. 
Furthermore, since the $\IA$ set is recursively constructed as defined in Section~\ref{sec:review_cdl}, 
it is sufficient for us to prove the following lemma. 

\begin{lemma}
\label{lem:update_intersecting_all-1}
    In the decremental update scenario, given an $\IA_{G_0}(T_0, d, 2c+1)$ set,
    there is a set $F$ of $O(|S|\cdot c^2)$ edges from $G = (V, E)$, called  repair set, such that $F \cup (\IA_{G_0}(T_0, d, 2c+1)\cap E)$ is an $\IA_{G}(T\cup  S, c, 1)$ set.
\end{lemma}

Lemma~\ref{lem:warmup} is then obtained 
by recursively applying Lemma~\ref{lem:update_intersecting_all-1} 
in a way 
similar to the construction of the cut containment set.

\paragraph{Necessity of compression}
Before presenting the proof of Lemma~\ref{lem:update_intersecting_all-1}, 
we answer the question of why it is necessary to ``compress'' a $(T_0, c_0)$-cut containment set 
 to a $(T, c)$-cut containment set for some $c < c_0$.
We give an example such that the number of edges to be added is unbounded with respect to $|S|$ if one wants to obtain an $(T, c)$-cut containment set of $G$ from a $(T_0, c)$-cut containment set of $G_0$.
Consider the following construction of simple graph $G_0$ in Figure~\ref{fig:intro}:
$V_0$ is partitioned into $U_0, U_1, \dots, U_{\alpha- 1}$ such that each $U_i$ is a set of $2c+1$ vertices that contains a vertex $u_i$ in $T_0$.
The induced subgraph of $G_0$ on each $U_i$ is a clique of size $2c+1$.
For every $0 \leq i \leq \alpha - 1$, there are $c$ edges connecting $U_i$ and $U_{(i + 1) \bmod \alpha}$. 
$G_0$ does not have a cut of size less than or equal to $c$.
Hence, a $(T_0, c)$-cut containment set for graph $G_0$ can be an empty set.

\begin{figure}[ht]
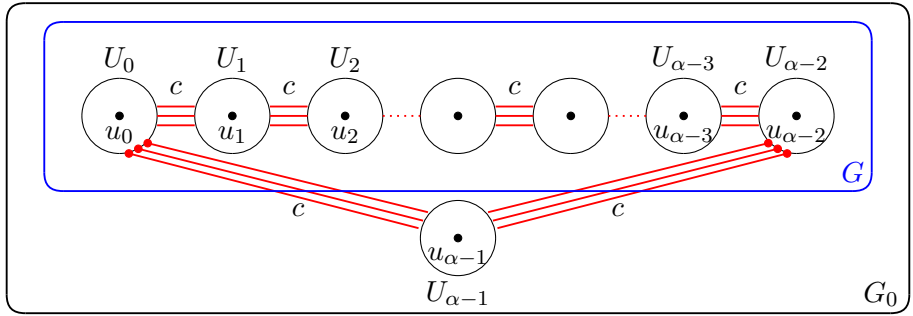

\begin{center}
      \tikzfig{fig1}
\end{center}
\caption{A construction of $G_0$ and $G$ such that every $(T, c)$-cut containment set of $G$ derived from a given $(T_0, c)$-cut containment set has  $\Omega(|T_0|c)$ new intercluster edges.}
\label{fig:intro}
\end{figure}

Assume $G = G_0[V_0\cut U_{\alpha - 1}]$.
Then $S$ contains $2c$ vertices, and $T = \{u_0, \dots, u_{\alpha - 2}\} \setminus S$. In graph $G$, for any $0 \leq i < \alpha - 2$,
the cut partitioning $\bigcup_{j = 0}^i U_j$ and $\bigcup_{j = i+1}^{\alpha - 2} U_j$ is of size $c$.
Thus, an arbitrary $(T, c)$-cut containment set of $G$ must contain 
the $c$ edges connecting $U_i$ and $U_{i+1}$ for any $0 \leq i < \alpha - 2$. 
Hence, any $(T, c)$-cut containment set of $G$ has at least $c(\alpha - 2) = \Omega(|T_0|c)$ edges, as opposed to linear in $|S|$ as needed.

\paragraph{Atomic cuts and algorithm framework.}
In the rest of this section, we  prove Lemma~\ref{lem:update_intersecting_all-1}.

We say a cut $C$ of graph $G$ is shattered by an edge set $R$ if no connected component of $G\setminus R$ contains all edges of $C$'s cut-set. We also say a terminal bipartition $(U, (S\cup T)\setminus U)$ is shattered by $R$ if there is a $(U, (S\cup T)\setminus U)$-minimum cut $C$ such that $R$ shatters $C$. 
Our high level intuition to prove Lemma~\ref{lem:update_intersecting_all-1} is to select a set of cuts such that the union of selected cuts' cut-sets shatters all $(S\cup T)$-bipartitions with minimum cut of size at most $c$ that are not shattered by $\IA_{G_0}(T_0, d, 2c+1)\cap E$. 
In order to do so, we first need to understand which bipartitions can be shattered by a cut's cut-set. 
The answer becomes clear when the cut is an atomic cut, i.e., the induced subgraphs on both sides of the cut are connected. We say two cuts $(V_1, V\cut V_1)$ and $(V_2, V\cut V_2)$ are parallel if $V_1$ or $V\cut V_1$ is a subset of either  $V_2$ or $V\cut V_2$. 
We observe that for two cuts $C_0$ and $C_1$ that are not parallel, if $C_0$ is atomic, then the cut-set of $C_0$ shatters $C_1$ (Figure~\ref{fig:shatter_example}). 
\begin{figure}[htb]
    \centering
    \includegraphics[width = .3\linewidth]{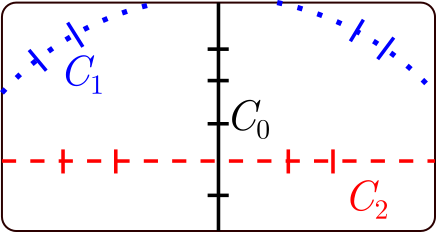}
    \caption{Cut $C_1$ (blue dotted line) and cut $C_2$ (red dashed line) are parallel. Therefore, the cut-set of neither cut shatters the other cut. Cut $C_1$ and cut $C_0$ (black line) are not parallel, since $C_0$ is atomic and $C_1$ is not, the cut-set of $C_0$ shatters $C_1$, but not vice versa.}
    \label{fig:shatter_example}
\end{figure}
Consequently, we can find a set of edges to shatter a given set of atomic cuts as follows: first find a \emph{maximal} set of pairwise parallel atomic cuts from the give atomic cuts, and then return the union of the cut-sets of selected cuts.


We generalize the procedure above to shattering terminal partitions: We say two terminal bipartitions $(S_1, S\cut S_1)$ and $(S_2, S\cut S_2)$ of $S$ are parallel if one of  $S_1$ or $S\cut S_1$ is a subset of either $S_2$ or $S\cut S_2$.
If two bipartitions of terminal set $(S_1, S\setminus S_1)$ and $(S_2, S\setminus S_2)$ are not parallel and there is an $(S_1, S\cut S_1)$-atomic cut,
the cut-set of an arbitrary atomic $(S_1, S\setminus S_1)$-cut shatters any $(S_2, S\setminus S_2)$-cut. 
Hence, we have 
the following algorithm to construct a repair set.

\def\bipartitionsys{\Gamma}
\def\shatterset{\mathsf{ToShatter}}

\begin{enumerate}\itemsep -3pt
    \item Find $\bipartitionsys$, a maximal set of pairwise parallel bipartitions of $S$ such that for each $(S', S\setminus S')\in \Gamma$, there is an atomic $(S', S\setminus S')$-cut. 
    \item Let $F_0$ be the union of cut-sets of one arbitrary atomic $(S', S\setminus S')$-cut for each $(S', S\setminus S')\in \bipartitionsys$.
    \item Find another set of edges $F_1$ that shatters all the bipartitions of $S\cup T$ that (a) induce bipartitions of $S$ that are in $\bipartitionsys$, and (b) are not shattered by $F_0 \cup (\IA_{G_0}(T_0, d, 2c+1)\cap E)$.
    \item Return $F_0\cup F_1$.
\end{enumerate}
$F_0 \cup F_1$ is a repair set, because $F_0$ shatters all $(S\cup T)$-bipartitions that induce bipartitions of $S$ that are not in $\bipartitionsys$ by the observation above;
$F_1$ shatters all $(S\cup T)$-bipartitions (that are not shattered by $\IA_{G_0}(T_0, d, 2c+1)\cap E$) that induce $S$-bipartitions in $\bipartitionsys$ by definition.

To this end, an obvious questions is how large $\bipartitionsys$ can be. We show that
the number of pairwise parallel bipartitions of $S$ is always upper bounded by $O(|S|)$. 
Consequently, $|F_0|$ is upper bounded by $O(|S|c)$.
\begin{lemma}\label{lem:linear_partition_overviewb3}
Let $\bipartitionsys$ be a maximal set of pairwise parallel bipartitions on $S$. 
$|\bipartitionsys| = O(|S|)$.
\end{lemma}
\begin{proof}
We show that $\Gamma \setminus \{(\emptyset, S)\}$ contains  at most $\abs S - 1$ bipartitions by induction. When $\abs S = 1$, $|\Gamma \setminus \{(\emptyset, S)\}| = 0$. 
Suppose the claim holds for $\abs S \leq k$. When $\abs S = k+1$, we pick one bipartition $(S_i, S\cut S_i)$ from $\Gamma \setminus \{(\emptyset, S)\}$.  For any $(S_j, S\cut S_j)\in \bipartitionsys \setminus \{(\emptyset, S), (S_i, S\cut S_i)\}$, by the definition of parallel bipartition, one of the following cases is true:
\begin{itemize}
    \item $\emptyset \subsetneq S_j\subsetneq S_i$. 
    \item $\emptyset \subsetneq S_j\subsetneq (S\cut S_i)$. 
    \item $\emptyset \subsetneq (S\cut S_j)\subsetneq (S\cut S_i)$. 
    \item  $\emptyset \subsetneq (S\cut S_j)\subsetneq S_i$. 
\end{itemize}
Let $\Gamma_1$ be the set of all the bipartitions in  $\bipartitionsys \setminus \{(\emptyset, S), (S_i, S\cut S_i)\}$ satisfying the first or the forth condition, and $\Gamma_2$ be the set of remaining bipartitions in  $\bipartitionsys \setminus \{(\emptyset, S), (S_i, S\cut S_i)\}$.
$\Gamma_1$ is a set of pairwise parallel bipartitions on $S_i$ that does not include $(\emptyset, S_i)$, and 
$\Gamma_2$ is a set of pairwise parallel bipartitions on $S\setminus S_i$ that does not include $(\emptyset, S\setminus S_i)$.
By induction, we have \[|\Gamma_1| \leq |S_i| - 1 \text{ and } |\Gamma_2| \leq |S\setminus S_i| - 1.\] Hence, we have 
\[|\Gamma \setminus \{\emptyset, S\}| = |\Gamma_1| + |\Gamma_2| + 1 \leq 
(\abs {S_i} -1) + (\abs {S\cut S_i} -1) + 2 = \abs{S} - 1.\]
Therefore, $|\Gamma| \leq |S|$. 
\end{proof}

In the rest of this section, we show that with an arbitrary $F_0$
and $\IA_{G_0}(T_0, d, 2c+1) \cap E$, there is always a $F_1$ of size $O(|S|c^2)$.

\paragraph{Characterizing bipartitions to be shattered}
Before we present our construction of $F_1$, we characterize the set of $(S\cup T)$-bipartitions that need to be shattered by the $S$-bipartitions they induce.

Our characterization  investigates the relation between cuts in $G$ and the $\IA_{G_0}(T_0, d, 2c)$ set such that the $\IA_{G_0}(T_0, d, 2c + 1)$ set is derived from the the $\IA_{G_0}(T_0, d, 2c)$ set by Equation~(\ref{equ:recursive_IA}), and thus also contains the $\IA_{G_0}(T_0, d, 2c)$ set.

Let $G'$ be $G \setminus (\IA_{G_0}(T_0, d, 2c) \cap E)$, and 
$\shatterset_{(S', S\setminus S')}$ be the set of $(S\cup T)$-bipartitions with minimum cut of  size at most $c$ that partition $S$ into $S'$ and $S\setminus S'$ satisfying the following conditions 
\begin{enumerate}\itemsep -3pt
\item All minimum cuts of bipartitions in $\shatterset_{(S', S\setminus S')}$ are atomic.
\item Every minimum cut of any bipartition in $\shatterset_{(S', S\setminus S')}$ has its cut-set contained  in one connected component of $G'$ that also contains at least one vertex of $S$. 
\end{enumerate}

We show that if a set of edges shatters $\shatterset_{(S', S\cut S')}$ for all $(S', S\setminus S')\in \bipartitionsys$, then 
this set of edges is a desirable $F_1$  (Lemma~\ref{overview_f0_f1_b4}). \\

To justify the first condition, we use the following claim to show that we can view the cut-set of a non-atomic minimum cut as a union of cut-sets of several atomic minimum cuts. 



\begin{claim}\label{claim:comprising_condition}
Let $(U, (S\cup T) \setminus U)$ be a bipartition of $S\cup T$ such that $\emptyset \subsetneq U \subsetneq (S\cup T)$. There is a minimum $(U, (S\cup T) \setminus U)$-cut $C$ satisfying the following conditions:
\begin{enumerate}
\item The cut-set of $C$ is a union of cut-sets of atomic cuts $C_1, \dots, C_t$ for some $t \geq 1$.
\item Let $(U_i, (S\cup T)\setminus U_i)$ be the bipartition induced by $C_i$. $U_i \neq \emptyset$ and $(S\cup T)\setminus U_i\neq \emptyset$. 
\item All the minimum $(U_i, (S\cup T)\setminus U_i)$-cuts are atomic cuts. 
\end{enumerate}
\end{claim}
\begin{proof}

Let $(V', V \setminus V')$ be a cut of $G$. Let $V^\dagger \subset V'$ be a vertex set that corresponds to a connected component of $G \setminus \partial_G(V')$, and 
$V^\diamond\subset (V\setminus V^\dagger)$ be a vertex set that corresponds to a connected component of $G \setminus \partial_G(V^\dagger)$. 
$(V^\diamond, V\setminus V^\diamond)$ is an atomic cut such that $\partial_G(V^\diamond)$ is a subset of $\partial_G(V')$. 
Hence, the cut-set of $(V', V \setminus V')$ can be partitioned into a few subsets such that each subset corresponds to the cut-set of an atomic cut. 


Fix a bipartition $(U, (S\cup T)\cut U)$ of $S\cup T$ that admits a minimum cut of size at most $c$. 
Let $C$ be an arbitrary minimum $(U, (S\cup T)\cut U)$-cut size at most $c$. 
As discussed above, the cut-set of $C$ can be partitioned into a few subsets  such that each subset corresponds to the cut-set of an atomic cut. 
Let $C_1, \dots, C_t$ be these atomic cuts, called comprising atomic  cuts. 
If any of these atomic cuts does not partition $S\cup T$ into two non-empty sets, then removing its cut-set from the cut-set of $C$
results a $(U, (S\cup T)\setminus U)$-cut of smaller size. 
In addition, let $(U_i, (S\cup T)\setminus U_i)$ denote the bipartition on $S\cup T$ induced by $C_i$. $C_i$ is a minimum $(U_i, (S\cup T)\setminus U_i)$-cut, otherwise $C$ is not a minimum $(U, (S\cup T)\setminus U)$-cut. 
Hence, each of $C_i$ corresponds to a minimum $(U_i, (S\cup T)\setminus U_i)$-cut.

In addition, if it is not the case that every minimum $(U_i, (S\cup T)\setminus U_i)$-cut is an atomic cut, and let $C_i'$ be a minimum $(U_i, (S\cup T)\setminus U_i)$-cut that is not an atomic cut. 
Then the union of cut-sets of all the cuts in $\{C_j : 1 \leq j \leq t, j\neq i\} \cup \{C'\}$ is also a minimum $(U, (S\cup T)\setminus U)$-cut. 
Hence, for bipartition $(U, (S\cup T)\cut U)$, there is a minimum cut $C$ such that all the comprising atomic cuts $C_1, \dots, C_t$ are minimum $(U_i, (S\cup T)\setminus U_i)$-cuts which are atomic. 
\end{proof}

 To justify the second condition, we observe that if the cut-set of a  cut is not contained in one connected component of $G'$, then this cut is already shattered by $\IA_{G_0}(T_0, d, 2c)$. Furthermore, if the cut-set of a cut is contained in a connected component of $G'$ that has no intersection with $S$, then this cut is shattered by $\IA_{G_0}(T_0, d, 2c+1)$. 
 Hence, we exclude bipartitions of $S\cup T$ which have one of these minimum cuts. 

\begin{lemma}\label{overview_f0_f1_b4}
Let 
$\bipartitionsys$ be a maximal set of pairwise parallel bipartitions of $S$ which admit atomic cuts.
Let $\shatterset_{(S', S\setminus S')}$ be the set of $(S\cup T)$-bipartitions that partition $S$ into $(S', S\setminus S')$ satisfying the following conditions 
\begin{enumerate}
\item For each $(U, (S\cup T)\setminus U) \in \shatterset_{(S', S\setminus S')}$, $mincut_G(U,  (S\cup T)\setminus U) \leq c$.
\item All minimum cuts of bipartitions in $\shatterset_{(S', S\setminus S')}$ are atomic.
\item Every minimum cut of any bipartition in $\shatterset_{(S', S\setminus S')}$ has its cut-set contained  in a connected component of $G'$ that also contains at least one vertex of $S$.
\end{enumerate}
Let $F_0$ be a set of edges that contains the cut-set of a $(S', S\setminus S')$-atomic cut for each $(S', S\setminus S')\in \bipartitionsys$, and 
$F_1$ be a set of edges that shatters all bipartition in $\shatterset_{(S', S\setminus S')}$ for all $(S', S\setminus S')\in \Gamma$. 
Then $F_0 \cup F_1$ is a repair set.
\end{lemma}
\begin{proof}
Let $(U, (S\cup T)\setminus U)$ be a bipartition of $S\cup T$ with minimum cut of size at most $c$, and $C$ be an arbitrary minimum cut satisfying the three conditions of Claim~\ref{claim:comprising_condition}, and $C_1, \dots, C_t$ be the atomic cuts with cut-set in the cut-set of $C$. 

If there is a $C_i$ that induces a bipartition on $S$ not in $\Gamma$, then by Lemma~\ref{lem:not_parallel}, $C_i$ is shattered by $F_0$. 
If $C_i$ that is a $(U_i, (S\cup T)\setminus U_i)$-cut satisfying one of the two conditions
\begin{enumerate}
    \item There is a minimum $(U_i, (S\cup T)\setminus U_i)$-cut $C'$ such that no connected component of $G'$ contains the cut-set of $C'$.
    \item There is a minimum $(U_i, (S\cup T)\setminus U_i)$-cut $C'$ such that the cut-set of $C'$ belongs to a connected component in $G'$ that does not contain any vertex of $S$. 
\end{enumerate}
then by the definition of $\IA$ set, there is a minimum $(U, (S\cup T)\setminus U)$-cut shattered by $\IA_{G_0}(T_0, d, 2c+1)\cap E$. 

Hence, if no minimum $(U, (S\cup T)\setminus U)$-cut is shattered by $F_0 \cap (\IA_{G_0}(T_0, d, 2c+1)\cap E)$, then each of $(U_i, (S\cup T)\setminus U_i)$ belongs to a $\shatterset_{(S', S\setminus S')}$ for some $(S', S\setminus S') \in \Gamma$.
By the definition of $F_1$, $F_1$ shatters a minimum $(U, (S\cup T)\setminus U)$-cut. 

Therefore, $F_0 \cup F_1 \cup (\IA_{G_0}(T_0, d, 2c+1)\cap E)$ shatters every bipartition on $S\cup T$ with minimum cut of size at most $c$. $F_0\cup F_1$ is a repair set. 
\end{proof}

The remaining problem is to find a set of edges
to shatter all bipartitions in $\shatterset_{(S', S\setminus S')}$. 
We show that 
for any $(S', S\setminus S')\neq (\emptyset, S)$, a set of $O(c^2)$ edges are sufficient (Lemma~\ref{lem:overview_size_each_type1}). 
If $(S', S\setminus S') = (\emptyset, S)$, 
a set of $O(|S|c^2)$ edges are sufficient (Lemma~\ref{lem:overview_size_type2}). 
Together with the fact that there are $O(\abs S)$ bipartitions in $\bipartitionsys$, we obtain Lemma~\ref{lem:update_intersecting_all-1}. 

\paragraph{Shattering all bipartitions of $S\cup T$ that partition $S$ non-trivially} Now we discuss how to find a set of $O(c^2)$ edges to shatter $\shatterset_{(S', S\cut S')}$ for $S'\not\in\set{\emptyset, S}$.

\begin{lemma}\label{lem:overview_size_each_type1}
For any $(S', S\setminus S')\in \shatterset_{(S', S\setminus S')}$ such that $S'\not\in\set{\emptyset, S}$, there is a set $F_{(S', S\cut S')}$ of $O(c^2)$ edges to shatter every bipartition of $\shatterset_{(S', S\setminus S')}$. 
\end{lemma}

To prove Lemma~\ref{lem:overview_size_each_type1}, we  show that fix an arbitrary $(S', S\setminus S') \in \Gamma$ that is not $ (\emptyset, S)$, 
\begin{enumerate}
    \item At most two connected components of $G'$ may contain cut-sets of minimum cuts inducing bipartitions in $\shatterset_{(S', S\cut S')}$ (Lemma~\ref{lem:overview_2cc_b6}).
    \item For any $V^\star\subset V$ such that $G'[V^\star]$ is a connected component of $G'$, there always exists a set of $O(c^2)$ edges that shatters all $(S\cup T)$-bipartitions in $\shatterset_{(S', S\setminus S')}$ admitting atomic minimum cuts whose cut-set is in $G'[V^\star]$ (Lemma~\ref{lem:overview_type1_b7}, \ref{lem:overview:elim_b5} and \ref{lem:find_parallel_cuts}). 
\end{enumerate}
As a result, it suffices to have $O(c^2)$ edges to shatter $\shatterset_{(S', S\cut S')}$.

\begin{lemma}\label{lem:overview_2cc_b6}
Let $S'$ be a subset of $S$ such that $S' \neq \emptyset$ and $S'\neq S$. 
There are at most two connected components of $G'$ containing the cut-set of an atomic $(S', S\setminus S')$-cut. 
\end{lemma}
\begin{proof}
Let $P_1, P_2 \subset V$ be two sets of vertices such that the following conditions hold:
\begin{enumerate} 
\item Both $G'[P_1]$ and $G'[P_2]$ are two different connected components of $G'$.
\item Both $G'[P_1]$ and $G'[P_2]$
 contain cut-sets of atomic cuts that partition $S$ into $(S', S\cut S')$.
 \item Both $P_1$ and $P_2$ have  non-empty intersection with $S$.
 \end{enumerate}
 
 Let $(V_1, V\cut V_1)$ be a cut partitioning $S$ into $(S', S\cut S')$ such that  $\partial_G(V_1)\subset G'[P_1]$ and $(V_2, V\cut V_2)$ be the cut partitioning $S$ into $(S', S\cut S')$ such that $\partial_G(V_2)\subset G'[P_2]$. We can without loss of generality assume that $S'\subset V_1$ and $S'\subset V_2$. Since $G'[P_1]$ and $G'[P_2]$ are disjoint connected components of $G'$, $\partial_G(V_1)$ does not shatter $(V_2, V\cut V_2)$ and $\partial_G(V_2)$ does not shatter $(V_1, V\cut V_1)$. By the contrapositive of Lemma~\ref{lem:not_parallel}, $(V_1, V\cut V_1)$ and $(V_2, V\cut V_2)$ are parallel. Since $\emptyset \neq S'\subset V_1\cap V_2$ and $\emptyset \neq (S\cut S')\subset (V\cut V_1)
\cap (V\cut V_2)$, either $V_1\subset V_2$ or $V_2\subset V_1$. We can without loss of generality assume $V_1\subset V_2$.

If there exists a third vertex set $P_3$ that corresponds to a connected component of $G'$ containing the cut-set of atomic cut $(V_3, V\cut V_3)$ that partitions $S$ into $(S', S\cut S')$, we can assume $V_1\subset V_2\subset V_3$ with no loss of generality. 

Since all three cuts induce the same partition of $S$, $(V_2\cut V_1)\cap S = \emptyset$ and $((V\cut V_2)\cut(V\cut V_3))\cap S = \emptyset$. Since $V_2\cap G'[P_2] \subset V_2\cut V_1$ and \[((V\cut V_2)\cap G'[P_2])\subset (V\cut V_2)\cap V_3 = ((V\cut V_2)\cut(V\cut V_3)),\] we have \[S\cap G'[P_2] \subset S\cap \wrap{(V_2\cut V_1)\cup \wrap{(V\cut V_2)\cap V_3}} = \emptyset.\] This is a contradiction to the assumption that $G'[P_3]\cap S\neq\emptyset$.
\end{proof}

Let $V^\star$ be a set of vertices corresponding to a connected component of $G'$. We select a maximal set of pairwise parallel atomic $(S', S\setminus S')$-cuts of size at most $c$ with cut-sets in $G'[V^\star]$
that partition $T$ differently. Lemma~\ref{lem:overview_type1_b7} shows that the union of the cut-sets of these cuts shatters all bipartitions in $\shatterset_{(S', S\setminus S')}$ that has a minimum atomic cut in $G[V^\star]$. Lemma~\ref{lem:overview:elim_b5} and \ref{lem:find_parallel_cuts} together show that there are at most $O(c)$ cuts in any maximal set of pairwise parallel atomic $(S', S\cut S')$ cuts. Since each cut-set consists of at most $c$ edges, it suffices to use $O(c^2)$ edges to shatter all biparitions in $\shatterset_{(S', S\cut S')}$.

\begin{lemma}\label{lem:overview_type1_b7}
Let $V^\star$ be a set of vertices corresponding to a connected component of $G'$.
Let $S'$ be a subset of $S$ such that $S' \neq \emptyset$ and $S'\neq S$. 
Let $\mathcal{C} = \set{(V_1, V\setminus V_1), \ldots, (V_t, V\cut V_t)}$ be a maximal set of pairwise parallel atomic $(S', S'\setminus S')$-cuts of size at most $c$  with cut-sets contained in $G'[V^\star]$ satisfying the following conditions:
\begin{enumerate}
    \item[(a)] $(V_i, V\setminus V_i)$ and $(V_j, V\setminus V_j)$ partition $T$ differently for each $1\leq i < j \leq t$. 
    \item[(b)] Let $U_i = V_i \cap (S\cup T)$. $(V_i, V\setminus V_i)$ is a minimum $(U_i, (S\cup T)\setminus U_i)$-cut.
\end{enumerate}
Then we have:
\begin{enumerate}
    \item $\bigcup_i\partial_G(V_i)$ shatters all partitions in $\shatterset_{(S', S\setminus S')}$ that has a minimum atomic cut whose cut-set is contained in $G'[V^\star]$.
    \item There exists a permutation $\sigma$ of $\set{1,\ldots, t}$ such that $V_{\sigma(1)} \subseteq V_{\sigma(2)} \cdots\subset V_{\sigma(t)}$.
\end{enumerate}
\end{lemma}
\begin{proof}
Fix a $(S'\cup T', (S\cup T)\cut (S'\cup T'))\in \shatterset_{(S', S\setminus S')}$ that admits a minimum atomic cut $C = (V_C, V\cut V_C)$ of size at most $c$, whose cut-set is contained in $G'[V^\star]$. If there exists a  $(V_j, V\cut V_j)\in \mathcal{C}$ that also partitions $S\cup T$ into $S'\cup T'$ and  $(S\cup T)\cut (S'\cup T'))$, then  $\partial_G(V_j)$ shatters partition $(S'\cup T', (S\cup T)\cut (S'\cup T'))$.

Otherwise, no $(V_i, V\cut V_i)$ in the $\mathcal{C}$ induces bipartition $(T', T\cut T')$ of $T$.  Since $\mathcal{C}$ is a maximal set and all cuts in $\mathcal{C}$  induce different bipartitions on $T$, there exists a $(V_j, V\cut V_j)$ in $\mathcal C$ such that the bipartition $(T', T\cut T')$ and the bipartition of $T$ induced by $(V_j, V\cut V_j)$ are not parallel. Therefore, $(V_j, V\cut V_j)$ and  $(V_C, V\cut V_C)$ are not parallel. Since $(V_j, V\cut V_j)$ is atomic, $\partial_G(V_j)$ shatters $(V_C, V\cut V_C)$ by Lemma~\ref{lem:not_parallel}.
Hence, $\bigcup_i\partial_G(V_i)$ shatters all partitions in $\shatterset_{(S', S\setminus S')}$ that has a minimum atomic cut whose cut-set is contained in $G'[V^\star]$.

Fix any two cuts $(V_i, V\cut V_i)$ and $(V_{j}, V\cut V_{j})$ in $\mathcal C$. Since they are parallel, by definition, one of the following cases holds:
\begin{itemize}\itemsep -3pt
    \item $V_i\subset V_{j}$.
    \item $V_i\subset V\cut V_{j}$. 
    \item $V\cut V_i\subset V_{j}$. 
    \item $V\cut V_i \subset V\cut V_{j}$. In this case, $V_j\subset V_i$.
\end{itemize}
The second and third cases are impossible, because we have the following conditions
\[S'\neq \emptyset, S\setminus S'\neq \emptyset, S'\subset V_i\cap V_{j}, \text{ and } (S\setminus S')\subset (V\setminus V_i)\cap (V\setminus V_{j}).\]
Therefore, we can define a permutation $\sigma$ as 
\begin{gather*}
    \sigma(i) \defeq k\in\set{1, \ldots, t}\text{ such that }V_k\subset V_j\text{ for all }j \in\set{1, \ldots, t}\cut \wrap{\bigcup_{\ell = 1}^{i-1} \set{\sigma(\ell)}}
\end{gather*}
Then we have $V_{\sigma(1)} \subseteq V_{\sigma(2)} \cdots\subset V_{\sigma(t)}$.
\end{proof}

The following lemmas bound the size of any such maximal set of pairwise parallel atomic cuts on a connected component $G'[V^\star]$ of $G'$. The intuition behind is that for two parallel  $(S', S\cut S')$-cuts $C_i$ and $C_j$ in $\mathcal C$, the union of their cut-sets forms a cut of size at most $2c$ such that one side does not have any terminal in $S$ but at least one terminal in $T$ (Figure~\ref{fig:fig2}(a)). We use $C_{i, j} = (V_{i, j}, V\setminus V_{i, j})$ to denote this cut such that $V_{i, j}$ is the side that dose not contain $S$,
and $T_{i, j }$ to denote the intersection of $T$ and $V_{i, j}$.

Since $G$ is an induced subgraph of $G_0$ and $V_{i, j}$ has not intersection with $S$, the cut-set of $C_{i, j}$ also induces a $(T_{i, j}, T_0\setminus T_{i, j})$-cut on graph $G_0$ with size at most $2c$.
By the definition of  $\IA_{G_0}(T_0, d, 2c)$, 
there must be a $(T_{i, j}, T_0 \setminus T_{i, j})$-cut $C'_{i, j}$ for graph $G_0$ of size at most $2c$ with cut-set as a subset of the $\IA_{G_0}(T_0, d, 2c)$ set.
At this point, taking the cut induced by the edges of $C'_{i, j}$'s cut-set in $G$, 
we obtain a cut $C^\diamond_{i, j}$ on graph $G$ of size at most $2c$ such that the cut-set of $C^\diamond_{i, j}$ is a subset of $\IA_{G_0}(T_0, d, 2c)\cap E$.

We now look at an arbitrary $C_{i,t}^\diamond$ (recall that $t$ is the index of last cut in $\mathcal C$). Since $V^\star$ corresponds to a connected component of $G'$, $V^\star$ lies on one side of $C_{i, t}^\diamond$. If it is on the same side as $T_{i, t}$, then for any $1\leq \ell\leq i - 1$, $T_{\ell, \ell+1}$ is on the other side of $C_{i,t}^\diamond$.  Since the cut-set of $C_{\ell, \ell+1}$ is a set of edges in $G'[V^\star]$, there is a path within $G[V_{\ell, \ell+1}]$ connecting a vertex of $T_{\ell, \ell+1}$  and  a vertex in $\endpoints\wrap{\partial_G(V_{\ell, \ell +1})}\cap V_{\ell, \ell+1}$ (the brown bold path in Figure~\ref{fig:fig2}(b)). This implies that there is an edge in the cut-set of $C_{i,t}^\diamond$ which has both endpoints in $V_{\ell, \ell+1}$. Since for different $\ell$ and $\ell'$, $V_{\ell, \ell+1}$ and $V_{\ell', \ell'+1}$ are disjoint and that there are at most $2c$ edges in the cut-set of $C_{i,k}^\diamond$, $i\leq 2c+1$. Similarly, if $V^\star$ and $T_{i, t}$ are on difference sides of $C_{i, t}^\diamond$, we can show that $t - i \leq 2c+1$. 

Hence, if $t \geq 4c+4$, then $V^\star$ cannot be on any side of $C_{2c+2, t}^\diamond$, contradiction. Thus, $t \leq 4c + 3$.
\begin{figure}[htb]
\centering
\begin{subfigure}[b]{.43\textwidth}
\includegraphics[width=\textwidth]{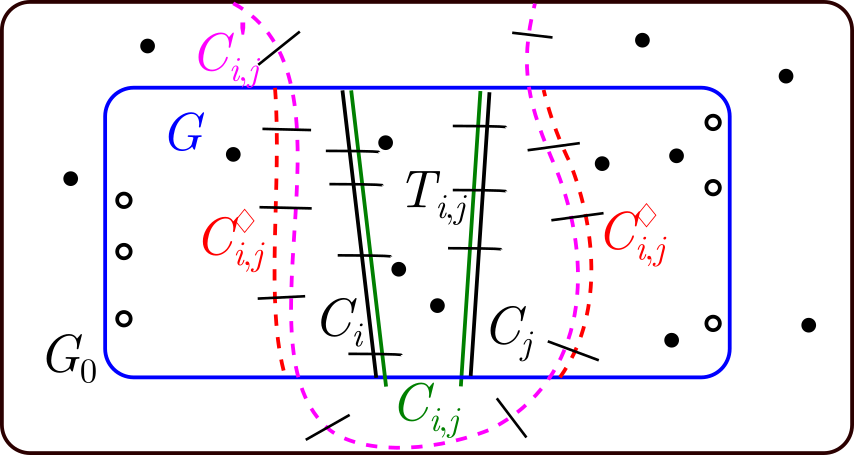}
\caption{}
\end{subfigure}
\hfill
\begin{subfigure}[b]{.43\textwidth}
\includegraphics[width=\textwidth]{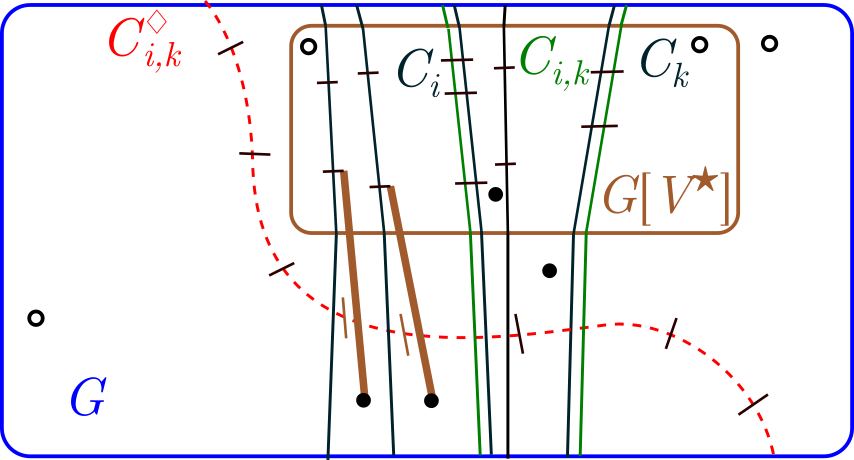}
\caption{}
\end{subfigure}
\caption{(a) The illustration of $C_{i, j}$ (green solid line), $C'_{i, j}$ (magenta dashed line) and $C^\diamond_{i, j}$ (red dashed line). Black dots represent vertices of $T_0$, and black circles represent vertices of $S$. (b) The illustration of the case that $V^\star$ is on the side of $C^\diamond_{i, k}$ containing $T_{i, k}$. The bold lines represent paths from some vertex of $T_{\ell, \ell+1}$ to some vertex of $\endpoints(\partial_G(V_{\ell, \ell+1})) \cap V_{\ell, \ell+1}$ for different $\ell$. 
}
\label{fig:fig2}
\end{figure}

\begin{lemma}\label{lem:overview:elim_b5}
Let $V^\star$ be a set of vertices corresponding to a connected component of $G'$, and 
$\mathcal{C} = \{C_1 = (V_1, V\cut V_1), C_2 = (V_2, V\cut V_2), \dots C_t = (V_t, V\cut V_t)\}$ be a set of atomic cuts satisfying the following conditions
\begin{enumerate}\itemsep -3pt
    \item The cut-sets of all cuts in $\mathcal{C}$ are in $G'[V^\star]$.
    \item $V_1\subset V_2\subset \cdots\subset V_t$.
    \item All cuts induce the same bipartition of $S$, but different bipartitions of $T$. 
\end{enumerate}
Then $\mathcal C$ has at most $4c+3$ cuts.
\end{lemma}

\begin{proof}

Let $T_i = V_i\cap T$. 
Each cut $(V_i, V\cut V_i)\in \mathcal C$ partitions $S\cup T$ into $(S'\cup T_i, (S\cut S')\cup (T\cut T_i))$. For $j>i$, let $T_{i,j} = T_j\cut T_i$. Since all cut in $\mathcal C$ induce different partitions of $T$, $T_{i,j}\neq \emptyset$.

Since all cuts in $\mathcal C$ induce the same partition of $S$, $S\cap(V_j\cut V_i) = \emptyset$. The union of the cut-sets of $(V_i, V\cut V_i)$ and $(V_j, V\cut V_j)$ forms a cut $C_{i, j} = (V_{i,j}, V\setminus V_{i, j})$ of size at most $2c$ that partitions $S\cup T$ into $T_{i,j}$ and $S\cup (T\cut T_{i,j})$, which is also a  cut of size at most $2c$ that partitions $T_0$ into $T_{i,j}$ and $T_0\cut T_{i,j}$ on $G_0$. Therefore, there is another cut on $G_0$ that partitions $T_0$ into $(T_{i,j}, T_0\cut T_{i,j})$, whose cut-set is a subset of $\IA_{G_0}(T_0, d, 2c)$. Let this cut be $C'_{i, j} = (V'_{i,j}, V_0\cut V'_{i,j})$. 
Furthermore, let $C^\diamond_{i,j} = (V^\diamond_{i, j}, V\setminus V^\diamond_{i, j})$ be the cut such that $V^\diamond_{i, j} = V'_{i, j} \cap V$. 
It is easy to verify that $\partial_G(V^\diamond_{i, j})$ is a subset of $\IA_{G_0}(T_0, d, 2c) \cap E$. 
Hence, $V^\diamond_{i, j}$ is a union of connected components of $G'$.

Now consider $V^{\diamond}_{i, t}$. 
If $V^\star$ is a subset of $V^\diamond_{i, t}$, then for each $1 \leq \ell < i$, $T_{\ell, \ell+1}$ is a subset of $V\setminus V^{\diamond}_{i, t}$.
Hence, there is a path from any vertex of $T_{\ell, \ell+1}$ to some vertex of 
$\endpoints(\partial_G(V_{\ell, \ell+1})) \cap V_{\ell, \ell+1}$ within $G[V_{\ell, \ell+1}]$. 
Since $\endpoints(\partial_G(V_{\ell, \ell+1})) \subseteq V^\star$,  $\endpoints(\partial_G(V_{\ell, \ell+1}))$
and $T_{\ell, \ell+1}$ are on different sides of $C^{\diamond}_{i, t}$, and thus the aforementioned path from a vertex of $T_{\ell, \ell+1}$ to some vertex of 
$\endpoints(\partial_G(V_{\ell, \ell+1})) \cap V_{\ell, \ell+1}$ within $G[V_{\ell, \ell+1}]$ connects two vertices on different sides of $C^{\diamond}_{i, t}$.
Hence, there exists an edge $e_{i}$ in the cut-set of $C^{\diamond}_{i, t}$ with both endpoints in $V_{\ell, \ell + 1}$.
Together with the fact that $V_{\ell, \ell + 1}$ and $V_{\ell', \ell'+1}$ are disjoint for any $\ell \neq \ell'$, $e_{\ell}$ is different to $e_{\ell'}$.
Since $C^{\diamond}_{i, t}$ is a cut of size at most $2c$, we have $i \leq 2c+1$. 

Similarly, if $V^\star$ is a subset of  $V\setminus V^{\diamond}_{i, t}$, one can show that $t - i \leq 2c+1$. 
Since our choice of $i$ is arbitrary, 
if $t \geq  4c+4$, then let $i = 2c+2$, $V^\star$ does not belong to any side of $C^{\diamond}_{i, t}$, contradiction. 
Hence, we have $t \leq 4c+3$. 
\end{proof}

Combining Lemma~\ref{lem:overview_type1_b7} and \ref{lem:overview:elim_b5}, we have
\begin{lemma}\label{lem:find_parallel_cuts}
Let $(S', S\setminus S')$ be a bipartition of $S$ such that $S' \neq \emptyset$ and $S\setminus S' \neq \emptyset$,
and $G[V^\star]$ be a connected component of $G'$. 
If $\mathcal{C} = \{C_1, C_2, \dots, C_k\}$ is a maximal set of pairwise parallel atomic $(S', S\setminus S')$-cuts of size at most $c$ 
satisfying the following two conditions, then $k \leq 4c+3$.
\begin{enumerate}\itemsep -3pt
    \item For any $i \neq j$, $C_i$ and $C_j$ induce different partitions on $S\cup T$.
    \item The cut-set of any $C_i$ is in $G[V^\star]$. 
\end{enumerate}
 \end{lemma}
\begin{proof}
Let $\mathcal C = \set{C_1 = (V_1, V\cut V_1), \ldots, C_k = (V_k, V\cut V_k)}$. By Lemma~\ref{lem:overview_type1_b7}, there exists a permutation $\sigma$ of $\set{1,\ldots, t}$ such that $V_{\sigma(1)} \subseteq V_{\sigma(2)} \cdots\subset V_{\sigma(t)}$. 
By Lemma~\ref{lem:overview:elim_b5}, there are at most $4c+3$ cuts in $\mathcal C$. 
\end{proof}

\begin{proof}[Proof of Lemma~\ref{lem:overview_size_each_type1}]
For each $(S', S\setminus S')\in\bipartitionsys$ such that $S'\neq \emptyset$ and $S'\neq S$, Lemma~\ref{lem:overview_2cc_b6} shows that there are at most 2 connected components, $G'[V^{(1)}]$ and $G'[V^{(2)}]$ of $G'$ that may contain cut-sets of cuts inducing $(S', S\cut S')$. 
For each $G'[V^{(i)}]$,
let \[\mathcal C_{V^{(i)}} = \set{\left(V_j^{(i)}, V\setminus V_j^{(i)}\right) \ldots, \left(V_t^{(i)}, V\cut V_t^{(i)}\right)}\] be a maximal set of pairwise parallel atomic $(S', S'\setminus S')$-cuts of size at most $c$  with cut-sets contained in $G'[V^\star]$ satisfying the following conditions:
\begin{enumerate}
    \item[(a)] $\left(V_j^{(i)}, V\setminus V_j^{(i)}\right)$ and $\left(V_k^{(i)}, V\setminus V_k^{(i)}\right)$ partition $T$ differently for each $1\leq j < k \leq t$. 
    \item[(b)] Let $U_j^{(i)} = V_j ^{(i)}\cap (S\cup T)$. $\left(V_j^{(i)}, V\setminus V_j^{(i)}\right)$ is a minimum $(U_j, (S\cup T)\setminus U_j)$-cut.
\end{enumerate}
Let $F_{(S', S\cut S')}$ be the union of the cut-sets of all cuts in $\mathcal C_{V^{(1)}}$ and $\mathcal C_{V^{(2)}}$. 
By Lemma~\ref{lem:overview_type1_b7}, $F_{(S', S\cut S')}$ shatters all bipartitions in $\shatterset_{(S', S\cut S')}$. By Lemma~\ref{lem:find_parallel_cuts}, each $\mathcal C_{V^{(i)}}$ contains $O(c)$ cuts. Therefore, $F_{(S', S\cut S')}$ contains $O(c^2)$ edges.
\end{proof}

\paragraph{Shattering all bipartitions of $S\cup T$ such that $S$ is on one side}
Now we discuss how to find a set of $O(\abs S c^2)$ edges to shatter $\shatterset_{(\emptyset, S)}$. 

The high level idea is similar to the previous case. By Lemma~\ref{overview_f0_f1_b4}, the bipartitions in $\shatterset_{(\emptyset, S)}$ have their cut-sets contained in one connected component of $G'$ that also contains a vertex of $S$. 
As an analogue of  Lemma~\ref{lem:overview_type1_b7}, one can show that for each vertex set $V^\star$ that corresponds to a connected component of $G'$ that contains a vertex of $S$,
let $\mathcal C_{V^\star}$ be a maximal set of pairwise parallel minimum atomic $(\emptyset, S)$-cuts of size at most $c$ that partition $T$ differently, and have cut-sets contained in $G'[V^\star]$.
The union of the cut-sets of cuts in $\mathcal C_{V^\star}$ shatters all partitions in $\shatterset_{(\emptyset, S)}$ that has a minimum atomic cut whose cut-set is contained in $G'[V^\star]$.
By lemma~\ref{lem:overview:elim_b5}, $\mathcal C_{V^\star}$ contains at most $4c+3$ cuts. 
Together with the fact that there are at most $|S|$ connected components that contain a vertex in $S$, we have the following lemma.

\begin{lemma}\label{lem:overview_size_type2} 
There is a set $F_{(\emptyset, S)}$ of $O(|S|c^2)$ edges to shatter every bipartition in $\shatterset_{(\emptyset, S)}$. 
\end{lemma}

Again, by Lemma~\ref{overview_f0_f1_b4}, the bipartitions in $\shatterset_{(\emptyset, S)}$ have their cut-sets contained in one connected component of $G'$ that also contains a vertex of $S$. There are at most $O(\abs S)$ such connected components.
Similar to the construction in the proof of Lemma~\ref{lem:overview_size_each_type1}, 
for each connected component $G[V^\star]$ of $G'$ that contains at least one vertex of $S$,
we construct a set of edges by choosing a maximal set of pairwise parallel $(\emptyset, S)$-cuts that partition $T$ differently  and show that this set of edges shatter all bipartitions in $\shatterset_{(\emptyset, S)}$ admitting atomic minimum cuts whose cut-set is in $G[V^\star]$ (Lemma~\ref{lem:overview_type2_b8}).  Then we utilize Lemma~\ref{lem:overview:elim_b5} to show that there are at most $O(c)$ such cuts in each of these connected components.

\begin{lemma}\label{lem:overview_type2_b8}
Let $V^\star$ be a set of vertices corresponding to a connected component of $G'$.
Let \\$\mathcal{C} = \set{(V_1, V\setminus V_1), \ldots, (V_t, V\cut V_t)}$ be a maximal set of pairwise parallel atomic $(\emptyset, S)$-cuts of size at most $c$  with cut-sets contained in $G'[V^\star]$ satisfying the following conditions:
\begin{enumerate}
    \item[(a)] $(V_i, V\setminus V_i)$ and $(V_j, V\setminus V_j)$ partition $T$ differently for each $1\leq i < j \leq t$. 
    \item[(b)] Let $U_i = V_i \cap (S\cup T)$. $(V_i, V\setminus V_i)$ is a minimum $(U_i, (S\cup T)\setminus U)$-cut.
\end{enumerate}
Then we have:
\begin{enumerate}
    \item $\bigcup_i\partial_G(V_i)$ shatters every bipartition in $\shatterset_{(\emptyset, S)}$ that admits a minimum atomic cut whose cut-set is contained in $G'[V^\star]$.
    \item There exists a permutation $\sigma$ of $\set{1,\ldots, t}$ such that $V_{\sigma(1)} \subseteq V_{\sigma(2)} \cdots\subset V_{\sigma(t)}$.
\end{enumerate}
\end{lemma}

To prove this lemma, we need  Claim~\ref{claim:type2_helper1} and Claim~\ref{claim:type2_helper2}.

\begin{claim}\label{claim:type2_helper1}
Let $V^\star$ be a set of vertices corresponding to a connected component of $G'$.
Fix a bipartition $(T', S'\cup (T\cut T'))\in \shatterset_{(\emptyset, S)}$ that admits a minimum atomic cut of size at most $c$ whose cut-set is contained in $G'[V^\star]$. 
There exists a $(T', T_0\cut T')$-cut $(V', V_0\cut V')$ of size $mincut_{G_0}(T', T_0\cut T')$ on $G_0$ satisfying $T'\subset V'$ and $\partial_{G_0}(V')\subset \IA_{G_0}(T_0, d, 2c)$, and $V^\star\subset V'$.
\end{claim}

\begin{proof}
Note that the cut-set of any $(T', S\cup (T\cut T'))$-cut also induces a $(T', T_0\cut T')$-cut on $G_0$. Hence, $mincut_G(T', S\cup (T\cut T')) \geq mincut_{G_0}(T', T_0\cut T')$.

By the definition of $\IA$ set, there exists a minimum $(T', T_0\cut T')$-cut $(V', V_0 \setminus V')$ with cut-set in $\IA_{G_0}(T_0, d, 2c+1)$ such that  $T'\subset V'$. 
If $V'\cap S = \emptyset$, then $(V' \cap V, V\setminus (V' \cap V))$ is a minimum $(T', S\cup (T\cut T'))$-cut with cut-set in $\IA_{G_0}(T_0, d, 2c+1)\cap E$, contradicting  $(T', S\cup (T\cut T'))\in \shatterset_{(\emptyset, S)}$. 
Hence $V' \cap S \neq \emptyset$.

Then there exists a connected component $V^\dagger$ of $G_0[V']$ that contains vertices from both $S$ and $T'$. Otherwise, a subset of the cut-set of $(V', V_0\cut V')$ induces a $(T', (S\cup T_0)\cut T'))$-cut on $G_0$ and thus a $(T', (S\cup T)\cut T')$-cut on $G$, which implies that there is a minimum $(T', S\cup (T\setminus T'))$-cut with cut-set in $\IA_{G_0}(T_0, d, 2c)$.  

Recall that there is a $(V^\dagger, V\cut V^\dagger)$ that is a  minimum atomic $(T', S\cup (T\cut T'))$-cut of size at most $c$ on $G$, and $\partial_G(V^\dagger)\subset G'[V^\star]$. Since $(V^\dagger, V\cut V^\dagger)$ is an atomic cut, any path from a vertex in $T'$ to  a vertex in $S$ has an edge that is in $\partial_{G}(V^\dagger)$, which is also in $G[V^\star]$. 
Note that within $G[V^\dagger]$, there is a path from some vertex of $S$ to some vertex of $T'$. 
Hence, $G[V^\dagger]$ contains an edge in $G[V^\star]$.
Since  $V^\star$ is contained in one connected component of $G_0\cut \IA_{G_0}(T_0, d, 2c)$,  $V^\star\subset V^\dagger\subset V'$. 
\end{proof}

\begin{claim}\label{claim:type2_helper2}
Let $(T_1, S\cup(T\cut T_1))$ and $(T_2, S\cup(T\cut T_2))$ be two partitions in $\shatterset_{(\emptyset,S)}$ that admit minimum atomic cuts $(V_1, V\cut V_1)$ and $(V_2, V\cut V_2)$ of size at most $c$ whose cut-sets are contained in  $G'[V^\star]$. Then $T_1\cap T_2\neq \emptyset$.
\end{claim}

\begin{proof}
$(V_1, V_0\cut V_1)$ and $(V_2, V_0\cut V_2)$ are minimum cuts on $G_0$ that form partitions $(T_1, T_0\cut T_1)$ and $(T_2, T_0\cut T_2)$. If $T_1\cap T_2 = \emptyset$, then $T\cap(V_1\cap V_2) = \emptyset$.  Let $V_1' = V_1\cut (V_1\cap V_2)$ and $V_2' = V_2\cut(V_1\cap V_2)$. $(V_1', V_0\cut V_1')$ and $(V_2', V_0\cut V_2')$ are also cuts on $G_0$ that form partitions $(T_1, T_0\cut T_1)$ and $(T_2, T_0\cut T_2)$. Since $(V_1,V_0\cut V_1)$ and $(V_2, V_0\cut V_2)$ are minimum cuts, $\abs{\partial_{G_0}(V_1) }\leq \abs{\partial_{G_0}(V_1')}$ and $\abs{\partial_G(V_2)}\leq \abs{\partial_G(V_2')}$. However, by Lemma~\ref{lem:swapping}, we have \[\abs{\partial_{G_0}(V_1')} + \abs{\partial_{G_0}(V_2')} = \abs{\partial_{G_0}(V_1) } + \abs{\partial_{G_0}(V_2)}.\] Therefore, $\abs{\partial_{G_0}(V_1) }= \abs{\partial_{G_0}(V_1')}$ and $\abs{\partial_G(V_2)} = \abs{\partial_G(V_2')}$. Note that $(V_1', V_0\cut V_1')$ and $(V_2', V_0\cut V_2')$  are also minimum cuts on $G_0$ that partition $T$ into $(T_1, T\cut T_1)$ and $(T_2, T\cut T_2)$ respectively and their cut-sets are also contained in $\IA_{G_0}(T_0, d, 2c)$. However, since $V^\star\in V_1\cap V_2$, $V^\star\not\subset V_1'$ and  $V^\star \not \subset V_2'$. This gives us a contradiction to Claim~\ref{claim:type2_helper1}.
\end{proof}

\begin{proof}[Proof of Lemma \ref{lem:overview_type2_b8}]

Fix a partition of $S\cup T$, $(T', S'\cup(T\cut T'))\in \shatterset_{(\emptyset, S)}$ that admits a minimum atomic cut $(V_C, V\cut V_C)$ whose cut-set is contained in $G'[V^\star]$. If there exists a $V_j$ such that $(V_j, V\cut V_j)$ also partitions $S\cup T$ into $(T', S'\cup(T\cut T'))$, then  $\partial_G(V_j)$ shatters partition $(T', S'\cup(T\cut T'))$.

Otherwise, no $(V_i, V\cut V_i)$ in the set induces partition $(T', T\cut T')$ of $T$.  Since \[\set{(V_1, V\setminus V_1), \ldots, (V_t, V\cut V_t)}\] is a maximal set, there exists a $(V_j, V\cut V_j)$ that is not parallel to $(V_C, V\cut V_C)$. Since $(V_j, V\cut V_j)$ is atomic, $\partial_G(V_j)$ shatters $(V_C, V\cut V_C)$.

 Fix $i, j$. By Claim~\ref{claim:type2_helper2}, $T_i\cap T_{j}\neq \emptyset$. Consider $(V_i, V\cut V_i)$ and $(V_{j}, V\cut V_{j})$. Since they are different and parallel, by definition, one of the following cases holds:
\begin{itemize}
    \item $V_i\subset V_{j}$.
    \item $V_i\subset V\cut V_{j}$. This is impossible because  $V_i\cap V_{j}\neq\emptyset$.
    \item $V\cut V_i\subset V_{j}$. This is impossible because $S \subset (V\cut V_i)\cap (V\cut V_{j})$. Therefore, $(V\cut V_i)\cap (V\cut V_{j})$ cannot be empty.
    \item $V\cut V_i \subset V\cut V_{j}$. In this case, $V_j\subset V_i$.
\end{itemize}
Therefore, we can define permutation $\sigma$ as 
\begin{gather*}
    \sigma(i) = \text{the unique } k\in\set{1, \ldots, t}\text{ such that }V_k\subset V_j\text{ for all }j \in\set{1, \ldots, t}\cut \wrap{\bigcup_{j = 1}^i \set{\sigma(j)}}
\end{gather*}
Then we have $V_{\sigma(1)} \subseteq V_{\sigma(2)} \cdots\subset V_{\sigma(t)}$.
\end{proof}

\begin{proof}[Proof of Lemma~\ref{lem:overview_size_type2}]
For $(\emptyset, S)\in \bipartitionsys$, on each connected component $G'[P_i]$ of $G'$ containing at least one vertex in $S$, Lemma~\ref{lem:overview_type2_b8} defines a set $\mathcal C_{P_i}$ of atomic cuts of size at most $c$. The union of the cut-sets of all cuts in all $\mathcal C_{P_i}$'s is $F_{(\emptyset, S)}$ that shatters all bipartitions in $\shatterset_{(\emptyset, S)}$. By Lemma~\ref{lem:overview_type2_b8}, for each $\mathcal C_{P_i} = \{(V_1, V\cut V_1), \ldots, $ $ (V_t, V\cut V_t)\}$, there exists a permutation $\sigma$ of $\set{1,\ldots, t}$ such that $V_{\sigma(1)} \subseteq V_{\sigma(2)} \cdots\subset V_{\sigma(t)}$. Since the ordering of cuts in $\mathcal C_{P_i}$ can be changed, we assume that $\sigma(i) = i$ for $i = 1,\ldots, t$ without loss of generality. By Lemma~\ref{lem:overview:elim_b5}, there are at most $O(c)$ cuts in each $\mathcal C_{P_i}$. Since there are at most $O(\abs S)$ connected components of $G'$ containing at least one vertex in $S$, $F_{(\emptyset, S)}$ contains at most $O(\abs S c^2)$ edges.
\end{proof}

\paragraph{Proof of Lemma~\ref{lem:update_intersecting_all-1}} Now we are ready to prove Lemma~\ref{lem:update_intersecting_all-1}. 
\begin{proof}[Proof of Lemma~\ref{lem:update_intersecting_all-1}]
Let $\bipartitionsys$ be a maximal set of pairwise parallel bipartitions of $S$ that admit atomic cuts. Let $\shatterset_{(S', S\cut S')}$ be as defined in Lemma~\ref{overview_f0_f1_b4}. $F_0$ be a set of edges that contains the cut-set of a $(S', S\cut S')$-atomic cut for each $(S', S\cut S')\in\bipartitionsys$ and $F_1$ be a set of edges such that $F_1$ shatters all bipartitions in $\shatterset_{(S', S\setminus S')}$ for each $(S', S\setminus S')\in \Gamma$. 
Lemma~\ref{overview_f0_f1_b4} states that $F_0 \cup F_1$ is a repair set. 

By Lemma~\ref{lem:linear_partition_overviewb3}, there are $O(\abs S)$ bipartitions in $\bipartitionsys$. Therefore, $\abs{F_0}$ is $O(\abs S c)$.

To bound the size of $F_1$, we consider $F_1$ as a union $\bigcup_{(S', S\cut S')\in\bipartitionsys}F_{(S', S\cut S')}$ such that $F_{(S', S\cut S')}$ shatters all bipartitions in $\shatterset_{(S', S\setminus S')}$.

For each $(S', S\setminus S')\in\bipartitionsys$ such that $S'\neq \emptyset$ and $S'\neq S$, Lemma~\ref{lem:overview_size_each_type1} shows that $\abs{F_{(S', S\cut S')}}$ has size $O(c^2)$. Since there are $O(\abs S)$ bipartitions in $\bipartitionsys$ such that $S'\neq \emptyset$ and $S'\neq S$, the union of these $F_{(S', S\cut S')}$ has size at most $O(\abs S c^2)$.

For $(\emptyset, S)\in \bipartitionsys$, Lemma~\ref{lem:overview_size_type2} shows that $F_{(\emptyset, S)}$ contains $O(\abs S c^2)$ edges.

Therefore, $\abs{F_1}$ is $O(\abs S c^2)$. The repair set $F_0\cup F_1$ contains at most $O(\abs S c^2)$ edges.
\end{proof}

\newpage

\def\dsinitialize{\textsc{DS-Initialize}}
\def\dsupdate{\textsc{DS-Update}}
\def\rti{t_{\mathrm{preprocess}}}
\def\rtu{t_{\mathrm{amortized}}}
\def\parametertimes{\xi}
\def\parameterlength{w}
\def\parametertimesub{\zeta}

\section{Decremental Cut Containment Set Update Algorithm}\label{sec:update_ia_set}

This section gives an efficient algorithm to update the cut containment set for graphs with good conductance ($n^{o(1)}$ conductance) in the  decremental update setting.


\paragraph{Cuts with one side of small volume} The main property we use for graphs with bounded conductance is that 
for a graph with conductance $\phi$, every cut of size at most $c$ has one side of volume at most $c / \phi$. When we update the cut containment set for graphs with good conductance, we only need to focus on cuts with one side of small volume.

For a set of terminals $T$ and a subset $T' \subset T$, a
cut $C$ for graph $G$ is a 
\emph{$(T', T \setminus T', t)$-cut} if 
$C$ partitions $T$ into $T'$ and $T\setminus T'$,
and the volume of the side that contains $T'$ is at most $t$. 
A cut $C$ is a \emph{minimum
$(T', T \setminus T', t)$-cut} if it has the smallest cut size among all the $(T', T \setminus T', t)$-cuts.
We use $mincut_G(T', T\setminus T', t)$ to denote the size of minimum $(T', T \setminus T', t)$-cut on graph $G$.

For a graph with conductance $\phi$ and a set of terminals $T$,
if a set of edges $F$ satisfies the condition that for every bipartition $(T', T\setminus T')$ of $T$ such that $mincut_G(T', T\setminus T, c / \phi) \leq c$, 
$F$ contains the cut-set of a minimum $(T', T\setminus T', c/\phi )$-cut, then $F$ is a \emph{$(T, c)$-cut containment set} of $G$.

Cut containment sets for graphs parameterized by conductance are also recursively constructed. 
In order to accommodate the change that we only consider cuts with one side of small volume,  we modify the definition of the $\IA$ set.

\paragraph{Redefine $\IA$ set} We define $\IA$ set for cuts with one side of small volume as follows.

\begin{definition}\label{def:ia_new}
Let $G = (V, E)$ be a connected graph, $T\subset V$ be a set of terminals, and $c, d, q, t$ be four integers such that $d \geq c > 0$ and $q \geq t > 0$. 
A set of edges $E'\subset E$ is an \emph{$\IA_G(T, t,q, d, c)$ set}
if 
for every 
subset $\emptyset \subsetneq T' \subset T$ such that $mincut_G(T', T\setminus T', t) \leq d$, 
there is a $(T', T\cut T', q)$-cut of size at most $mincut_G(T', T\setminus T', t)$ 
such that  every connected component of $G\setminus E'$ contains at most $\max\{mincut_G(T', T\setminus T', t) -  c, 0\}$ edges of the cut-set of $C$.

For a graph $G = (V, E)$ that is not necessarily connected, 
a set of edges $E' \subset E$ is an $\IA_G(T, t, q, d, c)$ set if 
for every $Q \subset V$ such that $Q$ forms a connected component of $G$, the edges of $E'$ in $G[Q]$ form an $\IA_{G[Q]}(T\cap V', t, q, d, c)$ set.

\end{definition}
For simplicity, we use $\IA_G(T, t, q, d, c)$ to denote an $\IA_G(T, t, q, d, c)$ set when there is no ambiguity.
Again, we ensure that $\IA$ sets under this new definition are intercluster edges of some vertex partitions of the graph so that each connected component of the graph after removing edges of an $\IA$ set is an induced subgraph on the same set of vertices.

By Definition~\ref{def:ia_new}, 
for a graph $G = (V, E)$ with conductance $\phi$ and a set of terminals $T$, 
$\IA_G(T, t, q, d, c)$ is a $(T, c)$-cut containment set of $G$ with respect to $T$ if $q \geq t \geq c /\phi$ and $d \geq c$.

We emphasize that this new definition is defined for arbitrary graphs, not just graphs parameterized by conductance, because when constructing the IA set recursively, the intermediate graph on which we want to construct $\IA$ set can be an arbitrary subgraph of the input graph, and thus the conductance is unbounded (see Figure~\ref{fig:size_construct_IA} for an example). 

\begin{figure}[htb]
\centering
\includegraphics[width=.4\linewidth]{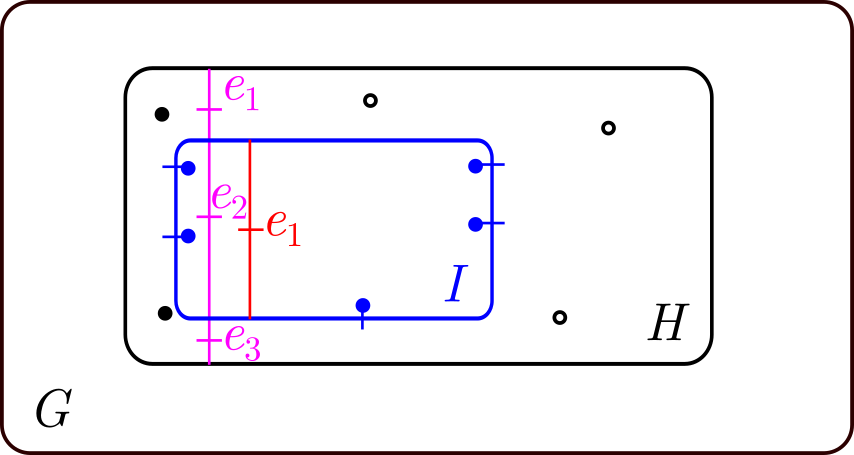}
\caption{$G$ is a graph with conductance $\phi$, and $H$ is an induced subgraph of $G$. The conductance of $H$ can be arbitrary. 
Suppose $T$ is the union of $T_1$ (black dots) and $T_2$ (black circles).
$C$ is a $(T_1, T_2, t)$-cut of size three on $H$ with cut-set $\{e_1, e_2, e_3\}$. 
$I$ is an induced subgraph of $H$. $\{e_2\}$ and $\{e_4\}$ induces cuts with left side of volume at most $t$ that partition terminals of $I$ in the same way. 
If an $\IA_H(\{t_1, t_2\}, t, q, d, 3)$ set is recursively constructed (via recursion on $I$), the $\IA$ set might contain edges $e_1, e_4, e_3$ but no $e_2$. Edges $\{e_1, e_4, e_3\}$ induce a cut with both sides having volume greater than $t$.
}
\label{fig:size_construct_IA}
\end{figure}

This is also the reason for having two additional parameters in the new IA set definition compared with the original definition of IA set (Definition~\ref{def:intersect-all}).
The new parameter $t$ is to capture the idea that we only consider cuts with one side of bounded volume (at most $t$). 
The purpose of the new parameter $q$ is for the recursive construction of IA set: 
Since our definition is for arbitrary graphs,
recursive construction of $\IA$ set 
cannot promise that 
for any terminal partition $(T', T\setminus T')$ such that $mincut(T', T\setminus T', t) \leq d$, 
there is a minimum $(T', T\setminus T', t)$-cut 
such that  every connected component of the graph after removing the IA set contains at most $\max\{mincut_G(T', T\setminus T', t) -  c, 0\}$ edges of the cut-set (see Figure~\ref{fig:size_construct_IA} for an example).
Instead, we promise that there is a $(T', T\setminus T', q)$-cut of cut size at most $mincut(T', T\setminus T', t)$ 
such that  every connected component of the graph after removing the IA set contains at most $\max\{mincut_G(T', T\setminus T', t) -  c, 0\}$ edges of the cut-set 
for some parameter $q$ that is still bounded as a function of $t$ and $c$.

The $\IA$ set in the new definition can also be constructed recursively if the parameters are set properly.
If $E_1, E_2, \dots, E_c$ are sets of edges satisfying the following ``recursive construction'' condition,
then $E_1 \cup \dots \cup E_c$ is an $\IA_G(T, t, q, d, c)$ set (Corollary~\ref{cor:ia_composition}).
\begin{framed}
\noindent \textbf{Recursive construction condition:}

\vspace{.2cm}\noindent Each $E_i$ is an 
\[\IA_{G\setminus (E_1 \cup \dots \cup E_{i-1})}(T \cup \endpoints(E_1 \cup \dots \cup E_{i-1}), t_i, q_i, d - i + 1, 1) 
\]
set for all $1 \leq i \leq c$, where $t_1, t_2, \dots, t_i, q_i$ satisfies \begin{equation}\label{equ:ia_composition_condition_old}t \leq t_1, q_c \cdot (d+1) \leq q, t_i \leq q_i \text{ for all } 1 \leq i \leq c, \text{ and  }  q_i \cdot (d+1) \leq t_{i+1} \text{ for all } 1 \leq i \leq c-1.\end{equation} 
\vspace{-.6cm}\end{framed}
\noindent 
Hence, to construct an $\IA_G(T, t, q, d, c)$ set, we can 
repeat the following procedure for $c$ iterations, we construct $E_i$ that is an 
\[\IA_{G\setminus (\cup_{j=1}^{i-1} E_j)}(T\cup \endpoints\wrap{\cup_{j=1}^{i-1}E_j}, t_i, q_i, d-i+1, 1)\] set. 
At the end, $E_1\cup E_2\cup \dots \cup E_c$ is a desirable $\IA_G(T, t, q, d, c)$ cut.

We say an $\IA(T, t, q, d, c)$ set is recursively constructed with respect to parameters $t_1, q_1, \dots, t_c, q_c$
if the $\IA$ set is partitioned into $E_1, \dots, E_c$ that satisfy the recursive construction condition. 

\paragraph{Decremental \text{IA} set update algorithm}
We consider the following decremental update scenario for $\IA$ set update: Let $G_0 = (V_0, E_0)$ be a connected graph, $T_0\subset V_0$ be a set of vertices,  and $V \subset V_0$ be 
a set of vertices  such that 
$G_0[V]$ is a connected graph.
We use $G= (V, E)$ to denote $G_0[V]$. Suppose $S = \endpoints(\partial_{G_0}(V)) \cap V$, and $T = (T_0\cap V) \setminus S$.
Given an $\IA_{G_0}(T_0, t, q, d, 2c + 1)$ set for $G_0$ which is recursively constructed, 
our goal is to obtain a small set of edges, called repair set, 
so that the union of the repair set and $\IA_{G_0}(T_0, t, q, d, 2c + 1)\cap E$ is an $\IA_{G}(S \cup T, t, q + t, c, 1)$ set for $G$.

As our main technical contribution, we give an algorithm to compute a repair set of size linear in $|S|$ with running time also linear in $|S|$.

\begin{lemma}\label{lem:repair_set_algorithm}
For an $\IA_{G_0}(T_0, t, q_0, d, 2c)$ set 
and an $\IA_{G_0}(T_0, t, q, d, 2c+1)$ set 
satisfying $q \geq 2t$ such that 
the $\IA_{G_0}(T_0, t, q, d, 2c+1)$ set
is derived from the $\IA_{G_0}(T_0, t, q_0, d, 2c)$ set,
given access to  $G, G\setminus (\IA_{G_0}(T_0, t, q_0, d, 2c)\cap E)$, and $G\setminus (\IA_{G_0}(T_0, t, q, d, 2c + 1)\cap E)$, 
there is an algorithm with running time $O(|S| (2q)^{c^2 + 3c+3}n^{o(1)})$ to find a repair set 
of size $|S|(16c^3 + 12c^2 + 2c)$ such that the union of \\$\IA_{G_0}(T_0, t, q, d, 2c+1) \cap E$ and  the  repair set is an $\IA_{G}(T \cup S, t, q + t, c, 1)$ set. 
\end{lemma}

To see the main challenge of proving Lemma~\ref{lem:repair_set_algorithm},
recall that one useful property to bound the number of maximal pairwise parallel atomic cuts Lemma~\ref{lem:find_parallel_cuts} of Section~\ref{sec:overview:cut_containment} is that for two parallel atomic cuts $C_1, C_2$  of size at most $c$ that have same partition on $S$, but different partition on $T$, 
the union of their cut-sets forms another cut $C$ of size at most $2c$ such that $S$ is on one side of the cut. Consequently, using the property of 
$\IA$ set, there is another cut $C^\diamond$ that partitions $T$ the same as $C$ such that the cut-set of $C^\diamond$ belongs to the $\IA$ set.

However, for two cuts $C_1$ and $C_2$ that both have one side with volume bounded by a parameter $t$, the union of $C_1$ and $C_2$'s cut-sets still forms a cut $C$. But $C$ might not have a side with volume bounded by $t$. See Figure~\ref{fig:simple_union} for an example. In this case we cannot guarantee the existence  of cut $C^\diamond$ that partitions $S$ the same as $C$ does with its cut-set contained in the $\IA$ set. Thus, the previous approach of finding maximal pairwise parallel atomic cuts in Section~\ref{sec:overview:cut_containment}  fails. 
\begin{figure}[htb]
    \centering
    \includegraphics[width=.25\linewidth]{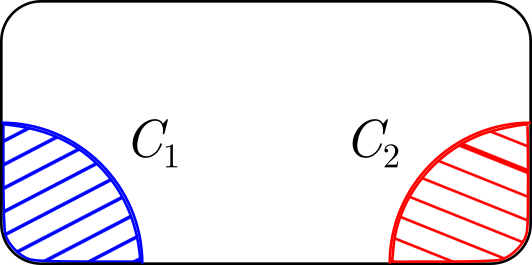}
    \caption{Both $C_1$ and $C_2$ have a side with volume bounded by $t$. The cut induced by union of $C_1$ and $C_2$ does not have a side with volume bounded by $t$.}
    \label{fig:simple_union}
\end{figure}


To overcome this barrier, 
we give an algorithm, called the \emph{\textbf{elimination procedure}}, which takes a set of cuts as input and selects a subset of these cuts such that the cut-set union of selected cuts shatters all the remaining given cuts (Section~\ref{sec:elimination}). 
We further show that with some additional restriction on the given set of cuts, 
the elimination procedure always selects a subset of bounded number cuts (Lemma~\ref{lem:procedure}).

Based on elimination procedure, we further show that we can carefully group the cuts that are not shattered by $\IA_{G_0}(T_0, t, q, d, 2c+1) \cap E$ into $O(|S|)$ groups so that the cuts in each group satisfy the additional restriction that enable efficient elimination procedure (Section~\ref{sec:ia_update}). 

\paragraph{Decremental cut containment set update algorithm}
Finally, we obtain a decremental cut containment set update algorithm. 
Given a cut containment set $\cc$ of $G_0$ with respect to a set of terminals $T_0$, 
our goal is to obtain a set $F$ of edges
such that the union of $F$ and the edges in $\cc\cap G$ is a cut containment set of $G$ with respect to $S\cup T$.

\begin{lemma}\label{lem:cut_containment_update}
Suppose in the decremental update setting, both $G_0 = (V_0, E_0)$ and $G = (V, E)$ have conductance at least $\phi$. 
Given three positive integers $c, t$ and $q$ such that $q \geq t \geq c/ \phi$,
a $(T_0, c^2 +2c)$-cut containment set $\cc$ of graph $G_0$ that is an $\IA_{G_0 }(T_0 ,t, q, c^2 + 2c, c^2+2c)$ recursively constructed from $E_1, E_2, \dots E_{c^2 + 2c}$ with respect to parameters $t_1, q_1, \dots, t_{c^2 + 2c},$ $ q_{c^2 + 2c}$. 
If parameters $t_1, q_1, \dots, t_{c^2 + 2c}, q_{c^2 + 2c}$ satisfy the following conditions, 
 \begin{equation}\begin{split}\label{equ:ia_composition_condition}& t \leq t_1,\quad
 q_{c^2 + 2c} ((c^2 + 2c)+2)^2 \leq q, \quad
  t_i \leq q_i \text{ for all } 1 \leq i \leq c^2 + 2c, \text{ and }\\ & \ \ \ \ \ \ \ q_i \cdot ((c^2 + 2c)+2)^2 \leq t_{i+1} \text{ for all } 1 \leq i \leq c^2 + 2c - 1,\end{split}\end{equation}
 then 
 given access to $G$ and $G_i = G\setminus (E_1 \cup \dots \cup E_i)$ for each $1 \leq i \leq c^2 + 2c$, 
there is an algorithm with running time 
\[O\wrap{\abs{S} (2q)^{O(c^2)} n^{o(1)}}\] that computes an edge set $F\subset E$ of size $|S|O(c)^{O(c)}$ such that $F \cup (\cc \cap E)$ is a recursively constructed $(S\cup T, c)$-cut containment set of $G$ with respect to parameters $t_1', q_1', \dots, t_{c}', q_c'$ satisfying 
\begin{equation}\label{equ:new_inequality}t_i' = t_{(i-1)\cdot(2c-i+3) + 1}, q_i' = q_{i\cdot (2c-i+2)} ((c^2 + 2c)+ 2) \text{ for all } 1 \leq i \leq c.\end{equation}
\end{lemma}

The cut containment set update algorithm is obtained by recursively applying the decremental $\IA$ set update algorithm. Given an $\IA_{G_0 }(T_0 ,t, q, c^2 + 2c, c^2+2c)$ set that is the union of $E_1, \ldots, E_{c^2 + 2c}$ satisfying the recursive construction condition, 
we can group these $c^2 + 2c$ sets into $c$ sets $E_i'$ such that $E_i'$ is an \[\IA_{G_0\cut \bigcup_{j=1}^{i-1} E_j'}\wrap{T_0\cup \endpoints\left(\cup_{j=1}^{i-1} E_j'\right), t_i', q_i' - t_i', c^2 + 2c - (i-1)\cdot (2c - i + 3), 2(c-i)+3 }\] set.

We then recursively compute a  repair set $F_i$ by Lemma~\ref{lem:repair_set_algorithm} for each $E_i'$ such that $F_i \cup (E_i' \cap E)$ is an 
\[\IA_{G_0\cut \bigcup_{j=1}^{i-1} \left(F_j \cup E_j'\right)}\wrap{T_0\cup \endpoints(\cup_{j=1}^{i-1} (E_j'\cup F_j)), t_i', q_i', c-i+1, 1}\] set. Then by Corollary~\ref{cor:ia_composition}, $\left(\cup_{i=1}^{c}(E_i' \cup F_i)\right)\cap E$ is a $(S\cup T, c)$-cut containment set. 
Note that when computing $F_i$, 
we only need to recurse on connected components of $G\setminus (\cup_{j=1}^{i-1}(F_j \cup E_j'))$ that contains at least one vertex of $S \cup \endpoints(\cup_{j=1}^{i-1} F_j)$. Thus $F_i$ is of size $|S|O(c)^{3i}$ by Lemma~\ref{lem:repair_set_algorithm}, and thus $F$ contains at most $|S|O(c)^{O(c)}$ edges.  \\



In the rest of this section, we prove some properties of new $\IA$ set definition in Section~\ref{sec:IA_set_property}. 
Section~\ref{sec:simple_cut} defines simple cut, a generalization of atomic cut, and characterizes the cuts that need to be shattered by the repair set. 
In Section~\ref{sec:elimination}, we give the elimination procedure. 
In Section~\ref{sec:ia_update}, we present our decremental $\IA$ set update algorithm. 
In Section~\ref{sec:subsec_cut_containment_udpate}, we present our  decremental cut containment set update algorithm.

\subsection{\texorpdfstring{$\IA$}{IA} Set Properties}\label{sec:IA_set_property}
We formally define cuts with one side of bounded volume. 
\begin{definition}[cut with one side of bounded volume]\label{def:intersect-lps}



For a set of terminals $T$ and a subset $T' \subset T$, a
cut $C$ for graph $G$ is a 
\emph{$(T', T \setminus T', t)$-cut} if 
$C$ partitions $T$ into $T'$ and $T\setminus T'$,
and the side that contains $T'$ has a volume of at most $t$. 

A cut $C$ is a \emph{minimum
$(T', T \setminus T', t)$-cut} if it has the smallest cut size among all the $(T', T \setminus T', t)$-cuts.
We use $mincut_G(T', T\setminus T', t)$ to denote the size of minimum $(T', T \setminus T', t)$-cut for graph $G$.
\end{definition}

\begin{remark}
We remark that 
this definition is not symmetric with respect to $T'$ and $T\setminus T'$, i.e., a $(T',  T\setminus T', t)$-cut might not be a $(T\setminus T', T', t)$-cut.
\end{remark}

By Definition~\ref{def:intersect-lps}, we have the following fact relating cut containment set to cuts with one side having small volume.

\begin{fact}\label{fact:ca_small_vol}
Let $G = (V, E)$ be a connected graph of conductance $\phi$, and $T\subset V$ be a set of terminals, and $c$ be a positive integer. 
If $F\subset E$ is a set of edges such that for every $\emptyset \subsetneq T' \subsetneq T$ such that $mincut_G(T', T\setminus T, c / \phi) \leq c$, there exists a $(T', T\setminus T')$-cut $C$ of size $mincut_G(T', T\setminus T, c / \phi)$ such that $C$'s cut-set is a subset of $F$, then $F$ is a cut containment set of $G$ with respect to $T$. 

\end{fact}

In the rest of this section, we prove the following properties of $\IA$ sets under the new definition.

\begin{enumerate}
    \item $\IA$ set with appropriate parameters implies cut containment set (Claim~\ref{claim:terimial_not_in_same_cc_new}). 
    \item $\IA$ set can be recursively constructed similar to \cite{chalermsook2020vertex} with appropriate parameters (Lemma~\ref{lem:ia_composition} and Corollary~\ref{cor:ia_composition}). 
\end{enumerate}

By Definition~\ref{def:cut-containment-set}, Fact~\ref{fact:ca_small_vol} and Definition~\ref{def:ia_new}, an $\IA$ set with appropriate parameters for a graph induces a cut containment set. 

\begin{claim}\label{claim:terimial_not_in_same_cc_new}
    Let $G = (V, E)$ be a connected graph with conductance $\phi$,  $T \subset V$ be a set of terminals, and $c$ be a positive integer. For any positive integers $q, t, d$ satisfying $q\geq t \geq c / \phi$ and $d \geq c$, an arbitrary $\IA_G(T, t,q, d, c)$ set is a $(T, c)$-cut containment set of $G$.


\end{claim}
\begin{proof}
Let $(T', T\setminus T')$ be an arbitrary bipartition of $T$ such that the minimum cut that partitions $T$ into $T'$ and $T\setminus T'$ is of size at most $c$. 
Let $C = (V', V\setminus V')$ be such a minimum cut, and $\alpha$ denote the size of $C$. 

Since the conductance of $G$ is $\phi$, we have 
\[\min\{\vol_G(V'), \vol_G(V\setminus V')\} \geq c / \phi.\]
Without loss of generality, we assume the side containing $T'$ has a volume of at most $c / \phi$. 
Since $\alpha$ is the size of minimum cut that partitions $T$ into $T'$ and $T\setminus T'$, we have 
\[ mincut_G(T', T\setminus T', c / \phi) \geq  mincut_G(T', T\setminus T',  t) \geq  mincut_G(T', T\setminus T', q) \geq \alpha.\]
On the other hand, $C$ is a $(T', T\setminus T', c/\phi)$-cut of size $\alpha$,  so we have  
\[mincut_G(T', T\setminus T', c / \phi) =  mincut_G(T', T\setminus T', t) =  mincut_G(T', T\setminus T', q) = \alpha.\]

By Definition~\ref{def:ia_new}, there is a $(T', T\setminus T', q)$-cut $C'$ of size at most  $\alpha$
such that  every connected component of  $G\setminus \IA_G(T, t, q, d, c)$ contains at most $\max\{\alpha -  c, 0\}$ edges of the cut-set of $C'$. 
Since $\alpha \leq c$,
it means that the cut-set of $C'$ is a subset of the $\IA_G(T, t, q, d, c)$ set.

Above argument works for any bipartition of $T$ that partitions $T$ into two non-empty subsets. By Definition~\ref{def:cut-containment-set} and Fact~\ref{fact:ca_small_vol}, the lemma follows. 
\end{proof}

Similar to the original $\IA$ set, we  show that the new $\IA$ set can also be constructed recursively by composing two $\IA$ sets with appropriate parameters. 

\begin{lemma}\label{lem:ia_composition}
Let $G = (V, E)$ be a graph, $T\subset V$ be a terminal set,
$E'$ be an $\IA_G(T, t_1, q_1, d, c_1)$ set, 
and $E''$ be a subset of $E\setminus E'$
such that $E''$
is an $\IA_{G \setminus E'}(T \cup \endpoints(E'), t_2, q_2, d - c_1, c_2)$ set. 
If $q_2 \geq t_2 \geq q_1 \geq t_1$, then 
 $E' \cup E''$ is an $\IA_G(T, t_1, q_2\cdot (d+1), d, c_1 + c_2)$ set.
\end{lemma}

\begin{proof}
Without loss of generality, we assme $G$ is a connected graph. 
For any $\emptyset \subsetneq T' \subset T$ 
such that there is a $(T', T\cut T', t_1)$-cut of size at most $d$ on $G$. We show that there exists a $(T', T\cut T', q_2\cdot (d+1))$-cut $(V^\dagger, V\cut V^\dagger)$ of size at most $mincut_G(T', T\setminus T', t_1)$
such that 
every connected component of $G\setminus (E' \cup E'')$ contains at most \[\max\{mincut_G(T', T\setminus T', t_1) - c_1 - c_2, 0\}\] edges of $\partial_G(V^\dagger)$.
Then by Definition~\ref{def:ia_new}, the lemma follows.


Let $\alpha = mincut_G(T', T\setminus T', t_1)$. We have $\alpha \leq d$.
If $\alpha \leq c_1$, then by definition, 
there is a $(T', T\setminus T', q_1)$-cut of size at most $\alpha$ whose cut-set is in $E'$, and we are done. 
In the rest of this proof,  we consider the case of $\alpha > c_1$.

By definition, there is a $(T', T\cut T', q_1)$-cut $(V', V\cut V')$ of size at most $\alpha$ such that in $G\cut E'$, every connected component contains at most $\max\{\alpha - c_1, 0\}$ edges of $\partial_G(V')$. 
Let $H = (V_H, E_H)$ be a connected component of $G\setminus E'$ that contains at least one edge of $\partial_G(V')$. 
Let 
\[F_H = \partial_G(V')\cap E_H, 
T_H = (T  \cup \endpoints(\partial_G(V_H)))\cap V_H, \text{ and } T'_H = T_H\cap V'.\] 
$(V'\cap V_H, (V\setminus V') \cap V_H)$ is a $(T_H', T_H \setminus T_H', q_1)$-cut on $H$ with cut-set $F_H$. 
Hence, $|F_H| \leq \alpha - c_1$.

Since $t_2 \geq q_1$, 
by definition of $E''$, there exists a $(T_H', T_H \setminus T_H', q_2)$-cut of size at most $|F_H|$ on graph $H$, denoted by $(V_H', V_H\cut V_H')$, 
such that every connected component of $H \cut (E''\cap E_H)$ contains at most $\max\{|F_H|-c_2, 0\}$ edges of $\partial_{H}(V_H')$. 

Now we look at $G\cut (E'\cup E'')$. 
Each connected component of $G\cut (E'\cup E'')$ is a connected component of $H\cut (E'' \cap E_H)$ for some $H$, which is a connected component of $G\setminus E'$. Let 
\begin{align*}V^\dagger = \left(\bigcup_{\substack{H = (V_H, E_H):V' \cap V_H \neq \emptyset, (V\setminus V') \cap V_H \neq \emptyset, \\ H \text{ is a connected component of } G \setminus E'}} V_H' \right)
\bigcup \left(\bigcup_{\substack{H = (V_H, E_H): V_H \subset V', \\ H \text{ is a connected component of } G \setminus E'}} V_H\right).\end{align*}
By Lemma~\ref{lem:cut_subgraph_replace}, $(V^\dagger, V\setminus V^\dagger)$ is a cut partitioning $T$ into $T'$ and $T\setminus T'$  such that 
each connected component of $G\cut (E'\cup E'')$  contains at most $\max\{\alpha -c_1-c_2, 0\}$ edges of $\partial_{G}(V^\dagger)$.
Since $\abs{\partial_G(V')}\leq \alpha$, there are at most $\alpha$ connected components of $G\setminus (E' \cup E'')$ that contain edges of $\partial_G(V')$. 
Since each $\vol_H(V_H')\leq q_2$ for any connected component $H$ of $G\setminus E'$
satisfying $V'\cap V_H\neq \emptyset$ and $(V\setminus V') \cap V_H \neq \emptyset$,
and 
$\vol_G(V') \leq q_1$
we have
\[\vol_G(V^\dagger)\leq 
\vol_G(V') + \left(\sum_{\substack{H:V' \cap V_H \neq \emptyset, (V\setminus V') \cap V_H \neq \emptyset, \\ H \text{ is a connected component of } G \setminus E'}} \vol_H(V_H') \right)\leq  
q_1 + q_2 \cdot \alpha \leq q_2 \cdot (d + 1).\]
Hence, $(V^\dagger, V\cut V^\dagger)$ is a $(T', T\cut T', q_2 \cdot (d+1))$-cut of size most $\alpha$ on $G$ 
such that 
each connected component of $G\cut (E'\cup E'')$  contains at most $\max\{\alpha -c_1-c_2, 0\}$ edges of $\partial_{G}(V^\dagger)$. 
\end{proof}

Applying Lemma~\ref{lem:ia_composition} recursively,
we have the following analogy of Equation~(\ref{equ:recursive_IA}) for our new IA definition. 

\begin{corollary}\label{cor:ia_composition}
Let $G = (V, E)$ be a connected graph and $T\subset V$ be a set of vertices.
For two integers $d \geq c$, 
let $t_1, q_1, t_2, t_2, \dots, t_c, q_c$ be $2c$ integers satisfying  
\[t_i \leq q_i \text{ for any } 1 \leq i \leq c, \text{ and } (d+1) \cdot q_i \leq t_{i+1} \text{ for any  } 1 \leq i \leq c-1,\]
and $E_1, \dots E_c$ be edge sets such that for any $1 \leq i \leq c$,
$E_i$ is an 
\[\IA_{G\setminus \wrap{\bigcup_{j=1}^{i-1} E_j}} \wrap{T \cup \endpoints\wrap{\textstyle{\bigcup}_{j=1}^{i-1}E_j}, t_i, q_i, d- i + 1, 1}\] set,
then $\bigcup_{j = 1}^c E_j$ is an $\IA_{G}(T, t_1, q_c \cdot (d+1), d, c)$ set.
\end{corollary}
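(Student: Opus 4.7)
The plan is to prove Corollary~\ref{cor:ia_composition} by induction on the number of sets composed, using Lemma~\ref{lem:ia_composition} as the one-step engine. Concretely, I will establish the strengthened statement that for every $1 \leq k \leq c$, the union $\bigcup_{j=1}^{k} E_j$ is an $\IA_G(T, t_1, q_k \cdot (d+1), d, k)$ set; the corollary is the $k=c$ instance.

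For the base case $k=1$, a one-line monotonicity observation suffices: if $E'$ is an $\IA_G(T, t, q, d, c)$ set and $q' \geq q$, then $E'$ is also an $\IA_G(T, t, q', d, c)$ set, since enlarging the vertex-budget in the second clause of Definition~\ref{def:ia_new} only relaxes the requirement on the certifying cut. Applied to $E_1$, which is given as $\IA_G(T, t_1, q_1, d, 1)$, this upgrades it to $\IA_G(T, t_1, q_1 \cdot (d+1), d, 1)$.

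For the inductive step, assuming $\bigcup_{j=1}^{k-1} E_j$ is an $\IA_G(T, t_1, q_{k-1}(d+1), d, k-1)$ set, I would apply Lemma~\ref{lem:ia_composition} with $E' = \bigcup_{j=1}^{k-1} E_j$, $E'' = E_k$, and parameter identifications $c_1 = k-1$, $c_2 = 1$, $(t_1^{\text{lem}}, q_1^{\text{lem}}) = (t_1, q_{k-1}(d+1))$, and $(t_2^{\text{lem}}, q_2^{\text{lem}}) = (t_k, q_k)$. The assumption on $E_k$ is precisely that it is an $\IA_{G \setminus E'}(T \cup \endpoints(E'), t_k, q_k, d - (k-1), 1)$ set, matching the lemma's second input. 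The ordering hypothesis $q_k \geq t_k \geq q_{k-1}(d+1) \geq t_1$ follows from the given inequalities: $t_k \leq q_k$ and $(d+1) q_{k-1} \leq t_k$ are assumed directly, while the chain $t_1 \leq q_1 \leq (d+1)q_1 \leq t_2 \leq q_2 \leq \cdots \leq q_{k-1} \leq (d+1)q_{k-1}$ (obtained by alternating $t_i \leq q_i$ and $(d+1)q_i \leq t_{i+1}$) yields $t_1 \leq q_{k-1}(d+1)$. Lemma~\ref{lem:ia_composition} then outputs $\bigcup_{j=1}^{k} E_j$ as an $\IA_G(T, t_1, q_k(d+1), d, (k-1)+1)$ set, closing the induction.

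No genuine obstacle is anticipated: the entire proof is routine bookkeeping around Lemma~\ref{lem:ia_composition}. The only point of care is the monotonicity reformulation for the base case and the telescoping check that $q_{k-1}(d+1)$ dominates $t_1$ at every step, both of which reduce to unwinding the interleaved inequalities on the $t_i, q_i$.
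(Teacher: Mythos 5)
Your proposal is correct and matches the paper's intended argument: the paper derives Corollary~\ref{cor:ia_composition} by iterating Lemma~\ref{lem:ia_composition}, which is exactly your induction with the strengthened claim that $\bigcup_{j=1}^{k}E_j$ is an $\IA_G(T,t_1,q_k(d+1),d,k)$ set. Your bookkeeping (the monotonicity in the $q$-parameter for the base case and the check $q_k\geq t_k\geq q_{k-1}(d+1)\geq t_1$ from the interleaved inequalities) is sound and fills in precisely the details the paper leaves implicit.
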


We say an $\IA_G(T, t, q, d, c)$ set is \emph{derived from} an $\IA_G(T, t, q', d, c')$ set for some $q \geq q' (d+1), c > c'$ 
if
\[\IA_G(T, t, q', d, c') \subset \IA_G(T, t, q, d, c),\] and  $\IA_G(T, t, q, d, c) \setminus \IA_G(T, t, q', d, c')$ is an
\[\IA_{G\setminus \IA_G(T, t, q', d, c')}(T \cup \endpoints(\IA_G(T, t, q', d, c')), q', q / (d+1), d - c', c - c')\]
set.



\subsection{Simple Cut, Characterization of Cuts To Be Shattered}\label{sec:simple_cut}

Since a cut with one side having small volume might not be an atomic cut (see Figure \ref{fig:simplecut}), we define simple cut, an analogy of atomic cut for cuts with one side having bounded volume.

\begin{definition}\label{def:simple_cut_new}
A cut $(V', V\setminus V')$ is a \emph{simple} cut if $G[V']$ is connected.

\end{definition}

We remark that the definition of simple cut is not symmetric, i.e.,  $(V\setminus V', V')$ might not be a simple cut if $(V', V\setminus V')$ is a simple cut, though $(V', V\setminus V')$ and $(V\setminus V', V')$ have the same cut-set. 

A simple cut is not necessarily an atomic cut. But the cut-set of a simple cut can be partitioned into a few edge disjoint sets so that each set corresponds to the cut-set of an atomic cut, which is not necessarily a simple cut. 
Such a partition is unique. So, if the cut-set of atomic cut $C$ is in the cut-set of cut $C'$, then $C$ is called a composing atomic cut of $C'$.

\begin{figure}
    \centering
    \includegraphics[width=.3\linewidth]{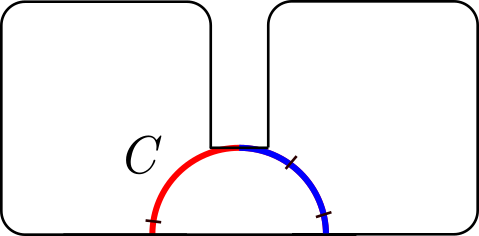}
    \caption{Cut $C$ is a non-atomic cut with one side  having small volume.
    The cut-set of $C$ is the union of cut-sets of two atomic cuts, both do not have a side with small volume. }
    \label{fig:simplecut}
\end{figure}

\begin{definition}
For a terminal set $T$ and a positive integer $t$, 
a cut $C$ for graph $G$ is a 
\emph{simple $(T', T \setminus T', t)$-cut} if $C$ is a simple cut and also a $(T', T \setminus T', t)$-cut. 
A cut $C$ is a \emph{simple minimum
$(T', T \setminus T', t)$-cut} if
$C$ is a simple
$(T', T \setminus T', t)$-cut of size $mincut_G(T', T\setminus T', t)$.


\end{definition}

The following lemma states that if a set of edges satisfy the $\IA$ set definition for bipartitions that have simple minimum cuts, then it is an IA set. Therefore, we can focus our attention on simple cuts.

\begin{lemma}\label{lem:simple_intercept}
Let $G = (V, E)$ be a connected graph, and $T\subset V$ be a set of vertices.
Let $E'$ be a set of edges such that for every $\emptyset \subsetneq T'\subset T$ satisfying the following two conditions:
\begin{enumerate}
    \item $mincut_G(T', T \setminus T', t) \leq c$ 
    \item there is a simple minimum $(T', T \setminus T', t)$-cut
\end{enumerate}
there is a $(T', T\setminus T', q)$-cut of size at most $mincut_G(T', T \setminus T', t)$ that is shattered by $E'$.
Then $E'$ is an $\IA_{G}(T, t, q +t, c, 1)$ set.
\end{lemma}
\begin{proof}
For a $\emptyset\subsetneq T_1 \subset T$ such that 
there is a $(T_1, T\cut T_1, t)$-cut of size at most $c$, 
let $(V_1, V\cut V_1)$ be a minimum $(T_1, T\cut T_1, t)$-cut. 

Let $V^\diamond$ be a subset of $V_1$ that forms a connected component of $G[V_1]$.
$(V^\diamond, V\setminus V^\diamond)$ is 
a  simple cut, and is also a minimum $(V^\diamond \cap T, T\setminus (V^\diamond \cap T), \vol_G(V^\diamond))$-cut satisfying $ V^\diamond \cap T \neq \emptyset$, otherwise, there is a $(T_1, T\cut T_1, t)$-cut with size smaller than $(V_1, V\setminus V_1)$.
On the other hand, since $(V^\diamond, V\setminus V^\diamond)$ is of size at most $c$,
we have 
\[c \geq mincut_G(V^\diamond\cap T, T\setminus (V^\diamond \cap T), \vol_G(V^\diamond)) \geq mincut_G(V^\diamond\cap T, T\setminus (V^\diamond \cap T), t).\]
By the definition of $E'$, 
there is a $(V^\diamond\cap T, T\setminus (V^\diamond \cap T), q)$-cut $(V^\dagger, V\setminus V^\dagger)$ of size at most $(V', V\setminus V')$ 
such that 
every connected component of $G\setminus E'$ contains at most $mincut_G(V^\diamond\cap T, T\setminus (V^\diamond \cap T), t)- 1$ edges of $\partial_G(V^\dagger)$.

Since it is possible for every connected component of $G\setminus \left((\partial_G(V_1) \setminus \partial_G(V^\diamond)) \cup \partial_G(V^\dagger)\right)$ to contain a vertex from $T_1$ and another vertex from $T\setminus T_1$, 
 there is a subset $E^\ddagger$ of $(\partial_G(V_1) \setminus \partial_G(V^\diamond)) \cup \partial_G(V^\dagger)$ corresponding to the cut-set of a $(T_1, T\setminus T_1, q + t)$-cut, whose size is at most $mincut_G(T_1, T\setminus T_1, t)$. 

If $|E^\ddagger| < mincut_G(T_1, T\setminus T_1, t)$, then
there is a $(T_1, T\setminus T_1, q+t)$-cut in graph $G$ such that every connected component of $G\setminus E'$ contains at most $mincut_G(T_1, T\setminus T_1, t) - 1$ edges of the cut-set. 
Otherwise, $|\partial_G(V^\diamond)| = mincut_G(T_1, T\setminus T_1, t)$ and $E^\ddagger$ contains all the edges of $(\partial_G(V_1) \setminus \partial_G(V^\diamond)) \cup \partial_G(V^\dagger)$.
Since no connected component of $G\setminus E'$ contains all the edges of $\partial_G(V^\dagger)$, 
no connected component of $G\setminus E'$ contains all the edges of $E^\ddagger$.
\end{proof}

The following lemma shows that we can focus our attention on the cuts that are contained in one connected component of $G\cut (\IA(T_0, t, q_0, d, 2c) \cap E)$ which contains at least one vertex from $S$. 


\begin{lemma}\label{lem:exclude_Case}
In the decremental $\IA$ set update scenario, suppose 
given 
an $\IA_{G_0}(T_0, t, q_0, d,  2c)$ set, an $\IA_{G_0}(T_0, t, q, d,  2c+1)$
set that is derived from the $\IA_{G_0}(T_0, t, q_0, d,  2c)$ set for some parameters $q \geq q_0 \geq t$ and $d \geq 2c+1$.
If $U$ is a subset of $S\cup T$ satisfying $U\neq \emptyset$ and $mincut_G(U, (S\cup T)\setminus U, t) \leq c$ such that 
there is a $(U, (S\cup T) \setminus U, q)$-cut $(V', V\setminus V')$ of size at most $mincut_G(U, (S\cup T)\setminus U, t)$ satisfying one of the following two conditions
\begin{enumerate}
    \item $(V', V\setminus V')$ is shattered by $\IA_{G_0}(T_0, t, q_0, d,  2c) \cap E$.
    \item  $\partial_G(V')$ is contained in a connected component of $G\setminus \wrap{\IA_{G_0}(T_0, t, q_0, d,  2c) \cap E}$ that does not have any vertex from $S$.
\end{enumerate}
then there is a $(U, (S\cup T) \setminus U, q)$-cut $C$ of size at most $mincut_G(U, (S\cup T)\setminus U, t)$ such that every connected component of $G\cut (\IA_{G_0}(T_0, t, q, d,  2c + 1)\cap E)$ contains at most $mincut_G(U, (S\cup T)\setminus U, t)-1$ edges of the cut-set of $C$.

\end{lemma}

%
%
%
%
%

\begin{proof}

Since the $\IA_{G_0}(T_0, t, q, d, 2c+1)$ set is derived from the $\IA_{G_0}(T_0, t, q_0, d, 2c)$ set, we have \[\IA_{G_0}(T_0, t, q_0, d, 2c) \subset \IA_{G_0}(T_0, t, q, d, 2c+1).\] 
The first condition implies that $\IA_{G_0}(T_0, t, q, d, 2c+1)\cap E$ shatters $(V', V\setminus V')$, and thus the lemma holds.

Now consider the second condition of the lemma.
Suppose $\partial_G(V')$ is contained in a connected component of  $G\cut (\IA_{G_0}(T_0, t, q_0, d, 2c) \cap E)$ satisfying $P\cap S = \emptyset$. 
Let $G[P] = (P, E_P)$ denote this connected component, where $E_P$ is the set of edges from $E$ that have both endpoints in $P$. 
$\partial_G(V')$ induces a cut of size at most $c$ with one side having volume at most $t$ on $G[P]$ that partition $T^\dagger =  (T \cup \endpoints(\partial_G(P)))\cap P$
into $T^\diamond = T^\dagger \cap (V'\cap P)$
and $T^\dagger \setminus T^\diamond = T^\dagger \cap ((V\setminus V') \cap P)$.
Note that $T^\diamond$ is not an empty set, otherwise $V' \cap (S\cup T) = \emptyset$.
Hence, $\partial_G(V')$ induces a $(T^\diamond, T^\dagger \setminus T^\diamond, t)$-cut for $G[P]$ of size at most $mincut_G(U, (S\cup T)\setminus U, t)$.

On the other hand, 
since the $\IA_{G_0}(T_0, t, q, d, 2c+1)$ set is derived from the $\IA_{G_0}(T_0, t, q_0, d, 2c)$ set, 
we have that 
$\IA_{G_0}(T_0, t, q, d, 2c+1) \cap E_P$  is an 
\[\IA_{G[P]}(T^\dagger, q_0, q / (d+1), d - 2c, 1)\] set for $G[P]$. 
By Definition~\ref{def:ia_new}, 
there is a $(T^\diamond, T^\dagger \setminus T^\diamond, q / (d+1))$-cut $C'$ of size at most $mincut_G(U, (S\cup T)\setminus U, t)$ 
in graph $G[P]$ such that 
every connected component of $G[P] \setminus (\IA_{G_0}(T_0, t, q, d, 2c+1)\cap E_P)$ contains at most $mincut_G(U, (S\cup T)\setminus U, t) - 1$ edges of the cut-set of $C'$. 

By Lemma~\ref{lem:cut_subgraph_replace}, 
the cut-set of $C'$ for graph $G[P]$
induces a $(U, (S\cup T) \setminus U, t + q / (d+1))$-cut $C^\dagger$ of size at most $mincut_G(U, (S\cup T)\setminus U, t)$ for graph $G$.
Since $t + q / (d+1) \leq q$, 
$C'$ is a $(U, (S\cup T) \setminus U, q)$-cut of size at most $mincut_G(U, (S\cup T)\setminus U, t)$ such that every connected component of $G\setminus (\IA_{G_0}(T_0, t, q, d, 2c+1) \cap E)$ contains at most $mincut_G(U, (S\cup T)\setminus U, t) - 1$ edges of the cut-set of $C'$.
\end{proof}



\subsection{Basic Graph Operations}
Before we present our algorithm, we discuss the basic operations supported by the data structure. 
The detailed discussion of graph data structure we used is in Appendix~\ref{sec:data_structure}. 

We assume the graph data structure maintains a spanning forest. Based on recent development on fully dynamic spanning forest~\cite{nanongkai2017dynamica, nanongkai2017dynamic, wulff2017fully, chuzhoy2019deterministic}, each of the following basic operations can be performed in $n^{o(1)}$ time.  

\begin{enumerate}
    \item insert an edge to the graph;
    \item delete an edge from the graph;
    \item given a vertex, return the id of the connected component that contains the given vertex;
    \item 
    if the graph is associated with a set of terminals, given a vertex, return the number of terminals in the connected component that contains the given vertex. 
\end{enumerate}
Based on these basic operations, we have some useful subroutines that will be used in our algorithm. 

\begin{lemma} \label{lem:subroutine_repair_set}
Given a graph $G$, we have the following subroutines:
\begin{enumerate}
    \item Given an edge set $E_0$ of graph $G$, there is an algorithm to determine if $E_0$ is the cut-set of an atomic cut with running time $O(|E_0| n^{o(1)})$. 
    \item Given vertex $x$ of $G$ and positive integers $c$ and $t$, 
    there is 
    an algorithm with running time $O(t^{c+2} n^{o(1)})$ to enumerate all the 
    simple cuts $(V', V\setminus V')$ of size at most $c$ satisfying $\vol_G(V') \leq t$ and $x \in V'$.
    Furthermore, the number of simple cuts  enumerated is at most $t^c$. 
    
    \item Given a set $U \subset S\cup T$ and two integers $t$ and $c$,
    there is 
    an algorithm with running time $O(t^{c^2 + c + 1} n^{o(1)})$ to enumerate all the 
    $(U, (S\cup T)\setminus U, t)$-cuts of size at most $c$. 
    Furthermore, the number of simple cuts  enumerated is at most $O(t^{c(c+1)})$. 

    \item Given two edge sets $E_0$ and $E_1$ each corresponding to the cut-set of an atomic cut, there is an algorithm to determine if the bipartitions on $S$ induced by $E_0$ is the same as the bipartition on $S$ induced by $E_1$ in $O((|E_0|+ |E_1|) n^{o(1)})$ time. 
    \item 
    Given two edge sets $E_0$ and $E_1$ each corresponding to the cut-set of an atomic cut, there is an algorithm to determine if the bipartitions on $S$ induced by $E_0$ is parallel with the bipartition on $S$ induced by $E_1$ in $O((|E_0|+ |E_1|) n^{o(1)})$ time. 
    
\end{enumerate}
\end{lemma}
The proof of Lemma~\ref{lem:subroutine_repair_set} and the detailed implementations of these subroutine are given in Appendix~\ref{sec:implementation}. \\

\subsection{Elimination Procedure}\label{sec:elimination}

Recall that  Section~\ref{sec:overview:cut_containment} showed that to shatter a given set of atomic cuts, 
it is sufficient to find a maximal subset of pairwise parallel atomic cuts from the given set of cuts, and take the cut-sets of the selected cuts. 
Furthermore, if the given atomic cuts satisfying the following conditions, then an arbitrary maximal set of pairwise parallel atomic cuts contains at most $O(c)$ cuts (Lemma~\ref{lem:find_parallel_cuts}). 
\begin{enumerate}
    \item[(S1)] The given cuts induce the same bipartition on $S$, but pairwise different bipartitions on $T$. 
    \item[(S2)] 
    There is a connected component of $G \cut (\IA_{G_0}(T_0, d, 2c)\cap E)$ contains the cut-sets of all the given atomic cuts, and also contains a vertex from $S$.
\end{enumerate}

In this section, new challenges arise since only cuts with one side of small volume are considered. The union of cut-sets of two atomic cuts both having one side with volume at most $t$ corresponds to a cut that might does not have a side with volume bounded by $t$.
To fix this problem, we show that with some additional properties, 
Lemma~\ref{lem:find_parallel_cuts} can be generalized to the new definition of $\IA$ set. 

First, we assume the cuts are given by realizable pairs (Definition~\ref{def:realizable_pair}), i.e. a pair that contains a simple cut, and a specific composing atomic cut of the simple cut.  
Both simple cuts and specific composing atomic cuts are crucial to our algorithm:
simple cuts define the cuts we want to shatter, 
and the atomic cuts are used to bound the size of cuts selected to shatter all the given realizable pairs. 

Second, instead of finding an arbitrary maximal set of pairwise parallel atomic cuts as Section~\ref{sec:overview:cut_containment},
we design a new algorithm, called the elimination procedure, to carefully select a subset of a given set of cuts to shatter the remaining cuts. 

We show that if the given cuts satisfy some additional properties, then the elimination procedure finds a subset of $O(c^2)$ cuts such whose cut-sets shatter all the given cuts (Lemma~\ref{lem:procedure}). \\

We define realizable pair as follows. 

\begin{definition}\label{def:realizable_pair}

For a connected graph $G = (V, E)$, 
a pair $(E', V')$ for some $V' \subset V$, $E' \subset E$ is a $(t, c)$-realizable pair if 
\begin{enumerate}
        \item 
        $(V', V \setminus V')$ is a simple $(t, c)$-cut. 
        \item 
        $E' \subset \partial_G(V')$
        is the cut-set of an atomic cut. 
\end{enumerate}




\end{definition}

We first give the elimination procedure. This will be the main tool to construct the repair set. 

\begin{framed}
\noindent \textbf{Elimination Procedure}

\noindent \textbf{Input:} Access to graph $G$, two vertex sets $S$, $T$, and a set $\Psi$ of $(t, c)$-realizable pairs satisfying the  condition that there is a vertex $x$ such that for every $(E', V') \in \Psi$, the atomic cut induced by $E'$ separates $V'$ and $x$.

\noindent     \textbf{Output:} A set of edges $W$.  
    
\begin{enumerate}    
\item Set $W = \emptyset$. 
\item Repeat the following process until $\Psi$ is an empty set and return $W$.
\begin{enumerate}
    \item Take a realizable pair $(E', V') \in \Psi$ such that there is no $(E'', V'') \in \Psi$ satisfying $\endpoints(E'')\subset V'$. 
    \item Put $\partial_G(V')$ into $W$.
    
    \item 
    Remove each $(E'', V'')$ from $\Psi$ if $W$ shatters a $$(V'' \cap (S\cup T), (V\setminus V'') \cap (S\cup T), \vol_G(V''))$$cut of size smaller than or equal to the size of $(V'',V\setminus V'')$. 
\end{enumerate}
\end{enumerate}
\vspace{-.5cm}\end{framed}

We show that the elimination procedure always terminates in $O(|\Psi|^2 t^{(c+1)c + 1} \poly(c)n^{o(1)})$ time for a given set of realizable pairs $\Psi$, and the output of the elimination procedure shatters a 
$(V' \cap (S\cup T), (V\setminus V')\cap (S\cup T), \vol_G(V'))$-cut of size smaller than or equal to the size of $(V', V\setminus V')$ for each $(E', V') \in \Psi$. 

\begin{lemma}\label{lem:elimination_procedure_correctness}
Let $G$ be a graph of $n$ vertices, and $\Psi$ be a set of $(t, c)$-realizable pairs for positive integers $t, c$ such that $c = (\log n)^{o(1)}$. If there is a vertex $x$ of $G$ such that for every $(E', V') \in \Psi$, the atomic cut induced by $E'$ separates $V'$ and $x$,
then the elimination procedure outputs a set of edges $W$ such that for each $(V', V\setminus V')\in \Psi$, $W$ shatter a $(V' \cap (S\cup T), (V\setminus V')\cap (S\cup T), \vol_G(V'))$-cut of size smaller than or equal to the size of $(V', V\setminus V')$.
The running time of the elimination procedure is $O(|\Psi|^2 t^{(c+1)c + 1} n^{o(1)})$. 
\end{lemma}
\begin{proof}
We first show that the algorithm always terminates. 
Let $(E', V'), (E'', V'') \in \Psi$ be two realizable pairs.
Let $(L', V\setminus L')$ and $(L'', V\setminus L'')$ be the atomic cuts induced by $E'$ and $E''$ respectively such that $V' \subseteq L'$ and $V''\subseteq L''$. 
Since $(L', V\setminus L')$ separates $V'$ and $x$, and $(L'', V\setminus L'')$ separates $V''$ and $x$,
if $E''$ is in $G[V']$, then $L''$ is a strict subset of $L'$. 
So Step (a) of the elimination procedure always finds a $(t, c)$-realizable pair in $\Psi$ if $\Psi$ is not an empty set. 
Since one iteration of Step (2) always removes at least one realizable pair from $\Psi$, the elimination procedure always terminates.

By the definition of the elimination procedure, a realizable pair $(E', V')$ is removed from $\Psi$ only if a $(V' \cap (S\cup T), (V\setminus V')\cap (S\cup T), \vol_G(V'))$-cut of size smaller than or equal to the size of $(V', V\setminus V'')$ is shattered by $W$. Hence, the output of the elimination procedure shatters a $(V' \cap (S\cup T), (V\setminus V')\cap (S\cup T), \vol_G(V'))$-cut of size smaller than or equal to the size of $(V', V\setminus V'')$ for each $(E', V')$ in $\Psi$ initially.

Now we bound the running time of the algorithm.
Note that for two $(t, c)$-realizable pairs $(E', V')$ and $(E'', V'')$, it takes $O(\poly(c)n^{o(1)})$ time to check if $\endpoints(E'') \subseteq V'$ by Lemma~\ref{lem:subroutine_repair_set}. Hence one execution of Step (2a) of the elimination procedure can be implemented in $O(|\Psi|\poly(c)n^{o(1)})$ time. 
By Lemma~\ref{lem:subroutine_repair_set},
for a $(t, c)$-realizable pairs $(E', V')$, $(V'\cap (S\cup T), (V\setminus V') \cap (S\cup T), \vol_G(V'))$-cuts can be enumerated in $O(t^{c(c+1) + 1}\poly(c)n^{o(1)})$ time, and thus one execution of Step (2c) of the elimination procedure can be implemented in $O(|\Psi|O(t^{c(c+1) + 1}n^{o(1)})$ time. 
Since Step (2) iterates at most $|\Psi|$ iterations, the overall running time of the elimination procedure is $O(|\Psi|^2 t^{(c+1)c + 1} n^{o(1)})$. 
\end{proof}

We show that if the given realizable pairs satisfying the following two properties in addition to (S1) and (S2), then the elimination procedure always selects a set of $O(c^2)$ pairs.
\begin{enumerate}
    \item[(S3)] There is a vertex $x \in V$ such that for each realizable pair, the cut induced by the composing atomic cut separates vertex $x$ and the simple cut's side with small volume. 
    \item[(S4)] For any two realizable pairs, their simple cuts have overlap on the side with small volume. 
\end{enumerate}



\begin{lemma}\label{lem:procedure}
In the decremental update setting, for any $\IA_{G_0}(T_0,t,q_0, d, 2c)$ set and a set of vertices  $V^\star \subset V$ that forms a connected component of $G \setminus (\IA_{G_0}(T,t,  q_0, d, 2c)\cap E)$,
let $\Psi$ be a set of 
$(t, c)$-realizable pairs $\{(E_1, V_1), \dots, (E_k, V_k)\}$ of graph $G$ 
satisfying the following conditions: 
(Let 
$(L_i, V\setminus L_i)$ be the atomic cut induced by 
    $E_i$   
    such that $V_i \subset L_i$ for every $1 \leq i \leq k$.)
\begin{enumerate}
    \item The cut-set of $(V_i, V\setminus V_i)$ is in $G[V^\star]$.
        
    \item One of the following two conditions hold:
    \begin{enumerate}
        \item $L_i \cap S = L_j \cap S$ for any $i, j$.
        \item $V_i \cap S = V_j \cap S$ for any $i, j$
    \end{enumerate}
    
    \item There is a vertex $x \in V$ such that for every $i$, $x \in V\setminus L_i$.
    \item $V_i \cap V_j \neq \emptyset$ for any $i \neq j$. 
\end{enumerate}
The elimination procedure on $\Psi$ outputs a set $W$ of at most $4c^3 + 3c^2$ edges. 
    
\end{lemma}

To prove Lemma~\ref{lem:procedure}, we first prove the following lemma.

\begin{lemma}\label{lem:small_number_cuts_general_new}
In the decremental update setting, 
    for any $\IA_{G_0}(T_0,t,q_0, d, 2c)$ set and a set of vertices  $V^\star \subset V$ that forms a connected component of $G \setminus (\IA_{G_0}(T_0,t, q_0, d, 2c)\cap E)$,
a set of $(t, c)$-realizable 
pairs $\{(E_1, V_1), \dots, (E_k, V_k)\}$ of graph $G$ 
satisfying the  conditions below    contains at most $4c^2 + 3c$ pairs.
(Let $(L_i, V\setminus L_i)$ denotes the cut induced by $E_i$ such that $V_i \subset L_i$ for any $1 \leq i \leq k$.)
\begin{enumerate}
    \item $V_1 \subset V_2 \subset \dots \subset V_{k}$.
    \item $L_1 \subset L_2 \subset \dots \subset L_{k}$.
    \item
    $(V_j \setminus L_i) \cap S = \emptyset$ for any $1 \leq i < j \leq k$.
    \item $\partial_G(V_i)$ are in $G[V^\star]$ for any $1 \leq i \leq k$.
        
    
    \item  
    For every $i > 1$, $\bigcup_{j =1 }^{i-1} \partial_G(V_j)$ does not shatter any  $(V_i \cap (S \cup T), (V\setminus V_i) \cap (S \cup T), \vol_G(V_i))$ -cut of size smaller than or equal the size of $(V_i, V\setminus V_i)$. 

    \end{enumerate}

\end{lemma}

\begin{proof}
For $j> i$, let $V_{i, j}$ denote $V_j\cut L_i$, and $T_{i, j}$ denote $ V_{i, j} \cap T$.
We show the following properties for $V_{i, j}$ and $T_{i, j}$:
\begin{enumerate}
    \item[(a)] $\vol_G(V_{i, j})\leq t$.
    \item[(b)] $\endpoints(\partial_G(V_i)) \subset V_j$ and $V_{i, j} \neq \emptyset$.
    \item[(c)]
    $\partial_G(V_{i, j}) = E_i \cup
    \{(x, y) \in \partial_G(V_j): x, y \in V \setminus L_i\}$.
    \item[(d)] $\abs{\partial_{G}(V_{i, j})}\leq 2c$. 
    \item[(e)] $S\cap V_{i, j} = \emptyset$. 
    \item[(f)] One of the following two conditions hold: (i) $T_{i, j} \neq \emptyset$; (ii) $T_{ i, j} = \emptyset$ and $|E_i| > |E_j|$.
    \item[(g)]
    For any $i \leq i' < j' \leq j$, $T_{i', j'} \subset T_{i, j}$. 
\end{enumerate}

Property (a) is obtained from the fact that $V_{i, j} \subset V_j$ and $\vol_G(V_j)\leq t$ by the fact that $(E_j, V_j)$ is a $(t, c)$-realizable pair.

Property (b) is obtained from the fact that $V_i \subset V_j$ by condition (1)  and $\partial_G(V_i)$ shatters $(V_j, V \setminus V_j)$ if $\partial_G(V_i) \cap \partial_G(V_j) \neq \emptyset$, violating condition (5).

For property (c), by property (b), $E_i$ is a subset of $\partial_G(V_{i, j})$, and by the definition of $V_{i, j}$, \[\{(x, y) \in \partial_G(V_j): x, y \in V \setminus L_i\}\] is a  subset of $\partial_G(V_{i, j})$. Now we show that every edge of $\partial_G(V_{i, j})$ is an edge in either $E_i$ or $\{(x, y) \in \partial_G(V_j): x, y \in V \setminus L_i\}$. 
Let $(x, y)$ be an edge in $\partial_G(V_{i,j}) = \partial_G(V_j\cut L_i)$. Without loss of generality, assume $x\in V_{i,j} = V_j\cut L_i$ and 
\[y\in V\cut V_{i, j} = V\cut \wrap{V_j\cut L_i} = (V\cut V_j)\cup  L_i = \wrap{\wrap{V\cut V_j}\cut L_i}\cup L_i.\] If $y\in (V\cut V_j)\cut L_i$, then $(x, y)\in \{(x, y) \in \partial_G(V_j): x, y \in V \setminus L_i\}$.  If $y\in L_i$, then $(x, y)\in E_i$. 

Property (d) is implied by property (c) and the fact that both $(E_i, V_i)$ and $(E_j, V_j)$ are $(t, c)$-realizable pairs. 

Property (e) is obtained from condition (3) of the lemma. 

Now we prove property (f). 
Suppose $T_{i, j} = \emptyset$.
We show $|E_i| > |E_j|$ by contradiction. 
If \[|E_i| \leq
|\{(x, y) \in \partial_G(V_j): x, y \in V \setminus L_i\}|,
\]
then cut $(V_j \setminus V_{i, j}, V\setminus (V_j \setminus  V_{i, j}))$ satisfies the following conditions 
\begin{itemize}
    \item $\vol_G(V_j \setminus V_{i, j}) < \vol_G(V_j) \leq t$.
    \item $(V_j \setminus V_{i, j}, V\setminus (V_j \setminus V_{i, j}))$ and $(V_j, V\setminus V_j)$ induce the same partition on $S\cup T$ by $T_{i, j} = V_{i, j} \cap T = \emptyset$ and property (e).
    \item The size of cut $(V_j \setminus V_{i, j}, V\setminus (V_j \setminus V_{i, j}))$ is smaller than or equal to the size of cut $(V_j, V\setminus V_j)$ by properties (b) and (c).
\end{itemize}
 Thus $\partial_G(V_i)$ shatters a  $(V_j \cap (S\cup T), (V\setminus V_j)\cap (S \cup T), \vol_G(V_j))$-cut of size at most $|\partial_G(V_j)|$ by Lemma~\ref{lem:replacement}, contradicting  condition (5) of the lemma. Hence, we have 
 \begin{equation}\label{equ:elimination}|E_i| > |\{(x, y) \in \partial_G(V_j): x, y \in V \setminus L_i\}|.\end{equation}
 Note that $E_j\subset \partial_G(V_j)$. Fix a $(x, y)\in E_j$. 
 One of $x$ and $y$ is in $V\setminus L_j$.
 Suppose $x\in (V\cut L_j)\subset (V\cut L_i)$. Since cut $(L_i, V\cut L_i)$ is atomic, $y\in V\cut L_i$.  Therefore,
 we have 
 \[E_j\subset \{(x, y) \in \partial_G(V_j): x, y \in V \setminus L_i\}, \text{ and } 
 \abs {E_j}\leq |\{(x, y) \in |\partial_G(V_j): x, y \in V \setminus L_i\}|< \abs{E_i},\] which contradicts to Equation~(\ref{equ:elimination}). So property (f) holds.

Property (g) is obtained by the definition of $V_{i, j}$ and the conditions (1), (2) of the lemma.

Let $w_1 < w_2 < \dots < w_\ell$ be all the integers in $[k]$ such that
$w_i = 1$ or $T_{w_i - 1, w_{i}} \neq \emptyset$. 
In the rest of this proof, we prove $\ell \leq 4c+3$ by contradiction. Then together with properties (f) and (g), $k \leq 4c^2 + 3c$.

Properties (a), (d), (e) and the construction of $w_1, \dots, w_\ell$ imply that there is a $(T_{w_i, w_j}, T_0 \setminus T_{w_i, w_j}, t)$-cut of size at most $2c$ in $G_0$ for any $1 \leq i < j \leq \ell$. 
    By Claim~\ref{claim:terimial_not_in_same_cc_new}, 
    there is a $(T_{w_i,w_j}, T_0 \setminus T_{w_i,w_j}, q)$-cut of size at most $2c$ in $G_0$ whose cut-set is a subset of the $\IA_{G_0}(T_0, t,q_0,  d, 2c)$ set. 
    Let $(Q'_{w_i, w_j}, V_0 \cut  Q'_{w_i, w_j})$ be such a cut for graph $G_0$.
    By Definition~\ref{def:ia_new}, 
    Both $Q'_{w_i, w_j}$ and $ V_0 \cut  Q'_{w_i, w_j}$ are union of some connected components of $G_0 \setminus \IA_{G_0}(T_0, t, q_0, d, 2c)$.
    Since $G$ is a induced subgraph of $G_0$, 
    $(Q_{w_i, w_j}, V \cut  Q_{w_i, w_j})$
 is a cut of size at most $2c$ in $G$ with cut-set in $\IA_{G_0}(T_0, t, q_0, d, 2c) \cap E$, where $Q_{w_i, w_j} = Q'_{w_i, w_j} \cap V$. 

Consider cut $(Q_{w_i, w_j}, V\setminus Q_{w_i, w_j})$. 
Since $V^\star$ corresponds to a connected component of \[G\setminus (\IA_{G_0}(T_0, t, q_0, d, 2c)\cap E),\]
$V^\star$ is either a subset of $Q_{w_i, w_j}$ or a subset of $V\setminus Q_{w_i, w_j}$.

We show that 
if $V^\star$ is a subset of $Q_{w_i, w_j}$, 
then 
for any $1 \leq h < i$, 
$G[V_{w_h, w_{h+1}}]$ contains an edge of $\partial_G(Q_{w_i, w_j})$. 
    Note that $T_{w_h, w_{h+1}}$ is not an empty set, and $(V_{w_h, w_{h+1}}, V\setminus V_{w_h, w_{h+1}})$ is a cut partitioning $T$ into $T_{w_h, w_{h+1}}$  and $T\setminus T_{w_h, w_{h+1}}$ such that $T_{w_h, w_{h+1}}$ is a subset of $V_{w_h, w_{h+1}}$.
    Hence, for each vertex $u \in T_{w_h, w_{h+1}}$, there exists a path from some vertex in \[\endpoints(\partial_G(V_{w_h, w_{h+1}})) \cap V_{w_h, w_{h+1}}\] to $u$ within the induced subgraph $G[V_{w_h, w_{h+1}}]$.
    On the other hand, by property (c), \[\partial_{G}(V_{w_h, w_{h+1}}) \subset \partial_{G}(V_{w_h}) \cup  \partial_G(V_{w_{h+1}}),\] and thus $\partial_{G}(V_{w_h, w_{h+1}})$ is a set of edges within $G[V^\star]$ by condition (4).  
    Since $V^\star \subseteq Q_{w_h, w_{h+1}}$ and $T_{w_h, w_{h+1}} \subseteq V\setminus Q_{w_h, w_{h+1}}$,
    there is an edge of $\partial_G(Q_{w_h, w_{h+1}})$ which belongs to $G[V_{w_h, w_{h+1}}]$.
    
    Similarly, if $V^\star$ is a subset of $V\setminus Q_{w_i, w_j}$, 
then 
for any $i \leq h < j$, 
$G[V_{w_h, w_{h+1}}]$ contains at least one edge of $\partial_G(Q_{w_i, w_j})$.

Now we prove $\ell \leq 4c+3$ by contradiction. 
Assume $\ell \geq 4c+4$. 
Consider the cut $(Q_{w_{2c+2}, w_\ell}, V\setminus Q_{w_{2c+2}, w_\ell})$. 
Note that for any $1 \leq i < j \leq \ell$, $V_{w_i, w_{i+1}} \cap V_{w_j, w_{j+1}} = \emptyset$ by the definition of $V_{w_i, w_{i+1}}$ and $V_{w_j, w_{j+1}}$.
$V^\star$ cannot be a subset of $Q_{w_{2c+2}, w_\ell}$ because otherwise
$|\partial_G(Q_{w_{2c+2}, w_\ell})| > 2c$.
Similarly, $V^\star$ also cannot be a subset of $V\setminus Q_{w_{2c+2}, w_\ell}$.
This contradicts to the fact that $V^\star$ is either a subset of $Q_{w_{2c+2}, w_\ell}$ or a subset of $Q_{w_{2c+2}, w_\ell}$. 
Hence $\ell \leq 4c + 3$.   
\end{proof}

Now we are ready to prove Lemma~\ref{lem:procedure}.

\begin{proof}[Proof of Lemma~\ref{lem:procedure}]
We show that 
the cuts selected to construct $W$ satisfy the five conditions of Lemma \ref{lem:small_number_cuts_general_new}, and then by Lemma~\ref{lem:small_number_cuts_general_new}, the current lemma follows. 
Let $(E_i, V_i)$ be the realizable pair selected in the $i$-th iteration of the elimination procedure to construct $W$.

For the first condition of Lemma~\ref{lem:small_number_cuts_general_new}, 
we show that $V_i \subset V_j$ for any $i < j$.
By the description of the elimination procedure, $\partial_G(V_i)$ does not shatter cut $(V_j, V \setminus V_j)$.
Hence, $\partial_G(V_i) \cap \partial_G(V_j) = \emptyset$, and 
it is impossible that there is an edge of $\partial_G(V_j)$ in $G[V_i]$,
and another edge of $\partial_G(V_j)$ in $G[V\setminus V_i]$, otherwise $\partial_G(V_i)$ shatters $(V_j, V\setminus V_j)$. 
Meanwhile by the elimination procedure, it is also impossible that $\partial_G(V_j)$ are in $G[V_i]$, because otherwise $E_j$ are edges of $G[V_i]$ and thus $(E_j, V_j)$ will be selected before $(E_i, V_i)$. 
Hence, $\partial_G(V_j)$ are in $G[V\setminus V_i]$.
With the condition (4) of the Lemma~\ref{lem:procedure} and the fact that $(V_i, V\setminus V_i)$ and $(V_j, V \setminus V_j)$ are simple cuts,  $V_i \subset V_j$.

For the second condition of Lemma~\ref{lem:small_number_cuts_general_new}, 
we show that $L_i \subset  L_j$ for any $i < j$.
By the third condition of the Lemma~\ref{lem:procedure}, we have $x \in V\setminus L_i$ as well as $x \in V\setminus L_j$. 
If $E_i = \partial_G(L_i)$ shatters $(L_j, V\setminus L_j)$, then $(E_j, V_j)$ is not selected to construct $W$, hence $E_i$ does not shatter $(L_j, V\setminus L_j)$.
By the first condition of Lemma~\ref{lem:small_number_cuts_general_new} and the elimination procedure, $E_j$ does not shatter $(L_i, V\setminus L_i)$.
By Lemma~\ref{lem:intercept_condition_2}, 
$(L_i, V\setminus L_i)$ and $(L_i, V\setminus L_i)$  are parallel, and thus $V \setminus L_i$ is a subset or a superset of $V\setminus L_j$. 
Since 
$V_i \subset V_j$,
$E_j$ is not in $G[V_i]$,  and thus $L_i \subset L_j$.

Now we prove the third condition of  Lemma~\ref{lem:small_number_cuts_general_new}.
If $L_i \cap S = L_j \cap S$ for any $i, j$,
then for any $i < j$,
since $V_j \subset L_j$  
we have $(V_j \setminus L_i) \cap S \subset (L_j \setminus L_i) \cap S = \emptyset$.
If $V_i \cap S = V_j\cap S$ for any $i, j$, 
we have $(V_j \setminus L_i) \cap S \subset (V_j \setminus V_i) \cap S = \emptyset$.
Hence, for either cases of the second condition of the current lemma, 
the third condition of Lemma~\ref{lem:small_number_cuts_general_new} holds.

The forth condition of Lemma~\ref{lem:small_number_cuts_general_new} is implied by the first condition of the current lemma, and the fifth condition of Lemma~\ref{lem:small_number_cuts_general_new} is implied by 
the definition of elimination procedure. 
\end{proof}

\subsection{\texorpdfstring{$\IA$}{IA} Set Decremental Update Algorithm}\label{sec:ia_update}
In this section, we present the $\IA$ set update algorithm and prove Lemma~\ref{lem:repair_set_algorithm}. 
Recall that our goal is to prove that in the decremental update scenario, there is an efficient algorithm to find a subset of edges $F$, called  repair set,  such that the union of the  repair set and
$\IA_{G_0}(T_0, t, q, d, 2c+1)\cap E$ is an $\IA$ set of $G$ with respect to $T\cup S$.  
We formally define the  repair set as follows.

\begin{definition}\label{def:repair_set}

For a fixed 
$\IA_{G_0}(T_0, t, q, d, 2c+1)$ set,
a set of edges $F \subset E$ is a \emph{repair set} of 
$\IA_{G_0}(T_0, t, q,$ $ d,2c+1)$ with respect to $G$
if \[\wrap{\IA_{G_0}(T_0, t, q, d, 2c+1) \cap E} \cup F\] is an $\IA_{G}(
S\cup T
, t, q + t, c, 1)$ set.
\end{definition}

We divide all the bipartitions of $S\cup T$ into three types: 
for $U \subset S\cup T$ satisfying $U\neq \emptyset$, 
a
bipartition $(U, (S\cup T) \setminus U)$ is

\begin{enumerate}
    \item a \emph{type 1 bipartition} if $\emptyset \subsetneq (U \cap S) \subsetneq S$, 
    \item a \emph{type 2 bipartition} if $U \cap S = \emptyset$,
    \item a \emph{type 3 bipartition} if $U \cap S = S$.
\end{enumerate}
The underlying reason of these three type bipartitions is that different types imply different properties so that we need to use different strategies to shatter their corresponding minimum cuts. Hence, we construct repair set for each type as defined in Definition~\ref{def:repair_set_type}. 
By Lemma~\ref{lem:simple_intercept}, the repair set only focuses on bipartitions that has a simple minimum cut. 


\begin{definition}\label{def:repair_set_type}

In the $\IA$ set decremental update setting, for a fixed 
$\IA_{G_0}(T_0, t, q, d, 2c+1)$ set 
and an integer $\alpha \in \{1, 2, 3\}$,
a set of edges $E' \subset E$ is a \emph{type $\alpha$ repair set} of 
$\IA_{G_0}(T_0, t, q, d,$ $ 2c+1)$ with respect to $G$
if for each type $\alpha$ bipartition $(U, (S\cup T) \setminus U)$ of $G$ such that $mincut_G(U, (S\cup T) \setminus U, t) \leq c$ and there is a simple minimum $(U, (S\cup T) \setminus U, t)$-cut, 
there is a $(U, (S\cup T) \setminus U, q)$-cut of size at most $mincut_G(U, (S\cup T) \setminus U, t)$
whose cut-set does not belong to any  connected component of $E' \cup (\IA_{G_0}(T_0, t, q, d, 2c+1) \cap E)$.

\end{definition}

By Definition~\ref{def:repair_set} and Definition~\ref{def:repair_set_type}, 
a repair set can be obtained by the union of a type 1 repair set, a type 2 repair set, and a type 3  repair set. 

\begin{corollary}
\label{corollary:type_combine}
In the $\IA$ set decremental update setting, 
for any $\IA_{G_0}(T_0, t, q, d, 2c+1)$ set satisfying the conditions $q \geq t, d \geq 2c+1$, 
the union of a type 1 repair set, a type 2 repair set and a type 3 repair set for the $\IA_{G_0}(T_0, t, q, d, 2c+1)$ set with respect to $G$ is a repair set of the $\IA_{G_0}(T_0, t, q, d, 2c+1)$ set with respect to $G$.
\end{corollary}

Hence, in our $\IA$ set update algorithm, we find repair sets for each type, and output their union. 

\begin{framed}
\noindent \textbf{Decremental $\IA$ Set Update Algorithm}

    \vspace{.2cm}\noindent  \textbf{Input:} Access to graphs 
    \[G, G \setminus (\IA_{G_0}(T_0, t, q_0, d, 2c) \cap E), \text{ and } G \setminus (\IA_{G_0}(T_0, t, q, d, 2c+1) \cap E)\] such that 
    the $\IA_{G_0}(T_0, t, q, d, 2c+1)$ set  is derived from the 
      $\IA_{G_0}(T_0, t, q_0, d, 2c)$ set for parameters satisfying $t\leq q_0\leq q$ and $d \geq 2c+1$, and two vertex sets $S, T$.
    

    \vspace{.2cm} \noindent  \textbf{Output:} A set of edges $W$.  
    
\begin{enumerate}    
\item Find a type $1$ repair set $W_1$. 
\item Find a type $2$ repair set $W_2$. 
\item Find a type $3$ repair set $W_3$. 
\item Return $W_1 \cup W_2 \cup W_3$. 
\end{enumerate}
\vspace{-.3cm}\end{framed}

\paragraph{Type One Repair Set Algorithm}
We give an algorithm to construct a type one repair set, and prove the following lemma. 
\begin{lemma}\label{lem:type-one-set-construct}
There is an algorithm to compute 
a type 1 repair set of size at most $|S|(8c^3 + 6c^2)$
in time $O(|S|(2t)^{c^2 +3c + 3} n^{o(1)})$. 
\end{lemma}

Recall that in Section~\ref{sec:overview:cut_containment}, we select a maximal set of pairwise parallel bipartitions of $S$ to construct repair sets. 
Here, since we only consider cuts with one side having volume bounded by $t$, the pairwise parallel bipartitions on $S$ are given by a bipartition system.




Two $(t, c)$-realizable pairs $(E', V')$ and  $(E'', V'')$ are called to be \emph{atomic cut equivalent} if  $L' \cap S = L'' \cap S$, where $(L', V\setminus L')$ is the cut induced by $E'$  such that $V' \subset L'$, and $(L'', V\setminus L'')$ is the cut induced by $E''$  such that $V'' \subset L''$.

\begin{definition}\label{def:bipartition_system}

     An \emph{$(S, t, c)$-bipartition system} $\mathcal B$ of graph $G$ is a set of $(t, c)$-realizable pairs 
     \[\{(E_1, V_1), (E_2, 
     V_2), \dots, (E_k, V_k)\}\] 
     satisfying the following conditions:  (Suppose $(L_i, V\setminus L_i)$ is the atomic cut induced by $E_i$ such that $V_i \subset L_i$, and $S_i$ is $L_i \cap S$.)
     \begin{enumerate}
         
        \item For every $(E_i, V_i) \in \mathcal{B}$, 
        $(V_i, V\setminus V_i)$ is a simple $(t, c)$-cut such that $\emptyset \subsetneq V_i \cap S \subsetneq S$.
        
        
        \item For any $(E_i, V_i)$ and $(E_j, V_j)$ in $\mathcal B$, 
        $(E_i, V_i)$ and $(E_j, V_j)$ are not atomic cut equivalent, 
        satisfy one of the following two conditions:
        \begin{enumerate}
            \item $S_i = S \cut  S_j$
            \item One of $S_i$ and $S\setminus S_i$ is a strict subset of either $S_j$ or $S\setminus S_j$. 
        \end{enumerate}
     \end{enumerate} 
     
        
        
 \end{definition}
 
 In our algorithm, we first find a maximal bipartite system $\mathcal{B}$. 
 Since $\mathcal{B}$ is maximal, if a simple cut $C$ has a composing atomic cut that is not parallel with the atomic cut induced by $E_i$ for some $(E_i, V_i) \in \mathcal{B}$, then $C$ is already shattered by any atomic cut that partition $S$ the same as $(E_i, V_i)$. Hence, we only need to consider simple cuts that can form an atomic cut equivalent realizable pair with some realizable pair in $\mathcal{B}$.
 
 For each $(E_i, V_i)$ in $\mathcal{B}$, we further show that all the $(E_i, V_i)$'s atomic cut equivalent 
 realizable pairs whose 
 simple cuts are
 not shattered by $\IA_{G_0}(T_0, t, q, d, 2c+1)\cap E$
 can be partitioned into two groups that satisfying the conditions of Lemma~\ref{lem:procedure}. Thus, by running elimination procedure on these two groups of realizable pairs, we obtain a set of $O(c^2)$ edges such that together with $\IA_{G_0}(T_0, t, q, d, 2c+1)\cap E$, it is sufficient to shatter all the simple cuts of the realizable pairs that are atomic cut equivalent to $(E_i, V_i)$. 

 \begin{framed}
\noindent \textbf{Type One Repair Set Algorithm}

    \vspace{.2cm} \noindent  \textbf{Input:} 
    Access to graphs 
    $G$ and $G \setminus (\IA_{G_0}(T_0, t, q_0, d, 2c)\cap E)$, and two vertex sets $S, T$.

    

\vspace{.2cm}  \noindent  \textbf{Output:} A set of edges $W$.  
    
\begin{enumerate}    
\item Find a maximal $(S, t, c)$-bipartition system $\mathcal{B}$. 
\item For each realizable pair $(E', V') \in \mathcal{B}$, 
\begin{enumerate}
\item Collect all the $(t, c)$-realizable pairs $(E'', V'')$ such that $(E'', V'')$ is atomic cut equivalent to $(E', V')$ on $S$, and the cut-set of cut $(V'', V\setminus V'')$ belongs to a connected component of $G\setminus (\IA_{G_0}(T_0, t, q_0, d, 2c) \cap E)$ that contains a vertex of $S$.

\item Group all the realizable pairs $(E'', V'')$ collected in Step (a) by the connected component of $G\setminus (\IA_{G_0}(T_0, t, q_0, d, 2c) \cap E)$ that contains the cut-set of $(V'', V\setminus V'')$.
\item Run elimination procedure on each group of realizable pairs obtained in Step (b). 
\end{enumerate}
\item 
Return the union of results returned by each execution of elimination procedure in Step (2c).
\end{enumerate}
\vspace{-.3cm}\end{framed}

 We show that the number of realizaible pairs in a $(S, t, c)$-bipartition system is at most $O(\abs S)$.
   \begin{claim}\label{claim:num-bipart}
      Let $G= (V, E)$ be a connected graph, and $S \subset V$ be a set of terminals. 
      Any $(S, t, c)$-bipartition system 
       contains at most $2(|S|- 1)$ $(t, c)$-realizable pairs. 
      
 
 \end{claim}
  \begin{proof}
  We first prove that if $Q = \{(S_1, S\setminus S_1), (S_2, S\setminus S_2), \dots\}$ is a set of bipartitions  of $S$ 
  satisfying the following conditions:
  \begin{itemize}
  	\item[(a)] $S_i \cap S \neq \emptyset$ and $(S\setminus S_i) \cap S \neq \emptyset$ hold for any $(S_i, S\setminus S_i)\in Q$;
	\item[(b)] for any $(S_i, S\setminus S_i), (S_j, S\setminus S_j) \in Q$, one of $S_j$ and $S\setminus S_j$ is a subset of $S_i$ or $S\setminus S_i$,
  \end{itemize}
  assuming $(S', S\setminus S')$ and $(S\setminus S', S)$ are same for any $\emptyset \subsetneq S' \subsetneq S$,  then $|Q| \leq |S|-1$. 
  We prove by induction. 
   If $|S| = 1$, then $|Q| = 0$ because each $(S_i, S\setminus S_i)$ partitions $S$ into two non-empty subsets.
   Assume $|Q| \leq |S| - 1$ holds for any $|S|$ of size at most $k$,
    we prove $|Q| \leq |S| - 1$ for any $S$ of size $k + 1$. 
   Fix an arbitrary $(S_1, S\setminus S_1)$ in $Q$, any $(S_i, S\setminus S_i)$ for $i > 1$ satisfies one of the following two conditions: 
  \begin{enumerate}
  	\item one of $S_i$ and $S\setminus S_i$ is a subset of $S_1$,
  	\item one of $S_i$ and $S\setminus S_i$ is a subset of $S\setminus S_1$.
  \end{enumerate}   
  Let $Q_1$ be the bipartitions of $Q$ satisfying the first condition (excluding $(S_1, S\setminus S_1)$), and $Q_2$ be the bipartitions of $Q$ satisfying the second condition (excluding $(S_1, S\setminus S_1)$). 
  Then set \[\{(S_i \cap S_1, (S\setminus S_i) \cap S_1) : (S_i, S \setminus S_i)\in Q_1\}\] is a set of bipartitions on $S_1$ satisfying condition (a) and (b), and set \[\{(S_i \cap (S\setminus S_1), (S\setminus S_i) \cap (S\setminus S_1)) : (S_i, S \setminus S_i)\in Q_1\}\] is a set of bipartitions on $S\setminus S_1$ satisfying condition (a) and (b).
  Hence we have $|Q_1| \leq |S_1| - 1$ and $|Q_2| \leq |S\setminus S_1| - 1$ by induction. 
  Thus we have \[|S| = 1 + |Q_1| + |Q_2| \leq 1 + ( |S_1| - 1) +  (|S\setminus S_1| - 1) \leq |S| - 1.\]

  
  Let $\mathcal{B}$ be an $(S, t, c)$-bipartition system. 
  For each $(E_i, V_i) \in \mathcal{B}$, let $S_i$ be the subset of $S$ defined as   Definition~\ref{def:bipartition_system}. 
  By Definition~\ref{def:bipartition_system},
   for any $(E_i, V_i)\in \mathcal{B}$, another $(E_j, V_j)$ in $\mathcal{B}$ satisfies one of the following three conditions:
  \begin{enumerate}
      \item $S_j = S \setminus S_i$\;
      \item One of $S_{j}$ and $S \cut  S_j$ is a strict subset of $S_i$;
      \item One of $S_{j}$ and $S \cut  S_j$ is a strict subset of $S \cut  S_i$.
  \end{enumerate}
  The set $\{(S_i, S\setminus S_i): (E_i, V_i) \in \mathcal{B}\}$ satisfies conditions (a) and (b) if $(S_i, S\setminus S_i)$ and $(S\setminus S_i, S_i)$ are treated as the same pair. 
  Hence $|\mathcal{B}| \leq 2(|S| - 1)$. 
 \end{proof}
 The following lemma shows that for any bipartition of terminal set $S$ which partitions $S$ into two non-empty sets and any vertex partition, at most two connected components induced by the vertex partition can contain the cut-set of a cut that induces the given $S$ bipartition.
 \begin{lemma}\label{lem:number_cc_partition}
 In the decremental graph update setting, 
let $S'$ be a  subset of $S$ such that $\emptyset \subsetneq S' \subsetneq S$. 
There are at most two connected components of $G\setminus (\IA_{G_0}(T_0, t, q_0, d, 2c) \cap E)$ satisfying the following two conditions:
\begin{enumerate}
    \item the connected component contains at least one vertex of $S$.
    \item there is a cut $(V', V \setminus V)$ of graph $G$ 
    that partitions $S$ into $S'$ and $S \cut  S'$ such that the cut-set of $(V', V \setminus V)$ are edges of the connected component.
\end{enumerate}
\end{lemma}
\begin{proof}
    Let $P_1$ and $P_2$ be two connected components of $G\setminus (\IA_{G_0}(T_0, t, q_0, d, 2c) \cap E)$ satisfying the two conditions. 
    By the second condition, there exist cuts $C_1 = (V_1, V\cut V_1)$ and $C_2 = (V_2, V\cut V_2)$ such that 
    the cut-set of $C_1$ is in $P_1$,and the cut-set of $C_2$ is in $P_2$. 
    Without loss of generality, assume $S'\subset V_1$ and $S' \subset V_2$.
    
    Since $P_1$ and $P_2$ are connected graphs, the cut-set of $C_1$ does not shatter $C_2$ and the cut-set of $C_2$ does not shatter $C_1$. By the contrapositive of Lemma~\ref{lem:intercept_condition_2}, $C_1$ and $C_2$ are parallel. 
    
    Therefore, if there exists another  connected component $P_3$ satisfying the two conditions with cut $C_3 = (V_3, V\cut V_3)$ such that the cut-set of $C_3$ is in $P_3$ and $S'\subset V_3$, then $C_1$, $C_2$, $C_3$ are pairwise parallel.
    
    Without loss of generality, assume $V_1\subset V_2\subset V_3$. Since $C_1$, $C_2$ and $C_3$ induce the same partition on $S$, $P_2$ does not contain any vertex of $S\cap V_2$ as well as $S \cap (V\setminus V_2)$. This contradicts the first condition.
    \end{proof}
    
    


\begin{claim}\label{claim:nontrivial-intersection}
Let $G = (V, E)$ be a connected graph, $S \subset V$ be a set of terminals. 
If $(E_1, V_1)$ and $(E_2, V_2)$ are atomic cut equivalent realizable pairs,  
then $V_1 \cap V_2 \neq \emptyset$.
    
    

  \end{claim}
 \begin{proof}
 We prove by contradiction.
 Assume $V_1 \cap V_2 = \emptyset$. 
 Let $(V_i', V \cut  V_i')$ be the cut induced by $E_i$ satisfying $V_i \subseteq V_i'$ for $i \in \{1, 2\}$.
 Since $G[V_1]$ and $G[V_2]$ are connected, $V_2$ belongs to one connected component of $G[ V \setminus V_1]$, and this connected component is a subset of $V_1'$ because of \[\emptyset \subsetneq V_2 \cap S \subset V_2' \cap S = V_1' \cap S.\]
 Hence, in $G[V  \setminus V_2]$, $V_1$ and $V \cut  V_1'$ are in the same connected component. 
 Since $V_1\cap S \neq \emptyset$ and $(V\setminus V_1') \cap S \neq \emptyset$, no subset of $(V_2, V\setminus V_2)$'s cut-set forms a cut that partitions $S$ the same as that of $(V_1', V\setminus V_1')$.  Contradiction.
 \end{proof}

\begin{proof}[Proof of Lemma~\ref{lem:type-one-set-construct}]
    
    We first prove the correctness of the algorithm. 
    Let $W_1$ be the edge set returned by the algorithm.  
    We first show that for every type 1 bipartition $(U, (S\cup T) \setminus U)$ that has a simple minimum $(U, (S\cup T) \setminus U, t)$-cut of size at most $c$, there is a  $(U, (S\cup T) \setminus U, q)$-cut $C$ of size at most $mincut_G(U, (S\cup T) \setminus U, t)$ shattered by \[ (\IA_{G_0}(T_0, t, q, d, 2c+1) \cap E) \cup W_1.\] Then $W_1$ is a type one repair set. 
    
    Let $C' = (V', V\setminus V')$ be  a simple minimum  $(U, (S\cup T) \setminus U), t)$-cut.
    We assume the cut-set of $C'$ belongs to a connected component of $G\setminus (\IA_{G_0}(T_0, t, q_0, d, 2c)\cap E)$ containing a vertex from $S$, otherwise we are done by letting $C = C'$ by Lemma~\ref{lem:exclude_Case}.
    Since $U \cap S \neq \emptyset$ and $U \cap S \neq S$, there must be a $E' \subset \partial_G(V')$ such that $E'$ induces an atomic cut that partitions $S$ into two non-empty sets. Thus, $(E', V')$ is a $(t, c)$-realizable pair such that 
    $E'$ partitions $S$ into two non-empty sets.
    
    If there is a $(E^\dagger, V^\dagger)\in\mathcal B$ such that $(E', V')$ is atomic cut equivalent to  $(E^\dagger, V^\dagger)$, 
    when run Step (2a) with respect to $(E^\dagger, V^\dagger)$, 
    $(E', V')$ is collected. By Lemma~\ref{lem:elimination_procedure_correctness}, 
    $W_1$ shatters a $(U, (S\cup T) \setminus U, \vol_G(V'))$-cut of size at most $mincut_G(U, (S\cup T) \setminus U, t)$.
    If no realizable pair in $\mathcal B$ is atomic cut equivalent to  $(E', V')$, then there is a realizable pair $(E^\dagger, V^\dagger) \in \mathcal{B}$ such that the bipartition on $S$ induced by $E^\dagger$ is not parallel with the bipartition on $S$ induced by $E'$. 
    By the algorithm, there is a realizable pair $(E^\diamond, V^\diamond)$ atomic cut equivalent to $(E^\dagger, V^\dagger)$ such that 
    $W_1$ contains the cut-set of cut $(V^\diamond, V\setminus V^\diamond)$. Hence $W_1$ shatters cut $C'$. 
      Hence, for either case, $(\IA_{G_0}(T_0, t, q, d, 2c+1) \cap E)\cup W_1$ shatters a $(U, (S\cup T) \setminus U,t)$-cut of size at most $mincut_G(U, (S\cup T) \setminus U, t)$.

    

    Now we bound the size of $W_1$. By Claim~\ref{claim:num-bipart}, $|\mathcal{B}| \leq 2(|S| - 1)$. By Lemma~\ref{lem:number_cc_partition}, for each $(E^\dagger, V^\dagger) \in \mathcal{B}$, the realizable pairs collected at Step (2a) are partitioned into two groups.  And thus for each $(E^\dagger, V^\dagger)\in \mathcal{B}$, the elimination procedure is executed twice. Notice that each group of realizable pairs satisfies the conditions of Lemma~\ref{lem:procedure}. Hence, by Lemma~\ref{lem:procedure}, Step (2) for each realizable pair of $\mathcal{B}$ adds $4c^3 + 3c^2$ edges to $W_1$. 
    Hence, the total number of edges in $W_1$ is at most \[2(|S| - 1)(4c^3 + 3c^2) \leq |S|(8c^3 + 6c^2).\] 
    
At the end, we bound the running time of the algorithm. 
By Lemma~\ref{lem:subroutine_repair_set}, given a vertex $x$, we can enumerate all the $(t, c)$-realizable pairs $(E', V')$ such that $(V', V\setminus V')$ is a simple cut of size at most $c$ satisfying that $\vol_G(V') \leq t$ and $x \in V'$ in time $O((2c)^{c+2} n^{o(1)})$.
Moreover, the number of realizable pairs enumerated is at most $t^{c}$.

To find a maximal bipartite system, we first enumerate all the $(t, c)$-realizable pairs $(E', V')$ such that both $(V', V\setminus V')$ and the cut induced by $E'$ partition $S$ into two non-empty sets in $O(|S|(2t)^{c+2} n^{o(1)})$ time.
The number of pairs enumerated is at most $|S|t^c$. 
Then we set $\mathcal{B}$ to be an empty set initially, and try to add each enumerated realizable pair to $\mathcal{B}$ if not violating the definition of bipartite system. 
Note that if two realizable pairs cannot be both contained in a bipartition system, then 
they are atomic cut equivalent or their atomic cuts induce bipartitions on $S$ that are not parallel. 
Hence, to determine if a realizable pair $(E', V')$ can be added to $\mathcal{B}$, we can enumerate all the realizable pairs that 
cannot be contained in any bipartition system that contains $(E', V')$, and then check if $\mathcal{B}$ contains any of them. 
By Claim~\ref{claim:nontrivial-intersection}, if two realizable pairs $(E_1, V_1)$ and $(E_2, V_2)$ are atomic cut equivalent, then $V_1$ and $V_2$ have overlap. 
If the atomic cuts of two realizable pairs $(E_1, V_1)$ and $(E_2, V_2)$ induce parallel bipartitions on $S$ that are not parallel, then $V_1$ and $V_2$ also have overlap. 
Hence, by Lemma~\ref{lem:subroutine_repair_set}, for any $(E', V')$, all the realizable pairs that 
cannot be contained in any bipartition system containing $(E', V')$ can be enumerated in $O((2t)^{c+3}n^{o(1)})$ time. 
Hence, Step (1) of the algorihtm can be implemented in $O(|S| (2t)^{2c+3} n^{o(1)})$ time. 

Fix a $(E', V') \in \mathcal{B}$.
By Claim~\ref{claim:nontrivial-intersection}, 
Step (2a) can be implemented in $O((2t)^{c+3}n^{o(1)})$ time, and the pair enumerated is at most $(2t)^{c+1}$ 
By Lemma~\ref{lem:subroutine_repair_set}, Step (2b) can be implemented in $O((2t)^{c+1}n^{o(1)})$ time.
By Lemma~\ref{lem:elimination_procedure_correctness}, Step (2c) can be implemented in $O((2t)^{2c+2} t^{(c+1)c + 1} n^{o(1)})$ time. 
Since $\mathcal{B}$ contains $O(|S|)$ realizable pairs, the running time for Step (2) (for all the pairs in $\mathcal{B}$) is $O(|S| (2t)^{c^2 + 3c + 3} n^{o(1)})$. 

Thus, the overall running time of the algorithm is $O(|S| (2t)^{c^2 + 3c + 3} n^{o(1)})$.
%
\end{proof}


\paragraph{Type Two Repair Set Algorithm}
We give an algorithm to construct a type two repair set, and prove the following lemma.

\begin{lemma}\label{lem:type-two-set-construct}
There is an algorithm to compute 
a type 2 repair set of size at most $|S|(4c^3 + 3c^2)$
in time $O(|S|(2q)^{c^2 + 3c + 3}n^{o(1)})$. 
\end{lemma}

    

Our type two bipartition update algorithm collects type two bipartitions whose minimum cuts are not shattered by the $\IA$ set, and constructs their realizable pairs. The algorithm then partitions the realizable pairs into a few groups such that the elimination procedure is efficient on each group. 
 \begin{framed}
\noindent \textbf{Type Two Bipartition Update Algorithm}

    \vspace{.2cm}\noindent \textbf{Input:} Access to 
    $G$, $G \setminus (\IA_{G_0}(T_0, t, q_0, d, 2c)\cap E)$ and $G \setminus (\IA_{G_0}(T_0, t, q, d, 2c + 1)\cap E)$, and two vertex sets $S, T$.


    \noindent \textbf{Output:} A set of edges $W$.  
    
\begin{enumerate}    

\item Enumerate all the bipartitions $(T', (T\setminus T')\cup S)$ satisfying the following conditions: 
(a) $mincut_G(T', (T\setminus T')\cup S, t) \leq c$,
(b)
there is a minimum $(T', (T\setminus T')\cup S, t)$-cut that is a simple cut, and 
(c) $\IA_{G_0}(T_0, t, q, d, 2c+1)\cap E$ does not shatter any $(T', (T\setminus T')\cup S, q)$-cut of size at most 
$mincut_G(T', (T\setminus T')\cup S, t)$.


\item Let $s$ be an arbitrary vertex of $S$. For each bipartition $(T', (T\setminus T')\cup S)$ enumerated in Step (1), collect a realizable pair $(E', V')$ such that 
$(V', V\setminus V')$ is a 
simple minimum $(T', (T\setminus T')\cup S, t)$-cut, and $E'$ is a subset of $(V', V\setminus V')$'s cut-set that forms an atomic cut separating $V'$ and $s$.

\item Partition all the realizable pairs obtained in Step (2) into groups so that $(E', V')$ and $(E'', V'')$ are in the same group if and only if the cut-sets of $(V', V\setminus V')$ and $(V'', V\setminus V'')$ are in the same connected component of $G\setminus (\IA_{G_0}(T_0, t, q_0, d, 2c)\cap E)$. Run elimination procedure on each group. 

\item Return the union of results returned by executions of elimination procedure in Step (3).
\end{enumerate}
\vspace{-.5cm}
\end{framed}

Before we prove Lemma~\ref{lem:type-two-set-construct}, we first prove a property of the bipartitions that need to be shattered by a type two repair set. 

\begin{lemma}\label{lem:basic_type_2}

In the decremental update scenario, 
let $V^\star \subset V$ be a set of vertices that forms a connected component of $G \setminus \left(\IA_{G_0}(T_0,t, q_0, d, 2c)\cap E\right)$ satisfying $V^\star \cap S \neq \emptyset$. 
Let $(T_1, S\cup (T\setminus T_1))$ and $(T_2, S\cup (T\setminus T_2))$ be two type two bipartitions on $S\cup T$ satisfying the following conditions: for any $i \in \{1, 2\}$
\begin{enumerate}
\item $mincut_G(T_i, (T\setminus T_i)\cup S, t) \leq c$;
\item
there is a simple minimum $(T_i, (T\setminus T_i)\cup S, t)$-cut with cut-set in $G[V^\star]$.
\item $\IA_{G_0}(T_0, t, q, d, 2c+1)\cap E$ does not shatter any $(T_i, (T\setminus T_i)\cup S, q)$-cut of size at most \\
$mincut_G(T', (T\setminus T')\cup S, t)$.
\end{enumerate}
Then $T_1 \cap T_2 \neq \emptyset$.


\end{lemma}
\begin{proof}
For each $i \in\{1,2\}$,
let $(V_i, V \setminus V_i)$ be a simple minimum $(T_i, (T\setminus T_i) \cup S, t)$-cut of $G$ with cut-set in $G[V^\star]$. Without loss of generality, assume $V_i \cap T = T_i$. 
$(V_i, V_0 \setminus V_i)$ is a simple $(T_i, T_0\setminus T_i, t)$-cut of graph $G_0$ with size the same as $(V_i, V\setminus V_i)$ for graph $G$.
By the definition of $\IA$ set,  there is a $(T_i, T_0\setminus T_i, q_0)$-cut in graph $G_0$ of size at most $mincut_G(T_i, (T\setminus T_i) \cup S, t)$ with cut-set in the $\IA_{G_0}(T_0, t, q_0, d, 2c)$ set. 
Let $(V_i', V_0 \setminus V_i')$ be such a cut with minimum size. Without loss of geniality, assume $T_i \subseteq V_i'$.

We show that $V^\star$ is a subset of $V_i'$. 
There must be a connected component of $G_0[V_i']$, denoted by $V_i^\dagger$, that contains vertices from both $S$ and $T_i$, otherwise, a subset of $(V_i', V_0, \setminus V_i')$'s cut-set can induce a $(T_i, (S \cup T_0)\setminus T_i, q_0)$-cut on graph $G_0$ whose cut-set is in $\IA_{G_0}(T_0, t, q_0, d, 2c)$, and thus a $(T_i, (S\cup T) \setminus T_i, q_0)$-cut on graph $G$ whose cut-set is in $\IA_{G_0}(T_0, t, q_0, d, 2c) \cap E$, contradicting the third condition of the current lemma. 
On the other hand, since $(V_i, V_0 \setminus V_i)$ is a simple $(T_i, T_0\setminus T_i, t)$-cut on graph $G_0$ whose cut-set is in $G[V^\star]$,
any path between a vertex from $S$ and a vertex from $T_i$ in $G_0$ contains an edge of the cut-set of $(V_i, V_0 \setminus V_i)$. 
Since all vertices of $V^\star$ are in the same connected component of $G_0\setminus \IA_{G_0}(T_0, t, q_0, d, 2c)$, we have $V^\star \subset V^\dagger$. Hence $V^\star$ is a subset of $V_i'$. 

Now we prove $T_1 \cap T_2 \neq \emptyset$ by contradiction. Suppose  $T_1 \cap T_2 = \emptyset$.
Since $V^\star$ is a subset of both $V_1'$ and $V_2'$,  then by Lemma~\ref{lem:swapping},
one of the following conditions hold:
\begin{itemize}
    \item $(V_1', V_0\setminus V_1')$ is not a cut with size minimized among all  
 $(T_1, T_0 \setminus T_1, q_0)$-cuts of $G_0$ with cut-set in $\IA_{G_0}( T_0, t, q_0, d, 2c)$.
    \item $(V_2', V_0\setminus V_2')$ is not a cut with size minimized among all  
$(T_2, T_0 \setminus T_2, q_0)$-cut of $G_0$ with cut-set in $\IA_{G_0}( T_0, t, q_0, d, 2c)$.
\item There is a $(T_1, T\setminus T_1, \vol_{G_0}(V_1'))$-cut of size at most $(V_1', V_0 \setminus V_1')$
for $G_0$ with cut-set in $\IA_{G_0}( T_0, t, $ $ q_0, d, 2c)$ such that the side containing $T_1$ does not contain $V^\star$, and a $(T_2, T_0\setminus T_2, \vol_{G_0}(V_2'))$ of size at most $(V_2', V_0 \setminus V_2')$ for $G_0$ with cut-set in $\IA_{G_0}( T_0, t, q_0, d, 2c)$ such that the side containing $T_2$ does not containing $V^\star$.
\end{itemize}
The first two cases contradict the definitions of $(V_1', V\setminus V_1')$ and $(V_2', V\setminus V_2')$.
The third case contradicts the fact that $V^\star$ is a subset of both $V_1'$ and $V_2'$. Hence $T_1 \cap T_2 \neq \emptyset$.
\end{proof}

\begin{proof}[Proof of Lemma~\ref{lem:type-two-set-construct}]
    We show that the Type Two Bipartition Update Algorithm gives the desirable repair set and has the desirable running time. 
     
    Let $W_2$ be the edge set returned by the algorithm.  
    Let bipartition $(T', (S\cup T) \setminus T')$ be a type two bipartition that has a simple minimum $(T', (S\cup T) \setminus T', t)$-cut of size at most $c$. 
    If there does not exist a $(T', (T\setminus T') \cup S, q)$-cut of size at most $mincut_G(T', (T\setminus T')\cup S, t)$ shattered by $\IA_{G_0}(T_0, t,q, d, 2c+1)\cap E$, then $(T', (S\cup T) \setminus T')$ is enumerated in Step (1) of the algorithm. 
    Hence, a realizable pair $(E', V')$ such that $(V', V\setminus V')$ is a simple $(T', (S\cup T) \setminus T', t)$-cut of size $mincut_G(T', (T\setminus T')\cup S, t)$ is constructed in Step (2) of the algorithm. By Lemma~\ref{lem:elimination_procedure_correctness}, $W_2$ shatters a $(T', (S\cup T) \setminus T', t)$-cut of size at most $mincut_G(T', (T\setminus T')\cup S, t)$.
    Hence $W_2$ is a type two repair set.

    To bound the size of $W_2$, 
    by Lemma~\ref{lem:exclude_Case}, if a cut is not shattered by the $\IA_{G_0}(T_0, t, q, d, 2c+1)\cap E$ set, then the cut-set of that cut must belongs to a connected component of \[G\setminus (\IA_{G_0}(T_0, t, q_0, d, 2c)\cap E)\] that contains a vertex of $S$. 
    Since $G\setminus (\IA_{G_0}(T_0, t, q_0, d, 2c)\cap E)$ has 
    at most $|S|$  connected components 
    that contain a vertex of $S$, all the realizable pairs constructed are partitioned into at most $|S|$ groups in Step (3), and each group satisfies  the conditions of Lemma~\ref{lem:procedure}. Hence, by Lemma~\ref{lem:procedure}, $|W_2| \leq |S|(4c^3 + 3c^2)$. 

    Now we bound the running time of the algorithm. To implement Step (1), we observe that if a bipartition $(T', (T\setminus T') \cup S)$ is enumerated in Step (1), then there must be a simple  $(T', (T\setminus T') \cup S, q)$-cut $(V', V\setminus V')$ with cut-set in $\IA_{G_0}(T_0, t, q, d, 2c+1)\cap E$ such that $G[V']$ is connected, $V' \cap S \neq \emptyset $ and $V' \cap T \subseteq T'$. 
    Hence, we can first enumerate all the simple cuts $(V', V\setminus V')$ of size at most $c$ with one side having volume at most $q$ and containing a vertex from $S$, and take $Q$ to be the union of $V'$ for all the enumerated $(V', V\setminus V')$. By Lemma~\ref{lem:subroutine_repair_set}, this step can be done in $O(|S|q^{c+2} n^{o(1)})$ time, and the size of $Q$ is also upper bounded by $O(|S|q^{c})$. 
    Next we enumerate all the simple cuts that contains at least one vertex of $Q$, keep the bipartitions that satisfying the three conditions of Step (1). By Lemma~\ref{lem:subroutine_repair_set}, this can be done in $O(|S|q^{2c+2} n^{o(1)})$ time. 
    Hence, Step (1) of the algorithm can be implemented in $O(|S|q^{2c+2} n^{o(1)})$ time.
    
    By Lemma~\ref{lem:basic_type_2} and Lemma~\ref{lem:subroutine_repair_set}, all the realizable pairs are partitioned into $|S|$ groups and each group contains at most $(2t)^{c+1}$ realizable pairs. 
    Hence, Step (2) of the algorithm can be implemented in $O(|S|(2t)^{c+3} n^{o(1)})$ time. 
    By Lemma~\ref{lem:elimination_procedure_correctness}, Step (3) can be implemented in $O(|S|(2t)^{c^2 + 3c + 3}n^{o(1)})$ time. 
    
    Hence the overall running time is $O(|S|(2q)^{c^2 + 3c + 3}n^{o(1)})$.
\end{proof}


\paragraph{Type Three Repair Set}
We give an algorithm to construct a type three repair set, and prove the following lemma.

\begin{lemma}\label{lem:type-three-set-construct}
There is an algorithm 
with running time $O(|S|(2t)^{c^2 + 3c + 1} n^{o(1)})$
to compute a set of $|S|(4c^3 +3c^2 + 2c)$ edges whose union with an arbitrary type two repair set is a type three repair set. 
\end{lemma}

We explain why a type two repair set is useful while dealing with type three bipartitions. 
For each type three bipartition $(S\cup T', T\cut T')$ and a simple minimum $(S\cup T', T\cut T', t)$-cut $(V', V\cut V')$, we look at the connected components of $G[V\cut V']$. 
    If there is a connected component that is of volume at most $t$ and some vertex from $T$, then there is a simple $(T\setminus T', S\cup T', t)$-cut of size at most $c$, and thus 
    a $(T\setminus T', S\cup T', q)$-cut of size at most $mincut_G(T\setminus T', S\cup T', t)$ is shattered by $W_2\cup (\IA_{G_0}(T_0, t, q, d, 2c+1)\cap E)$. We can use this cut shattered by $W_2\cup (\IA_{G_0}(T_0, t, q, d, 2c+1)\cap E)$ to obtain a $(S\cup T', T\setminus T', 2t)$-cut of size at most $mincut_G(S\cup T', T\setminus T', t)$ shattered by $W_2\cup (\IA_{G_0}(T_0, t, q, d, 2c+1)\cap E)$. 
    This observation gives us a useful characterization of type three bipartitions whose minimum cuts are not shattered by $W_2\cup (\IA_{G_0}(T_0, t, q, d, 2c+1)\cap E)$ so that we have the following algorithm.
    

 \begin{framed}
\noindent \textbf{Type Three Bipartition Update Algorithm}

    \vspace{.2cm}\noindent \textbf{Input:} Access to 
    $G$, and $G \setminus (\IA_{G_0}(T_0, t, q_0, d, 2c)\cap E)$, and two vertex sets $S, T$. 
    
    

    \noindent \textbf{Output:} A set of edges $W$.  
    
\begin{enumerate}    
\item For each $V^\star$ such that $G[V^\star]$ corresponds to a connected component containing 
a vertex from $S$
of $G\setminus (\IA_{G_0}(T_0, t, q_0, d, 2c)\cap E)$ 
\begin{enumerate}
\item
Find a $V^\dagger$ with volume minimized satisfying the following conditions: (i) $V^\dagger \cap T \neq \emptyset$, (ii) $\vol_G(V^\dagger) > t$, and 
(iii) there is a simple cut $(V', V\setminus V')$ of size at most $c$ with $S\subseteq V'$, $\vol_G(V')\leq t$ and $\endpoints(\partial_G(V'))\subset V^\star$ such that $V^\dagger$ corresponds to a connected component of $G[V\setminus V']$.


\item Let $x$ be an arbitrary vertex in $T \cap V^\dagger$. Collect all $(t, c)$-realizable pairs $(E', V')$ satisfying the following conditions: (i) $V' \cap S = S$, (ii) $E' \subset\partial_G(V')$ induces an atomic cut separating $x$ and $V'$, and (iii) $\endpoints(\partial_G(V')) \subset V^\star$.

\item Run elimination procedure on realizable pairs collected in Step (b). 
\end{enumerate}
\item Return the union of the cut-set of an arbitrary minimum $(S\cup T, \emptyset, t)$-cut, 
$\partial_G(V^\dagger)$ obtained for each execution of Step (1a), and 
the edge sets obtained in  each execution of Step (1c).
\end{enumerate}
\vspace{-.5cm}
\end{framed}






\begin{lemma}\label{lem:type_three_easy}
In the decremental $\IA$ set update scenario, 
for an $\IA_{G_0}(T_0, t, q, d, 2c+1)$ set, let $W_2$ be a type two repair set of the $\IA$ set, and $(V', V\setminus V')$ be a simple $(S \cup T', T\setminus T', t)$-cut of size at most $c$ for some $T'\subsetneq T$ such that one connected component of $G[V\setminus V']$ contains at least one vertex in $T$ and is of volume at most $t$ with respect to $G$.  

There is a $(S \cup T', T\setminus T', 2t)$-cut $(Q, V\setminus Q)$ of size 
at most $mincut_G(S\cup T', T\setminus T', t)$ such that
every connected component of 
\[G\setminus (W_2 \cup (\IA_{G_0}(T_0, t, q, d, 2c + 1) \cap E))\] contains at most $mincut_G(S\cup T', T\setminus T', t) - 1$ edges of $(Q, V\setminus Q)$'s cut-set.
\end{lemma}
\begin{proof}
Let $G'$ denote $G\setminus (W_2 \cup (\IA_{G_0}(T_0, t, q, d, 2c + 1) \cap E))$. 
Let $V^\dagger$ be a set of vertices corresponding to a connected component of $G[V\setminus V']$ satisfying $\vol_G(V^\dagger) \leq t$ and $V^\dagger \cap T \neq \emptyset$. Let $T^\dagger = V^\dagger \cap T$.
There is a $(T^\dagger, S \cup (T\setminus T^\dagger), t)$-cut of size at most $c$ for $G$.
By the definition of type two repair set, there is a $(T^\dagger, (S \cup T)\setminus T^\dagger, q)$-cut  $(V^\diamond, V\setminus V^\diamond)$ of size at most 
$|\partial_G(V^\dagger)|$
such that every connected component of $G'$
contains at most $|\partial_G(V^\dagger)| - 1$ edges of $(V^\diamond, V\setminus V^\diamond)$'s cut-set.
Let $Q$ be $(V' \cup V^\dagger) \setminus V^\diamond$. We have $\vol_G(Q) \leq 2t$ and $Q \cap (S\cup T) = S \cup T'$.
Furthermore, by Lemma~\ref{lem:replacement}, we have  \[\partial_G(Q) \subset (\partial_G(V') \setminus \partial_G(V^\dagger)) \cup \partial_G(V^\diamond).\] 
Note that $|\partial_G(V^\diamond)| \leq |\partial_G(V')|$
and $\partial_G(V^\dagger) \subset \partial_G(V')$, 
hence \[|(\partial_G(V') \setminus \partial_G(V^\dagger)) \cup \partial_G(V^\diamond)| \leq |\partial_G(V')|.\]
And thus
$(Q, V\setminus Q)$ is a $(S\cup T', T\setminus T', 2t)$-cut of size at most $|\partial_G(V')|$.

If $|\partial_G(Q)| <|\partial_G(V')|$, then every connected component of $G'$ contains at most $|\partial_G(V')| - 1$ edges of $\partial_G(Q)$.
If $|\partial_G(Q)| = |\partial_G(V')|$, then
$|\partial_G(V^\diamond)| = |\partial_G(V^\dagger)|$, and \[\partial_G(Q) = \partial_G(V') \setminus \partial_G(V^\dagger)) \cup \partial_G(V^\diamond),\] thus 
$\partial_G(V^\diamond)$ is a subset of $\partial_G(Q)$.
Since every connected component of 
$G'$
contains at most $|\partial_G(V^\dagger)| - 1$ edges of $\partial_G(V^\diamond)$, 
every connected component of $G'$
contains at most $|\partial_G(V')| - 1$ edges of $\partial_G(Q)$.
\end{proof}

\begin{proof}[Proof of Lemma~\ref{lem:type-three-set-construct}]
    We show that the Type Three Bipartition Update Algorithm gives the desired repair set in the desired running time. 
     
    Let $W_3$ be the edge set returned by the algorithm, and $W_2$ be an arbitrary type two repair set with respect to $\IA_{G_0}(T_0, t, q, d, 2c+1)$. 
    Let bipartition $(S\cup T', T \setminus T')$ be a type three bipartition that has a simple minimum $(T', (S\cup T) \setminus T', t)$-cut of size at most $c$.
    Let $C' = (V', V\setminus V')$ be  a simple minimum  $(U, (S\cup T) \setminus U), t)$-cut.
    We only consider the case that the cut-set of $C'$ belongs to a connected component of $G\setminus (\IA_{G_0}(T_0, t, q_0, d, 2c)\cap E)$ containing a vertex from $S$, otherwise we are done by letting $C = C'$ by Lemma~\ref{lem:exclude_Case}.

Let $V^\star$ be the connected component of $G\setminus (\IA_{G_0}(T_0, t, q_0, d, 2c)\cap E))$ that contains the edge set of $C'$, and $V^\dagger$ be the vertex set selected in Step 1(a) when run Step 1 with respect to $V^\star$. 
We consider the following three cases:
\begin{itemize}
\item Case 1. There is a connected component of $G[V\setminus V']$ with volume at most $t$ and containing at lease one vertex of $T$. 
By  Lemma~\ref{lem:type_three_easy},
there is a $(S \cup T', T\setminus T', 2t)$-cut of size at most $mincut_G(S\cup T', T\setminus T', t)$ such that every connected component of $G\setminus (W_2 \cup (\IA_{G_0}(T_0, t, q, d, 2c + 1)\cap E))$ contains at most $mincut_G(S\cup T', T\setminus T', t) - 1$ edges of the cut-set of the cut. 

\item Case 2. $x \notin V'$. By the algorithm and Lemma~\ref{lem:elimination_procedure_correctness},  $W_3$ shatters a $(S\cup T', T\setminus T', \vol_G(V'))$-cut of size at most $mincut_G(S\cup T', T\setminus T', t)$. 

\item Case 3. The first two cases does not hold. We show that $\partial_G(V^\dagger)$ shatters $(V', V\setminus V')$ for this case. 
Since $\vol_G(V^\dagger) > t$, 
$V'$ does not contain all the vertices of $V^\dagger$.
But since $x \in V'$, there must be a subset of $\partial_G(V')$ inducing an atomic cut separating $V'$ and some vertex in $V^\dagger$.
If $V^\dagger$ shatters such an atomic cut-set, then we are done. 
Otherwise, 
all the subsets of $\partial_G(V')$ that induce atomic cuts separating $V'$ and some vertex in $V^\dagger$ are in $G[V^\dagger]$.
Since Case 1 does not hold, 
by our choice of $V^\dagger$, 
$S\cup T$ is on one side of these induced cuts. Thus, there must be an $E'' \subset \partial_G(V')$ inducing a nontrivial bipartition on $S\cup T$ and belonging to $G[V\setminus V^\dagger]$. Hence, $\partial_G(V^\dagger)$ also  shatters $(V', V\setminus V')$.
\end{itemize}
Hence, $W_2 \cup W_3$ is a type 3 repair set.

    Now we bound the size of $W_3$. 
    Since there are at most $|S|$  connected components of \\$G\setminus (\IA_{G_0}(T_0, t, q_0, d, 2c)\cap E)$ that contain a vertex of $S$, and in each execution of Step 1(c), the realizable pairs satisfies  the conditions of Lemma~\ref{lem:procedure}. By Lemma~\ref{lem:procedure}, we have \[|W_3| \leq |S|(4c^3 + 3c^2 + c) + c \leq |S|(4c^3 + 3c^2 + 2c).\]

Now we bound the running time of the algorithm. 
By Lemma~\ref{lem:subroutine_repair_set}, all simple cuts of size at most $c$ with one side containing $S$ and volume at most $t$ can be enumerated in $O(t^{c+2}n^{o(1)})$ time. The total number of such cuts is bounded by $t^c$. 
Hence, by Lemma~\ref{lem:subroutine_repair_set}, for each execution of Step (1a), the running time is $O(t^{c+2}n^{o(1)})$. 
The running time of each execution of Step (1b) is $O((2t)^c n^{o(1)})$. 
By Lemma~\ref{lem:elimination_procedure_correctness}, the running time of each execution of Step (1c) is $O((2t)^{c^2 + 3c + 1} n^{o(1)})$. 
Since we have at most $|S|$ connected components of $G\setminus (\IA_{G_0}(T_0, t, q_0, d, 2c)\cap E)$ that contain vertex from $S$, the overall running time is $O(|S|(2t)^{c^2 + 3c + 1} n^{o(1)})$. 
\end{proof}

\paragraph{Proof of Lemma~\ref{lem:repair_set_algorithm}}
Lemma~\ref{lem:repair_set_algorithm} is then obtained from repair set algorithm for each type. 


\begin{proof}[Proof of Lemma~\ref{lem:repair_set_algorithm}]
By Definition~\ref{def:repair_set}, Definition~\ref{def:repair_set_type} and Lemma~\ref{lem:simple_intercept}, the union of a type 1 repair set, a type 2 repair set, and a type 3 repair set is a repair set. 
Then the lemma follows by 
Lemma~\ref{lem:type-one-set-construct}, Lemma~\ref{lem:type-two-set-construct}, and Lemma~\ref{lem:type-three-set-construct}. 
\end{proof}

Lemma~\ref{lem:repair_set_algorithm} also implies an algorithm to construct an $\IA_G(T, t, 3t, d, 1)$ set for a given graph $G$ and a set of vertices $T \subset V$, because an empty set is an $\IA_{G}(\emptyset, t, t, d, c)$ set for any parameters $t, d$ by Definition~\ref{def:ia_new}.
Since an $\IA_G(T, t, 3t, d, 1)$ set is also an $\IA_G(T, t, q, d, 1)$ set for any $q \geq 3t$ by definition, hence we can construct $\IA_G(T, t, q, d, 1)$ set for any $q \geq 3t$.

\begin{corollary}\label{cor:initial_cut_partition_basic}
Let $G = (V, E)$ be a connected graph of $n$ vertices, $t, q, d$ be three positive integers such that $d \geq 1$, $q \geq 3t$. 
Given access to  graph $G$ and a set of vertices $T \subset V$, there is an algorithm to construct  an $\IA_{G}(T, t, q, d, 1)$ set in time \[O(|T|\cdot (2q)^{d^2 + 3d+3} n^{o(1)})\] such that the $\IA_{G}(T, t, q, d, 1)$ set contains at most $|T| (16d^3 + 12d^2  + 2d)$ edges.
\end{corollary}
\subsection{Decremental Cut Containment Set Update Algorithm}\label{sec:subsec_cut_containment_udpate}

We present our decremental cut containment set update algorithm, and prove Lemma~\ref{lem:cut_containment_update}.

\begin{framed}
\noindent \textbf{Cut Containment Set Update Algorithm}

\noindent \textbf{Input:} Access to 
$G = (V, E)$, 
$G_i = G \setminus ((\cup_{j=1}^i E_j) \cap E)$ for each $1 \leq i\leq c^2 + 2c$, and vertex sets  $T$, $S$, where $E_1, \dots E_{c^2 + 2c}$ forms a recursively constructed $(T_0, c^2+2c)$-cut containment set for graph $G_0$.

\noindent \textbf{Output:} A set of edge $F \subset E$. 


\begin{enumerate}
    \item Set $k_0 \gets 0$, $S_1\gets S$.
    \begin{enumerate}
        \item Set $k_i \gets k_{i-1} + 2(c-i) + 3$.
        
        \item For each connected component $H = (V_H, E_H)$ of $G$ such that $V_H \cap S_i \neq \emptyset$, run Decremental $\IA$ Set Update (DISU) Algorithm with:
        \begin{itemize}
            \item $H$ as $G$ of DISU Algorithm;
            \item $S_i\cap V_H$, $c-i+1, t_{k_{i-1}}, q_{k_i-1} \cdot ((c^2 + 2c)+1), q_{k}\cdot ((c^2 + 2c)+1)$ as $S$, $c, t, q_0, q$ of the DISU Algorithm  respectively;
            \item the induced subgraph of $G_{k_{i} - 1}$ on $V_H$ as $G\setminus (\IA_{G_0}(T_0, t, q_0, 2c+1, 2c) \cap E)$ of DISU Algorithm;
            \item the induced subgraph of $G_{k_{i}}$ on $V_H$ as $G\setminus (\IA_{G_0}(T_0, t, q, 2c+1, 2c+1) \cap E)$ of DISU Algorithm;
            
            
        \end{itemize}
        \item Let $F_i$ be the union of edge sets returned by executions of DISU Algorithm in Step (b).
        
        \item Set $G \gets G_{k_i} \setminus  \cup_{j=1}^{i} F_i$, $S_{i+1} \gets S_i \cup  \endpoints(\cup_{j=1}^i F_i)$. 

    \end{enumerate}
    \item Return $F = \bigcup_{i=1}^{c} F_i$.
\end{enumerate}
\end{framed}

\begin{proof}[Proof of Lemma~\ref{lem:cut_containment_update}]
We first show that \[F\cup \left(\left(\cup_{i=1}^{c^2 + 2c} E_i\right) \cap E\right)\] is a recursively constructed $(T\cup S, c)$-cut containment set of $G$.
By the recursive construction condition and Corollary~\ref{cor:ia_composition}, 
$\cup_{k_{i-1}+1}^{k_i-1}E_i$ is an \[\IA_{G_0\setminus \left(\cup_{j=1}^{k_{i-1}} E_i\right)}\left(T_i, t_{k_{i-1}+1}, q_{k_i - 1} \cdot ((c^2 + 2c)+1), 2(c-i) + 3, 2(c-i) + 2\right)\] set, and 
$\cup_{k_{i-1}+1}^{k_i}E_i$ is an \[\IA_{G_0\setminus \left(\cup_{j=1}^{k_{i-1}} E_i\right)}\left(T_i, t_{k_{i-1}+1}, q_{k_i} \cdot((c^2 + 2c)+1), 2(c-i) + 3, 2(c-i) + 3\right)\] set.
Since we run the Decremental IA Set Update Algorithm for each connected component of \[G\setminus \left(\left(\cup_{j=1}^{k_{i-1}} E_i  \cap E\right) \bigcup\left(\cup_{j=1}^{i-1}F_{j}\right)\right)\] that contains a vertex of $S_i$,
by induction and Lemma~\ref{lem:repair_set_algorithm},
$F_i \cup \left(\cup_{j=k_{i-1}+1}^{k_i} E_j\right)$ is an

\begin{equation*}
\IA_{F_i \cup \left(\cup_{j=k_{i-1}+1}^{k_i} E_j\right)}\wrap{T_i\cup S_i,t_{k_{i-1}+1}, q_{k_i} \cdot ((c^2 + 2c)+2), c - i + 1, 1}\end{equation*}
set. By Corollary~\ref{cor:ia_composition}, 
\[F\cup \left(\left(\cup_{i=1}^{c^2 + 2c} E_i\right) \cap E\right)\] is a recursively constructed cut containment set.


Note that 
the Decremental IA Set Update Algorithm 
is invoked only at Step (1b) on connected component that contains vertices of $S \cup \endpoints(\cup_{j=1}^{i-1}F_j)$ for $i$-th iteration of the for loop. 
By Lemma~\ref{lem:repair_set_algorithm} and induction, 
\[|F_i| = O(|S \cup \endpoints(\cup_{j=1}^{i-1} F_j)|(32c^3 + 24c^2 + 4c)^{i}).\]
Hence $F$ is of size at most $O(|S|(10c)^{O(c)})$. 
The total running time is bounded by 
\[O\wrap{\abs{S} (2q)^{O(c^2)} n^{o(1)} }.\]
\end{proof}

\section{\texorpdfstring{$c$}{c}-Edge Connectivity Sparsifier}
\label{sec:sparsifier}
We formally define our sparsifiers and prove some useful properties. 
Our definition makes use of tree contraction~developed in~\cite{henzinger1997fully,holm2001poly}.
Below we give the definition of tree contraction. The properties of tree contraction are given in Appendix~\ref{sec:spanning_forest_contraction}. 

\begin{definition}
Given a forest $F = (V, E)$ and a set of key vertices $K$. 
The contraction of $F$ with respect to $K$, denoted by $\contract_S(F)$, is a minimal forest (in terms of number of vertices and edges) satisfying the following two conditions:
\begin{enumerate}
    \item Edges of $\contract_S(F)$ correspond to edge disjoint paths of $F$.
    \item Each edge of $\contract_S(F)$ corresponds to a path of $F$ that is part of a path connecting two vertices of $S$. 
\end{enumerate}
\end{definition}

Based on the definition of tree contraction, we define sparsifier with respect to terminals as follows:

\begin{definition}
Let $G = (V, E)$ be a connected graph, 
$T$ be a set of terminals, 
$\cc$ be the a cut containment sets of $G$ with respect to $T$, 
$F$ be a spanning forest of $G\cut \cc$.  The \textit{sparsifier of $G$ with respect to $T$, $\cc$, and a parameter $\gamma > c$}, denoted by $\sparsifier(G, T, \cc, \gamma)$, is a multigraph with the vertex set of $\contract_{T}(F)$ as vertices, 
 the union of $\cc$ and the edge set of $\contract_{T}(F)$ as edges, with multiplicities:

 \begin{itemize}
     \item For every edge $(x, y)$ of $\sparsifier(G, T, \cc, \gamma)$ such that $x$ and $y$ are in different connected component of $G\cut \cc$, the multiplicity of $(x, y)$ in $\sparsifier(G, T, \cc, \gamma)$ is equal to the multiplicity of $(x, y)$ in $G$.
     \item For every edge $(x, y)$ of $\sparsifier(G, T, \cc, \gamma)$ such that $x$ and $y$ are in the same vertex set of $G \cc$, the multiplicity of $(x, y)$ in $\sparsifier(G, T, \cc, \gamma)$ is $\gamma$.
 \end{itemize}
\end{definition}

We define our one-level sparsifier with respect to vertex partition as follows.


\begin{definition}[One-level sparsifier]\label{def:one_level_sparsifier}
Let $G = (V, E)$ be a graph, $\mathcal P $ be a vertex partition of $G$ such that $G[ P]$ is connected for every $P\in\mathcal P$, $\cc$ be the union of the cut containment sets of $G[P]$ for each $P\in \mathcal P$, 
$F$ be a spanning forest of $G[\mathcal P]\cut \cc$.  The \textit{sparsifier of $G$ with respect to $\mathcal P$, $\cc$, and a parameter $\gamma > c$}, denoted by $\sparsifier(G, \mathcal P, \cc, \gamma)$, is a multigraph with the vertex set of $\contract_{\endpoints(\partial_G(\calP) \cup \cc)}(F)$ as vertices, 
 the union of $\partial_G(\calP) \cup \cc$ and the edge set of $\contract_{\endpoints(\partial_G(\calP) \cup \cc)}(F)$ as edges, with multiplicities:

 \begin{itemize}
     \item For every edge $(x, y)$ of $\sparsifier(G, \mathcal P, \cc, \gamma)$ such that $x$ and $y$ are in different connected component of $G[\mathcal P]\cut \cc$, the multiplicity of $(x, y)$ in $\sparsifier(G, \calP, \cc, \gamma)$ is equal to the multiplicity of $(x, y)$ in $G$.
     \item For every edge $(x, y)$ of $\sparsifier(G, \mathcal P, \cc, \gamma)$ such that $x$ and $y$ are in the same vertex set of $G[\mathcal P]\cut \cc$, the multiplicity of $(x, y)$ in $\sparsifier(G, \calP, \cc, \gamma)$ is $\gamma$.
 \end{itemize}
 Equivalently, $\sparsifier(G, \calP, \cc, \gamma)$ can be obtained as follows: For each $P \in \calP$, replace $G[P] = (P, E_P)$ in $G$ by $\sparsifier(G[P], \endpoints(\partial_G(P)) \cap P, \cc \cap E_P, \gamma)$.
\end{definition}

With the definition of one-level sparsifier, we prove the following property of one-level sparsifier. 


\begin{lemma}\label{lem:sparsifier_basic}
Given a graph $G = (V, E)$, 
 a vertex partition $\calP$ of $V$ such that $G[P]$ is connected for every $P \in \calP$,
a cut containment set $\cc$ for $G[\mathcal P]$, 
and parameter $\gamma > c$, we have 

\begin{enumerate}
    \item $\simple(\sparsifier(G, \calP, \cc, \gamma))$ contains at most $3 \wrap{|\partial_G(\calP)| + \abs{\cc}}$  edges. 
    \item $\sparsifier(G, \calP, \cc, \gamma)$ is a $c$-edge connectivity equivalent graph of $G$ with respect to $\endpoints(\partial_G(\calP))$.
\end{enumerate}
\end{lemma}

\begin{proof}
The first property is implied by Claim~\ref{claim:contracted_graph_basic}. We prove the second property in the rest of this proof.

Let $x,y$ be two distinct vertices in $\endpoints\wrap{\partial_G(\mathcal P)}$. 
By Definition~\ref{def:one_level_sparsifier}, 
$x$ and $y$ are in the same connected component of $G$ if and only if $x$ and $y$ are in the same connected component of $\sparsifier(G, \calP, \cc, \gamma)$.
In the rest of this proof, we assume $x$ and $y$ are in the same connected component of $G$. 

Let $H = (V_H, E_H)$ be $\sparsifier(G, \calP, \cc, \gamma)$.
Let $V'$ be the vertex set of the connected component containing $x$ and $y$ in $G$, and $V^\dagger$ be the vertex set of the connected component containing $x$ and $y$ in $H$. 

We show that 
for any $\alpha \leq c$,
the size of minimum cut separating $x$ and $y$ in $G[V']$ is $\alpha$  if and only if the size of minimum cut separating $x$ and $y$ in $H[V^\dagger]$ is $\alpha$.

Consider the case that the size of minimum cut  separating $x$ and $y$ in $G[V']$ is $\alpha$. 
Let $C_1 = (V_1, V' \setminus V_1)$ be such a minimum cut.
$C_1$ must be an atomic cut, otherwise it is not minimum.
Let $F = \partial_{G[V']}(V_1) \cap \partial_G(\calP)$, 
\[\calP' = \{P \in \calP: P \cap V_1 \neq \emptyset \text{ and } P \cap (V'\setminus V_1) \neq \emptyset \}, \]
and $F_P$ be the edges of $\partial_{G[V']}(V_1)$ that are also in ${G[P]}$ for every $P \in \calP'$.



For any $P \in \calP'$, 
let $T_P = \endpoints(\partial_G(\mathcal P))\cap P$. 
Since $C_1$ is an atomic cut for $G[V']$, 
both $T_P \cap V_1$ and $T_P \cap (V' \setminus V_1)$ are not empty sets.
Hence, the size of the minimum cut that partitions $T_P$ into $T_P \cap V_1$ and $T_P \cap (V' \setminus V_1)$ is $|F_P|$, otherwise $C_1$ is not the minimum cut separating $x$ and $y$. 
By Definition~\ref{def:one_level_sparsifier}, there is an edge set  $F'_P$  of size $|F_P|$  within $G[P]$
such that $F'_P$ is a subset of $\cc$, and is the cut-set of a minimum cut on $G[P]$  partitioning $T_P$ into   $T_P \cap V_1$ and $T_P \cap (V' \setminus V_1)$.
Hence, $F \cup \wrap{\bigcup_{P \in \calP'} F_P'}$ induces a cut separating $x$ and $y$ for $G[V']$ of size $\alpha$, and $F \cup (\bigcup_{P \in \calP'} F_P')$ is a subset of $\partial_G(\calP)\cup \cc$. 
By the definition of $\sparsifier(G, \calP, \cc, \gamma)$, $F \cup (\bigcup_{P \in \calP'} F_P')$ induces a cut separating $x$ and $y$ in $H[V^\dagger]$ of size $\alpha$.

On the other hand, by Definition~\ref{def:one_level_sparsifier},
for any cut of size at most $c$ separating $x$ and $y$ in $H[V^\dagger]$, 
its cut-set also induces a cut of same size for $G[V']$ separating $x$ and $y$.

Hence, the size of minimum cut separating $x$ and $y$ in $G[V']$ is $\alpha$  if and only if the size of minimum cut separating $x$ and $y$ in $H[V^\dagger]$ is $\alpha$ for any $\alpha \leq c$.
Then the second property holds by Definition~\ref{def:c-edge-equivalent-graph-intro}. 
\end{proof}

We define multi-level sparsifier as follows. 

\begin{definition}\label{def:multi_sparifier_intro}
A $(\phi, \eta, \gamma)$ \emph{multi-level 
$c$-edge connectivity sparsifier} for a  graph $G$ of at most  $m$ distinct edges and parameters $0 < \phi, \eta \leq 1$, $\gamma \geq c + 1$ is a set of tuples $\{(G^{(i)},  \calP^{(i)}, \cc^{(i)})\}_{i = 0}^\ell$ such that the following conditions hold:
\begin{enumerate}\itemsep -3pt
\item $G^{(0)}=G$; for $i > 0$, $G^{(i)} = \sparsifier(G^{(i - 1)}, \calP^{(i - 1)}, \cc^{(i - 1)}, \gamma)$ such that $G^{(i)}$ contains at most $m \eta^i$ distinct edges. 
\item 
$\calP^{(i)}$ is a $\phi$-expander decomposition of $\simple(G^{(i)})$ for each $0 \leq i \leq \ell$, and $\calP^{(\ell)}$ contains only one cluster.  
\item $\cc^{(i)}$ is the union of $(\endpoints(\partial_{G^{(i)}}(P)) \cap P, c)$-cut containment sets  for all $P\in \calP^{(i)}$.
 \end{enumerate}
\end{definition}

By Lemma~\ref{lem:sparsifier_basic} and Definition~\ref{def:multi_sparifier_intro}, we have the following corollary. 
\begin{corollary}
For a $(\phi, \eta, \gamma)$ multi-level $c$-edge connectivity sparsifier $\{(G^{(i)},  \calP^{(i)}, \cc^{(i)})\}_{i = 0}^\ell$ for $\gamma \geq c + 1$, if $x, y$ are vertices in $\endpoints(\partial_{G^{(i)}}(\calP^{(i)}))$ for every $0 \leq i < \ell$, then the $c$-edge connectivity between $x$ and $y$ in $G$ is the same as that of $G^{(\ell)}$.
\end{corollary}

\section{One-Level Connectivity Sparsifier Update Algorithm}\label{sec:one_level_update}
Based on the decremental cut containment set update algorithm obtained in Section~\ref{sec:update_ia_set},
we present our one-level sparsifier update algorithm and prove Lemma~\ref{lem:one-level-update-intro}. 
Starting from this section, we assume every vertex of our multiple graph has a constant number of neighbors. In Section~\ref{sec:fully}, we will show that this is no loss of generality in assuming so. 

\vspace{-.15cm}\begin{lemma}\label{lem:one-level-update-intro}
Given access to a mutligraph $G$, a $\phi$-expander decomposition $\calP$ of $\simple(G)$, a 
set of edges $\cc$ that is the union of $(\endpoints(\partial_G(P)) \cap P, c^2 + 2c)$-cut containment sets of $G[P]$ for each $P\in \calP$
 recursively constructed with respect to parameters $t_1, q_1, \dots, t_{c^2 + 2c},$ $ q_{c^2 + 2c}$ satisfying Equation~(\ref{equ:ia_composition_condition}),
and a multigraph update sequence $\updateseq$, 
there is a deterministic algorithm with running time $\widehat O(|\updateseq|m^{o(1)} / \phi^3 )$ to update $G$ to $G'$, $\calP$ to $\calP'$, $\cc$ to $\cc'$~\footnote{We use $\widetilde O(f)$ to denote $O(f \cdot \text{polylog}(f))$ and $\widehat O(f)$ to denote $O(f^{1 + o(1)})$.},
and output 
a multigraph update sequence $\newupdateseq$ of length  $|\updateseq|(10c)^{O(c)}$ 
such that the following conditions hold
\vspace{-.1cm}\begin{enumerate}\itemsep -3pt
\item $G'$ is the resulting graph of applying $\updateseq$ to $G$.
\item $\calP'$ is a $\phi / 2^{O(\log^{1/3} m \log^{2/3}\log m)}$-expander decomposition of $\simple(G')$.

\item  $\cc'$ is a union of $(\endpoints(\partial_{G'}(P)) \cap P, c)$-cut containment sets of $G'[P]$ for each $P\in \calP'$
with respect to parameters $t_1', q_1', \dots, t_{c}', q_c'$ satisfying 
Equation~(\ref{equ:new_inequality}).
\item $\newupdateseq$ updates
 $\sparsifier(G, \calP, \cc, \gamma)$ to  $\sparsifier(G', \calP', \cc', \gamma)$ for any $\gamma > c^2 + 2c$.
 \item  
Every vertex involved in the update sequence $\updateseq$ becomes a singleton of $\calP'$ if the vertex is in $G'$, where a singleton is a cluster of one vertex.
\end{enumerate}
\end{lemma}

\subsection{Expander Decomposition Update}
We first give our algorithm to update expander decomposition with respect to edge deletions.
Given a graph $G = (V, E)$, a $\phi$-expander decomposition $\calP$ of $\simple(G)$, and a set of vertex $S \subset V$, 
our goal is to update $\calP$ to a $\phi/\factor$-expander decomposition $\calP'$ of $\simple(G)$ such that every vertex of $S$ is a singleton in $\calP'$.

\begin{lemma}\label{lem:expander_decompoisition_update}
Let $G = (V, E)$ be a multigraph of $n$ vertices such that every vertex has at most $\Delta$ different neighbors,
$\calP$ be a $\phi$-expander decomposition of $\simple(G)$, and 
$S \subset V$ be a set of vertices. 
Given access to $G$, $\calP$, $G[\calP]$, and $S$, 
there is a deterministic algorithm with running time $\widehat O(|S|\Delta \factor / \phi^3)$ 
to 
update $\calP$ to $\calP'$, and $G[\calP]$ to $G[\calP']$ satisfying the following conditions
\begin{enumerate}
    \item $\calP'$ is a refinement of $\calP$, and is a $(\phi / \factor)$ expander decomposition of $G$. 
    \item $\partial_G(\calP) \setminus \partial_G(\calP')$ contains at most $O(k\Delta)$ distinct edges. 
    \item Every vertex in $S$ is singleton of $\calP'$. 
\end{enumerate}


\end{lemma}


Lemma~\ref{lem:expander_decompoisition_update} uses the deterministic expander decomposition algorithm by  
Chuzhoy et al.~\cite{chuzhoy2019deterministic}, and dynamic expander pruning algorithm 
by  Saranurak and Wang in \cite{saranurak2019expander} in an offline way.

\begin{theorem}[cf. Corollary 7.1 of ~\cite{chuzhoy2019deterministic}]\label{thm:expander_decomposition}
Given a simple graph $G = (V, E)$ of $m$ edges and a parameter $\phi$,
there is 
a constant $\delta > 0$ and 
a deterministic algorithm  $\expanderdecomposition$ to compute a $(\phi, \phi/ \truefactor )$-expander decomposition of $G$ in time $\widehat O(m/ \phi^2)$.
\end{theorem}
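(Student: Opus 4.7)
The plan is to follow the standard recursive decomposition framework for expander decompositions, built around a core subroutine I will call $\textsc{Cut-or-Certify}$: given a graph $H$ and conductance parameter $\phi$, it either certifies that $H$ is an $\Omega(\phi/\truefactor)$-expander, or outputs a cut $(A, V(H)\setminus A)$ of conductance at most $\phi$ whose smaller side has volume $\Omega(\vol(H)/\truefactor)$ (i.e., ``approximately most-balanced''). Given $\textsc{Cut-or-Certify}$, the decomposition is straightforward: initialize $\calP = \{V\}$, and while some cluster $P \in \calP$ fails the expander certification, invoke $\textsc{Cut-or-Certify}$ on $G[P]$ to obtain a cut $(A, P \setminus A)$ and replace $P$ in $\calP$ by $A$ and $P \setminus A$; return $\calP$ once every cluster is certified.

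For correctness, each cluster of the output is by construction an $\Omega(\phi/\truefactor)$-expander, so only the bound on intercluster edges and the running time remain. To bound intercluster edges, I would use the standard volume-based charging scheme: since each cut is approximately balanced up to a $\truefactor$ factor, a vertex can lie on the smaller side of a cut only $O(\log m \cdot \truefactor)$ times; each such event contributes at most $\phi$ edges per unit of its degree to the intercluster set. Summing gives a total of $\phi \cdot m / \truefactor$ intercluster edges after absorbing constants into $\delta$, which is exactly the required $(\phi, \phi/\truefactor)$-expander decomposition guarantee. For the running time, I would argue that a single invocation of $\textsc{Cut-or-Certify}$ on $H$ runs in $\widehat O(|E(H)|/\phi)$ time, and that the approximate balance property implies the total work across the recursion tree geometrically sums to $\widehat O(m/\phi^2)$.

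The hard part, and the entire technical content, is implementing $\textsc{Cut-or-Certify}$ deterministically with a subpolynomial loss $\truefactor = 2^{\delta \log^{1/3} m \log^{2/3} \log m}$ in the conductance and near-linear running time. Randomized approaches rely on random walks, approximate maximum flows via multiplicative-weights, or the cut-matching game driven by a random-projection matching player; none of these derandomize easily without incurring either polynomial loss in $\phi$ or polynomial loss in time. I expect the correct approach is a deterministic cut-matching game in which the matching player is realized by a carefully engineered deterministic approximate flow/embedding oracle, applied recursively: one bootstraps the expander decomposition at scale $m$ by invoking a decomposition oracle at scale $m^{1-\alpha}$ as a subroutine inside the flow primitive. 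Choosing $\alpha \approx (\log\log m / \log m)^{2/3}$ balances the recursion and produces the unusual exponent $\log^{1/3} m \log^{2/3}\log m$ in $\truefactor$; tuning these parameters to simultaneously preserve the $\widehat O(m/\phi^2)$ running time is the most delicate step, and is what actually requires the full machinery of~\cite{chuzhoy2019deterministic}.
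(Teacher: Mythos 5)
There is nothing in the paper to compare your argument against: Theorem~\ref{thm:expander_decomposition} is not proved in this paper at all, it is imported verbatim (``cf.\ Corollary 7.1 of Chuzhoy et al.'') and used as a black box. At the level of strategy, your outline — a recursive ``cut-or-certify'' loop whose engine is a deterministic balanced sparse-cut/certification primitive obtained by derandomizing the cut-matching game with a flow-based matching player that recursively bootstraps expander decompositions at smaller scales — is indeed a reasonable reconstruction of how the cited work establishes the result, including the origin of the exponent $\log^{1/3} m \log^{2/3}\log m$ in $\truefactor$. But since you explicitly defer the construction of \textsc{Cut-or-Certify} to the machinery of the cited paper, what you have written is a proof outline of the citation, not a proof of the theorem; the deferred primitive \emph{is} the theorem's entire technical content, so as a standalone argument this has a gap by construction.

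Two concrete issues in the part you do argue. First, the charging for the intercluster-edge bound is internally inconsistent: if each cut has conductance at most $\phi$ and you charge its cut edges to the smaller side, a vertex can lie on the smaller side only $O(\log m)$ times (its cluster's volume at least halves each such time), \emph{independently} of the balance factor; the lower bound $\Omega(\vol(H)/\truefactor)$ on the smaller side is what controls the recursion depth on the larger side and hence the $\widehat O(m/\phi^2)$ running time, not the edge count. Your own accounting (``$O(\log m \cdot \truefactor)$ times, $\phi$ per unit degree'') would give $O(\phi m \log m \cdot \truefactor)$ intercluster edges, not the $\phi m/\truefactor$ you claim, and in any case cutting along conductance-$\phi$ cuts can never push the intercluster fraction below roughly $\phi$. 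Second, your scheme certifies clusters only as $\Omega(\phi/\truefactor)$-expanders while the theorem (as used later, e.g.\ in Lemma~\ref{lem:preprocess_sparsifier}, where the guarantee is a $(\phi,\phi\truefactor)$-expander decomposition) wants genuine $\phi$-expanders with the subpolynomial loss absorbed into the edge bound; this is fixable by the standard reparametrization $\phi \mapsto \phi\,\truefactor$, but as written the loss sits on the wrong side of the pair, and the direction of the loss matters when the result is invoked downstream.
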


%


\begin{theorem}[rephrased, cf. Theorem 1.3 of \cite{saranurak2019expander}]\label{thm:pruning}
Let $G = (V, E)$ be a simple $\phi$-expander with $m$ edges. 
Given access to adjacency lists of $G$ and a set $D$ of $k \leq \phi m / 10 $ edges, 
there is a deterministic algorithm $\pruning$ to find a pruned set $P \subseteq V$ in time $O(k \log m/\phi^2)$ such that all of the following conditions hold:
\begin{enumerate}
\item $\vol_G(P) = 8 k  / \phi$.
\item $|E_G(P, V \cut P)| \leq 4k$.
\item $G'[V \cut P]$ is a $\phi / 6$ expander, where $G' = (V, E \cut D)$.
\end{enumerate}
\end{theorem}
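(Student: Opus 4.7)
The plan is to prove Theorem~\ref{thm:pruning} via an iterative local cut-finding algorithm that maintains a monotonically growing pruned set $P$, together with a potential-function analysis that balances the $\phi$-expansion of the original graph $G$ against the edges inserted into the boundary by the deletion of $D$. I would model the deletion of $D$ as placing ``excess mass'' on the endpoints of deleted edges, so intuitively $P$ must absorb all of this mass before $G'[V\setminus P]$ can recover expansion.

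Concretely, I would implement the pruning as a loop: set $P_0 := \emptyset$, and at iteration $i$ search $G'[V\setminus P_i]$ for a set $S_i \subseteq V \setminus P_i$ with $\vol_G(S_i) \leq \vol_G((V\setminus P_i)\setminus S_i)$ and $|E_{G'}(S_i,(V\setminus P_i)\setminus S_i)| < (\phi/6)\cdot \vol_G(S_i)$; if such a set exists, set $P_{i+1} := P_i \cup S_i$, otherwise terminate and output $P := P_i$. The algorithmic tool for finding $S_i$ efficiently is a local flow / unit-flow routine (as in the cut-or-flow framework of Henzinger--Rao--Wang and Saranurak--Wang): place $\Theta(1/\phi)$ units of source excess on the endpoints of edges in $D$ together with endpoints of the currently observed boundary $E_G(P_i, V\setminus P_i)$, and try to route this excess to a sink of capacity $\phi\cdot\deg$ per vertex; a partial routing exposes the desired low-conductance cut near the sources, in time $O(k/\phi^2)$ per round times $O(\log m)$ rounds total. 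Termination is automatic once property~3 fails to be violated.

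The analysis rests on a single potential $\Phi(P_i) := \phi \cdot \vol_G(P_i) - 2 \cdot |E_G(P_i, V\setminus P_i)|$. On the one hand, the original $\phi$-expansion of $G$ gives $|\partial_G(P_i)| \geq \phi \cdot \vol_G(P_i)$ whenever $\vol_G(P_i) \leq \vol_G(V)/2$, which, together with the fact that the only ``new'' boundary edges in $G'$ beyond $D$ arise from the low-conductance cuts chosen, yields $\Phi(P_i) \leq 4k$ throughout. On the other hand, each iteration adds $S_i$ whose $G'$-boundary is less than $(\phi/6)\vol_G(S_i)$, so $|E_G(P_{i+1},V\setminus P_{i+1})|$ grows by at most $(\phi/6)\vol_G(S_i) + |D\cap \partial S_i|$, while $\vol_G(P_{i+1})$ grows by $\vol_G(S_i)$; one checks $\Phi(P_{i+1}) \geq \Phi(P_i) + (2\phi/3)\vol_G(S_i) - 2|D\cap\partial S_i|$. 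Summing and using $|D| \leq k$ gives $\vol_G(P) \leq 8k/\phi$ (property~1), while recording the boundary growth gives $|E_G(P, V\setminus P)| \leq 4k$ (property~2). Property~3 is immediate from the termination condition together with the observation that $G'[V\setminus P]$ differing from $G[V\setminus P]$ only by the deleted edges incident to $V\setminus P$ does not decrease conductance past $\phi/6$.

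The main obstacle, and the part I would spend the most care on, is the boundary-growth bookkeeping in the potential argument, specifically the interaction between edges of $D$ that happen to straddle $P_i$ and new cut edges chosen by the algorithm: double-counting here is what would blow up the $8k/\phi$ volume bound or the $4k$ boundary bound. I would handle this by attributing each edge of $D$ to the first iteration at which one of its endpoints enters $P$, and then charging the conductance-deficit of each chosen $S_i$ separately against the $\phi$-expansion slack of $G$; this separation of ``damage from $D$'' versus ``damage from chosen cuts'' is what ultimately yields the claimed constants. The $O(k\log m/\phi^2)$ running time then follows from executing the local unit-flow routine at most $O(\log m)$ times, each on a region of total volume $O(k/\phi)$.
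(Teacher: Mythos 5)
This statement is not proved in the paper at all: it is imported verbatim (rephrased) as Theorem 1.3 of Saranurak--Wang \cite{saranurak2019expander} and used as a black box, so there is no in-paper argument to compare your proposal against. What you have written is essentially a reconstruction of the original Saranurak--Wang proof itself: the iterative ``find a sparse cut in $G'[V\setminus P_i]$ or certify expansion'' loop driven by a local unit-flow routine, plus the charging of $\phi\cdot\vol_G(P)$ against the boundary $\partial_{G'}(P)\cup D$, is exactly their cut-or-flow strategy, and your potential-function bookkeeping for properties 1 and 2 is the standard calculation (with ample slack in the constants $8k/\phi$ and $4k$, provided you also argue that $\vol_G(P)$ never exceeds $\vol_G(V)/2$, which is where the hypothesis $k\le \phi m/10$ and the locality of the flow region enter).

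If you intended this as a standalone proof rather than a reconstruction, the genuine gap is property 3: you assert that a fully routed flow ``certifies'' that $G'[V\setminus P]$ is a $\phi/6$-expander, but that implication --- that saturating sinks of capacity proportional to $\phi\cdot\deg$ on a local region rules out \emph{every} sparse cut of the remaining graph, including cuts partly outside the region the flow ever touched, and does so with respect to the volumes of $G'[V\setminus P]$ rather than $G$ --- is the technical heart of the Saranurak--Wang analysis and does not follow from the termination condition alone as you state it. Relatedly, the $O(k\log m/\phi^2)$ running time depends on the specific unit-flow height/level parameters, not just on ``$O(\log m)$ rounds on a region of volume $O(k/\phi)$''; within this paper the intended treatment is simply to cite \cite{saranurak2019expander} for both.
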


Our algorithm first collects all the edges incident to vertices of $S$ and run the expander pruning algorithm with collected edges. For pruned vertices, we further run a expander decomposition to make sure the number of new intercluster edges is linear with respect to the size of $S$ (assuming every vertex only has a constant number of neighbors).

\begin{figure}[htb]
 \begin{framed}
\noindent \textbf{Expander Decomposition Update Algorithm}

    \vspace{.2cm}\noindent \textbf{Input:} Graph $G$, $G[\calP]$, vertex partition $\calP$,  vertex set $S$ and conductance parameter $\phi$. 
    

    \noindent \textbf{Output:} $\calP'$ updated from $\calP$, and $G[\calP']$ updated from $G[\calP]$. 
    
\begin{enumerate}
\item For each $P \in \calP$ such that $P \cap S \neq \emptyset$, 
\begin{enumerate}    
\item Let $m_P$ denote the number of edges in $\simple(G[P])$, and $D_P$ be the set of edges in $\simple(G[P])$ that have at least  one endpoint in $S$.
\item If $|D_P| > m_P \phi / 10$, then 
\begin{enumerate}
    \item Run expander decomposition on each connected component of $\simple(G[P])\setminus D_P$ with conductance parameter $\phi / \factor$.
    \item Replace $P$ in $\calP$ by the union of obtained clusters in Step i. 
    \end{enumerate}
\item If $|D_P| \leq m_P \phi / 10$, then 
\begin{enumerate}
\item 
Run expander pruning on $\simple(G)$ with edge set $D_P$, and let $Q_P$ denote the resulting pruned vertex set. 
\item 
Run expander decomposition on each connected component of $\simple(G[Q_P]) \setminus D_P$ with parameter $\phi / \truefactor$. 
\item 
Replace $P$ in $\calP$ by the union of $\{P\setminus Q_P\}$ and the clusters obtained in Step ii. 
\end{enumerate}
\end{enumerate}
\item $\calP' \gets \calP$, and update $G[\calP']$ by removing all the edges that connects two vertices not in the same vertex set of $\calP'$. 
\end{enumerate}
\vspace{-.5cm}\end{framed}
\end{figure}

\begin{proof}[Proof of Lemma~\ref{lem:expander_decompoisition_update}]
The running time is obtained by Theorem~\ref{thm:expander_decomposition} and Theorem~\ref{thm:pruning}.
In the rest of this proof, we show that the vertex partition satisfies desirable properties by showing that for each $P \in \calP$ that contains at least one vertex of $S$, $P$ is partitioned into a few vertex sets $\mathcal{R}_P = \{P' \in \calP': P' \cap S \neq \emptyset \}$ in $\calP'$ such that the following conditions hold
\begin{enumerate}
    \item Every vertex of $S\cap P$ is a singleton in $\calP'$. 
    \item $\mathcal{R}_P$ is a $\phi / \factor$-expander decomposition of $\simple(G[P])$.
    \item $\partial_{\simple(G[P])}(\mathcal{R}_P)$ contains at most $O(|S\cap P|\Delta)$ distinct edges. 
\end{enumerate}
The first property is obtained by the fact that every vertex of $S\cap P$ is an isolated vertex in $\simple(G[P]) \setminus D_P$. The second property is obtained by Theorem~\ref{thm:expander_decomposition} and Theorem~\ref{thm:pruning}. 
So in the rest, we prove the third property. 

Let $G' = \simple(G[P] \setminus D_P)$. If $|D_P| \leq m_P \phi / 10$, by Theorem~\ref{thm:expander_decomposition} and Theorem~\ref{thm:pruning}, the following properties hold
\begin{enumerate}
\item[(a)]  Every connected component of $G'[P\cut Q_P]$ is a $\phi / 6$ expander;
\item[(b)] $|\partial_{G'}(Q_P)| \leq 8|D_P|$; 
\item[(c)] $\vol_{G'}(Q_P) \leq 8|D_P| / \phi$.
\end{enumerate}
If $|D_P|> m_P\phi / 10$, let $Q_P = P$. We have  $\vol_{G'}(Q_P) = \vol(G') \leq 2m_P < 20 |D_P| / \phi$.
Let $H$ denote $G'[Q_P]$ for either case. We have $\vol_H(Q_P) \leq 20|D_P| / \phi$. 

Note that we run expander decomposition on each connected component of $H$ with conductance parameter $\phi / \truefactor$ for both cases, where $\delta$ is the constant in Theorem~\ref{thm:expander_decomposition}.
For a vertex set $V'$ that forms a connected component of $H$, let $\alpha$ denote the number of edges in $H[V']$. 
The output of expander decomposition on $H[V']$
is a $\phi / \truefactor$-expander decomposition of $H[V']$ with $O(\phi \alpha)$ intercluster edges  
  by Theorem~\ref{thm:expander_decomposition}. 
Since the total number of edges in $H$ is at most $20|D_P| / \phi$, 
sum over all the connected components of $H$, we obtain a $(\phi / \truefactor, \phi)$-expander decomposition of $H$ with $O(|D_P|)$ intercluster edges.

Hence, for both cases, 
$\mathcal{R}_P$ is a $(\phi / \truefactor)$-expander decomposition of  $\simple(G[P] \setminus D_P)$ with $O(|D_P|) = O(|S\cap P| \Delta)$ intercluster edges. 
Since $D_P$ are intercluster edges of $\calR_P$ with respect to $\simple(G[P])$ by the first property, $\mathcal{R}_P$ is a $(\phi / \truefactor)$-expander decomposition of  $\simple(G[P])$ with $O(|S\cap P| \Delta)$ intercluster edges. 
\end{proof}

\subsection{One-Level Sparsifier Update Algorithm}

In the rest of this section, we present 
our one level sparsifier algorithm and prove Lemma~\ref{lem:one-level-update-intro}. 

We first give an overview of the data structure used.  The detailed data structure is discussed in Appendix~\ref{sec:data_structure}.
Recall that $\cc$ corresponds to the union of cut containment sets of $G[P]$ with $\endpoints(\partial_G(P))$ as terminals for each $P\in \calP$. 
Each cut containment set is recursively constructed, i.e., the edges in $\cc$ are partitioned into $E_1, E_2, \dots, E_{c^2 + 2c}$ so that for each $G[P] = (P, E_P)$ of $P\in\calP$, $E_i \cap E_P$ for all $1 \leq i \leq c^2 + 2c$ satisfy the recursive construction condition (see Section~\ref{sec:update_ia_set}) for $G[P]$ with  $\endpoints(\partial_G(P))\cap P$ as terminals. 
We also assume the graph data structure contains $G\setminus (\cup_{j=1}^i E_j)$ for each $1 \leq i \leq c^2 + 2c$ so that we can run Cut Containment Set Update Algorithm. 
For notation convenience, in this section, all the algorithm descriptions omit $G\setminus (\cup_{j=1}^i E_j)$ in the input. 

In addition, we assume that we maintain graph $G[\calP] \setminus \cc$ and a spanning forest $F$ of graph $G[\calP] \setminus \cc$ with $\endpoints(\partial_G(\calP) \cup \cc)$ as terminals, denoted by $F_{G[\calP] \setminus \cc}$, so that for each update (vertex insertion/deletion, edge insertion/deletion or terminal insertion/deletion) to the graph, 
there is an algorithm to update the spanning forest, and to output an update sequence of $O(1)$  length that updates the contraction of the spanning forest with respect to the terminals. 

\begin{lemma}\label{lem:data_structure_pre_new}
For a dynamic multigraph $G$ and a terminal set $T$ both undergoing updates,
there is an algorithm to construct and maintain a data structure of $G$ that contains a simple spanning forest $F$ of $G$ such that for each update (vertex/edge/terminal insertion/deletion), the algorithm output an update sequence that update $\contract_T(F)$ respectively.

Assume $G$ contains at most $n$ vertices and $m$ edges throughout the updates. 
The algorithm takes  $\widehat O(n+m)$ time to construct the data structure,
and $n^{o(1)}$ update time for each update. 

\end{lemma}

Lemma~\ref{lem:data_structure_pre_new} is discussed in detail in Appendix~\ref{sec:data_structure}.\\


Now we are ready to give our one-level sparsifier algorithm. 
Our algorithm first updates $\phi$-expander decomposition $\calP$ to a refined partition 
$\calP'$  of $G$ satisfying the following three properties:
\begin{enumerate}
\item Any vertex involved in the multigraph update sequence $\updateseq$ is a singleton of $\calP'$ if the vertex is already in $G$.
\item $\calP'$ is a $\phi / \factor$-expander decomposition of $\simple(G)$.
\item The number of distinct new intercluster edges is no more than $O(|\updateseq|)$. 
\end{enumerate}
To achieve this goal, we 
collect all the incident edges to the vertices involved in $\updateseq$, and run the Expander Decomposition Update Algorithm with the collected edges. 
By Lemma~\ref{lem:expander_decompoisition_update}, 
the new partition is a $\phi / \factor$ expander decomposition of $\simple(G)$, and all the vertices involved in the update sequence are  singletons in the updated vertex partition. 


With the updated partition $\calP'$, we further update $\cc$ to $\cc'$ that is the union of $c$-cut containment sets for each connected component of $G[\calP']$ by  applying Lemma~\ref{lem:cut_containment_update} on every affected cluster of $\calP$, making use of the condition that $\calP'$ is a refinement of $\calP$. 
By Lemma~\ref{lem:cut_containment_update}, $\cc' \setminus \cc$ contains at most $|\updateseq|(10c)^{O(c)}$  distinct edges. By the properties of contracted graph, 
the total number of vertex and edge insertions and deletions that transform
 $\sparsifier(G, \calP,\cc, \gamma)$ to $\sparsifier(G, \calP',\cc', \gamma)$ is at most $O(|\updateseq|(10c)^{O(c)})$.  

At the end, we apply the update sequence $\updateseq$ to $G$ to obtain the updated graph $G'$. Since all the vertices  involved in the update sequence $\updateseq$ are singletons in $\calP'$,
to update the one-level sparsifier from $\sparsifier(G, \calP', \cc', \gamma)$ to $\sparsifier(G', \calP', \cc', \gamma)$,
we only need to apply $\updateseq$ to  $\sparsifier(G, \calP', \cc', \gamma)$.
 So the length of update sequence for the sparsifier is $|\updateseq| (10c)^{O(c)}$.

 \begin{framed}
\begin{flushleft}
\textbf{One-Level Sparsifier Update Algorithm}

    \vspace{.2cm}\textbf{Input:} Graph $G$, $G[\calP]$, vertex partition $\calP$, edge set $\cc$, spanning forest $F$ of $G[\calP] \setminus \cc$, update sequence $\updateseq$, and conductance parameter $\phi$. 
    

    \textbf{Output:} $\calP'$ updated from $\calP$, and $G[\calP']$ updated from $G[\calP]$, $\cc'$ updated from $\cc$, $F'$ updated from $F$, and $\newupdateseq$. 
    
\begin{enumerate}
\item Let $S$ be all the vertices involved in $\updateseq$. Run the Expander Decomposition Update Algorithm with respect to $G, \calP$ and $S$ to update $\calP$ to $\calP'$. 
\item 
For each newly added cluster $P \in \calP'$, 
let $P_0$ be the cluster in $\calP$ that contains $P$, and 
run Cut Containment Update Algorithm with
$G[P]$ (as $G$), $G[P_0]$ (as $G_0$),
$\endpoints(\partial_G(P_0)) \cap P_0$ as $T_0$, and $\endpoints(\partial_G(P)) \setminus \endpoints(\partial_G(P_0))$ as $S$. \item 
Add all the edges returned by Cut Containment Update Algorithm to $\cc$, and remove all the edges not in $G[\calP']$ from $\cc$. Denote the result edge set as $\cc'$. 
\item Update $F$ to $F'$ that is the spanning forest of $G[\calP']\setminus \cc'$ with $\endpoints(\partial_G(\calP')\cup \cc')$ as terminals, and get $\newupdateseq$ that updates contraction of spanning forest of $G[\calP]\setminus \cc$ w.r.t $\endpoints(\partial_G(\calP)\cup \cc)$ to that of $G[\calP']\setminus \cc'$ w.r.t. $\endpoints(\partial_G(\calP')\cup \cc')$ (with edge multiplicity to be $\gamma$ for each edge insertion in $\updateseq$).
\item Append edge insertions of  all the edges $(\partial_G(\calP') \setminus \partial_G(\calP')) \cup (\cc' \setminus \cc)$ to the end of $\newupdateseq$.
\item Apply $\updateseq$ to $G$, and add/delete singletons in $\calP'$ for vertex insertions/deletions. 
Append $\updateseq$ to the end of $\newupdateseq$ and return $\newupdateseq$
\end{enumerate}
\end{flushleft}
\end{framed}
\begin{proof}[Proof of Lemma~\ref{lem:one-level-update-intro}]

By Lemma~\ref{lem:expander_decompoisition_update}, after Step (1), $\calP'$ is a refined partition of $\calP$ such that every vertex involved in $\updateseq$ (if exists in $G$) is a singleton in $\calP'$ such that $\partial_G(\calP') \setminus \partial_G(\calP)$ contains at most $O(|\updateseq|\Delta)$ distinct edges.

By Lemma~\ref{lem:cut_containment_update}, after Step (3), $\cc'$ is the union of $c$-cut containment sets for each $P \in \calP'$ such that $\cc' \setminus \cc$ contains at most $O(|\updateseq|(10c)^{O(c)})$ distinct edges. Hence, by Lemma~\ref{lem:data_structure_pre_new}, the $
\newupdateseq$ obtained in Step (4) 
is of length $O(|\updateseq|(10c)^{O(c)})$.

By the definition of one level sparsifier, the $\newupdateseq$ obtained after Step (5) is of length \\$O(|\updateseq|(10c)^{O(c)})$, and updates $\sparsifier(G, \calP, \cc, \gamma)$ to $\sparsifier(G, \calP', \cc', \gamma)$.

Since each vertex involved in $\updateseq$ is an isolated singletons of $\calP'$ if the vertex is in $G$,
$\updateseq$ can also be applied to the sparsifier. 
Hence, applying $\updateseq$ to $\sparsifier(G, \calP', \cc', \gamma)$ updates $\sparsifier(G, \calP', \cc', \gamma)$ to $\sparsifier(G', \calP', \cc', \gamma)$, where $G'$ is the result of applying $\updateseq$ on $G$.

The running time is obtained by Lemma~\ref{lem:cut_containment_update}, Lemma~\ref{lem:expander_decompoisition_update}, and Lemma~\ref{lem:data_structure_pre_new}.
\end{proof}

\section{Multi-Level Connectivity Sparsifier Update algorithm}\label{sec:multilevel_update}

In this section, we present our algorithm to construct and update multi-level sparsifier. Our goal is to maintain a multi-level sparsifier so that for some $\parametertimesub = O(\log \log \log n)$,
after applying $\parametertimesub$ times of the multi-level sparsifier update algorithm to the multi-level sparsifier with respect to $\parametertimes$ update sequences $\updateseq_1, \updateseq_2, \dots, \updateseq_\parametertimes$, the resulting sparsifier is a multi-level $c$-edge connectivity sparsifier.

To achieve this goal, we carefully set the parameters so that after initialization, the multi-level sparsifier is a $(\phi_0, \phi_0 \factor, \gamma)$-sparsifier such that the cut containment set at each level is a $c_0$ cut containment set for some $\phi_0 = n^{o(1)}$, $c_0 > c$, and $\gamma > c_0$. 
After applying $i$-th update algorithm, the sparsifier is a \[(\phi_0 / 2^{O(i \log^{1/3} n \log^{2/3} \log n)}, \phi_0 2^{O(i \log^{1/3} n \log^{2/3} \log n)}, \gamma)\]
sparsifier such that the cut containment set at each level is a $c_i$ cut containment set such that $c_{i-1} = c_i^2 + 2c_i$. 
We make sure that $c_\parametertimes = c$. 
Using the fact that $c'$-cut containment set is also a $c$-cut containment set for any $c' > c$, 
after applying update algorithm on $\updateseq_1, \updateseq_2, \dots, \updateseq_\parametertimes$, 
the multi-level sparsifier is a $c$-edge connecticity sparsifier of the graph after applying the $\parametertimes$ update sequences to the graph. 

\subsection{Parameter Setting}

We set the parameters as Figure~\ref{fig:multi-level} for a 
given $c = (\log m)^{o(1)}$.
We say a $c$-edge connectivity multi-level sparsifier $\{G^{(i)}, \calP^{(i)}, \cc^{(i)}\}_{i=0}^\ell$ is with respect to parameters $t_1, t_2, \dots, t_c, q_c$ if each $\cc^{(i)}$ is a union of $(\endpoints(\partial_{G^{(i)}}(P)) \cap P, c)$-cut containment set for $G^{(i)}[P]$ for each $P \in \calP^{(i)}$ such that the cut containment sets correspond to $\IA$ sets recursively constructed with respect to parameters $t_1, q_1, \dots, t_c, q_c$.
We have the following properties of multi-level sparsifier.
\begin{enumerate}
    \item After initialization, the multi-level sparsifier is a $(\phi_0, \eta_0, \gamma)$ multi-level sparsifier for $c_0$-edge connectivity of the input graph with respect to parameters $t_{0, 1}, q_{0, 1}, \dots, t_{0, c_0}, q_{0, c_0}$.
    \item After applying update algorithm with respect to update sequences \[\updateseq_1, \updateseq_2, \dots, \updateseq_i,\] the multi-level sparsifier is a $(\phi_i, \eta_i, \gamma)$ multi-level sparsifier for $c_i$-edge connectivity of the input graph with respect to parameters $t_{i, 1}, q_{i, 1}, \dots, t_{i, c_i}, q_{i, c_i}$.
\end{enumerate}

We prove the following properties about parameter setting.
\begin{figure}[ht]
\begin{framed}
\begin{align*}
& \phi \eqdef \frac{1 }{2^{\log^{3/4} n}}, 
\parametertimesub \eqdef \left \lfloor \log \left( \frac{\log \log n / 10}{\log (4c)} \right)\right \rfloor  - 1\\
& \phi_0 \eqdef \phi, \phi_i \eqdef \frac{\phi_{i-1} }{ \truefactor} \text{ for } 1 \leq i \leq \parametertimesub + 1, \text{where } \delta \text{ is the constant of Theorem~\ref{thm:expander_decomposition}}
\\
& c_{\parametertimesub+1} \eqdef c \text{ and } c_i \eqdef c_{i+1} (c_{i+1} + 2)  \text{ for } 0 \leq i \leq \parametertimesub \\
& \gamma \eqdef c_0 + 1\\
& \eta_i \eqdef 4 \phi (\truefactor)^{i + 1} (10c_0)^{3c_0 } \text{ for } 0 \leq i \leq \parametertimesub+ 1 \\
& t \eqdef \left\lceil \frac{c_0 \gamma}{ \phi_{\parametertimesub + 1} }\right\rceil \\
& t_{0, 1} \eqdef t, t_{0, j} \eqdef t_{0, j - 1}\cdot (c_0 + 2)^{2(\parametertimesub + 3)} \text{ for } 1 \leq j \leq c_0, \text{ and } q_{0, j} \eqdef t_{0, j} \cdot (c_0+2)^{\parametertimesub + 3} \text{ for } 0 \leq j \leq c_0\\
& t_{i, j} \eqdef t_{i-1, w_{i, j- 1} + 1}, q_{i, j} \eqdef q_{i - 1, w_{i, j} }\cdot (c_0 + 2) \text{ for } 0 \leq i \leq \parametertimesub+1, 0 \leq j \leq c_i, \text{ where } w_{i, 0}\eqdef 0, \\ & w_{i, j} \eqdef w_{i, j - 1} + 2(c_i - j ) + 3
\end{align*}
\end{framed}
\caption{Parameter setting}\label{fig:multi-level}
\end{figure}

\begin{claim}\label{claim:parameter_property}
Given $c = (\log n)^{o(1)}$, and parameters defined as Figure~\ref{fig:multi-level}, the following conditions hold:
\begin{enumerate}
    \item $\parametertimesub = \omega(1)$ and $\parametertimesub = O(\log \log \log n)$.
    \item $c_i < (4c)^{2^{\parametertimesub+1}} = \log^{1/10} n$ for any $0 \leq i \leq \parametertimesub + 1$.
    \item $\phi^{1.5} \leq \phi_i \leq \phi$ for any $0 \leq i \leq \parametertimesub + 1$.
    \item $\phi \leq \eta_i \leq \sqrt{\phi}$ for any $0 \leq i \leq \parametertimesub + 1$.
    \item $t_{i, j} (c_0 + 2)\leq q_{i, j}$ and $q_{i, j}(c_0 + 2)\leq t_{i, j + 1}$ for any $i, j$.
    \item $t, t_{i,j}^{O(c_{i}^2)}, q_{i, j}^{O(c_{i}^2)} = n^{o(1)}$ for any $i, j$
\end{enumerate}
\end{claim}
\begin{proof}
For the first condition, since $c = (\log n)^{o(1)}$, we have 
\[\log \left(\frac{\log \log n / 10}{\log (4c)}\right) =  \log \left(\frac{\log \log n / 10}{o(1) \log \log n + 2}\right) = \log (\omega(1)) = \omega(1), \]
and thus $\parametertimesub = \omega(1)$. On the other hand, 
\[\parametertimesub = O\left(\log \left(\frac{\log \log n / 10}{\log (4c)}\right)\right) = O(\log (\log \log n / 10)) = O(\log \log \log n).\]

For the second condition, we have 
$c_i = c_{i+1}(c_{i+1} + 2)<  4c_{i+1}^2$ for any $i \leq \parametertimesub$. By induction, \[c_i \leq 4^{2^{\parametertimesub + 1 - i} - 1}c^{2^{\parametertimesub + 1 - i}} < (4c)^{2^{\parametertimesub + 1}} \leq (4c)^{\left(\frac{\log \log n / 10}{\log (4c)}\right)} = 2^{\log \log n / 10}= \log^{1/10} n.\]

The third and forth condition hold by \[\left(\truefactor\right)^{\parametertimesub + 1} = 2^{O(\log^{1/3} n\cdot  \log^{2/3 }\log n\cdot \log \log \log n)} = (1/\phi)^{o(1)},\]
and 
$(10c_0)^{2c_0} < (10\log^{1/10} n)^{2(\log^{1/10} n)} = (1/\phi)^{o(1)}$.

The fifth condition holds for $i=0$ by definition.
For $i > 0$, 
$t_{i, j} (c_0 + 2) \leq q_{i, j}$ is obtained by the definition of $t_{i, j}, q_{i,j}$ and induction.
$q_{i, j} (c_0 + 2) \leq t_{i, j + 1}$ is proved by the induction of $t_{i, j + 1} = q_{i, j} (c_0 + 2)^{\parametertimesub + 3 - i}$, and the fact that $i \leq \parametertimesub + 1$.

For the six condition, we have 
\[t \leq \frac{c_0}{\phi^{1.5}} + 1 = O\left(\frac{\log^{1/10} n}{\left(1 / 2^{\log^{3/4}m}\right)^{1.5}}\right) \leq 2^{1.6 \log^{3/4} n} = n^{o(1)}. \]
By the definition of $t_{i,j}$ and $q_{i, j}$, we have 
\[\begin{split}
t_{i, j}, q_{i, j} \leq & 2^{1.6 \log^{3/4} n} \cdot (c_0 + 2)^{2(\parametertimesub + 3)\cdot c_0} \cdot (c_0 + 2)^{2(\parametertimesub + 3)} \\
= & 2^{1.6 \log^{3/4} n} \cdot (\log^{1/10}m)^{O(\log \log \log n \cdot \log^{1/10} n)} \\
< & 2^{1.7 \log^{3/4} n},
\end{split}
\]
and thus 
\[
t_{i, j}^{O(c_i^2)}, q_{i, j}^{O(c_i^2)} = 2^{1.7 \log^{3/4} n \cdot O(\log^{1/5} n)} = 2^{O(\log ^{0.95} n)} = n^{o(1)}. 
\]
\end{proof}

\subsection{Initialization and Update Algorithm}
We give our algorithm to initialize a multi-level sparsifier for a give graph. 

 \begin{framed}
\begin{flushleft}
\textbf{Multi-Level Sparsifier Initialization Algorithm}

    \vspace{.2cm}\textbf{Input:} Graph $G$, $\phi_0$, $c_0$, $\gamma$, and parameters $t_1, q_1, \dots, t_{c_0}, q_{c_0}$. 
    
    \textbf{Output:} $\{G^{(i)}, \calP^{(i)}, \cc^{(i)}\}_{i=0}^\ell$ and additional data structures. 
    
\begin{enumerate}
\item $\ell \gets 0$, $G^{(0)} \gets G$.

\item Repeat until the expander decomposition of $G^{(\ell)}$ only contains one cluster:
\begin{enumerate}
    \item Run expander decomposition on $\simple(G^{(\ell)})$ with conductance parameter $\phi$ and denote the result as $\calP^{(\ell)}$.
    \item Construct $c_0$-cut containment set on each connected component of $G^{(\ell)}[\calP]$ with parameters $t_1, q_1, \dots, t_{c_0}, q_{c_0}$, and denote the union of resulted edges as $\cc^{(\ell)}$.
    \item $\ell \gets \ell + 1, G^{(\ell)}\gets \sparsifier(G^{(\ell - 1)}, \calP^{(\ell - 1)}, \cc^{(\ell - 1)}, \gamma)$.
\end{enumerate}
\item Return $\{G^{(i)}, \calP^{(i)}, \cc^{(i)}\}_{i = 0}^\ell$
\end{enumerate}
\end{flushleft}
\end{framed}
By By Theorem~\ref{thm:expander_decomposition} and Corollary~\ref{cor:initial_cut_partition_basic},  we have the following lemma to initialize the data structure for a multi-level $c_0$-edge connectivity sparsifier data structure for a given multigraph $G$
with respect to parameters $\phi_0, \gamma, t, t_{0, 1}, q_{0, 1}, \dots,$ $ t_{0, c_0}, q_{0, c_0}$.


\begin{lemma}\label{lem:preprocess_dynamic}
Given a graph $G$ of $n$ vertices such that every vertex has at most $\Delta$ distinct neighbors for $\Delta = O(1)$, 
a positive integer $c = \log^{o(1)} n$, 
there is an algorithm 
algorithm to output a $(\phi_0, \eta_0, \gamma)$ multi-level $c_0$-edge connectivity sparsifier
$\{G^{(i)}, \calP^{(i)}, \cc^{(i)}\}_{i=0}^\ell$
with $\ell =O( \log_{1/\eta_0} m ) $
 for graph $G$  with respect to parameters $\phi_0, \gamma, t, t_{0, 1}, q_{0, 1},\dots, t_{0, c_0}, q_{0, c_0}$ defined as Figure~\ref{fig:multi-level}
 in   
 $\widehat O(n)$ time.
%
\end{lemma}

 \begin{framed}
\begin{flushleft}
\textbf{Multi-Level Sparsifier Update Algorithm}

    \vspace{.2cm}\textbf{Input:} $\{G^{(i)}, \calP^{(i)}, \cc^{(i)}\}_{i=0}^\ell$ and $\updateseq$. 
    
    \textbf{Output:} Updated $\{G^{(i)}, \calP^{(i)}, \cc^{(i)}\}_{i=0}^{\ell'}$. 
    
\begin{enumerate}
\item $i \gets 0$, $\updateseq^{(0)}\gets \updateseq$.

\item Repeat until $\updateseq^{(i)}$ contains more than $m \phi^{i+1}$ updates
\begin{enumerate}
    \item Run One-Level Sparsifier Update Algorithm on $G^{(i)}, P^{(i)}, \cc^{(i)}$ with update sequence $\updateseq^{(i)}$ and get update sequence $\updateseq^{(i + 1)}$.
    \item $i\gets i + 1$.
\end{enumerate}
\item Run Multi-Level Sparsifier Initialization Algorithm 
to get $\{G^{(j)}, \calP^{(j)}, \cc^{(j)}\}_{j = i+1}^{\ell'}$
\item Return $\{G^{(i)}, \calP^{(i)}, \cc^{(i)}\}_{i=0}^{\ell'}$.
\end{enumerate}
\end{flushleft}
\end{framed}

\begin{lemma}\label{lem:update_online_batch_sparsifier}
Let $G$ be a graph undergoing multigraph update operations (vertex and edge insertions and deletions) 
such that 
throughout the updates, every vertex of the graph has at most $\Delta = O(1)$ distinct neighbors.
Suppose $G$ is initially connected, and has $n$ vertices and $m$ edges. 
Given $c = (\log n)^{o(1)}$, let parameters be set as Figure~\ref{fig:multi-level}.
Assume the multigraph updates are partitioned into $\parametertimesub$ multigraph update sequences $\updateseq_1, \updateseq_2, \dots,$ $\updateseq_{\parametertimesub}$ 
such that $|\updateseq_j| \leq \frac{\phi m}{\Delta \log n}$ for every $1 \leq j \leq \parametertimesub$. 

Given a $(\phi_0, \eta_0, \gamma)$ multi-level sparsifier for $c_0$-edge connectivity, 
there is an update algorithm for the sparsifier such that 
running the algorithm with  update sequences  \[\updateseq_1, \updateseq_2, \dots, \updateseq_{\parametertimesub}\] sequentially, then the following conditions hold

\begin{enumerate}
\item After $j$-th execution of the update algorithm, 
the sparsifier is a $(\phi_j, \eta_j, \gamma)$ multi-level $c_j$-edge connectivity sparsifier with respect to parameters $\phi_j, \gamma, t, t_{j, 1},$ $q_{j, 1}, \dots, t_{j, c_j}, q_{j, c_j}$.
\item The running time of the update algorithm on update sequence $\updateseq_i$ is  $O(|\updateseq_i|\cdot n^{o(1)})$ for each updated sequence. 

\end{enumerate}
\end{lemma}

\begin{proof}
Let $\{G^{(i)}_0, \calP^{(i)}_0, \cc^{(i)}_0\}_{i=0}^{\ell_0}$ be the initial multi-level sparsifier for $c_0$-edge connectivity, and 
\\$\{G^{(i)}_j, \calP^{(i)}_j, \cc^{(i)}_j\}_{i=0}^{\ell_j}$
be the initial multi-level sparsifier after applying update algorithm with update sequences $\updateseq_1, \dots, \updateseq_j$. 
Let $\updateseq_{j}^{(i)}$ be the multigraph update sequence that is used to update $G^{(i)}_{j-1}, \calP^{(i)}_{j-1}$ and $ \cc^{(i)}_{j-1}$
to  $G^{(i)}_{j}, \calP^{(i)}_{j}$, and  $\cc^{(i)}_{j}$ respectively
(if exists).
Let $m_j$ denote the number of edges of graph $\simple(G^{(0)}_j)$. 
Since the update sequences have a total of at most $\phi m / \Delta \log n$ updates, 
$m_j \geq m / 2$ for all the $0 \leq j \leq \parametertimes$.

By the definition of the update algorithm, Lemma~\ref{lem:one-level-update-intro}, and induction, 
the following conditions hold: for any $1 \leq j \leq \parametertimesub$ and $0 \leq i \leq \ell_j$,

\begin{enumerate}
\item $G^{(0)}_j$ is the graph resulting from applying $\updateseq_1, \dots \updateseq_j$ sequentially to $G$. 
$G^{(i)}_j$ is $\sparsifier(G^{(i-1)}_j, \calP^{(i-1)}_j, \cc^{(i-1)}_j,  \gamma)$ for any $1 \leq i \leq \ell_j$.
\item $\calP_j^{(i)}$ 
is a $\phi_j$-expander decomposition of $\simple(G_j^{(i)})$. 
\item $\cc_j^{(i)}$ 
is a $c_j$-cut containment set of $G_j^{(i)}[\calP_j^{(i)}]$. 
\item $\sparsifier(G^{(\ell_j)}_j, \calP^{(\ell_j)}_j, \cc^{(\ell_j)}_j,  \gamma)$ is an empty graph. 
 \end{enumerate}
 

Since $\{G^{(i)}_0, \calP^{(i)}_0, \cc^{(i)}_0\}_{i=0}^{\ell_0}$ is a $(\phi_0, \eta_0, \gamma)$ sparsifier, for $j=0$ and $0 \leq i \leq \ell_0$, $\simple(G_0^{(i)})$ has at most $m_0\eta_0^i$ edges.  Consider the case of $j > 0$.  
For $i > 0$, assume $G_{j}^{(i-1)}$ has at most $m_j\eta_{j}^{i - 1}$ distinct edges, and $G_{j-1}^{(i)}$ has at most
$m_{j-1}\eta_{j - 1}^{i}$ distinct edges.
If $G_{j}^{(i)}$  is obtained by running
Multi-Level Sparsifier Initialization Algorithm, then $G_{j}^{(i)}$ has at most 
$m_j\eta_j^{i}$
distinct edges by Lemma~\ref{lem:preprocess_dynamic} the parameter setting. 
If $G_{j}^{(i)}$  is obtained by running 
One-Level Sparsifier Update Algorithm on $G_{j - 1}^{(i)}$, then $G_{j}^{(i)}$ has at most 
\[m_j\eta_{j - 1}^i + m \phi^{i + 1}  < m_j\eta_{j}^i\]
distinct edges. 
By induction, $\{G^{(i)}_j, \calP^{(i)}_j, \cc^{(i)}_j\}_{i=0}^{\ell_j}$ is a $(\phi_j, \eta_j, \gamma)$ multi-level $c_j$-edge connectivity sparsifier with respect to parameters $\phi_j, \gamma, t, t_{j, 1}, q_{j,1},\dots, t_{j, c_j}, q_{j, c_j}$ for every $0 \leq j \leq \parametertimesub$. 
By Definition~\ref{def:multi_sparifier_intro}, $\ell_j = O(\log_{1/\eta_j} m_j)$.

Now we bound the running time. 
By Claim~\ref{claim:parameter_property}, Lemma~\ref{lem:decomposition_phase_one} and induction, 
if $\updateseq^{(i)}_j$ is computed, 
then
there exists some constant $c' > 1$ such that 
 \begin{equation}\label{equ:multi_level_proof_one}\begin{split} |\updateseq^{(i)}_j| = & O( |\updateseq_j| (c' \cdot \Delta \cdot (10c_0)^{3c_0})^i ) \\
 = & |\updateseq| (10c_0)^{O(c_0\log^{1/4} n)} \\
 = & O(|\updateseq_j| / \phi_\parametertimesub).\end{split}\end{equation}

Let $\alpha_j$ be the integer such that for $i\leq \alpha_j$, $G_{j}^{(i)}$ is obtained by One-Level Sparsifier Update Algorithm,
and for $j > \alpha_j$,  $G_{j}^{(i)}$ is obtained by Multi-Level Sparsifier Initialization Algorithm.
By Lemma~\ref{lem:decomposition_phase_one} and Equation~(\ref{equ:multi_level_proof_one}), for $i\leq \alpha_j$, the running time to obtain $G_{j}^{(i)}, \calP_{j}^{(i)}$ and   $\cc_{j}^{(i)}$ is
\begin{equation}\label{equ:multi_level_main_one}\begin{split} & O(|\updateseq_j^{(i)}|  \Delta (20c_jq_{j, c_j})^{c_j^2 + 5c_j+3} \poly(c_j)\polylog(n)) +  \widehat O(|\updateseq_j^{(i)}| \Delta  \log n/ \phi_j^3)\\
= & \widehat O(|\updateseq_j^{(i)}| (q_{j, c_j} / \phi_\parametertimesub)^{c_j^2 + 5c_j + 3}) \\
= & \widehat O(|\updateseq_j| (q_{j, c_j} / \phi_\parametertimesub)^{c_j^2 + 5c_j + 4}).
\end{split}\end{equation}


For $i > \alpha_j$,
since $|\updateseq^{(\alpha_j + 1)}_j|  > m \phi^{\alpha_j + 2}$ according to the algorithm, by Equation~(\ref{equ:multi_level_proof_one}),
we have 
\[
     m \phi^{\alpha_j + 2} = O(|\updateseq_j| / \phi_\parametertimesub),
     \]
and thus $n = O(|\updateseq_j|  / \phi_\parametertimesub^{\alpha_j + 3})$.
On the other hand, $G_{j}^{(\alpha_j)}$ contains  at most 
\begin{equation}\label{equ:seq_length_bound}
\begin{split}
m_j \eta_j^{\alpha_j}  = & O(|\updateseq_j|  \eta_j^{\alpha_j}  / \phi_\parametertimesub^{\alpha_j + 3}) \\
= & O(|\updateseq_j| (\truefactor (10c_0)^{3c_0})^{O(\log^{1/4} n)}/ \phi_\parametertimesub^{3})
\end{split}
\end{equation}
distinct edges. 

Hence,  by Theorem~\ref{thm:expander_decomposition}, Corollary~\ref{cor:initial_cut_partition_basic}, and Equation~(\ref{equ:seq_length_bound}),  the running time to obtain obtain $G_{j}^{(i)}, \calP_{j}^{(i)}$ and   $\cc_{j}^{(i)}$  for all the $i \geq \alpha_j + 1$ is 
\begin{equation}\label{equ:multi_level_main_two}\begin{split}
    & 
\widehat O(m_j\eta_j^{\alpha_j} / \phi_j^2) + O(m_j\eta_j^{\alpha_j} \phi_j \truefactor  (20c_jq_{j, c_j})^{c_j^2 + 5c_j+3} \poly(c)\polylog(n)) \\
= & \widehat O(m \eta_j^{\alpha_j} (20c_jq_{j, c_j} / \phi_j)^{c_j^2 + 5c_j + 3}) \\
= & \widehat O(|\updateseq_j| (\truefactor (10c_0)^{3c_0})^{O(\log^{1/4}n)}
(20c_0 q_{j, c_j} / \phi_\parametertimesub)^{c_0^2 + 5c_0 + 3}). 
\end{split}\end{equation}
Combining Equation~(\ref{equ:multi_level_main_one}) and Equation~(\ref{equ:multi_level_main_two}), the running time of the algorithm is $O(|\updateseq_j|n^{o(1)})$ for $j$-th execution by Claim~\ref{claim:parameter_property}.
%
%
%
%
%
%
\end{proof}

\section{Fully Dynamic Algorithms For \texorpdfstring{$s$}{s}-\texorpdfstring{$t$}{t} \texorpdfstring{$c$}{c}-edge Connectivity }\label{sec:fully}

In this section, we present our fully dynamic algorithm for $c$-edge connectivity and prove Theorem~\ref{thm:2-edge-connectivity}.
Our algorithm makes use of a general framework by~\cite{nanongkai2017dynamic} to turn offline multi-level sparsifier update algorithm to fully dynamic algorithm.
This general framework is given in Section~\ref{sec:online_batch}. 
Section~\ref{sec:sub_fully_dynamic} gives our fully dynamic update algorithm and query algorithm. 

\subsection{An Online-Batch Dynamic to Fully Dynamic Framework}\label{sec:online_batch}

The online-batch dynamic graph setting, motivated by parallelization,  has been studied in  parallel dynamic algorithms~\cite{acar2011parallelism, simsiri2016work, acar2017brief, acar2019parallel, tseng2019batch, dhulipala2019parallel}.
We consider the online-batch setting for sequential dynamic graph problems.

The setting has two parameters: batch number $\parametertimesub$ and sensitivity $\parameterlength$.
Let $\mathfrak{D}$  be a data structure we would like to maintain for an input graph undergoing updates. 
The dynamic algorithm for data structure $\mathfrak{D}$ in  online-batch setting 
has  $\parametertimesub + 1$ phases: a preprocessing phase and $\parametertimesub$ update phases.

In the preprocessing phase, 
the preprocessing algorithm initializes an instance of data structure $\mathfrak{D}$ for the given  input graph.
In the $i$-th update phase  for $1 \leq i \leq \parametertimesub$, 
the update algorithm is given
the $i$-th update batch, which is a sequence of at most  $\parameterlength$ updates.
The goal of the update algorithm is to maintain the data structure $\mathfrak{D}$ according to the $i$-th update batch  such that 
at the end of this phase, $\mathfrak{D}$ is updated for the resulted graph after applying the  first $i$ update batches sequentially to  graph $G$.
We say an update algorithm in the online-batch setting has amortized update time $t$
if for each update batch of size at most $\parameterlength$,  the running time of the update algorithm is upper bounded by $t$ times the batch size. 
The goal of the online-batch setting   is to minimize the amortized update time. 

The online-batch setting  is a generalization of emergency planning (a.k.a sensitivity setting and  fault-tolerant). The difference is that the update algorithm in the online-batch setting needs to handle multiple update batches, but only one update batch in  emergency planning.


The lemma below, which was implicitly implied in~\cite{nanongkai2017dynamic} without being formally stated, shows that online-batch dynamic algorithm with bounded amortized update time can be turned into a dynamic algorithm with worst case update time in a blackbox way. Therefore, we only need to design an online-batch algorithm with good amortized complexity.


\begin{lemma}[Section 5, \cite{nanongkai2017dynamic}]\label{lem:fully_framework}
Let $G$ be a graph undergoing batch updates. 
Assume for two parameters $\parametertimesub$ and $ \parameterlength$,
there is a preprocessing algorithm with preprocessing time $\rti$
and 
 an update algorithm with amortized update time $\rtu$
for a data structure $\mathfrak{D}$
in the online-batch setting with batch number $\parametertimesub$  and  sensitivity $\parameterlength$,
where $\rti$ and $\rtu$ are functions that map the upper bounds of some graph measures throughout the update, e.g. maximum number of edges, to non-negative numbers.

Then for any $\parametertimes \leq \parametertimesub$ satisfying $\parameterlength \geq 2\cdot 6^\parametertimes$, 
there is a fully dynamic algorithm  with preprocessing time $O(2^\parametertimes \cdot \rti)$
and worst case update time $O\left(4^\parametertimes \cdot \left( \rti  /w  + w^{(1/\parametertimes)}\rtu\right)\right)$
to maintain a set of $O(2^\parametertimes)$ instances of the data structure $\mathfrak{D}$
such that after each update, 
the update algorithm specifies one of the maintained data structure instances satisfying the following conditions
\begin{enumerate}
\item The specified data structure  instance is for 
the up-to-date graph.
\item The online-batch update algorithm is executed for at most $\parametertimes$ times on the specified data structure instance with each update batch of size at most $\parameterlength$.
\end{enumerate}
\end{lemma}

For completeness, we give a Proof of Lemma~\ref{lem:fully_framework}
 in  Appendix~\ref{sec:online_batch_general}.\\

By Lemma~\ref{lem:preprocess_dynamic} and Lemma~\ref{lem:update_online_batch_sparsifier}, 
we maintain a multi-level $c$-edge connectivity sparsifier with preprocessing time $\widehat O(m)$ and amortized update time $n^{o(1)}$ for batch number $O(\log \log \log n)$ and batch size $O(m \phi_0 / \log n)$ for graphs with bounded number of distinct neighbors.

\subsection{Fully Dynamic Algorithm for \texorpdfstring{$c$}{c}-edge Connectivity}\label{sec:sub_fully_dynamic}
Let $\overline{G} = (\overline{V}, \overline{E})$ be the original dynamic simple graph with arbitrary degree. 
We use the degree reduction technique~\cite{harary6graph} to transform $\overline{G}$ to a multigraph $G  = (V, E)$ 
such that every vertex has at most constant number of distinct neighbors
as follows:
\begin{enumerate}
\item The vertex set $V$ of $G$ is \[V = \{v_{u, w} : (u, w) \in \overline{E}\} \cup \{v_{u, u} : u \in \overline{V}\}.\] 
\item 
For any edge $(u, w) \in \overline{E}$, add edge $(v_{u, w}, v_{w, u})$ to $E$ with edge multiplicity 1.
\item For every vertex $u$ of $\overline{G}$ with degree at least $1$, 
let $w_{u, 0} = u$ and  $w_{u, 1}, w_{u, 2}, \dots, w_{u, \deg(u)}$ be the neighbors of $u$ in $\overline{G}$.
Add edge $(v_{u, w_{u, i}}, v_{u, w_{u, i+1}})$ to $E$ with edge multiplicity $c+1$ for every $0 \leq i < \deg(u)$.
\end{enumerate}

To maintain the correspondence between $\bar G$ and $G$, for every $u \in \bar V$, we maintain the list of $w_{u, 0}, \dots, $ $w_{u, \deg(u)}$.

It is easy to verify that for any $u, w \in \overline{V}$, the $c$-edge connectivity for $u$ and $w$ in $\overline{G}$ is the same as the $c$-edge connectivity of $v_{u, u}$ and $v_{w, w}$ in $G$.

\paragraph{Update algorithm}

To insert or delete an edge for $\overline{G}$, we can generate a multigraph update sequence for $G$ of $O(1)$ length to maintain the following invariant:
\begin{enumerate}
\item Vertex $v_{u, w}$ is in graph $G$ iff $u = w$ or $(u, w)$ is an edge of graph $\overline{G}$. 
\item For every edge $(u, w)$ of $\overline{G}$, there is an edge $(v_{u, w}, v_{w, u})$ with multiplicity 1 in $G$.
\item For any vertex $u$ of $\overline{G}$, 
the induced subgraph of $G$ on 
all the $v_{u, w}$ that are in $G$ forms a path with edge multiplicity $c+1$ for each edge on the path. 
\end{enumerate}

By Lemma~\ref{lem:fully_framework}, Lemma~\ref{lem:preprocess_dynamic} and Lemma~\ref{lem:update_online_batch_sparsifier}, 
we have the following corollary. 
\begin{corollary}\label{cor:fully_dynamic_update}
Let $\overline{G}$ be a simple graph of $n$ vertices undergoing edge insertions and deletions. 
For any $c = (\log n)^{o(1)}$,  
there is a fully dynamic algorithm with $m^{1+o(1)}$ preprocessing time and $n^{o(1)}$ worst case update time to maintain a set of $n^{o(1)}$ multi-level sparsifier data structures for the corresponding multigraph
such that after obtaining each update of $\overline{G}$,
the update algorithm outputs the access to one maintained multi-level sparsifier data structure that is  a $(\phi_\parametertimesub, \eta_\parametertimesub, \gamma)$ multi-level $c_\parametertimesub$-edge connectivity sparsifier data structure for the multigraph that corresponds to the up-to-date $\overline{G}$, where $c_\parametertimesub, \phi_\parametertimesub, \eta_\parametertimesub, \gamma$ are defined as Figure~\ref{fig:multi-level}. 
\end{corollary}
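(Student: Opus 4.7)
The plan is to combine the degree reduction with the online-batch-to-worst-case reduction of Lemma~\ref{lem:fully_framework}, using the multi-level sparsifier preprocessing of Lemma~\ref{lem:preprocess_dynamic} as $\dsinitialize$ and the online-batch update algorithm of Lemma~\ref{lem:update_online_batch_sparsifier} as $\dsupdate$. First I would fix the correspondence between $\overline{G}$ and the multigraph $G$ as described at the start of Section~\ref{sec:fully}, together with the invariants on $G$: for every vertex $u\in\overline{V}$ of positive degree, the vertices $v_{u,w_{u,0}},\dots,v_{u,w_{u,\deg(u)}}$ form a path of multiplicity-$(c+1)$ edges, and for every $(u,w)\in\overline{E}$ there is a multiplicity-$1$ edge $(v_{u,w},v_{w,u})$. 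Each edge insertion or deletion in $\overline{G}$ can be translated into an $O(1)$-length multigraph update sequence on $G$ that preserves the invariants (inserting/removing the two vertices $v_{u,w},v_{w,u}$, the corresponding multiplicity-$1$ edge, and patching the two affected paths with $O(1)$ multiplicity-$(c+1)$ edge insertions/deletions), so throughout the process $G$ has $O(n+\overline{m})$ vertices and distinct edges and constant maximum distinct-degree, matching the hypotheses of Lemma~\ref{lem:update_online_batch_sparsifier}.

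Next I would instantiate Lemma~\ref{lem:fully_framework} with the data structure $\mathfrak{D}$ being a multi-level $c_0$-edge-connectivity sparsifier data structure for $G$, with $\dsinitialize=\preprocessmulti$ and $\dsupdate=\updatemulti$. By Lemma~\ref{lem:preprocess_dynamic} the preprocessing time is $\rti=\widehat{O}(m)=m^{1+o(1)}$, and by Lemma~\ref{lem:update_online_batch_sparsifier} the online-batch setting has batch number $\parametertimesub=\Theta(\log\log\log m)$, sensitivity $\parameterlength=\Theta(\phi m/\log m)=m^{1-o(1)}$, and amortized update time $\rtu=m^{o(1)}$. I would then choose $\parametertimes=\parametertimesub$, which trivially satisfies $\parametertimes\leq\parametertimesub$ and also $\parameterlength\geq 2\cdot 6^{\parametertimes}$, since $6^{\parametertimes}=6^{O(\log\log\log m)}=(\log\log m)^{O(1)}$ is dwarfed by $m^{1-o(1)}$.

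With this choice, the worst-case update time supplied by Lemma~\ref{lem:fully_framework} is
\[
O\!\left(4^{\parametertimes}\!\left(\rti/\parameterlength+\parameterlength^{1/\parametertimes}\rtu\right)\right)
= O\!\left((\log\log m)^{O(1)}\cdot\left(m^{o(1)}+m^{(1-o(1))/\parametertimes}\cdot m^{o(1)}\right)\right).
\]
Since $\parametertimes=\Theta(\log\log\log m)\to\infty$, we have $m^{1/\parametertimes}=\exp(\log m/\log\log\log m)=m^{o(1)}$, and $4^{\parametertimes}=m^{o(1)}$ as well, so the worst-case update time is $n^{o(1)}$ (using $m=O(n^2)$). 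The number of maintained data-structure instances is $O(2^{\parametertimes})=m^{o(1)}=n^{o(1)}$. Finally, by the guarantee of Lemma~\ref{lem:fully_framework}, after each update the reduction points to one maintained instance which equals the data structure obtained by running $\dsupdate$ at most $\parametertimes$ times on update batches of size at most $\parameterlength$ starting from $\dsinitialize(G)$, and the second conclusion of Lemma~\ref{lem:update_online_batch_sparsifier} guarantees that the resulting instance is a $(\phi_{\parametertimesub},\eta_{\parametertimesub},\gamma)$ multi-level $c_{\parametertimesub}$-edge-connectivity sparsifier data structure for the current multigraph, as required.

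The main obstacle is merely bookkeeping: verifying that the parameter choice $\parametertimes=\parametertimesub$ genuinely makes every factor subpolynomial (in particular that $\parameterlength^{1/\parametertimes}=m^{o(1)}$ and $4^{\parametertimes}=m^{o(1)}$ simultaneously), and that each single-edge update on $\overline{G}$ only ever triggers $O(1)$ multigraph updates so that a batch of $\parameterlength$ $\overline{G}$-updates still corresponds to a multigraph update sequence of size at most the sensitivity used in Lemma~\ref{lem:update_online_batch_sparsifier}. Once these are checked, both conclusions of the corollary (preprocessing time $m^{1+o(1)}$ and worst-case update time $n^{o(1)}$) fall out directly from Lemma~\ref{lem:fully_framework}.
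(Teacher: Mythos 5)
Your proposal is correct and takes exactly the same route as the paper, which treats the corollary as an immediate consequence of combining the degree-reduction correspondence, Lemma~\ref{lem:preprocess_dynamic} as $\dsinitialize$, Lemma~\ref{lem:update_online_batch_sparsifier} as $\dsupdate$, and the reduction of Lemma~\ref{lem:fully_framework} with $\parametertimes=\parametertimesub$. One small slip: the data-structure guarantee you invoke is the \emph{first} conclusion of Lemma~\ref{lem:update_online_batch_sparsifier}, not the second (which is the amortized running-time bound); also $\parametertimesub$ is only guaranteed to be $\omega(1)$ and $O(\log\log\log m)$, not $\Theta(\log\log\log m)$, but your bounds hold under the weaker $\omega(1)$ anyway.
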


\paragraph{Query algorithm}
We use the following query algorithm to answer $c$-edge connectivity queries.

\begin{framed}
\noindent \textbf{Algorithm} \textsc{Fully-Dynamic-Query}

\noindent \textbf{Input:} Two vertices $u, w$ of $\overline G$

\noindent \textbf{Output:} $\mathsf{true}$ or $\mathsf{false}$ 
\begin{enumerate}
    \item Take the multi-level $c$-edge connectivity sparsifier $\{G^{(i)}, \calP^{(i)}, \cc^{(i)}\}_{i=1}^\ell$ such that $G^{(0)}$ is the multigraph corresponding to the up-to-date $\overline G$.
    \item  Let  $\updateseq^{(0)}$ be the following multigraph update sequence \[(\insertt(v'), \insertt(v''), \insertt(v_{u, u}, v', c+1), \insertt(v_{w, w}, v'', c+1)),\]where $v', v''$ are two new vertices of $G$.
    \item For $i=0$ to $\ell$, 
run one-level sparsifier algorithm on $G^{(i)}, \calP^{(i)}, \cc^{(i)}$ with update sequence $\updateseq^{(i)}$, and parameters $\phi_{\parametertimesub}, c_{\parametertimesub + 1}, t, \gamma, t_{\parametertimesub, 1}, q_{\parametertimesub, 1}, \dots, t_{\parametertimesub, c_{\parametertimesub}}, q_{\parametertimesub, c_{\parametertimesub}}$ defined as Figure~\ref{fig:multi-level} with respect to $c$, and denote the resulted sequence by $\updateseq^{(i + 1)}$. 
\item Use $\updateseq^{(\ell + 1)}$ to construct a  multigraph $H$ containing $v_{u, u}$ and $v_{w, w}$.
\item Run standard offline $c$-edge connectivity algorithm for $v_{u, u}$ and $v_{w, w}$ on graph $H$, and return the result (reverse the change made on $\multileveldatastructure$ at line 3 before return).
\end{enumerate}
\end{framed}

\begin{lemma}\label{lem:query_dynamic}
Let $c$ be a positive integer such that $c = (\log n)^{o(1)}$. 
Given access to $(\phi_\parametertimesub, \eta_\parametertimesub, \gamma)$  multi-level $c_\parametertimesub$-edge connectivity sparsifier $\{G^{(i)}, \calP^{(i)}, \cc^{(i)}\}_{i=1}^\ell$ for multigraph $G$ corresponding to $\overline{G}$, where $\phi_\parametertimesub, \eta_\parametertimesub, c_\parametertimesub, \gamma$ are defined as Figure~\ref{fig:multi-level},
for a query $\query(u, w)$ for any two vertices $u, w \in \overline{G}$, 
there is an algorithm to return $\mathtt{true}$ if and only if $u$ and $w$ are $c$-edge connected with running time 
$n^{o(1)}$.

\end{lemma}
\begin{proof}
Since applying $\updateseq^{(0)}$ to $G^{(0)}$
does not change the $c$-edge connectivity between $v_{u, u}$ and $v_{w, w}$,  the $c$-edge connectivity between $v_{u, u}$ and $v_{w, w}$ for $G^{(0)}$ is the same as the $c$-edge connectivity of $u$ and $w$ in $\overline{G}$.


By Lemma~\ref{lem:decomposition_phase_one}, there exists a constant $c'$ such that for $0 \leq i \leq \ell + 1$
\[|\updateseq^{(i)}| = |\updateseq|(c'\Delta (10c)^{3c})^i  .\]
Since $\ell = O(\log_{1/\eta_{\parametertimesub}} n)$, $H$ is a multigraph of $n^{o(1)}$ vertices and edges.
By Lemma~\ref{lem:decomposition_phase_one}, for every $0 \leq i \leq \ell$, 
$v_{u, u}$ and $v_{w, w}$ are in $\endpoints(\partial_{G^{(i)}}(\calP^{(i)}))$.
Hence  
$H$ contains $v_{u, u}$ and $v_{w, w}$, 
and the $c$-edge connectivity between $v_{u, u}$ and $v_{w, w}$ in $H$ is same to the $c$-edge connectivity for $v_{u, u}$ and $v_{w, w}$ in $G^{(0)}$.
Hence, the algorithm outputs $c$-edge connectivity of $u$ and $w$ correctly. 

The running time of the algorithm is obtained by Lemma~\ref{lem:decomposition_phase_one} and Claim~\ref{claim:parameter_property}.
\end{proof}

\bibliography{dynamic}

\newcommand{\etalchar}[1]{$^{#1}$}
\begin{thebibliography}{KRKPT16}

\bibitem[AABD19]{acar2019parallel}
Umut~A Acar, Daniel Anderson, Guy~E Blelloch, and Laxman Dhulipala.
\newblock Parallel batch-dynamic graph connectivity.
\newblock In {\em The 31st ACM on Symposium on Parallelism in Algorithms and
  Architectures}, pages 381--392. ACM, 2019.

\bibitem[AAW17]{acar2017brief}
Umut~A Acar, Vitaly Aksenov, and Sam Westrick.
\newblock Brief announcement: Parallel dynamic tree contraction via
  self-adjusting computation.
\newblock In {\em Proceedings of the 29th ACM Symposium on Parallelism in
  Algorithms and Architectures}, pages 275--277. ACM, 2017.

\bibitem[ACHT11]{acar2011parallelism}
Umut~A. Acar, Andrew Cotter, Beno{\^{\i}}t Hudson, and Duru T{\"{u}}rkoglu.
\newblock Parallelism in dynamic well-spaced point sets.
\newblock In {\em Proceedings of the twenty-third annual ACM symposium on
  Parallelism in algorithms and architectures}, pages 33--42. ACM, 2011.

\bibitem[ADD{\etalchar{+}}93]{althofer1993sparse}
Ingo Alth{\"o}fer, Gautam Das, David Dobkin, Deborah Joseph, and Jos{\'e}
  Soares.
\newblock On sparse spanners of weighted graphs.
\newblock {\em Discrete \& Computational Geometry}, 9(1):81--100, 1993.

\bibitem[AGU72]{aho1972transitive}
Alfred~V. Aho, Michael~R Garey, and Jeffrey~D. Ullman.
\newblock The transitive reduction of a directed graph.
\newblock {\em SIAM Journal on Computing}, 1(2):131--137, 1972.

\bibitem[AKP19]{andoni2019solving}
Alexandr Andoni, Robert Krauthgamer, and Yosef Pogrow.
\newblock On solving linear systems in sublinear time.
\newblock {\em 10th Innovations in Theoretical Computer Science}, 2019.

\bibitem[BBG{\etalchar{+}}20]{bernstein2020fully}
Aaron Bernstein, Jan van~den Brand, Maximilian~Probst Gutenberg, Danupon
  Nanongkai, Thatchaphol Saranurak, Aaron Sidford, and He~Sun.
\newblock Fully-dynamic graph sparsifiers against an adaptive adversary.
\newblock {\em arXiv preprint arXiv:2004.08432}, 2020.

\bibitem[BK09]{bernstein2009nearly}
Aaron Bernstein and David Karger.
\newblock A nearly optimal oracle for avoiding failed vertices and edges.
\newblock In {\em Proceedings of the forty-first annual ACM symposium on Theory
  of computing}, pages 101--110. ACM, 2009.

\bibitem[BS19]{brand2019sensitive}
Jan van~den Brand and Thatchaphol Saranurak.
\newblock Sensitive distance and reachability oracles for large batch updates.
\newblock {\em arXiv preprint arXiv:1907.07982}, 2019.

\bibitem[CDK{\etalchar{+}}21]{chalermsook2020vertex}
Parinya Chalermsook, Syamantak Das, Yunbum Kook, Bundit Laekhanukit, Yang~P
  Liu, Richard Peng, Mark Sellke, and Daniel Vaz.
\newblock Vertex sparsification for edge connectivity.
\newblock In {\em Proceedings of the 2021 ACM-SIAM Symposium on Discrete
  Algorithms (SODA)}, pages 1206--1225. SIAM, 2021.

\bibitem[CGH{\etalchar{+}}20]{chen2020fast}
Li~Chen, Gramoz Goranci, Monika Henzinger, Richard Peng, and Thatchaphol
  Saranurak.
\newblock Fast dynamic cuts, distances and effective resistances via vertex
  sparsifiers.
\newblock In {\em 2020 IEEE 61st Annual Symposium on Foundations of Computer
  Science (FOCS)}. IEEE, 2020.

\bibitem[CGL{\etalchar{+}}20]{chuzhoy2019deterministic}
Julia Chuzhoy, Yu~Gao, Jason Li, Danupon Nanongkai, Richard Peng, and
  Thatchaphol Saranurak.
\newblock A deterministic algorithm for balanced cut with applications to
  dynamic connectivity, flows, and beyond.
\newblock In {\em 2020 IEEE 61st Annual Symposium on Foundations of Computer
  Science (FOCS)}. IEEE, 2020.

\bibitem[CLPR10]{chechik2010f}
Shiri Chechik, Michael Langberg, David Peleg, and Liam Roditty.
\newblock f-sensitivity distance oracles and routing schemes.
\newblock In {\em European Symposium on Algorithms}, pages 84--96. Springer,
  2010.

\bibitem[DDK{\etalchar{+}}20]{dhulipala2019parallel}
Laxman Dhulipala, David Durfee, Janardhan Kulkarni, Richard Peng, Saurabh
  Sawlani, and Xiaorui Sun.
\newblock Parallel batch-dynamic graphs: Algorithms and lower bounds.
\newblock In {\em Proceedings of the Fourteenth Annual ACM-SIAM Symposium on
  Discrete Algorithms}, pages 1300--1319. SIAM, 2020.

\bibitem[DGGP19]{durfee2019fully}
David Durfee, Yu~Gao, Gramoz Goranci, and Richard Peng.
\newblock Fully dynamic spectral vertex sparsifiers and applications.
\newblock In {\em Proceedings of the 51st Annual ACM SIGACT Symposium on Theory
  of Computing}, pages 914--925. ACM, 2019.

\bibitem[DP09]{duan2009dual}
Ran Duan and Seth Pettie.
\newblock Dual-failure distance and connectivity oracles.
\newblock In {\em Proceedings of the twentieth annual ACM-SIAM symposium on
  Discrete algorithms}, pages 506--515. SIAM, 2009.

\bibitem[DP17]{duan2017connectivity}
Ran Duan and Seth Pettie.
\newblock Connectivity oracles for graphs subject to vertex failures.
\newblock In {\em Proceedings of the Twenty-Eighth Annual ACM-SIAM Symposium on
  Discrete Algorithms}, pages 490--509. SIAM, 2017.

\bibitem[DTCR08]{demetrescu2008oracles}
Camil Demetrescu, Mikkel Thorup, Rezaul~Alam Chowdhury, and Vijaya
  Ramachandran.
\newblock Oracles for distances avoiding a failed node or link.
\newblock {\em SIAM Journal on Computing}, 37(5):1299--1318, 2008.

\bibitem[DV94]{dinitz1994connectivity}
Yefim Dinitz and Alek Vainshtein.
\newblock The connectivity carcass of a vertex subset in a graph and its
  incremental maintenance.
\newblock In {\em Proceedings of the twenty-sixth annual ACM symposium on
  Theory of computing}, pages 716--725, 1994.

\bibitem[DV95]{vainshtein1995locally}
Yefim Dinitz and Alek Vainshtein.
\newblock Locally orientable graphs, cell structures, and a new algorithm for
  the incremental maintenance of connectivity carcasses.
\newblock In {\em Proceedings of the sixth annual ACM-SIAM symposium on
  Discrete algorithms}, volume~76, page 302. SIAM, 1995.

\bibitem[DW98]{dinitz1998maintaining}
Ye~Dinitz and Jeffery Westbrook.
\newblock Maintaining the classes of 4-edge-connectivity in a graph on-line.
\newblock {\em Algorithmica}, 20(3):242--276, 1998.

\bibitem[DZ16]{duan2016faster}
Ran Duan and Le~Zhang.
\newblock Faster randomized worst-case update time for dynamic subgraph
  connectivity.
\newblock {\em arXiv preprint arXiv:1611.09072}, 2016.

\bibitem[EGIN97]{eppstein1997sparsification}
David Eppstein, Zvi Galil, Giuseppe~F Italiano, and Amnon Nissenzweig.
\newblock Sparsification-a technique for speeding up dynamic graph algorithms.
\newblock {\em Journal of the ACM (JACM)}, 44(5):669--696, 1997.

\bibitem[Fre85]{frederickson1985data}
Greg~N Frederickson.
\newblock Data structures for on-line updating of minimum spanning trees, with
  applications.
\newblock {\em SIAM Journal on Computing}, 14(4):781--798, 1985.

\bibitem[Fre97]{frederickson1997ambivalent}
Greg~N Frederickson.
\newblock Ambivalent data structures for dynamic 2-edge-connectivity and $k$
  smallest spanning trees.
\newblock {\em SIAM Journal on Computing}, 26(2):484--538, 1997.

\bibitem[GHP17]{goranci2017power}
Gramoz Goranci, Monika Henzinger, and Pan Peng.
\newblock The power of vertex sparsifiers in dynamic graph algorithms.
\newblock In {\em 25th Annual European Symposium on Algorithms (ESA 2017)},
  2017.

\bibitem[GHP18]{goranci2018dynamic}
Gramoz Goranci, Monika Henzinger, and Pan Peng.
\newblock Dynamic effective resistances and approximate schur complement on
  separable graphs.
\newblock In {\em 26th Annual European Symposium on Algorithms (ESA 2018)},
  2018.

\bibitem[GHP20]{goranci2020improved}
Gramoz Goranci, Monika Henzinger, and Pan Peng.
\newblock Improved guarantees for vertex sparsification in planar graphs.
\newblock {\em SIAM Journal on Discrete Mathematics}, 34(1):130--162, 2020.

\bibitem[GI91a]{galil1991fully_3}
Zvi Galil and Giuseppe~F Italiano.
\newblock Fully dynamic algorithms for 3-edge connectivity.
\newblock {\em Unpublished manuscript}, 1991.

\bibitem[GI91b]{galil1991fully}
Zvi Galil and Giuseppe~F Italiano.
\newblock Fully dynamic algorithms for edge connectivity problems.
\newblock In {\em Proceedings of the twenty-third Annual {ACM} Symposium on
  Theory of Computing}, volume~91, pages 317--327, 1991.

\bibitem[Gor19]{goranci2019}
Gramoz Goranci.
\newblock {\em Dynamic Graph Algorithms and Graph Sparsification: New
  Techniques and Connections}.
\newblock PhD thesis, University of Vienna, 2019.
\newblock arXiv preprint arXiv:1909.06413.

\bibitem[GRST21]{goranci2020expander}
Gramoz Goranci, Harald R{\"a}cke, Thatchaphol Saranurak, and Zihan Tan.
\newblock The expander hierarchy and its applications to dynamic graph
  algorithms.
\newblock In {\em Proceedings of the 2021 ACM-SIAM Symposium on Discrete
  Algorithms (SODA)}, pages 2212--2228. SIAM, 2021.

\bibitem[Har69]{harary6graph}
Frank Harary.
\newblock {\em Graph theory}.
\newblock Addison-Wesley, Reading, MA, 1969.

\bibitem[HdLT01]{holm2001poly}
Jacob Holm, Kristian de~Lichtenberg, and Mikkel Thorup.
\newblock Poly-logarithmic deterministic fully-dynamic algorithms for
  connectivity, minimum spanning tree, 2-edge, and biconnectivity.
\newblock {\em Journal of the ACM (JACM)}, 48(4):723--760, 2001.

\bibitem[HHKP17]{huang2017fully}
Shang-En Huang, Dawei Huang, Tsvi Kopelowitz, and Seth Pettie.
\newblock Fully dynamic connectivity in $o (\log n (\log \log n)^2)$ amortized
  expected time.
\newblock In {\em Proceedings of the twenty-eighth Annual ACM-SIAM Symposium on
  Discrete Algorithms}, pages 510--520. SIAM, 2017.

\bibitem[HK95]{henzinger1995randomized}
Monika~Rauch Henzinger and Valerie King.
\newblock Randomized dynamic graph algorithms with polylogarithmic time per
  operation.
\newblock In {\em Proceedings of the twenty-seventh annual ACM symposium on
  Theory of computing}, pages 519--527, 1995.

\bibitem[HK97]{henzinger1997fully}
Monika~Rauch Henzinger and Valerie King.
\newblock Fully dynamic 2-edge connectivity algorithm in polylogarithmic time
  per operation.
\newblock {\em SRC Technical Note}, 4, 1997.

\bibitem[HK99]{henzinger1999randomized}
Monika~R Henzinger and Valerie King.
\newblock Randomized fully dynamic graph algorithms with polylogarithmic time
  per operation.
\newblock {\em Journal of the ACM (JACM)}, 46(4):502--516, 1999.

\bibitem[HK01]{henzinger2001maintaining}
Monika~R Henzinger and Valerie King.
\newblock Maintaining minimum spanning forests in dynamic graphs.
\newblock {\em SIAM Journal on Computing}, 31(2):364--374, 2001.

\bibitem[HLNW17]{henzinger2017conditional}
Monika Henzinger, Andrea Lincoln, Stefan Neumann, and Virginia~Vassilevska
  Williams.
\newblock Conditional hardness for sensitivity problems.
\newblock {\em arXiv preprint arXiv:1703.01638}, 2017.

\bibitem[HN16]{henzinger2016incremental}
Monika Henzinger and Stefan Neumann.
\newblock Incremental and fully dynamic subgraph connectivity for emergency
  planning.
\newblock In {\em 24th Annual European Symposium on Algorithms (ESA 2016)},
  2016.

\bibitem[HRT18]{holm2018dynamic}
Jacob Holm, Eva Rotenberg, and Mikkel Thorup.
\newblock Dynamic bridge-finding in $\widetilde {O}(\log ^2 n)$ amortized time.
\newblock In {\em Proceedings of the Twenty-Ninth Annual ACM-SIAM Symposium on
  Discrete Algorithms}, pages 35--52. SIAM, 2018.

\bibitem[HRW15]{holm2015faster}
Jacob Holm, Eva Rotenberg, and Christian {Wulff-Nilsen}.
\newblock Faster fully-dynamic minimum spanning forest.
\newblock In {\em Algorithms-ESA 2015}, pages 742--753. Springer, 2015.

\bibitem[HT97]{henzinger1997sampling}
Monika~R Henzinger and Mikkel Thorup.
\newblock Sampling to provide or to bound: With applications to fully dynamic
  graph algorithms.
\newblock {\em Random Structures \& Algorithms}, 11(4):369--379, 1997.

\bibitem[KB10]{khanna2010approximate}
Neelesh Khanna and Surender Baswana.
\newblock Approximate shortest paths avoiding a failed vertex: Optimal size
  data structures for unweighted graphs.
\newblock In {\em 27th International Symposium on Theoretical Aspects of
  Computer Science}, 2010.

\bibitem[KKM13]{kapron2013dynamic}
Bruce~M Kapron, Valerie King, and Ben Mountjoy.
\newblock Dynamic graph connectivity in polylogarithmic worst case time.
\newblock In {\em Proceedings of the twenty-fourth annual ACM-SIAM symposium on
  Discrete algorithms}, pages 1131--1142. SIAM, 2013.

\bibitem[KRKPT16]{kejlberg2016faster}
Casper Kejlberg-Rasmussen, Tsvi Kopelowitz, Seth Pettie, and Mikkel Thorup.
\newblock Faster worst case deterministic dynamic connectivity.
\newblock In {\em 24th Annual European Symposium on Algorithms (ESA 2016)}.
  Schloss Dagstuhl-Leibniz-Zentrum fuer Informatik, 2016.

\bibitem[KVV04]{kannan2004clusterings}
Ravi Kannan, Santosh Vempala, and Adrian Vetta.
\newblock On clusterings: Good, bad and spectral.
\newblock {\em Journal of the ACM (JACM)}, 51(3):497--515, 2004.

\bibitem[MS18]{molina18}
Antonio Molina and Bryce Sandlund.
\newblock Historical optimization with applications to dynamic higher edge
  connectivity, 2018.

\bibitem[NS17]{nanongkai2017dynamica}
Danupon Nanongkai and Thatchaphol Saranurak.
\newblock {Dynamic spanning forest with worst-case update time: adaptive, Las
  Vegas, and $O (n^{ 1/2-\varepsilon})$-time}.
\newblock In {\em Proceedings of the 49th Annual ACM SIGACT Symposium on Theory
  of Computing}, pages 1122--1129. ACM, 2017.

\bibitem[NSW17]{nanongkai2017dynamic}
Danupon Nanongkai, Thatchaphol Saranurak, and Christian {Wulff-Nilsen}.
\newblock Dynamic minimum spanning forest with subpolynomial worst-case update
  time.
\newblock In {\em 2017 IEEE 58th Annual Symposium on Foundations of Computer
  Science (FOCS)}, pages 950--961. IEEE, 2017.

\bibitem[PT07]{patrascu2007planning}
Mihai Patrascu and Mikkel Thorup.
\newblock Planning for fast connectivity updates.
\newblock In {\em 48th Annual IEEE Symposium on Foundations of Computer
  Science}, pages 263--271. IEEE, 2007.

\bibitem[ST11]{spielman2011spectral}
Daniel~A Spielman and Shang-Hua Teng.
\newblock Spectral sparsification of graphs.
\newblock {\em SIAM Journal on Computing}, 40(4):981--1025, 2011.

\bibitem[STTW16]{simsiri2016work}
Natcha Simsiri, Kanat Tangwongsan, Srikanta Tirthapura, and Kun-Lung Wu.
\newblock Work-efficient parallel union-find with applications to incremental
  graph connectivity.
\newblock In {\em European Conference on Parallel Processing}, pages 561--573.
  Springer, 2016.

\bibitem[SW19]{saranurak2019expander}
Thatchaphol Saranurak and Di~Wang.
\newblock Expander decomposition and pruning: Faster, stronger, and simpler.
\newblock In {\em Proceedings of the Thirtieth Annual ACM-SIAM Symposium on
  Discrete Algorithms}, pages 2616--2635. SIAM, 2019.

\bibitem[TDB19]{tseng2019batch}
Thomas Tseng, Laxman Dhulipala, and Guy Blelloch.
\newblock Batch-parallel euler tour trees.
\newblock In {\em 2019 Proceedings of the Twenty-First Workshop on Algorithm
  Engineering and Experiments (ALENEX)}, pages 92--106. SIAM, 2019.

\bibitem[Tho00]{thorup2000near}
Mikkel Thorup.
\newblock Near-optimal fully-dynamic graph connectivity.
\newblock In {\em Proceedings of the thirty-second annual ACM symposium on
  Theory of computing}, pages 343--350. ACM, 2000.

\bibitem[Tho05]{thorup2005worst}
Mikkel Thorup.
\newblock Worst-case update times for fully-dynamic all-pairs shortest paths.
\newblock In {\em Proceedings of the thirty-seventh annual ACM symposium on
  Theory of computing}, pages 112--119, 2005.

\bibitem[Tho07]{10.1007/s00493-007-0045-2}
Mikkel Thorup.
\newblock Fully-dynamic min-cut.
\newblock {\em Combinatorica}, 27(1):91–127, February 2007.

\bibitem[WT92]{westbrook1992maintaining}
Jeffery Westbrook and Robert~E Tarjan.
\newblock Maintaining bridge-connected and biconnected components on-line.
\newblock {\em Algorithmica}, 7(1-6):433--464, 1992.

\bibitem[{Wul}13]{wulff2013faster}
Christian {Wulff-Nilsen}.
\newblock Faster deterministic fully-dynamic graph connectivity.
\newblock In {\em Proceedings of the twenty-fourth annual ACM-SIAM symposium on
  Discrete algorithms}, pages 1757--1769. SIAM, 2013.

\bibitem[{Wul}17]{wulff2017fully}
Christian {Wulff-Nilsen}.
\newblock Fully-dynamic minimum spanning forest with improved worst-case update
  time.
\newblock In {\em Proceedings of the 49th Annual ACM SIGACT Symposium on Theory
  of Computing}, pages 1130--1143. ACM, 2017.

\end{thebibliography}
\bibliographystyle{alpha}
\clearpage
\appendix

\section{Useful Properties of Cuts}\label{sec:useful_lemmas_cuts}

We give formal definitions related to cut. 

\begin{definition}
For a connected graph $G = (V, E)$, 
a \emph{cut} on $G$ is a bipartition  $(V_1, V \cut V_1)$ of vertices of the graph. 
The \emph{cut-set} of a cut is the multiset of edges that have one endpoint in each side of the bipartition.
For cut $(V_1, V \cut V_1)$, 
let $\partial_G(V_1)$ denote the cut-set of the cut, i.e. 
\[\partial_G(V_1) = \set{e = (u, v)\in E: u\in V_1, v\in V \cut V_1}.\]
The \emph{size} of a cut is the number of edges in the multiset $\partial_G(V_1)$, i.e. $|\partial_G(V_1)|$. 
A set of edges $E'$ induces a cut if $E'$ is the cut-set of a cut, i.e.,
there exists $V' \subset V$ such that $E' = \partial_G(V')$. 



\end{definition}

We define atomic cut as follows. 
\begin{definition}\label{def:simple_cut_old}
For a connected graph $G = (V, E)$,
a cut $(V_1, V\cut V_1)$ is an \emph{atomic cut} if both $G[V_1]$ and $G[V \cut  V_1]$ are connected, otherwise, it is \emph{non-atomic}. 
\end{definition}


For a set  $T$, $T'$ is a nontrivial subset of $T$ if $\emptyset \subsetneq T' \subsetneq T$. 
A bipartition $(T', T \setminus T')$ of $T$ is a nontrivial bipartition if $T'$ is a nontrivial subset of $T$.

\begin{definition}

Let $T\subset V$ be a set of vertices. 
A cut $(V_1, V \cut  V_1)$ 
is a $(T', T\setminus T')$-cut if $V_1 \cap T = T'$. 
Cut $(V_1, V \cut  V_1)$ is a \emph{minimum $(T_1, T \cut  T_1)$-cut} if $(V_1, V \cut  V_1)$ partitions $T$ into $T_1$ and $T \cut  T_1$, and the number of edges in the cut-set of $(V_1, V \cut  V_1)$ is minimum among all the cuts that partition $T$ into $T_1$ and $T \cut  T_1$. 
\end{definition}

Shattering a cut is formally defined as follows.
\begin{definition}\label{def:intersect}
For a connected graph $G = (V, E)$,
a set of edges $F\subset E$
\emph{shatters} a cut $C = (V', V \cut  V')$  if 
no connected component of $G \cut F$ contains all edges of $\partial_G(V')$. 
\end{definition}
We define parallel cuts as follows.

\begin{definition}\label{def:parallel_cut}
Let $C_1 = (V_1, V \cut V_1)$ and $C_2 = (V_2, V \cut  V_2)$ be two cuts.  for a connected graph $G = (V, E)$. 
Two cuts $C_1 = (V_1, V \cut  V_1)$ and $C_2 = (V_2, V \cut  V_2)$ are \emph{parallel} if one of $V_1$ and $V  \cut  V_1$ is a subset of one of $V_2$ and $V \cut  V_2$.


\end{definition}

Now we prove some useful properties of cuts.

\begin{lemma}\label{lem:not_parallel}
Let $C_1 = (V_1, V \cut V_1)$ and $C_2 = (V_2, V \cut V_2)$ be two cuts for a connected graph $G = (V, E)$ that are not parallel. 
If $C_1$ is an atomic cut, then 
the cut-set of $C_1$ shatters $C_2$.
\end{lemma}
\begin{proof}

    By the definition of parallel cuts, 
    both $V_1$ and $V \cut V_1$ have nontrivial intersections with both  $V_2$ and $V \cut V_2$. 
    Since $C_1$ is an atomic cut, 
    both $G[V_1]$ and $G[V \cut V_1]$ are connected.
    On the other hand, $C_1$ is a cut, so there is an edge from the cut-set of $C_2$ in $G[V_1]$, and another edge from the cut-set of $C_2$ in $G[V \cut V_1]$. Hence the cut-set of $C_1$ shatters $C_2$.
\end{proof}

\begin{lemma}\label{lem:intercept_condition_2}
    Let $G = (V, E)$ be a  connected graph. Suppose $C_1 = (V_1, V\cut V_1)$ and $C_2 = (V_2, V\cut V_2)$ are two cuts on $G$ that are not parallel. Either $\partial_G(V_1)$ shatters $C_2$ or $\partial_G(V_2)$ shatters $C_1$.
\end{lemma}
\begin{proof}
    If $\partial_G(V_1)\cap \partial_G(V_2) \neq \emptyset$,
    then we have $\partial_G(V_1)$ shatters $C_2$, and $\partial_G(V_2)$ shatters $C_1$ by Definition~\ref{def:intersect}. In the rest of the proof, we assume $\partial_G(V_1)\cap \partial_G(V_2) = \emptyset$.

    Suppose $\partial_G(V_2)$ does not shatter $C_1$. By Lemma~\ref{lem:not_parallel}, $C_2$ is a non-atomic cut. 
    On the other hand, $\partial_G(V_1)$ is in a connected component $V'$ of $G \setminus \partial_G(V_2)$, and 
    we have $V' \subset V_2$ or $V' \subset (V \setminus V_2)$. 
    Since $\endpoints(\partial_G(V_1)) \subset V'$, 
    for every edge $(x, y) \in \partial_G(V') \subset \partial_G(V_2)$, 
    $\{x, y\}$ is a subset of $V_1$ or is a subset of $V \setminus V_1$.
    By Definition~\ref{def:parallel_cut} $V_1\not\subset V'$ and $(V\cut V_1)\not\subset V'$.
    Hence, there are edges $(x, y), (x', y') \in \partial_G(V')$ such that $x, y \in V_1$ and $x', y' \in V \setminus V_1$. 
    Therefore, $\partial_G(V_1)$ shatters $C_2$.
\end{proof}




\begin{lemma}\label{lem:swapping}
Let $G = (V, E)$ be a connected graph and $T$ be a set of terminals. 
Let $C_1 = (V_1, V \cut V_1)$ be a cut that partitions $T$ into $T_1$ and $T \cut T_1$,
and $C_2 = (V_2, V \cut V_2)$ be a cut that partitions $T$ into $T_2$ and $T \cut  T_2$.
If $T_1 \cap T_1' = \emptyset$, then 
let $V_1' = V_1 \setminus (V_1 \cap V_2)$, and $V_2' = V_2 \setminus (V_1 \cap V_2)$, 
$(V_1', V \setminus V_1')$ is a cut partitions $T$ into $T_1$ and $T \cut T_1$,
and $(V_2', V \setminus V_2')$ is a cut partitions $T$ into $T_2$ and $T \cut T_2$,
such that $\partial_G(V_1'), \partial_G(V_2') \subset \partial_G(V_1) \cup \partial_G(V_2)$, and one of the following conditions hold:
\begin{enumerate}
    \item the size of cut $(V_1', V \setminus V_1')$ is smaller than the size of cut $(V_1, V \setminus V_1)$.
    \item the size of cut $(V_2', V \setminus V_2')$ is smaller than the size of cut  $(V_2, V \setminus V_2)$.
    \item  the size of cut $(V_1', V \setminus V_1')$ is the same to that of  $(V_1, V \setminus V_1)$, and the size of cut $(V_2', V \setminus V_2')$ is the same to that of $(V_2, V \setminus V_2)$.

\end{enumerate}

\end{lemma}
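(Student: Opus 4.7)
I first note that the hypothesis ``$T_1 \cap T_1' = \emptyset$'' appears to be a typo for $T_1 \cap T_2 = \emptyset$, and I will proceed under this reading. The plan is to carry out a standard uncrossing argument by decomposing the vertex set into the four disjoint regions $A = V_1 \cap V_2$, $B = V_1 \setminus V_2$, $C = V_2 \setminus V_1$, and $D = V \setminus (V_1 \cup V_2)$, observing that $V_1' = B$ and $V_2' = C$, and then classifying every edge by which pair of regions its endpoints lie in.

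The first step is to verify the terminal partitions. Since $V_1' = V_1 \setminus V_2$, I would compute $V_1' \cap T = T_1 \setminus T_2 = T_1$ using the assumption $T_1 \cap T_2 = \emptyset$, and symmetrically $V_2' \cap T = T_2$. So the new cuts separate $T$ in the same ways as the originals.

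The second step is to express each of the four cut-sets as a disjoint union of edges between the six unordered pairs of regions. Writing $E_{XY}$ for the edges between regions $X$ and $Y$, I get
\begin{align*}
\partial_G(V_1) &= E_{AC} \cup E_{AD} \cup E_{BC} \cup E_{BD},\\
\partial_G(V_2) &= E_{AB} \cup E_{AD} \cup E_{BC} \cup E_{CD},\\
\partial_G(V_1') &= E_{AB} \cup E_{BC} \cup E_{BD},\\
\partial_G(V_2') &= E_{AC} \cup E_{BC} \cup E_{CD}.
\end{align*}
From these expressions the containment $\partial_G(V_1') \cup \partial_G(V_2') \subseteq \partial_G(V_1) \cup \partial_G(V_2)$ is immediate, and taking sizes yields the submodular identity
\[|\partial_G(V_1)| + |\partial_G(V_2)| - |\partial_G(V_1')| - |\partial_G(V_2')| = 2|E_{AD}| \geq 0.\]

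The third step is the trichotomy. If neither (1) nor (2) holds, then $|\partial_G(V_1')| \geq |\partial_G(V_1)|$ and $|\partial_G(V_2')| \geq |\partial_G(V_2)|$; together with the submodular inequality above, both must hold with equality, which is exactly condition (3). I do not anticipate a real obstacle here; the only care needed is correctly enumerating the six edge classes and making sure the typo-corrected terminal-disjointness hypothesis is used at the one place where it matters, namely in the identifications $V_1' \cap T = T_1$ and $V_2' \cap T = T_2$.
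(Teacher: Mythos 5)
Your proof is correct, and it uses essentially the same four-region uncrossing that the paper does: writing $A = V_1 \cap V_2$, $B = V_1 \setminus V_2 = V_1'$, $C = V_2 \setminus V_1 = V_2'$, $D = V \setminus (V_1 \cup V_2)$ and classifying edges by which pair of regions they join. One thing worth flagging: you are actually more careful at the key counting step than the paper. You correctly derive
\[
|\partial_G(V_1)| + |\partial_G(V_2)| - |\partial_G(V_1')| - |\partial_G(V_2')| = 2\,|E_{AD}| \ge 0,
\]
whereas the paper asserts the stronger equality $|\partial_G(V_1')| + |\partial_G(V_2')| = |\partial_G(V_1)| + |\partial_G(V_2)|$. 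That equality fails in general: its case (a) claims an edge $(x,y) \in \partial_G(V_1 \cap V_2)$ that lies in both $\partial_G(V_1)$ and $\partial_G(V_2)$ also lies in both $\partial_G(V_1')$ and $\partial_G(V_2')$; but such an edge has one endpoint in $A$ and the other in $D$, and both $A$ and $D$ are disjoint from $V_1' = B$ and from $V_2' = C$, so the edge is in neither new cut-set. Since only the one-sided inequality is needed to force the trichotomy (if neither (1) nor (2) holds, adding $|\partial_G(V_1')| \ge |\partial_G(V_1)|$ and $|\partial_G(V_2')| \ge |\partial_G(V_2)|$ and comparing with the submodular inequality forces all to be equalities), your argument goes through cleanly and is the literally correct version of the paper's intent.
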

\begin{proof}

    
    Since $T_1\cap T_2 = \emptyset$, we have 
    \begin{equation}\label{equ:no_overlap}T\cap (V_1\cap V_2) = \emptyset.\end{equation} 
        If $V_1\cap V_2 = \emptyset$, then we have $V_1' = V_1$, $V_2' = V_2$ and thus
        the third condition holds. 
        In the rest of this proof, we consider the case of
        $V_1\cap V_2\neq \emptyset$. 
        
        
        
        By the definition of $V_1'$ and $V_2'$, 
        for each $(x, y) \in \partial_G(V_1 \cap V_2)$, 
        the following conditions hold:
        \begin{enumerate}
            \item[(a)] If $(x, y)$ is in both $\partial_G(V_1)$ and $\partial_G(V_2)$, then $(x, y)$ is in both $\partial_G(V_1')$ and $\partial_G(V_2')$.
            \item[(b)] If $(x, y)$ is  $\partial_G(V_1) \setminus \partial_G(V_2)$, then $(x, y)$ is in $\partial_G(V_2')$ but not in  $\partial_G(V_1')$.
            \item[(c)] If $(x, y)$ is  $\partial_G(V_2) \setminus \partial_G(V_1)$, then $(x, y)$ is in $\partial_G(V_1')$ but not in  $\partial_G(V_2')$.
        \end{enumerate}

        Hence, we have \[\abs{\partial_G(V_1')} + \abs{\partial_G(V_2')} = \abs{\partial_G(V_1)}+ \abs{\partial_G(V_2)}.\]
        So, one of the three conditions holds. 
\end{proof}

\begin{claim}\label{claim:cut_cc_number}
Let $G = (V, E)$ be a connected graph and $(V', V \cut  V')$ be a cut of graph $G$ with size $c$. 
Then $G[V']$ contains at most $c$ connected components. 
\end{claim}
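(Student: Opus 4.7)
The plan is to show that each connected component of $G[V']$ must contribute at least one distinct edge to the cut-set $\partial_G(V')$; then since $|\partial_G(V')| = c$, the number of connected components of $G[V']$ is at most $c$.

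First, I would fix an arbitrary connected component $C$ of $G[V']$ and argue that there exists at least one edge of $\partial_G(V')$ with one endpoint in $C$. The argument is: if no edge of $\partial_G(V')$ has an endpoint in $C$, then the vertex set of $C$ has no edge connecting it to $V \setminus V'$ in $G$ (any edge out of $C$ within $G$ either stays inside $G[V']$, in which case it would have to stay in $C$ by maximality of the connected component, or goes to $V \setminus V'$, in which case it belongs to $\partial_G(V')$). Hence $C$ is disconnected from $V \setminus V'$ in $G$, contradicting the assumption that $G$ is connected (here we also use that $V \setminus V'$ is non-empty, which follows from $(V', V \setminus V')$ being a cut).

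Second, I would observe that each edge of $\partial_G(V')$ has exactly one endpoint in $V'$, hence lies in a unique connected component of $G[V']$. Therefore, the map sending each connected component $C$ of $G[V']$ to some fixed edge of $\partial_G(V')$ with an endpoint in $C$ is injective, which gives the bound on the number of connected components of $G[V']$ by $|\partial_G(V')| = c$.

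There is no real obstacle here; the only subtlety is handling the boundary case where $V'$ or $V \setminus V'$ might be empty, but by the definition of cut both sides are non-empty. The proof is essentially a one-paragraph pigeonhole-style argument using only connectivity of $G$ and the definition of $\partial_G(V')$.
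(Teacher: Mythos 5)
Your proof is correct and follows essentially the same approach as the paper: both argue that each connected component of $G[V']$ must be incident to at least one cut edge (else $G$ would be disconnected), and then bound the count by $|\partial_G(V')|$. The paper counts $V'$-side endpoints of cut edges rather than the edges themselves, but this is the same pigeonhole argument.
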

\begin{proof}
    Since the size of cut $(V', V \cut  V')$ is $c$, $\partial_G(V')$ contains at most $c$ edges, and thus $\endpoints(\partial_G(V')) \cap V' \leq c$.
    Any connected component of $G[V']$ must contains one vertex in $\endpoints(\partial_G(V')) \cap V'$, otherwise $G$ is not connected.
    Hence, $G[V']$ contains at most $c$ connected components. 
\end{proof}


\begin{lemma}\label{lem:replacement}
Let $G = (V, E)$ be a connected graph, and $T\subset T$ be a set of vertices.
Let 
$(V', V \setminus V')$ be a simple cut of graph $G$,
$E'$ be a subset of $\partial_G(V')$ such that $E'$ induces a cut $(V^\dagger, V\setminus V^\dagger)$ of graph $G$ separating $T$ nontrivially such that $V'\subset V^\dagger$. Let $E''$ be another set of edges such that $E''$ induces a cut that partitions $T$ the same as the cut induced by $E'$ does. 
Then there exists a subset of $(\partial_G(V') \setminus E')\cup E''$ induces a cut $(Q, V \setminus Q)$ that partitions $T$
the same as $(V', V\setminus V')$ does satisfying $Q \subset V' \cup (V\setminus V^\dagger)$.

\end{lemma}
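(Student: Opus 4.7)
My plan is to take $Q$ to be a union of connected components of the graph obtained by deleting the edge set $W := (\partial_G(V') \setminus E') \cup E''$. Specifically, I would let $Q$ be the union of all connected components of $G \setminus W$ that contain at least one vertex of $V' \cap T$. By construction, $\partial_G(Q) \subset W$, so the cut-set containment required by the conclusion is automatic; the real work is to verify that $Q \cap T = V' \cap T$ and that $Q \subset V' \cup (V \setminus V^\dagger)$.

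The decisive structural observation is that the hypothesis $E' \subset \partial_G(V')$, combined with $E' = \partial_G(V^\dagger)$ and $V' \subset V^\dagger$, forces every edge of $E'$ to have its $V^\dagger$-endpoint in $V'$. Equivalently, no vertex of $V^\dagger \setminus V'$ has a neighbor in $V \setminus V^\dagger$. Therefore in $G \setminus W$ the set $V^\dagger \setminus V'$ is fully isolated from the rest of the graph: its edges to $V'$ all lie in $\partial_G(V') \setminus E'$ and are removed, and it has no edges to $V \setminus V^\dagger$ at all. This immediately gives $Q \cap (V^\dagger \setminus V') = \emptyset$, hence both $Q \subset V' \cup (V \setminus V^\dagger)$ and the exclusion from $Q$ of those $T$-vertices that lie in $V^\dagger \setminus V'$.

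It remains to exclude the $T$-vertices outside $V^\dagger$. Orient the cut $(V'', V \setminus V'')$ induced by $E''$ so that $V'' \cap T = V^\dagger \cap T$; then $T \setminus V^\dagger = T \setminus V''$, while $V' \cap T \subset V^\dagger \cap T = V'' \cap T \subset V''$. Any path in $G \setminus W$ from a vertex of $V' \cap T \subset V''$ to a vertex of $T \setminus V^\dagger \subset V \setminus V''$ would have to cross the boundary $\partial_G(V'') = E''$, which is entirely removed in $G \setminus W$. So no such path exists, and $Q \cap T = V' \cap T$.

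The main subtlety, which I would be careful to navigate, is that a naive algebraic construction such as $Q = V' \cap V''$ does not work in general: if there is an edge $(x, y) \in E'$ with $x \in V'$, $y \in V \setminus V^\dagger$, and both endpoints happen to lie in $V''$ (which can occur for non-$T$ vertices, since $V'' \cap T = V^\dagger \cap T$ constrains only the $T$-vertices), then $(x, y) \in \partial_G(V' \cap V'')$ but $(x, y) \in E' \setminus E''$, violating the cut-set constraint. The component-based definition of $Q$ above sidesteps this by leveraging the structural isolation of $V^\dagger \setminus V'$ and the barrier effect of $E''$, rather than relying on set-theoretic identities between $V'$ and $V''$.
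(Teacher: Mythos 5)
Your proof is correct, and it rests on the same two structural facts the paper exploits---that $V^\dagger\setminus V'$ has no neighbors in $V\setminus V^\dagger$, and that $E''$ separates $V'\cap T$ from $T\setminus V^\dagger$---but the construction of $Q$ and the verification technique are genuinely different. The paper works top--down with an explicit formula: it sets $L = V'\cup(V\setminus V^\dagger)$, observes $\partial_G(L)=\partial_G(V')\setminus E'$ (the set-theoretic form of your isolation claim), orients the cut induced by $E''$ as $(V^\ddagger, V\setminus V^\ddagger)$ with $V^\ddagger\cap T=V^\dagger\cap T$, and takes $Q = L\cap V^\ddagger$; the cut-set containment then drops out by restricting $E''$ to $G[L]$, with no connectivity argument needed. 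You work bottom--up, defining $Q$ as the union of components of $G\setminus W$ (with $W=(\partial_G(V')\setminus E')\cup E''$) meeting $V'\cap T$, so that $\partial_G(Q)\subset W$ is automatic, and you then verify the $T$-partition via a path-blocking argument. Your remark that the naive choice $Q=V'\cap V''$ can fail is accurate, but the paper's $L\cap V^\ddagger$ is exactly the algebraic repair you did not spot: replacing $V'$ by $L$ moves all of $V\setminus V^\dagger$ inside the candidate set, so the problematic $E'\setminus E''$ edges from $V'$ into $V\setminus V^\dagger$ never appear on the boundary. Both approaches are equally valid; yours makes the boundary constraint trivial at the small cost of tacitly assuming $V'\cap T\neq\emptyset$ so that $Q$ is a nonempty cut side, an assumption that holds in every application of this lemma in the paper.
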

\begin{proof}
Since $G[V']$ is connected, and $E' \subset \partial_G(V')$ induces a cut, 
$V\setminus V^\dagger$ forms the union of several connected components of $G[V\setminus V']$. 
Let $L = V' \cup (V\setminus V^\dagger)$.
$G[L]$ is connected, and $\partial_G(L)$ is $\partial_G(V') \setminus E'$.

Let $(V^\ddagger, V\setminus V^\ddagger)$ be the cut induced by $E''$ satisfying $V^\dagger \cap T = V^\ddagger \cap T$.
Since a cut-set restricted on an induced subgraph induces a cut on the induced subgraph, 
$(L \cap V^\ddagger, L\cap (V\setminus V^\ddagger))$ is a cut of $G[L]$ with cut-set in $E''$. Since 
\[L \cap T = (V' \cup (V\setminus V^\dagger))\cap T = (V' \cap T) \cup ((V\setminus V^\dagger)\cap T) = (V' \cap T) \cup ((V\setminus V^\ddagger) \cap T),\] by the definition of $E''$,  
$(L \cap V^\ddagger) \cap T = V' \cap T$. 
Hence $(L \cap V^\ddagger, V\setminus (L \cap V^\dagger))$ is a cut of $G$ with cut-set in $(\partial_G(V')\setminus E')\cup E''$ that partitions $T$ the same as $(V', V\setminus V')$.
\end{proof}

\begin{lemma}\label{lem:cut_subgraph_replace}
Let $G = (V, E)$ be a connected graph, and $T\subset V$ be a set of vertices.
Let 
$(V', V \setminus V)$ be a cut of graph $G$,
and $V^\star$ be a set of vertices such that $G[V^\star]$ is a connected graph
with $E' = \partial_G(V')\big{|}_{G[V^\star]} \neq \emptyset$. 
Let $E''$ be another set of edges on $G[V^\star]$ such that $E''$ induces a cut that partitions $(T\cup \endpoints(\partial_G(V^\star)))\cap V^\star$ the same way that the cut induced by $E'$ on $G[V^\star]$ does. 
Then $(\partial_G(V') \setminus E') \cup E''$ induces a cut that partitions $T$
same as $(V', V\setminus V')$ does.

\end{lemma}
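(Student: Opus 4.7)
} The plan is to construct, explicitly, a subset $W\subseteq V$ whose boundary $\partial_G(W)$ equals $(\partial_G(V')\setminus E')\cup E''$ and which agrees with $V'$ on $T$. Let $S=(T\cup\endpoints(\partial_G(V^\star)))\cap V^\star$, and let $(A, V^\star\setminus A)$ be the cut of $G[V^\star]$ whose cut-set is $E''$ and which partitions $S$ the same way that the cut induced by $E'$ does; without loss of generality we may take $A\cap S=V'\cap S$. I would then define
\[
W := (V'\setminus V^\star)\cup A,
\]
and verify the two required properties.

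First I would check that $W\cap T=V'\cap T$. For vertices outside $V^\star$ this is immediate from the definition of $W$; for vertices inside $V^\star$, note that $T\cap V^\star\subseteq S$, so $W\cap T\cap V^\star=A\cap T\cap V^\star = A\cap S\cap T = V'\cap S\cap T = V'\cap T\cap V^\star$.

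The main computational step is to show $\partial_G(W)=(\partial_G(V')\setminus E')\cup E''$, which I would prove by splitting every edge of $G$ into three categories according to where its endpoints lie. Edges with both endpoints in $V^\star$ are crossing $W$ exactly when they cross the $(A,V^\star\setminus A)$ cut, so they contribute precisely $E''$. Edges with both endpoints in $V\setminus V^\star$ are crossing $W$ exactly when they cross the $(V'\setminus V^\star,(V\setminus V')\setminus V^\star)$ cut, so they contribute precisely $\partial_G(V')\big|_{G[V\setminus V^\star]}$. For a crossing edge $e=(x,y)$ with $x\in V^\star$ and $y\in V\setminus V^\star$, note that $x\in\endpoints(\partial_G(V^\star))\subseteq S$, so $x\in A$ iff $x\in V'$; hence $e$ crosses $W$ iff $e$ crosses $V'$, contributing exactly $\partial_G(V')\cap\partial_G(V^\star)$. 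Summing the three contributions gives $E''\cup(\partial_G(V')\setminus E')$, as desired.

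The only genuinely delicate point is the third category: one must recognise that the $V^\star$-endpoint of every $\partial_G(V^\star)$-edge automatically lies in $S$, which is precisely why the hypothesis that $A$ and $V'$ agree on $S$ (rather than on some smaller set) is strong enough to force the crossing edges to behave identically under both cuts. Once this observation is in place, the verification is purely bookkeeping, and the lemma follows immediately from the two displayed equalities.
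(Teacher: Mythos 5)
Your proof is correct and follows essentially the same route as the paper: you build the same vertex set $W=(V'\setminus V^\star)\cup A$ (the paper's $(V'\setminus V^\star)\cup V^\dagger$), verify its boundary is exactly $(\partial_G(V')\setminus E')\cup E''$ by classifying edges inside $V^\star$, inside $V\setminus V^\star$, and across $\partial_G(V^\star)$, and then check agreement with $V'$ on $T$. Your explicit treatment of the $\partial_G(V^\star)$ edges (using that their $V^\star$-endpoints lie in $S$) and of the orientation of $A$ is a slightly more careful write-up of the same argument.
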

\begin{proof}
We know $E'$ induces cut $(V'\cap V^\star, (V\cut V')\cap V^\star)$ on $G[V^\star]$. Suppose $E''$ induces cut $(V^\dagger, V^\star\cut V^\dagger)$ on $G[V^\star]$.

$\partial_G(V')\cut E'$ induces a cut on $G[V\cut V^\star]$ because a cut-set restricted on any induced subgraph induces a cut on that induced subgraph that partitions $\endpoints(\partial_G(V\setminus V^\star)) \cap (V \setminus V^\star)$ into $\endpoints(\partial_G(V\setminus V^\star)) \cap (V \setminus V^\star) \cap V'$ and $\endpoints(\partial_G(V\setminus V^\star)) \cap (V \setminus V^\star) \cap (V\setminus V')$. 
Since both $E''$ and $E'$ induce cuts on $G[V^\star]$ 
that partition $\endpoints(\partial_G(V\setminus V^\star)) \cap  V^\star$ into $\endpoints(\partial_G(V\setminus V^\star)) \cap  V^\star \cap V'$ and $\endpoints(\partial_G(V\setminus V^\star)) \cap V^\star \cap (V\setminus V')$,
$\partial_G(V')\cut E'\cup E''$ induces a cut $((V'\cut V^\star) \cup V^\dagger, ((V\cut V')\cut V^\star)\cup(V^\star\cut V^\dagger))$ on $G$.

Then it suffices show that the cut induced by $\partial_G(V')\cut E'\cup E''$ partitions $T$ the same way that $(V', V\cut V')$ does. 

For any $t\in T$, one of the following holds:
\begin{itemize}
    \item $t\in V'\cut V^\star$. 
    \item $t\in (V\cut V')\cut V^\star$.
    \item $t\in V'\cap V^\star$. Then $t \in V^\dagger$.
    \item $t\in (V\cut V')\cap V^\star$. Then $t\in V^\star\cut V^\dagger$.
\end{itemize}
Therefore, $(V'\cut V^\star\cup V^\dagger, (V\cut V')\cut V^\star\cup(V^\star\cut V^\dagger))$ and $(V', V\cut V')$ partition $T$ the same way.
\end{proof}

\section{Fully Dynamic Spanning Forest And Contracted Graph}\label{sec:spanning_forest_contraction}
We review the results on fully dynamic spanning forest~\cite{nanongkai2017dynamica, nanongkai2017dynamic, wulff2017fully} and 
contracted graph \cite{henzinger1997fully, holm2001poly} used in this paper. \\

Combining the development on fully dynamic minimum spanning forest~\cite{nanongkai2017dynamica, nanongkai2017dynamic, wulff2017fully}, 
recently Chuzhoy et al.~\cite{chuzhoy2019deterministic} gave a deterministic fully dynamic algorithm for minimum
spanning forest in subpolynomial update time. 

\begin{theorem}[cf. Corollary 7.2 of ~\cite{chuzhoy2019deterministic}]
There is a deterministic fully dynamic minimum spanning forest algorithm on a $n$-vertex $m$-edge graph 
with 
$\widehat O(m)$ preprocessing time 
and $2^{O(\log n \log \log n)^{2/3}}$ worst case update time. 
\end{theorem}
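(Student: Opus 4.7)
The final statement (Chuzhoy et al.'s deterministic fully dynamic MSF) is cited rather than proved in this paper, so any proposal must sketch how the original authors would have established it using ingredients already introduced in this excerpt, namely the deterministic expander decomposition of Theorem~\ref{thm:expander_decomposition} and deterministic expander pruning of Theorem~\ref{thm:pruning}. The plan is to build a hierarchical edge decomposition of the dynamic graph in which each level is a disjoint union of $\phi$-expanders with $\phi = 1/\truefactor$, and to maintain a spanning forest restricted to each expander using pruning-based local updates, gluing the forests across levels to form a global MSF.

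First I would preprocess the graph by invoking $\expanderdecomposition$ recursively: on $G$ this yields a $(\phi,\phi/\truefactor)$-expander decomposition with at most $|E|\cdot\phi/\truefactor$ intercluster edges; these intercluster edges are recursed upon at the next level, producing an $O(\log n)$-deep hierarchy in which the total edge count per level decays by a subpolynomial factor. Within each expander-cluster, a spanning tree is maintained explicitly, and between levels we maintain a replacement-edge data structure so that when a tree edge is deleted, a replacement can be found by searching inside the corresponding expander. Edge insertions are handled greedily (rerouted to the top level), while deletions that destroy a tree edge trigger a call to expander pruning to locate a replacement while evicting at most $O(k/\phi)$ volume of ``bad'' vertices, where $k$ is the accumulated number of deletions within that expander.

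To convert amortized cost into worst case cost, I would use the standard $\parametertimes$-level rebuilding schedule sketched in Lemma~\ref{lem:fully_framework}: maintain $O(2^{\parametertimes})$ shifted copies of each level of the hierarchy, and whenever the pruned volume inside some expander exceeds a constant fraction of the cluster, initiate a background reconstruction that amortizes over $\Omega(k/\phi)$ subsequent updates, so no single update pays more than a subpolynomial factor per hierarchy level. Multiplying the $O(\log n)$ depth by the subpolynomial per-level cost yields the $2^{O(\log n\log\log n)^{2/3}}$ worst case bound, and the preprocessing time is $\widehat{O}(m)$ since each level's decomposition runs in $\widehat{O}(|E_{\text{level}}|/\phi^2)$.

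The main obstacle is the determinism requirement together with the worst case guarantee: the earlier randomized frameworks of \cite{nanongkai2017dynamica, nanongkai2017dynamic, wulff2017fully} use random sampling to find replacement edges quickly and to balance pruning among shifted copies, and both of these steps must be replaced. The sampling step is avoided by relying on the deterministic expander pruning of Theorem~\ref{thm:pruning} to enumerate candidate replacements along boundaries of pruned sets, while the balancing of shifted copies is achieved by the schedule of Lemma~\ref{lem:fully_framework}. Verifying that deterministic pruning composes correctly across the levels, and that conductance degradation from repeated pruning stays within the $\truefactor$ budget, is the technical heart of the argument.
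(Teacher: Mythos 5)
The paper never proves this statement: it is imported verbatim as Corollary 7.2 of Chuzhoy et al.~\cite{chuzhoy2019deterministic} and used as a black box (its only role here is to yield the fully dynamic spanning forest routine of Corollary~\ref{corollary:fully_dynamic_spanning_tree}). So there is no internal proof to compare against; the only question is whether your sketch would itself establish the external theorem, and as written it would not. Three concrete gaps: (i) you never address \emph{minimality} — your hierarchy maintains a spanning forest inside each expander cluster, but a minimum spanning forest requires handling edge weights, which in the Nanongkai--Saranurak--Wulff-Nilsen framework (that Chuzhoy et al.\ derandomize) is done via a separate weight-bucketing and contraction scheme in the style of Holm et al.; none of that appears in your plan. (ii) Theorem~\ref{thm:pruning} does not do what you ask of it: pruning certifies that the unpruned part of an expander stays an expander after deletions, it does not locate replacement edges for deleted tree edges, and a deterministic replacement-edge search is precisely one of the main technical contributions of the cited work (the randomized versions use sampling here, which is why derandomization is nontrivial). (iii) Appealing to Lemma~\ref{lem:fully_framework} is anachronistic and circular in spirit — the cited theorem predates this paper, and its worst-case guarantee is obtained by deamortization interleaved into every level of the NSW hierarchy, not by a black-box online-batch-to-worst-case reduction.

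Your closing paragraph does correctly identify where the real difficulty lies (replacing the random sampling and the randomized balancing of shifted copies with deterministic expander tools), but naming the heart of the argument is not the same as supplying it; as a reconstruction of the cited proof the proposal remains a plausible outline with the decisive steps missing.
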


As a corollary, there is a deterministic fully dynamic spanning forest algorithm for a simple graph such that 
every edge deletion results at most one edge insertion in the spanning forest. 

\begin{corollary}\label{corollary:fully_dynamic_spanning_tree}
There is a deterministic fully dynamic spanning forest algorithm on a $n$-vertex $m$-edge simple graph with $\widehat O(m)$ preprocessing time 
and $2^{O((\log n \log \log n)^{2/3})}$ worst case update time 
such that for 
every update,
the update algorithm makes a change of at most two edges to the spanning forest, and the update algorithm returns a simple graph update sequence of length $O(1)$ regarding the change of the spanning forest. 
\end{corollary}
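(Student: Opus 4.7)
The plan is to invoke the fully dynamic minimum spanning forest algorithm from the preceding theorem as a black box, with carefully chosen distinct edge weights so that the maintained MSF is a valid spanning forest whose structure changes by only $O(1)$ edges per update. Concretely, I would assign each edge a distinct weight generated on the fly (for instance, an integer equal to the index of its most recent insertion), guaranteeing that at every moment the weighted graph has pairwise distinct edge weights. For such a weighted graph, the MSF is uniquely determined and, in particular, is a spanning forest of the underlying simple graph.

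The key structural observation is the standard fact that in a graph with distinct edge weights, each individual edge update causes at most one swap in the MSF. For an insertion of an edge $e$: if the endpoints of $e$ lie in different trees of the current MSF, then $e$ is added; otherwise $e$ closes a unique cycle, and by the cycle property the MSF either keeps $e$ and discards the cycle's unique maximum-weight edge, or leaves the MSF unchanged if $e$ itself is that maximum. For a deletion of an edge $e$: if $e$ is not in the MSF there is no change, while if $e$ is in the MSF the cut property implies that at most one replacement edge (the minimum-weight edge across the resulting cut, if any) is inserted into the MSF. In every case, the MSF changes by at most one removal and at most one insertion, i.e., at most two edges in total.

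Therefore, for each update to the underlying simple graph, the change to the maintained spanning forest can be described by a simple graph update sequence of length at most two, which is $O(1)$. Standard fully dynamic MSF implementations expose the swap whenever one occurs; if the black-box interface of the invoked theorem does not directly report these edges, we can wrap it in a thin bookkeeping layer that, after each update, inspects the membership status in the MSF of (i) the edge being updated and (ii) the candidate replacement edge returned by the update routine, and emits the corresponding forest edge insertions or deletions. Since only $O(1)$ edges can change, this wrapping adds only constant overhead per update. The preprocessing time $\widehat O(m)$ and the worst case update time $2^{O((\log n \log \log n)^{2/3})}$ are inherited directly from the invoked theorem.

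The only genuine obstacle is the interface point above: guaranteeing that the underlying MSF data structure either already exports the pair of changed forest edges per update or can be augmented to do so without asymptotic slowdown. This is not a structural difficulty, since the swap analysis guarantees there are at most two edges to report; it is purely a matter of plumbing. With this in place, the claimed fully dynamic spanning forest algorithm follows immediately from the black box.
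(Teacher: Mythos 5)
Your proposal is correct and takes essentially the same route as the paper, which states this as an immediate black-box consequence of the deterministic fully dynamic MSF theorem: any MSF (under distinct or arbitrary weights) is a spanning forest, and the standard cycle/cut swap analysis guarantees each update changes the forest by at most one deleted and one inserted edge, which the data structure can report as an $O(1)$-length update sequence. The weight-assignment and interface bookkeeping you describe are exactly the routine plumbing the paper leaves implicit.
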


The contraction technique 
developed by 
Henzinger and King~\cite{henzinger1997fully} and Holm et al.~\cite{holm2001poly} are widely used in dynamic graph algorithms. 
\begin{definition}\label{def:superedge}
Given a simple forest $F = (V, E)$ and a set of terminals $S\subseteq V$, 
the set of connecting paths of $F$ with respect to $S$, denoted by $\path_S(F)$, is defined as follows:
\begin{enumerate}
\item $\path_S(F)$ is a set of edge disjoint paths of $F$.
\item For any two terminals $u, v\in S$ that belong to the same tree of $F$, 
the path between $u$ and $v$ in $F$ is partitioned into several paths in $\path_S(F)$.
\item For any terminal $v \in S$ such that the tree of $F$ containing $v$ has at least two terminals, $v$ is an endpoint of some path in $ \path_S(F)$.
\item For any endpoint $v$ of some path in $\path_S(F)$, 
either $v \in S$, or $v$ is an end point of at least three paths in $\path_S(F)$.
\end{enumerate}
It is easy to verify that for fixed $F$ and $S$, $\path_S(F)$ is unique.

A pair of vertices $(u, v)$ is a \emph{superedge} of $F$ with respect to $S$ if $u$ and $v$ are two endpoints of some path in $\path_S(F)$.
We denote the set of superedges of $F$ with respect to $S$ by $\superedge_S(F)$.
Let $\contract_S(F)$ be the simple graph with the endpoints of $\superedge_S(F)$ as vertices and 
$\superedge_S(F)$ as edges.

For an edge $(x, y)$ in $F$ and an edge $(u, v) \in \contract_S(F)$,
we say $(u, v)$
is the superedge in \\$\contract_S(F)$ covering $(x, y)$ 
if the path between $u$ and $v$ in $\path_S(F)$
contains edge $(x, y)$.



\end{definition}

The contracted graph is defined by a simple graph $G$, a vertex partition $\calP$ of $G$, and a simple spanning forest of $G[\calP]$.
\begin{definition}[Contracted graph]\label{def:contraction}
Let $G = (V, E)$ be a graph, $\calP$ be a partition of $V$ such that $G[P]$ is connected for every $P \in \calP$, and $F$ be a spanning forest of $G[\calP]$.
The contracted graph of $G$ with respect to $\calP$ and $ F$, denoted by $\contract_{\calP, F}(G)$,
is a simple graph with the endpoints of $\superedge_{\endpoints\wrap{\partial_G(\calP)}}(F)$ as vertices,  and 
$\partial_G(\calP) \cup \superedge_{\endpoints\wrap{\partial_G(\calP)}}(F)$ as edges.
\end{definition}


The following properties hold for a contracted graph.

\def\terminal{\mathtt{terminal}}

\def\contractioninsert{\mathsf{Contraction}\textsf{-}\mathsf{Insert}}
\def\contractiondelete{\mathsf{Contraction}\textsf{-}\mathsf{Delete}}
\def\contractioninsertterminal{\mathsf{Contraction}\textsf{-}\mathsf{Insert}\textsf{-}\mathsf{Terminal}}
\def\contractiondeleteterminal{\mathsf{Contraction}\textsf{-}\mathsf{Delete}\textsf{-}\mathsf{Terminal}}
\def\contractionfindsuperedge{\mathsf{Contraction}\textsf{-}\mathsf{Covering}\textsf{-}\mathsf{Edge}}

\begin{claim}\label{claim:contracted_graph_basic}
For any simple graph $G = (V, E)$, vertex partition $\calP$, and spanning forest $F$ of $G[\calP]$,
the following conditions hold
\begin{enumerate}
\item For any two vertices $u$ and $v$ of $\contract_{\calP, F}(G)$, 
$u$ and $v$ are connected in $\contract_{\calP, F}(G)$
if and only if they are connected in $G$. 
\item 
The number of vertices and edges of $\contract_{\calP, F}(G)$ is at most $3|\partial_G( \calP)|$.
\end{enumerate}
\end{claim}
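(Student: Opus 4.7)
The plan is to prove the two parts separately, both reducing to careful translations between paths in $G$ and paths in $\contract_{\calP, F}(G)$.

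For part (1), the forward direction is almost a tautology: every edge of $\contract_{\calP, F}(G)$ is either an intercluster edge of $G$ (already an edge of $G$) or a superedge which, by Definition~\ref{def:superedge}, corresponds to an honest path in $F \subseteq G$, so any $u$-to-$v$ path in the contracted graph lifts to a $u$-to-$v$ walk in $G$. For the backward direction, I would take any $u$-to-$v$ path $\pi$ in $G$ and split it at its intercluster edges into intracluster segments. The intercluster edges survive verbatim in $\contract_{\calP, F}(G)$. Each intracluster segment lies inside some cluster $P$ and its two endpoints are either in $\endpoints(\partial_G(\calP)) \cap P$ or are $u$ or $v$ themselves, which are vertices of $\contract_{\calP, F}(G)$ by hypothesis. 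Because $F$ restricted to $P$ is a spanning tree of the connected graph $G[P]$, there is a unique path in $F|_P$ between these endpoints, and condition 2 of Definition~\ref{def:superedge} partitions this path into a sequence of superedges. Concatenating the intercluster edges with these superedge sequences yields the desired walk in $\contract_{\calP, F}(G)$.

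For part (2), write $k = |\partial_G(\calP)|$ and $S = \endpoints(\partial_G(\calP))$, so $|S| \leq 2k$. The edge set of $\contract_{\calP, F}(G)$ is $\partial_G(\calP) \cup \superedge_S(F)$, so it suffices to bound superedges and vertices. For each tree component of $F$ that contains $t \geq 2$ terminals, I restrict to the minimal subtree spanning those terminals. Its leaves all lie in $S$, so by the standard tree identity---in any tree with $\ell$ leaves the number of degree-at-least-$3$ vertices is at most $\ell - 2$---the ``smoothed'' tree on terminals plus branching vertices has at most $2t-2$ vertices and $2t-3$ edges, giving at most $2t-3$ superedges and at most $t-2$ branching vertices from that tree. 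Summing over tree components yields at most $2|S|$ superedges and at most $|S|$ branching vertices total, so the vertex count of $\contract_{\calP, F}(G)$ (terminals plus branching vertices) is at most $2|S|$, and substituting $|S| \leq 2k$ gives the stated linear bound in $|\partial_G(\calP)|$.

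The main obstacle I anticipate is a bookkeeping one rather than a conceptual one: terminals lying in a tree component of $F$ with no other terminal are not endpoints of any superedge in $\superedge_S(F)$, yet must still count as vertices of $\contract_{\calP, F}(G)$ so that each intercluster edge has well-defined endpoints. Lining the final constants up precisely with the stated $3|\partial_G(\calP)|$ bound (rather than a looser $O(|\partial_G(\calP)|)$) requires partitioning the terminal and branching contributions carefully across singleton-terminal trees, multi-terminal trees, and the intercluster edges themselves; the rest of the argument is a direct structural translation.
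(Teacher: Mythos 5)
Part (1) of your proposal is correct, and since the paper states Claim~\ref{claim:contracted_graph_basic} without giving a proof, the natural path-translation argument you give is exactly what is intended. One small imprecision: condition 2 of Definition~\ref{def:superedge} only speaks of pairs of \emph{terminals}, while the endpoints $u,v$ in the claim may be non-terminal superedge endpoints (branch vertices); you need the easy extension that the $F$-path between any two vertices of $\contract_S(F)$ lying in the same tree decomposes into full connecting paths, which follows from conditions 1--4 but should be stated. Your observation that boundary vertices lying in trees with a single terminal must be adjoined to the vertex set for the definition to make sense is also correct.

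The genuine gap is in part (2), and it is not the mere bookkeeping you promise to fix. Your own count gives at most $2|S|\le 4|\partial_G(\calP)|$ superedges, hence up to roughly $5|\partial_G(\calP)|$ edges and $4|\partial_G(\calP)|$ vertices, and you defer reaching the stated $3|\partial_G(\calP)|$ to a ``careful partitioning'' of contributions. No such partitioning exists: the decomposition forced by Definition~\ref{def:superedge} splits exactly at terminals and at branch vertices, so a tree with $t$ terminals can genuinely contribute $2t-3$ superedges, and this is attained. Concretely, take two clusters each spanned (in $F$) by an unrooted binary tree whose $k=|\partial_G(\calP)|$ leaves are precisely the endpoints of $k$ intercluster edges with all endpoints distinct; then the contracted graph has $k+2(2k-3)=5k-6$ edges and $4k-4$ vertices, which exceeds $3k$ once $k\ge 4$. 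So what your argument actually establishes is the linear bound with a constant of about $5$ for edges and $4$ for vertices, and the constant $3$ in the claim as literally stated cannot be met under the literal definitions (this is harmless downstream, where only the linear dependence on $|\partial_G(\calP)|$ is ever used, e.g.\ in Lemma~\ref{lem:sparsifier_basic}). The honest conclusion of your part (2) is therefore to state and prove the weaker explicit constant, or to record the worst-case example, rather than to assert that the stated constant follows from unspecified bookkeeping.
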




The following lemma  is implicit in Section 5 of \cite{holm2001poly} by Holm, de Lichtenberg and  Thorup. 

\def\updatecontraction{\textsc{Update-Contraction}}
\def\preprocesscontraction{\textsc{Preprocess-Contraction}}
\begin{lemma}\label{lem:maintain_super_edge}
Let $F = (V, E)$ be a simple forest and $S \subseteq V$. 
Assume $F$ and $S$ undergo the following updates
\begin{itemize}
\item $\contractioninsert(v)$: insert a vertex $v$ to $F$
\item $\contractiondelete(v)$: delete an isolated vertex $v$ from $F$
\item $\contractioninsert(u, v)$: insert edge $(u, v)$ to $F$ (if $u$ and $v$ are in different trees) 
\item $\contractiondelete(u, v)$: delete edge $(u, v)$ from $F$ (if edge $(u, v)$ is in  $F$) 
\item $\contractioninsertterminal(v)$: add $v$ to terminal set $S$ (if $v$ is not in $S$)
\item $\contractioninsertterminal(v)$: remove $v$ from terminal set $S$ (if $v$ is in $S$)
\end{itemize}
and query 
\begin{itemize}
\item $\contractionfindsuperedge(u, v)$:   return the superedge of $\contract_S(F)$  that covers edge $(u, v)$ of $F$ if such a superedge exists
\end{itemize} 
 such that after every update,  $F$ contains at most $n$ vertices, and $S$ is a subset of $V$.

There exist a preprocessing algorithm 
with $O(n \polylog (n))$ running time and another  algorithm to handle update and query operations with $O(\polylog (n))$ worst case update  and query time 
such that 
for each update, the update-and-query algorithm 
maintains $F, S$ accordingly, and outputs an $O(1)$-length simple graph update sequence 
 that updates $\contract_S(F)$ accordingly.
\end{lemma}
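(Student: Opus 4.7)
The plan is to represent the forest $F$ together with the marked set $S$ using top trees supporting link, cut, vertex mark/unmark, and path aggregation queries in $O(\polylog n)$ time each. The structural observation driving the argument is a characterization of superedge endpoints: a vertex $v$ is an endpoint of some path in $\path_S(F)$ exactly when $v$ lies in the Steiner subforest spanned by the terminals of its tree and either $v\in S$ or $v$ has at least three incident edges in that Steiner subforest. Thus superedges are in bijection with the maximal ``bare'' arcs of the Steiner subforest, those whose internal vertices are non-terminals of Steiner-degree exactly two, which immediately implies that every admissible update can affect only $O(1)$ superedges.

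For preprocessing, I would initialize top trees on an $n$-vertex edgeless forest, mark the initial terminals, and produce the (initially empty) $\contract_S(F)$ in $O(n\polylog n)$ time. Each cluster of the top tree is equipped with path-summary information sufficient to (i) identify the nearest endpoint-qualifying vertex (terminal or Steiner-branching vertex) toward each boundary, and (ii) point to the superedge of $\contract_S(F)$ that currently covers the cluster's path. Given these summaries, each update is handled by running the corresponding top-tree primitive and then reading off at most $O(1)$ superedges to insert or delete into $\contract_S(F)$: a terminal addition splits at most one covering superedge at the newly marked vertex; a terminal removal merges the (at most) two superedges incident to it if that vertex is not a branching point; an edge insertion may create one superedge or extend/merge at most two existing ones at the joined endpoints; an edge deletion destroys the unique covering superedge and exposes at most two new superedges, one in each resulting sub-forest; vertex insertion/deletion affects no superedge. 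A query $\contractionfindsuperedge(u,v)$ is answered by two path-summary lookups from $u$ and $v$ respectively, returning the two nearest endpoint-qualifying vertices along the maximal bare arc through $(u,v)$, which are precisely the endpoints of the covering superedge.

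The main obstacle is designing the cluster summary so that Steiner-degree status propagates correctly and locally under both link/cut and mark/unmark. The key invariant is that each cluster store, for each of its (up to two) boundary vertices, a flag indicating whether the ``away'' part of the cluster contains a terminal; combined with the boundary vertex's degree in $F$ and a small amount of local bookkeeping, this determines whether the vertex lies on the Steiner subforest, whether it is a branching point, and on which cluster-path its incident superedges lie. These flags compose associatively under top-tree cluster merging, placing them in exactly the framework developed by Holm, de Lichtenberg and Thorup~\cite{holm2001poly}. Once this summary is in place, each update touches $O(\log n)$ clusters along an ancestor path in the top tree, produces at most $O(1)$ insertions and deletions in $\contract_S(F)$ (yielding the promised $O(1)$-length simple graph update sequence), and admits the claimed $O(\polylog n)$ worst-case update and query time.
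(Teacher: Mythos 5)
Your approach is correct and is essentially the same top-tree argument that the paper defers to: this lemma is stated without proof and attributed to Section~5 of Holm, de Lichtenberg and Thorup~\cite{holm2001poly}, which is precisely the cluster-summary/top-tree machinery you invoke. (One small wording caveat: the characterization of superedge endpoints should additionally require the tree containing $v$ to have at least two terminals, equivalently that $v$ be incident to at least one Steiner-subforest edge; otherwise a lone terminal would be misclassified as an endpoint. This is easily patched and does not affect the structure of the argument.)
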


\section{\texorpdfstring{$c$}{c}-Edge Connectivity Sparsifier Data Structures}\label{sec:data_structure}
We introduce the data structures used to store graphs, vertex partitions, cut containment sets, and $c$-edge connectivity sparsifiers, and give the operations supported by the these data structures. 

\paragraph{Dynamic Graph Data Structure}


Let $G$ be a graph and $T$ be a set of vertex terminals in $G$.
We use $\datastructure(G, T)$ to represent a data structure that contains 
\begin{enumerate}
    \item a copy of the graph $G = (V, E)$ represented by adjacency list such that multiple edges connecting same pair of vertices are represented by an edge and its multiplicity
    \item  a terminal set $T$
    \item  a simple spanning forest $F$ of graph $G$
    \item  the simple contracted graph $\contract_T(F)$
\end{enumerate}
and supports the query operations and update operations as defined in Figure~\ref{fig:data_structure_query_updates} such that such that after each update, the data structure also returns an update sequence for $\contract_T(F)$, where $\contract_T(F)$ is the contraction of $F$ with respect to terminal set $T$ as Definition~\ref{def:contraction}.

\begin{figure}[ht]
\begin{framed}
\begin{flushleft}
Query operations: 
\begin{itemize}
    \item $\textsf{VertexNumber}(x)$: return the number of vertices  
    of the connected component  containing vertex $x$ in $G$
    \item $\textsf{Volume}(x)$: return the volume
    of the connected component  containing vertex $x$ in $G$
    
    \item  $\textsf{DistinctEdgeNumber}(x)$: return the number of distinct edges 
    of the connected component  containing vertex $x$ in $G$ (parallel edges are only counted once)
    \item  $\textsf{TerminalNumber}(x)$: return the number of terminals 
    of the connected component  containing vertex $x$ in $G$
    \item  $\textsf{ID}(x)$: return the id of the connected component containing vertex $x$ in $G$
    \item  $\textsf{OneTerminal}(x)$: return an arbitrary terminal in the connected component  containing vertex $x$ in $G$ if exists
\end{itemize}
Update operations:
\begin{itemize}
    \item  $\textsf{Insert}(x)$: insert  vertex $x$ to the graph
    \item  $\textsf{Delete}(x)$: delete  isolated vertex $x$ from the graph (if the vertex is a terminal of $T$, also delete it from the $T$)
    \item  $\textsf{Insert}(x, y, \alpha)$: insert edge $(x, y)$ with multiplicity $\alpha$ to the graph
    \item  $\textsf{Delete}(x, y)$: delete all the $(x, y)$ multiple edges  from the graph
    \item  $\textsf{Insert-Terminal}(x)$: insert vertex $x$ to the terminal set
    \item  $\textsf{Delete-Terminal}(x)$: delete vertex $x$ from the terminal set
\end{itemize}
\end{flushleft}
\end{framed}
\caption{Graph data structure query and update operations}\label{fig:data_structure_query_updates}
\end{figure}

When the terminal set and the corresponding contracted graph are irrelevant, we use $\datastructure(G)$ to denote a data structure for an arbitrary terminal set.

By Corollary~\ref{corollary:fully_dynamic_spanning_tree} and Lemma~\ref{lem:maintain_super_edge}, we have the following lemma for the graph data structure.  
\begin{lemma}\label{lem:data_structure_pre}
For a dynamic multigraph $G = (V, E)$ with at most $n$ vertices and $m$ distinct edges through the updates, there is a preprocessing algorithm to construct $\datastructure(G, T)$ for an arbitrary $T \subset V$ in time $\widehat O(n+m)$
and an algorithm supporting all the operations defined in Figure~\ref{fig:data_structure_query_updates} in time $O(n^{o(1)})$ such that for each update operation, the  algorithm outputs an update sequence of length $O(1)$ to update the corresponding $\contract_{T}(F)$ in $\datastructure(G, T)$.
\end{lemma}

\paragraph{One-Level Connectivity Sparsifier Data Structure}
We define the one-level connectivity sparsifier data structure based on the graph data structure.

Let $G$ be a graph, $\mathcal P$ be a vertex partition of $G$, 
and $\cc$ be 
the union of 
$(\partial_G(P), c)$-cut containment sets on graph $G[P]$ for each $P \in \calP$, which
is an $\IA_{G[\mathcal P]}\wrap{\partial_G(\calP)), t, q, c, c}$ set of $G[\mathcal P]$ for some $q \geq t > 0$.
Suppose $\cc$ is recursively constructed, and is partitioned into $E_1, \dots, E_c$ satisfying the recursive construction condition with respect to parameters $t_1, q_1, \dots, t_c, q_c$ as defined in Section~\ref{sec:update_ia_set}. 
Data structure $\oneleveldatastructure$ with respect to $G$, $\calP$ and $\cc$ contains the following graph data structures.  
\begin{enumerate}
    \item $\datastructure = \datastructure(G,\emptyset)$,
    \item  $\datastructure_0 = \datastructure(G[\mathcal P], \endpoints\wrap{\partial_G(\mathcal P)})$ and
    $\datastructure_0' = \datastructure(G[\mathcal P],\emptyset)$,
    \item $\datastructure_i = \datastructure\wrap{G[\mathcal P]\cut \wrap{\bigcup_{j=1}^{i} E_j}, \endpoints\wrap{\partial_G(\mathcal P)}\cup \endpoints\wrap{\bigcup_{j=1}^{i} E_j}}$ for all $1 \leq i \leq c$,
    \item $\datastructure_i' = \datastructure\wrap{G[\mathcal P]\cut \wrap{\bigcup_{j=1}^{i} E_j}, \emptyset}$
    for all $1 \leq i \leq c$.
\end{enumerate}

We use $\sparsifier(\oneleveldatastructure, \gamma)$ to denote the $\sparsifier(G, \calP, \cc, \gamma)$ for $G, \calP, \cc$ defined by $\oneleveldatastructure$.

\paragraph{Multi-Level Connectivity Sparsifier Data Structure}


A multi-level $c$-edge connectivity sparsifier data structure of graph $G$ 
is a collection of one-level connectivity sparsifier data structures, denoted by 
\[\multileveldatastructure = \set{\oneleveldatastructure^{(i)}}_{i = 0}^{\ell}.\]

For each $\oneleveldatastructure^{(i)} \in \multileveldatastructure{}$, let $G^{(i)}$ be the graph of $\datastructure$ in $\oneleveldatastructure^{(i)}$, $\calP^{(i)}$ be the partition induced by the connected components of $\datastructure_0$ of $\oneleveldatastructure^{(i)}$, and $\calQ^{(i)}$ be the partition induced by the connected components of $\datastructure_c$ of $\oneleveldatastructure^{(i)}$.
We say a multi-level $c$-edge connectivity sparsifier data structure $\multileveldatastructure$ is a $(\phi, \eta, \gamma)$ multi-level $c$-edge connectivity sparsifier data structure if $\{(G^{(i)}, \calP^{(i)}, \calQ^{(i)})\}_{i = 0}^\ell$ is a $(\phi, \eta, \gamma)$ multi-level $c$-edge connectivity sparsifier as defined in Definition~\ref{def:multi_sparifier_intro}.

\section{Implementation of Decremental Cut Containment Set Update Algorithm}\label{sec:implementation}
With the supporting operations of the graph data structure, we give the detailed implementation of the algorithms in Section~\ref{sec:update_ia_set}. For an input graph $G = (V, E)$, we assume  $\abs V = n$ and $\abs E = m$.

\paragraph{Basic Operations}
We give the detailed implementation of the operations in Lemma~\ref{lem:subroutine_repair_set} and prove Lemma~\ref{lem:subroutine_repair_set}. \\


Algorithm~\ref{alg:atomic_cut_verification} determines if a subset of edges of the graph is the cut-set of an atomic cut. The algorithm makes use of the the following properties of atomic cuts:
If a set of edges $E_0$ is the cut-set of an atomic cut, then 
\begin{itemize}
    \item All the edges in $E_0$ are in the same connected component. 
    \item If we remove $E_0$ from the graph, the connected component containing $E_0$ is split into two connected components, and the endpoints of each edge in $E_0$ are in different connected components.
\end{itemize}


\begin{algorithm}[H]
\caption{\atomiccutverification($\datastructure(G), E_0$)}\label{alg:atomic_cut_verification}
\SetKwProg{Fn}{Function}{:}{}
\SetKw{Continue}{continue}
\SetKw{Break}{break}
\SetKwFunction{enumcuts}{Enumerate-Cuts}
\SetKwFunction{stcut}{Find-$S$-$T$-Cut}

\Input{$\datastructure(G)$: dynamic data structure of graph $G$ \newline
$E_0$: a non-empty set of edges}
\Output{$\mathtt{true}$ if $E_0$ induces an atomic cut; $\mathtt{false}$ otherwise}
    
$U\gets \emptyset, W\gets \emptyset$\;
\For {every edge $(x, y) \in E_0$} {
    $id_x \gets \datastructure(G).\textsf{ID}(x)$,
    $id_y \gets \datastructure(G).\textsf{ID}(y)$,
    $W \gets W \cup \{id_x, id_y\}$\;
}
\lIf {$|W| > 1$} {\Return $\mathtt{false}$}

         \lFor{every edge $(x, y) \in E_0$}{ run $\datastructure(G).\textsf{Delete}(x, y)$}


\For {every edge $(x, y) \in E_0$} {
    $id_x \gets \datastructure(G).\textsf{ID}(x)$, 
    $id_y \gets \datastructure(G).\textsf{ID}(y)$\;
    \lIf {$id_x = id_y$} 
    {reverse all the changes made on $\datastructure(G)$ on line 5 and  \Return $\mathtt{false}$}
    $U \gets U \cup \{id_x, id_y\}$\;
}
reverse all the changes made on $\datastructure(G)$ on line 5\;
\Return $\mathtt{true}$ if $|U| = 2$, otherwise \Return $\mathtt{false}$
\end{algorithm}

\begin{lemma}\label{lem:cut_verification}
Given access to the graph data structure for graph $G = (V, E)$ with at most $n$ vertices, a set of edges $E_0 \subset E$, 
algorithm  $\atomiccutverification$  determines whether $E_0$ forms an atomic cut of some connected component of $G$ with running time 
$O(\abs{E_0}n^{o(1)})$.
\end{lemma}
\begin{proof}
By the definition of atomic cuts, $E_0$ forms an atomic cut for some connected component of $G$ if and only if the following conditions hold
\begin{enumerate}
    \item The endpoints of edges in $E_0$ are in same connected component of $G$.
    \item After removing $E_0$ from $G$, for any edge $(x, y) \in E_0$, $x$ and $y$ are in different connected components.
    \item After removing $E_0$ from $G$, all the endpoints of $E_0$ belong to two connected components.
\end{enumerate}
By the description of the algorithm and Lemma~\ref{lem:data_structure_pre}, the lemma holds.
\end{proof}


Algorithm~\ref{alg:enumerate_simple} enumerates all the simple cuts of size at most $c$ such that the side containing vertex $x$ is of volume at most $t$ for a given vertex $x$.
    To enumerate all the desirable simple cuts, we repeat the following process for at most $c$ times:
    \begin{itemize}
        \item Perform a BFS starting at $x$.
        \item Pick an edge from the BFS tree, put it in the cut-set, remove the edge from the graph, and start another BFS at $x$.
    \end{itemize}
    Since the side containing $x$ is of volume at most $t$, the BFS stops if the volume of vertices visited is greater than $t$. \\
    

\def\findcut{\textsc{Find-Cut}}

\begin{algorithm}[H]
\caption{\enumeratecuts$(\datastructure(G), x, c, t)$}\label{alg:enumerate_simple}

\Input{$\datastructure(G)$: dynamic data structure of graph $G$ \newline $x$: a vertex of $G$ \newline $c, t$: parameters}
\Output{$H = \set{V'\subset V}$: a set of vertex sets of $G$ such that
for each $V' \in H$, $x\in V'$, $\vol_G(V') \leq t$, and $(V', V\cut V')$ is simple cut of size at most $c$}

\SetKwProg{Fn}{Function}{:}{}
$H\gets \emptyset$\;
run \findcut($\datastructure(G), x, c, t, \emptyset$)\;
\Return $H$\;
\SetKwProg{Fn}{Function}{:}{}

\Fn{\findcut($\datastructure(G), x, i, t, F$)}{
    $V'\gets \emptyset$\;
    run BFS from vertex $x$ on $G \setminus F$ and put all the visited vertices to $V'$, until BFS stops or $\vol(V')> t$\; 
    \If {$\vol(V') \leq t$ and $\partial_G(V') = F$} {
        $H   \gets \set{V'}\cup H$ and 
        \Return\;
    }
    \If {$i = 0$} {\Return\;}
    $T\gets$ a BFS tree of $G \setminus F$ rooted at $x$ until BFS stops or the volume of the vertices in the BFS tree is at least $t + 1$\;
    \For{every edge $(x, y)$ of T} {
        \findcut($\datastructure(G), x, i-1, t, F \cup \{(x, y)\}$)\;
    }
}

\end{algorithm}

\begin{lemma}\label{lem:enumerate_cut}
For a connected graph $G = (V, E)$,  
two integers $c, t$, 
and a vertex $x$ of $G$, 
there are at most $t^c$ simple cuts $(V', V \setminus V')$ of size at most $c$ satisfying $x \in V'$, $\vol_G(V')\leq t$.

Moreover, given access to the graph data structure of $G$,
algorithm $\enumeratecuts$ outputs all these cuts that are represented by vertex sets of the side containing $x$ with $O(t^{c + 2} \poly(c))$ running time.

\end{lemma}
\begin{proof}
Let $(V', V \cut  V')$ be a simple cut of size at most $c$ such that $x \in V'$, $\vol_G(V') \leq t$. 
Let $U$ be a proper subset of edges of $\partial_{G}(V')$. If we run BFS on $x$ for graph $G \setminus U$ until BFS stops or the volume of the visited vertices is at least $t + 1$, 
at least one of the edge in the corresponding BFS tree belong to $\partial_G(V') \setminus U$. 
Hence, all required cuts are enumerated. 
The number of cuts and the running time of the algorithm are obtained by the algorithm.
\end{proof}

Now we present our algorithm such that given three sets $T_1, T_2$, and $T'$ satisfying $T' \subset T_1\cup T_2$, 
the algorithm enumerates all the cuts of size at most $c$ that partitions $T_1\cup T_2$ into $T'$ and $(T_1 \cup T_2) \setminus T'$. 

\vspace{.3cm}

\begin{algorithm}[H]
\caption{\enumeratenonsimplecuts$(\datastructure(G, T_1), \datastructure(G, T_2), T', c, t)$}\label{alg:enumerate_cut}

\Input{$\datastructure(G, T_1), \datastructure(G, T_2)$: dynamic data structures of graph $G$ with $T_1$ and $T_2$ as terminals\newline
$T'$: vertex set such that $T' \subset T_1 \cup T_2$ \newline $c, t$: parameters}
\Output{$H = \set{V'\subset V}$: a set of vertex sets of $G$ such that for every $V'$, $(V', V\cut V')$ is a $(T', (T_1\cup T_2) \setminus T', t)$-cut of size at most $c$ satisfying that every connected component of $G[V']$ contains a vertex from $T'$}
\If {$\vol(G[T']) > t$ } {\Return $\emptyset$\;}
$H\gets \emptyset$, $U\gets \emptyset$\;
\For {every $x \in T'$}
{
$ U\gets U \cup \enumeratecuts(\datastructure(G, T_1), x, c, t)$\;
}
\SetKwProg{Fn}{Function}{:}{}


\For {every $k\leq c$}{
\If{there exist $V_1, \ldots, V_k\in U$ such that $\wrap{\bigcup_{i=1}^k V_i} \cap (T_1\cup T_2)= T'$ and $\vol\wrap{\bigcup_{i=1}^k V_i} \leq t$ and $V_i \cap V_j = \emptyset$  for every $1 \leq i < j \leq k$ }
{
$H\gets H\cup \set{\bigcup _{i=1}^k V_i}$
}
}
\Return $H$\;
\end{algorithm}

\begin{lemma}\label{lem:enumerate_non_simple_cut}
For a connected graph $G = (V, E)$, 
two integers $c, t$,
and three vertex sets $T_1, T_2, T' \subset V$ satisfying $T' \subset T_1 \cup T_2$, 
there are at most \[O(t^{c(c+1)})\] $(T', (T_1\cup T_2) \setminus T', t)$-cuts $(V', V \setminus V')$ of size at most $c$ such that every connected component of $G[V']$ contains at least one vertex from $T'$. 

Moreover, given access to the graph data structures $\datastructure(G, T_1)$, $\datastructure(G, T_2)$, and $c, t, T'$, there is an 
algorithm $\enumeratenonsimplecuts$ to output all these cuts that are represented by vertex sets of the side containing $T'$ in \[O(t^{c(c+1) + 1} \poly(c))\] time. 

\end{lemma}

\begin{proof}
If $\vol_G\wrap{T'} > t$, there is no $(T', (T_1\cup T_2)\setminus T', t)$-cut. 
It is easy to verify that 
for a $V'\in H$, $(V', V\setminus V')$  is a 
$(T', (T_1\cup T_2)\setminus T', t)$-cut of size at most $c$  satisfying that every  connected component of $V'$ contains at least one vertex from $T'$.

Let $(V', V\setminus V')$ be a 
$(T', (T_1\cup T_2)\setminus T', t)$-cut of size at most $c$ such that every  connected component of $V'$ contains at least one vertex from $T'$.
For every $V^\star$ that forms a connected component of $G[V']$,
$(V^\star, V\setminus V^\star)$ is a simple cut of size at most $c$ satisfying \[V^\star\cap T' \neq \emptyset and  \vol_G(V^\star)\leq t.\]
Hence, $V'$ is in $H$. So $H$ contains all  $V'$ such that $(V', V\setminus V')$ is a $(T', (T_1\cup T_2) \setminus T', t)$-cut of size at most $c$ satisfying that every connected component of $G[V']$ contains at least one vertex from $T'$.

By Lemma~\ref{lem:enumerate_cut}, the number of vertex sets in $H$ is at most $O(t^{c(c+1)})$, and running time of the algorithm is $O\wrap{t^{c(c+1) + 1}\poly(c)}$. 
\end{proof}

We use the following algorithm to determine if two given atomic cuts are parallel. 
\vspace{.3cm}

\begin{algorithm}[H]
\caption{\textsc{Parallel}$(\datastructure(G, S), E_1, E_2, c)$}\label{alg:parallel}

\Input{$\datastructure(G, S)$: dynamic data structure of graph $G$ with $S$ as terminals \newline 
$E_1, E_2$: set of edges that form atomic cuts of size at most $c$ \newline $c$: parameter}
\Output{$\mathtt{true}$ if the partitions of $S$ induced by removing $E_1$ and $E_2$ are parallel, or $\mathtt{false}$ if not}

    $P,  P_1,  P_2\gets\emptyset$\;
    \For{every $(x, y)\in E_1\cup E_2$}{
    $\datastructure(G, S).\textsf{Delete}(x, y)$\label{alg4:line1}\;
    $P \gets P\cup \set{\datastructure(G, S).\mathsf{OneTerminal}(y), \datastructure{(G, S)}.\mathsf{OneTerminal}(z)}$\;
    }
    
    reverse all deletions done in line~\ref{alg4:line1}\;
    \For{every $(x, y)\in E_1$} {
        $\datastructure(G, S).\textsf{Delete}(x, y)$\label{alg4:line2}\;

    }
    $z\gets $ an arbitrary endpoint of an arbitrary edge in $E_1$, $P_1 \gets \{z\}$\;
    \For{every $s\in P$} {
        \If{$\datastructure{(G, S)}.\mathsf{ID}(s) = \datastructure{(G, S)}.\mathsf{ID}(z)$} {
            $P_1 \gets P_1\cup\set s$
        }
    }
    reverse all deletions done in line~\ref{alg4:line2}\;

    \For{every $(x, y)\in E_2$} {
        $\datastructure(G, S).\textsf{Delete}(x, y)$\label{alg4:line3}\;
        
    }
    $z\gets $ an arbitrary endpoint of an arbitrary edge in $E_2$, $P_2 \gets \{z\}$\;
    \For{every $s\in  P$}{ 
        \If{$\datastructure{(G, S)}.\mathsf{ID}(s) = \datastructure{(G, S)}.\mathsf{ID}(z)$} {
            $ P_2 \gets  P_2\cup\set s$
        }
    }
    reverse all deletions done in line~\ref{alg4:line3}\;
    \If{$P_1\subset P_2$ or $(P\setminus P_1)\subset P_2$
    or $P_1\subset (P\setminus P_2)$ or
    $(P\setminus P_1)\subset (P\setminus P_2)$} {\label{alg4:line20}
        \Return \texttt{true}\;
    }
    \Return\texttt{false}
\end{algorithm}
\begin{lemma}\label{lem:algo_parallel_s}
Given access to $\datastructure{(G, S)}$ for a connected graph $G$ of at most $n$ vertices, a parameter $c$ and two sets of edges $E_1$, $E_2$ that form atomic cuts of size at most $c$, algorithm \textsc{Parallel} determines if the bipartitions of $S$ induced by $E_1$ is parallel with the bipartitions of $S$ induced by $E_2$ in $O(n^{o(1)} (|E_1| + |E_2|))$ time.
\end{lemma}
\begin{proof}
Let $C_1$ denote the cut induced by $E_1$ and $C_2$ denote the cut induced by $E_2$. 
Since vertices of every connected component of $G\setminus (E_1 \cup E_2)$ are on the same side of $C_1$ as well as $C_2$,
$E_1$ and $E_2$ induce parallel bipartitions on $S$ if and only if they induce parallel bipartitions on $P$. 
Since $\abs {E_1}\leq c$ and $\abs {E_2}\leq c$, there are at most $O(c)$ connected components in $G \setminus (E_1\cup E_2)$, and thus $\abs{P} = O(c)$.

Since both $E_1$ and $E_2$ form atomic cuts, $E_1$ partitions $P$ into $(P_1, P\cut P_1)$, and $E_2$ partitions $P$ into $(P_2, P\cut P_2)$. By definition, we can determine whether these two partitions are parallel by checking if one of $P_1$ and $P \cut P_1$ is a subset of either $P_2$ and $P \cut P_2$ (line~\ref{alg4:line20}). If so, then the two partitions are parallel.

The running time is obtained by Lemma~\ref{lem:data_structure_pre}.
\end{proof}

\begin{algorithm}[H]
\caption{\sequivalent$(\datastructure(G, S), E_1,  E_2, c)$}\label{alg:equivalent}
\SetKwProg{Fn}{Function}{:}{}
\SetKw{Continue}{continue}
\SetKw{Break}{break}
\SetKwFunction{enumcuts}{Enumerate-Cuts}
\SetKwFunction{stcut}{Find-$S$-$T$-Cut}
\SetKwInOut{Requirement}{Requirement}

\Input{$\datastructure(G, S)$: dynamic data structure of graph $G$ with $S$ as terminals \newline 
$E_1,  E_2$: edge sets that form atomic cuts of size at most $c$ \newline $c$: parameter}
\Output{$\mathtt{true}$ if the two atomic cuts formed by $E_1$ and $E_2$ induce the same partition of $S$, or $\mathtt{false}$ if not}
$A_1 \gets \emptyset, B_1 \gets \emptyset, A_2 \gets \emptyset, B_2 \gets \emptyset$, $b \gets \mathtt{true}$, $v\gets \datastructure(G, S).\mathsf{OneTerminal}(x)$ for an arbitrary $x \in \endpoints(E_1)$\; 


    run $\datastructure(G, S).\textsf{Delete}(x, y)$ for every $(x, y) \in E_1$\;
\For {every $(x, y) \in E_1$} {
    $id_1 \gets \datastructure(G, S).\textsf{ID}(x)$,
    $id_2 \gets \datastructure(G, S).\textsf{ID}(y)$, $id \gets \datastructure(G, S).\textsf{ID}(v)$\;

    \If {$id = id_1$} {
        $A_1 \gets A_1 \cup \{x\}, B_1 \gets B_1 \cup \{y\}$ 
    }
    \Else {
            $A_1 \gets A_1 \cup \{y\}, B_1 \gets B_1 \cup \{x\}$ 
    }
}
reverse all the change made on $\datastructure(G, S)$ in line 2 \;

    run $\datastructure(G, S).\textsf{Delete}(x, y)$ for every $(x, y) \in E_2$\;
\For {every $(x, y) \in E_2$} {
    $id_1 \gets \datastructure(G, S).\textsf{ID}(x)$,
    $id_2 \gets \datastructure(G, S).\textsf{ID}(y)$, $id \gets \datastructure(G, S).\textsf{ID}(v)$\;

    \If {$id = id_1$} {
        $A_2 \gets A_2 \cup \{x\}, B_2 \gets B_2 \cup \{y\}$ 
    }
    \Else {
            $A_2 \gets A_2 \cup \{y\}, B_2 \gets B_2 \cup \{x\}$ 
    }
}

    run $\datastructure(G, S).\textsf{Delete}(x, y)$ for every $(x, y) \in E_1$\;

\For {every $x\in A_1, y \in B_2$ 
or $x \in B_1, y \in A_2$
such that $\datastructure(G, S).\mathsf{ID}(x) = \datastructure(G, S).\mathsf{ID}(y) $} {
\If {$\datastructure(G, S).\mathsf{TerminalNumber}(x) > 0$} {$b \gets \mathtt{false}$\;}
}

reverse all the change made on $\datastructure(G, S)$ at line 10 and 17\;

\Return $b$\;
\end{algorithm}
\begin{lemma}\label{lem:algo_equivalent_atomic}
Given access to $\datastructure{(G, S)}$ for a connected graph $G$ of at most $n$ vertices, a parameter $c$ and two sets of edges $E_1$, $E_2$ that form atomic cuts of size at most $c$, algorithm $\sequivalent$ determines if the bipartitions of $S$ induced by $E_1$ is the same as the bipartitions of $S$ induced by $E_2$ in $O(n^{(1)} (|E_1|+|E_2|))$ time.
\end{lemma}
\begin{proof}
Let $C_1$ denote the cut induced by $E_1$ and $C_2$ denote the cut induced by $E_2$. 
By algorithm $\sequivalent$,
$A_1$ contains all the vertices in $\endpoints(E_1)$ that are on the same side of cut $C_1$ with $v$, and 
$B_1$ contains all the vertices in $\endpoints(E_1)$ that are on the different side of $C_1$ with $v$. 
Similarly, $A_2$ contains all the vertices in $\endpoints(E_2)$ that are on the same side of cut $C_2$ with $v$, and 
$B_2$ contains all the vertices in $\endpoints(E_2)$ that are on the different side of $C_2$ with $v$. 
Hence, $C_1$ and $C_2$ have the same bipartitions on $S$ if and only if no connected component of $G\setminus (E_1 \cup E_2)$ contains vertices from $S$, $A_1$, and $B_2$, and no connected component of $G\setminus (E_1 \cup E_2)$ contains vertices from $S$, $A_2$, $B_1$. 
The running time is obtained by Lemma~\ref{lem:data_structure_pre}.
\end{proof}

Lemma~\ref{lem:subroutine_repair_set} is obtained by Lemma~\ref{lem:cut_verification}, Lemma~\ref{lem:enumerate_cut}, 
Lemma~\ref{lem:enumerate_non_simple_cut},
Lemma~\ref{lem:algo_parallel_s}, and Lemma~\ref{lem:algo_equivalent_atomic}.

\begin{algorithm}[H]
\caption{\sequivalentsimplecut$(\datastructure(G, S), E_1, V_1, E_2, V_2, c, t)$}
\SetKwProg{Fn}{Function}{:}{}
\SetKw{Continue}{continue}
\SetKw{Break}{break}
\SetKwFunction{enumcuts}{Enumerate-Cuts}
\SetKwFunction{stcut}{Find-$S$-$T$-Cut}
\SetKwInOut{Requirement}{Requirement}

\Input{$\datastructure(G, S)$: dynamic data structure of graph $G$ with $S$ as terminals \newline 
$E_1, V_1, E_2, V_2$: $(t, c)$-realizable pairs $(E_1, V_1)$ and $(E_2, V_2)$ \newline $c, t$: parameters}
\Output{$\mathtt{true}$ if $(E_1, V_1)$ and $(E_2, V_2)$ are $S$-equivalent, or $\mathtt{false}$ if not}
$A_1 \gets \emptyset, B_1 \gets \emptyset, A_2 \gets \emptyset, B_2 \gets \emptyset$, $b \gets \mathtt{true}$\;


    run $\datastructure(G, S).\textsf{Delete}(x, y)$ for every $(x, y) \in E_1$\label{alg6:line2}\;
\For {every $(x, y) \in E_1$} {
    $id_1 \gets \datastructure(G, S).\textsf{ID}(x)$,
    $id_2 \gets \datastructure(G, S).\textsf{ID}(y)$, $id \gets \datastructure(G, S).\textsf{ID}(z)$ for an arbitrary $z \in V_1$\;

    \If {$id = id_1$} {
        $A_1 \gets A_1 \cup \{x\}, B_1 \gets B_1 \cup \{y\}$ 
    }
    \Else {
            $A_1 \gets A_1 \cup \{y\}, B_1 \gets B_1 \cup \{x\}$ 
    }
}
reverse all the change made on $\datastructure(G, S)$ in line~\ref{alg6:line2} \;

    run $\datastructure(G, S).\textsf{Delete}(x, y)$ for every $(x, y) \in E_2$\label{alg6:line10}\;

\For {every $(x, y) \in E_2$} {
    $id_1 \gets \datastructure(G, S).\textsf{ID}(x)$,
    $id_2 \gets \datastructure(G, S).\textsf{ID}(y)$, $id \gets \datastructure(G, S).\textsf{ID}(z)$ for an arbitrary $z \in V_2$\;

    \If {$id = id_1$} {
        $A_2 \gets A_2 \cup \{x\}, B_2 \gets B_2 \cup \{y\}$ 
    }
    \Else {
            $A_2 \gets A_2 \cup \{y\}, B_2 \gets B_2 \cup \{x\}$ 
    }
}

    run $\datastructure(G, S).\textsf{Delete}(x, y)$ for every $(x, y) \in E_1$\label{alg6:line17}\;

\For {every $x\in A_1, y \in B_2$ 
or $x \in B_1, y \in A_2$
such that $\datastructure(G, S).\mathsf{ID}(x) = \datastructure(G, S).\mathsf{ID}(y) $} {
\If {$\datastructure(G, S).\mathsf{TerminalNumber}(x) > 0$} {$b \gets \mathtt{false}$\;}
}

reverse all the change made on $\datastructure(G, S)$ at line \ref{alg6:line10} and \ref{alg6:line17}\;

\Return $b$\;
\end{algorithm}

\begin{lemma}\label{lem:sequivalent_algo}
Given access to $\datastructure(G, S)$
for a connected graph $G$ of at most $n$, two parameters $t, c$, 
and two $(t, c)$-realizable pairs $(E_1, V_1), (E_2, V_2)$,
algorithm $\sequivalentsimplecut$ determines if 
$(E_1, V_1)$ and  $(E_2, V_2)$ are atomic cut equivalent in 
$O\wrap{\poly(c) \cdot n^{o(1)}}$
time. 
\end{lemma}
\begin{proof}
By Definition~\ref{def:realizable_pair}, 
$(E_1, V_1)$ and $(E_2, V_2)$ are atomic cut equivalent if and only if the following two conditions hold: (Let $(L_1, V\setminus L_1)$ denote the cut induced by $E_1$ with $V_1 \subset L_1$, and $(L_2, V\setminus L_2)$ denote the cut induced by $E_2$ with $V_2 \subset L_2$.)
\begin{enumerate}
    \item Every connected component of $G\setminus (E_1 \cup E_2)$ that  contains vertices from both $\endpoints(\partial_G(E_1))\cap L_1$ and $\endpoints(\partial_G(E_2))\cap (V\setminus L_2)$ does not have vertices from $S$.
    \item Every connected component of $G\setminus (E_1 \cup E_2)$ that  contains vertices from both $\endpoints(\partial_G(E_1))\cap (V
    \setminus L_1)$ and $\endpoints(\partial_G(E_2))\cap L_2$ does not have vertices from $S$.
\end{enumerate}
Hence, algorithm $\sequivalent$ outputs $\mathtt{true}$ if and only if $(E_1, V_1)$ and $(E_2, V_2)$ are $S$-equivalent. 
The running time is obtained by Lemma~\ref{lem:data_structure_pre} and the description of the algorithm.
\end{proof}






\paragraph{Elimination Procedure} We present our implementation for the elimination procedure. 
\vspace{.3cm}

\begin{algorithm}[H]
\caption{\enumerateparallelcuts$(\datastructure(G, T), \datastructure(G, S), \Gamma, c, t)$}
\SetKwProg{Fn}{Function}{:}{}
\SetKw{Continue}{continue}
\SetKw{Break}{break}
\SetKwFunction{enumcuts}{Enumerate-Cuts}
\SetKwFunction{stcut}{Find-$S$-$T$-Cut}
\SetKwInOut{Requirement}{Requirement}

\Input{$\datastructure(G, T), \datastructure(G, S)$: dynamic data structures of graph $G$ with $T$ and $S$ as terminals  \newline $\Gamma$: a set of $(t, c)$-realizable pairs \newline  $c, t$: parameters}
\Output{$W$: A set of edges}
$W \gets \emptyset$\;

\For {every  $(E^\dagger, V^\dagger) \in \Gamma$} {
    $H_{V^\dagger} \gets \enumeratenonsimplecuts(\datastructure(G, T), \datastructure(G, S), V^\dagger \cap (T\cup S), \vol_G(V^\dagger), |\partial_G(V^\dagger)|)$\;
}

\Repeat {$\Gamma = \emptyset$} {
let $(E', V')\in \Gamma$ be an arbitrary pair such that there is no $(E'', V'') \in \Gamma$ satisfying $\endpoints(E'') \subset V'$ \;
$W \gets W \cup E'$\;
\For {every $(x, y) \in E'$} {$\datastructure(G, T).\textsf{Delete}(x, y)$\label{alg7:line8}\;}
\For {every  $(E^\dagger, V^\dagger) \in \Gamma$} {
    \If {there is a $V^\diamond \in H_{V^\dagger}$  s.t.  it is not the case that all of $\endpoints(\partial_G(V^\diamond))$ belong to the same connected component of $\datastructure(G, T)$} {
    remove $(E^\dagger, V^\dagger)$ from $\Gamma$\;
    }
    }
}
reverse all the changes made in line \ref{alg7:line8}\;

\Return $W$\;
\end{algorithm}

\begin{lemma}\label{lem:elimination_algo}
Let $G= (V, E)$ be a connected graph with at most $n$ vertices, and a set $\Gamma$ of $(t, c)$-realizable pair. 
Given access to $\datastructure(G, T)$, $\datastructure(G, S)$, $\Gamma$, $t, c$,
there is an algorithm $\textsc{Elimination}$ to implement the elimination procedure with running time 

$$O\wrap{|\Gamma|^2 \cdot t^{c(c+1) + 1}\cdot \poly(c)\cdot n^{o(1)}}$$
\end{lemma}

\begin{proof}
By the definition of elimination procedure and the description of algorithm $\enumeratenonsimplecuts$, algorithm $\enumerateparallelcuts$ implements the elimination procedure. By Lemma~\ref{lem:data_structure_pre} and  Lemma~\ref{lem:enumerate_non_simple_cut}, the running time of algorithm $\enumerateparallelcuts$ is 
$O\wrap{|\Gamma|^2 \cdot t^{c(c+1) + 1}\poly(c) n^{o(1)}}$. 
\end{proof}

\paragraph{Type one repair set}
We present our implementation of type one repair set algorithm. 
We start by an algorithm to find a maximal
 $(S, t, c)$-bipartition system for a given vertex set $S$. 
\vspace{.3cm}

\begin{algorithm}[H]\label{alg:bipart-system}
\caption{\bipartitionsystem($\datastructure(G, S), c, t$)}
\SetKwProg{Fn}{Function}{:}{}
\SetKw{Continue}{continue}
\SetKw{Break}{break}
\SetKwFunction{enumcuts}{Enumerate-Cuts}
\SetKwFunction{stcut}{Find-$S$-$T$-Cut}
\Input{$\datastructure(G, S)$: dynamic data structure of graph $G$ with $S$ as terminals \newline 
$c, t$: parameters}
\Output{$\mathcal{B}$: a $(S, t, c)$-bipartition system as defined in Definition~\ref{def:bipartition_system}}
$\mathcal B\gets \emptyset$ \;

\end{algorithm}
\clearpage
{\setlength{\interspacetitleruled}{0pt}%
\setlength{\algotitleheightrule}{0pt}%
\begin{algorithm}[H]
\setcounter{AlgoLine}{1}


\For{each $x\in S$, every $V'\in \enumeratecuts(\datastructure(G, S), x, c, t)$ and every $E' \subset \partial_G(V')$ s.t. \atomiccutverification$\wrap{\datastructure(G, S), E'}=\mathtt{true}$, $V'\cap S\neq S$, and the cut induced by $E''$ partitions  $S$ into two non-empty sets} {
        $b\gets \mathtt{true}$\;
        \For{every $y \in V'$, $V''\in \enumeratecuts(\datastructure(G, S), y, c, t)$ and $E'' \subset \partial_G(V'')$ s.t. \atomiccutverification$\wrap{\datastructure(G, S), E''}=\mathtt{true}$, $V''\cap S\neq \emptyset$, $V''\cap S\neq S$, and the cut induced by $E''$ partitions  $S$ into two non-empty sets} {
            \If {$(E'', V'') \in \mathcal{B}$ and  $\sequivalent(\datastructure{(G, S)}, E', E'', c, t) = \mathtt{true}$ and $V'\cap S$ and $V''\cap S$ are in the same connected component of $G\setminus E''$} {
                $b\gets \mathtt{false}$;
            }
            \If {$(E'', V'') \in \mathcal{B}$ and \textsc{Parallel}$(\datastructure{(G, S)}, E', E'', c, t) = \mathtt{false}$} {
                $b\gets \mathtt{false}$;
            }            
        }
        \If{$b=\mathtt{true}$}{ 
            $\mathcal  B\gets \mathcal B\cup\set{(E', V')}$\;
        
        }

}



\Return $\mathcal{B}$\;
\end{algorithm}}

\begin{lemma}\label{lem:algo_bipartite_system}
Given access to $\datastructure(G, S)$ for a  connected graph $G = (V, E)$ of at most $n$ vertices, a set of vertices $S$, and two integers $c, t$, algorithm $\bipartitionsystem$ computes a maximal $(S, t, c)$-bipartition system as  Definition~\ref{def:bipartition_system}
in 
$O(|S| (2t)^{2c+3}n^{o(1)})$ time.
\end{lemma}

\begin{proof}
    We show that the output of the algorithm is a maximal $(S, t, c)$-bipartition system as Definition~\ref{def:bipartition_system}.
    By the description of the algorithm and Lemma~\ref{lem:enumerate_cut}, 
    line 2 enumerated all the $(t, c)$-realizable pairs $(E', V')$ such that both cut $(V', V\setminus V')$ and the cut induced by $E'$ partition $S$ into two non-empty sets. Hence any bipartition system is a subset of the realizable pairs enumerated in line 2. 
    
    Note that a $(t, c)$-realizable pair $(E', V')$ cannot be added into a $(S, t, c)$-bipartition system if and only if there is a $(E'', V'')$ in the bipartition system such that $(E'', V'')$ and $(E', V')$ are atomic cut equivalent, or the cut induced by $E''$ is not parallel with the cut induced by $E'$.
    For either case, $V'' \cap V' \neq \emptyset$. 
    Hence, if $(E', V')$ cannot be added into $\mathcal{B}$, then Boolean variable $b$ is set to $\mathtt{false}$ at line 6 or 8. 
    Thus, there does not exist another realizable pair can be add into $\mathcal{B}$ at the end of the algorithm.


        

    
    Now we analyze the running time of the algorithm. 
    By Lemma~\ref{lem:enumerate_cut}, at most $|S|(2t)^c$ realizable pairs are enumerated at line 2, and for each iteration of the outer for loop, at most $(2t)^{c+1}$ cuts are enumerated. 
    By Lemma~\ref{lem:cut_verification}, Lemma~\ref{lem:enumerate_cut},  Lemma~\ref{lem:algo_parallel_s}, and Lemma~\ref{lem:algo_equivalent_atomic}, 
%
%
%
    the overall running time of the algorithm is $O(|S|(2t)^{2c+3} n^{o(1)})$.
    %
    %
    %
\end{proof}

We give our type one repair set algorithm. 
\vspace{.3cm}

\begin{algorithm}[H]
\caption{\typeonerepairset($\datastructure_1,
\datastructure_2, \datastructure_3, c, t$)}
\SetKwProg{Fn}{Function}{:}{}
\SetKw{Continue}{continue}
\SetKw{Break}{break}
\SetKwFunction{enumcuts}{Enumerate-Cuts}
\SetKwFunction{stcut}{Find-$S$-$T$-Cut}
\SetKwInOut{Requirement}{Requirement}

\Input{$\datastructure_1 = \datastructure(G, S),
\datastructure_2 = \datastructure(G, T),
\datastructure_3 = \datastructure(G \setminus (\IA_{G_0}(T_0, t, q_0, d, 2c)\big{|}_G), S)$ for some $q_0\geq t$, $d \geq 2c$\newline
$c, t$: parameters}
\Output{$W_1$: a set of edges}
$\mathcal{B}\gets \bipartitionsystem(\datastructure_1, c, t)$

 $W_1 \gets \emptyset$\;

\end{algorithm}
\clearpage
{\setlength{\interspacetitleruled}{0pt}%
\setlength{\algotitleheightrule}{0pt}%
\begin{algorithm}[H]
\setcounter{AlgoLine}{2}
\For{each $(E^\dagger, V^\dagger) \in \mathcal{B}$}{
$U[]\gets \emptyset$, 
$D \gets \emptyset$\;

\For {every $x \in V^\dagger$}{
    \For{every $V'$ in $\enumeratecuts(\datastructure_1, x, c, t)$ and every $E'$ subset of $\partial_G(V')$ s.t. \atomiccutverification$\wrap{\datastructure(G, S), E'}=\mathtt{true}$} {
    \If {$V' \cap S \neq \emptyset$ and $V'\cap S \neq S$
    and 
    $\sequivalent(\datastructure_1, E', V', E^\dagger, V^\dagger)=\mathtt{true}$
    and
    $\endpoints(\partial_G(V'))$ are in the same connected component of $\datastructure_3$ that contains a vertex from $S$
    } 
        {     
        put $\datastructure_3.\textsf{ID}(x)$ for any $x \in \endpoints(E')$ to $D$, and
        put $(V', E')$ into $U[\datastructure_3.\textsf{ID}(x)]$\;
        }

    }

}

$W_1\gets W_1 \cup  \partial_G(V^\dagger)$\;
\For {every $id \in D$} {
        
    $W_1 \gets W_1 \cup  \enumerateparallelcuts(\datastructure_2, \datastructure_1,   U[id], c, t)$\;

}
}

\Return $W_1$\;
\end{algorithm}}

\begin{lemma}\label{lem:type_one_repair_set_algo}
Let $G = (V, E)$ be a connected graph of at most $n$ vertices,  four positive integers $c, d, t, q_0$ such that $d \geq 2c + 1$ and $q_0 \geq t$, and two vertex sets $T, S \subset V$. 

Given access to $\datastructure_1 = \datastructure(G, S)$, 
$\datastructure_2 = \datastructure(G, T)$,
$\datastructure_3 = \datastructure(G \setminus (\IA_{G_0}(T_0, t, q_0, d, 2c) \cap E), S)$ for an $\IA_{G_0}(T_0, t, q_0, d, 2c)$ set, 
there is a $\typeonerepairset$ algorithm with 

$$O(|S|(2t)^{c^2 + 3c + 3}\poly(c)n^{o(1)})$$
running time to output a type one repair set $W_1$ of $|S|(8c^3 + 6c^2)$ edges for any $\IA_{G_0}(T_0, t, q, d, 2c+1)$ set such that the $\IA_{G_0}(T_0, t, q, d, 2c+1)$ set is derived from the $\IA_{G_0}(T_0, t, q_0, d, 2c)$ set.

\end{lemma}

\begin{proof}
By the definition of algorithm $\typeonerepairset$ and Lemma~\ref{lem:type-one-set-construct}, the output of algorithm $\typeonerepairset$ is a repair set of size $|S|(8c^3 + 6c^2)$ for any $\IA_{G_0}(T_0, t, q, d, 2c+1)$ set such that the $\IA_{G_0}(T_0, t, q, d, 2c+1)$ set is derived from the $\IA_{G_0}(T_0, t, q_0, d, 2c)$ set.

To bound the running time, by Lemma~\ref{lem:algo_bipartite_system}, the running time of line 1 is $O(|S| (2t)^{2c+3}n^{o(1)})$. By Claim~\ref{claim:num-bipart}, $\mathcal{B}$ contains less than $2|S|$ $(t, c)$-realizable pairs.
For each $(E^\dagger, V^\dagger) \in \mathcal{B}$, 
the running time of line 5-9 is $O((2t)^{c+1}n^{o(1)})$
by Lemma~\ref{lem:cut_verification} and Lemma~\ref{lem:sequivalent_algo}, 
and $|D| \leq 2$, $|U[id]| \leq (2t)^{c+1}$ for each $id\in CC$. 
So, the running time of line 11-12 is  $O((2t)^{c^2 + 3c + 3}\poly(c)n^{o(1)})$ 
by Lemma~\ref{lem:elimination_algo}. 

Hence, the overall running time of the algorithm is $O(|S|(2t)^{c^2 + 3c + 3}\poly(c)n^{o(1)})$.
%
%
%
%
%
\end{proof}

\paragraph{Type two repair set}
We give the implementation of our type two repair set algorithm (Algorithm~\ref{alg:type2}).

\begin{algorithm}[htb]
\caption{\typetworepairset($\datastructure_1,
\datastructure_2, \datastructure_3, \datastructure_4, c, t, q$)\label{alg:type2}}
\SetKwProg{Fn}{Function}{:}{}
\SetKw{Continue}{continue}
\SetKw{Break}{break}
\SetKwFunction{enumcuts}{Enumerate-Cuts}
\SetKwFunction{stcut}{Find-$S$-$T$-Cut}
\SetKwInOut{Requirement}{Requirement}

\Input{$\datastructure_1 = \datastructure(G, S),
\datastructure_2 = \datastructure(G, T),
\datastructure_3 = \datastructure(G \setminus (\IA_{G_0}(T_0, t, q_0, d, 2c), S),$ 
$\datastructure_4 = \datastructure(G \setminus (\IA_{G_0}(T_0, t, q, d, 2c+1), S)$ for some $q\geq q_0 \geq t$, $d \geq 2c$ \newline
$c, t, q$: parameters}
\Output{$W_2$: a set of edges}

 $W_2 \gets \emptyset$, 
 $Q\gets \emptyset$\, $\mathcal B = \emptyset$, $U\gets\emptyset$, $IDS\gets\emptyset$, $y\gets$ arbitary vertex of $S$\;
 
\For{every $s\in S$} {\label{alg10:line2}
$IDS\gets IDS\cup\set{\datastructure_3.\textsf{ID}(s)}$\;
    \For{every $V'$ in $\enumeratecuts{(\datastructure_1, s, c, q)}$} {
        $U\gets U\cup (V'\cap T)$\label{alg10:line5}
    }
   
}
 \For{every $u\in U$, every $V' \in \enumeratecuts{(\datastructure_1, u, c, q)}$ s.t. $V'\cap S = \emptyset$} {\label{alg10:line6}
             $T'\gets V'\cap T$, 
    $\alpha\gets \abs{\partial_G(V')}$, 
    $b\gets \mathtt{true}$\;
    \For {$V' \in \textsc{Enumerate-Cuts}{(\datastructure_1, \datastructure_2, T',\alpha, q)}$} {
    \If {$(V', V\setminus V')$ is of size smaller than $\alpha$ or $(V', V\setminus V')$ is of size $\alpha$ with cut-set not in a connected component of $\datastructure_4$} {
     $b\gets \mathtt{false}$\label{alg10:line10}
    }
    }
    \If {$T' \notin Q$ and $b = \mathtt{true}$ } {\label{alg10:line11}
        $Q\gets Q\cup \{T'\}$\;
        Find $E' \subset \partial_G(V')$ s.t. 
        $\atomiccutverification{(\datastructure_1, E')} = \mathtt{true}$ and $E'$ seperates $y$ and $T'$\;
        $x\gets$ an arbitrary vertex of $\endpoints(\partial_G(V'))$ \;
        $\mathcal{B}[\datastructure_3.\mathsf{ID}(x)] \gets \mathcal{B}[\datastructure_3.\mathsf{ID}(x)] \cup \{(E', V')\}$\;\label{alg10:line15}
    }

            }

\For{each $id\in IDS$} {\label{alg10:line16}
    $W_2\gets W_2\cup\textsc{Elimination}(\datastructure_1, \mathcal B[id],c, t)$\;\label{alg10:line17}
}


    


\Return $W_2$\;
\end{algorithm}

\begin{lemma}\label{lem:type_two_repair_set_algo}
Let $G = (V, E)$ be a connected graph of at most $n$, $T, S \subset V$ be two vertex sets, and $c, d, t, q, q_0$  be five positive integers such that $d \geq 2c + 1$ and $q \geq q_0 \geq t$.
Let $\datastructure_1, \datastructure_2, \datastructure_3$ and $\datastructure_4$ be graph data structures of $\datastructure(G, S)$, $\datastructure(G, T)$, 
$\datastructure(G \setminus (\IA_{G_0}(T_0, t, q_0, d, 2c)\big{|}_G), S)$ for an $\IA_{G_0}(T_0, t, q_0, d, 2c)$ set,
and $\datastructure(G \setminus (\IA_{G_0}(T_0, t, q, d, 2c+1)\big{|}_G), S)$ for an $\IA_{G_0}(T_0, t, q, d, 2c + 1)$ set that is derived from the $\IA_{G_0}(T_0, t, q_0, d, 2c)$ set.

Given access to $\datastructure_1, \datastructure_2, \datastructure_3$, $\datastructure_4$,
and parameters $c, t, q$,
 $\typetworepairset$ algorithm outputs a type two repair set $W_2$ of $|S|(4c^3 + 3c^2)$ edges for the $\IA_{G_0}(T_0, t, q, d, 2c+1)$ set with 
$$O(|S|(2q)^{c^2 + 3c+3} \poly(c)n^{o(1)})$$
running time. 

\end{lemma}

\begin{proof}
Let $(T', (S\cup T)\setminus T')$ be a type two bipartition satisfying the following conditions:
\begin{enumerate}
    \item 
$mincut_G(T', (S\cup T)\cut T', t)\leq c$,
\item there is a simple minimum $(T', (S\cup T)\cut T', t)$ cut, 
\item no simple $(T', (S\cup T)\cut T', q)$-cut of size at most $mincut_G(T', (S\cup T)\cut T', t)$ shattered by the $\IA_{G_0}(T_0, t, q, d, 2c+1) \cap E$ set. 
\end{enumerate}
Let $C$ be an arbitrary simple minimum $(T', (S\cup T)\cut T', t)$ cut. 
The cut-set of $C$ also forms a $(T', T_0 \cut T', t)$-cut of size $mincut_G(T', (S\cup T)\cut T', t)$. 
Hence, there is a $(T', T_0\setminus T', q)$-cut of size at most $mincut_G(T', (S\cup T)\cut T', t)\leq c$ with cut-set in $\IA_{G_0}(T_0, t, q, d, 2c+1)$. 
So there is a $(T', T\setminus T', q)$-cut for graph $G$ of size at most $mincut_G(T', T\cut T', t)\leq c$ with cut-set in $\IA_{G_0}(T_0, t, q, d, 2c+1) \cap E$. 
Thus, before the execution of line 6, there is a simple $(T', T\setminus T', q)$-cut $(V^\dagger, V\setminus V^\dagger)$ of size at most $c$ such that $V^\dagger \cap U \neq \emptyset$.
Hence, before line 16, 
there is a $(t, c)$-realizable pair $(E', V')$ belonging to some $\mathcal{B}[id]$ such that $(V', V\setminus V')$ is a simple minimum $(T', (S\cup T)\setminus T', t)$ cut. 
Hence, $W_2$ is a type two repair set by Lemma~\ref{lem:elimination_procedure_correctness}.




Now we bound the running time. 
By Lemma~\ref{lem:enumerate_cut}, $U$ can be found in $O\wrap{\abs S \cdot t^{c+2}\text{poly}(c)}$ time and contains at most $O(\abs S \cdot t^c)$ vertices. 
Therefore, the block from line~\ref{alg10:line6} to \ref{alg10:line15} takes at most 
\[O\wrap{\abs S \cdot t^c \cdot q^{c+2}\cdot \text{poly}(c)\cdot q^{c^2 + c + 1}\cdot \text{poly}(c)\cdot 2^c \cdot n^{o(1)}} = O\wrap{\abs S \cdot (2q)^{c^2 + 3c + 3}\text{poly}(c)} \cdot n^{o(1)}\] time. 
By Lemma~\ref{lem:basic_type_2}, for each connected component in $IDS$, line~\ref{alg10:line11} to \ref{alg10:line16} finds a set of $O(q^{c+1})$ realizable pairs. Since there are at most $\abs S$ connected components in $IDS$, the block from line~\ref{alg10:line16} to \ref{alg10:line17} takes at most 
\[O\wrap{\abs S\cdot q^{2c+2}\cdot t^{c^2 + c + 1}\cdot \text{poly}(c)n^{o(1)}} = O\wrap{\abs S\cdot q^{c^2 + 3c + 3}\cdot \text{poly}(c)n^{o(1)}}\] time.
Therefore, the overall running time of the algorithm is 
\[O\wrap{\abs S \cdot (2q)^{c^2 + 3c + 3}\text{poly}(c) n^{o(1)}}.\]
\end{proof}

\paragraph{Type three repair set}

The following algorithm finds a type 3 repair set.


        


\begin{algorithm}[H]\label{alg:second-type-s-cut}
\caption{\typethreerepairset($\datastructure_1, \datastructure_2, \datastructure_3, c, t$)}
\SetKwProg{Fn}{Function}{:}{}
\SetKw{Continue}{continue}
\SetKw{Break}{break}
\SetKwFunction{enumcuts}{Enumerate-Cuts}
\SetKwFunction{stcut}{Find-$S$-$T$-Cut}
\Input{$\datastructure_1 = \datastructure(G, S),
\datastructure_2 = \datastructure(G, T),
\datastructure_3 = \datastructure(G \setminus (\IA_{G_0}(T_0, t, q_0, d, 2c), S)$ for some $q_0\geq t$, $d \geq 2c$ \newline
$c, t$: parameters}

\Output{$W_3$: a set of edges}

$W_3 \gets \emptyset$\; 

\If {$|S| + |T| \leq t$} {
    find the cut with smallest size in $\enumeratenonsimplecuts(\datastructure_1, \datastructure_2, S\cup T, c, t)$, and put the cut-set in $W_3$ if exists\;
}

\For {every $s\in S$} {
$id\gets \datastructure_3.\textsf{ID}(s)$,         $vol\gets +\infty$, $V^\dagger\gets\emptyset$\;

    \For{every $V'$ in $\enumeratecuts(\datastructure_1, s, c, t)$ s.t. $V' \cap S = S$ and $\datastructure_3.\mathsf{ID}(y) =id$ for any $y \in \endpoints(\partial_G(V'))$ } {
        
        \For{each connected component $V^\star$ of $G[V\cut V']$} {
            \If {$V^\star\cap T\neq\emptyset$ and $vol > \vol_G\wrap{V^\star}> t$ }{
                $vol\gets\vol_G(V^\star)$\;
                $V^\dagger\gets V^\star$\;
            }
        }
        }
        $x\gets$ an arbitrary vertex in $T\cap V^\dagger$, $U\gets \emptyset$\;
        \For{every $V'$ in $\enumeratecuts(\datastructure_1, s, c, t)$ s.t. $V' \cap S = S$ and $x\notin V'$ and $\datastructure_3.\mathsf{ID}(y) =id$ for any $y \in \endpoints(\partial_G(V'))$ } {
            Find $E'\subset \partial_G(V')$ s.t. $\atomiccutverification{(\datastructure_1, E')}$ and $E'$ separates $x$ and $V'$\;
            $U\gets U\cup \{(E', V')\}$\;

        
    }
        $W_3\gets W_3\cup \partial_G(V^\dagger) \cup \textsc{Elimination}(\datastructure_1, U[id], c, t)$\;    
    }
\Return $W_3$\;
\end{algorithm}

\begin{lemma}\label{lem:algo_type_three_repair_set}
Let $G = (V, E)$ be a connected graph of at most $n$ vertices,  
$c, d, t, q_0$ be  
four positive integers such that $d \geq 2c + 1$ and $q_0 \geq t$, and $T, S \subset V$ be two sets.

Given access to $\datastructure_1 = \datastructure(G, S)$, 
$\datastructure_2 = \datastructure(G, T)$ 
$\datastructure_3 = \datastructure(G \setminus (\IA_{G_0}(T_0, t, q_0, d, 2c)\big{|}_G), S)$ for an $\IA_{G_0}(T_0, t, q_0, d, 2c)$ set, 
there is a $\typethreerepairset$ algorithm with 
$$O(|S|(2t)^{c^2 + 3c + 1} \poly(c)n^{o(1)})$$
running time to output a set $W_3$ of at most $|S|(4c^3 + 3c^2 + 2c)$ edges such that for any $\IA_{G_0}(T_0, t, q, d,$  $2c+1)$ set
that is derived from the $\IA_{G_0}(T_0, t, q_0, d, 2c)$ set, and a type two repair set $W_2$ of $\IA_{G_0}(T_0, t, q, d, 2c+1)$ with respect to $G$,
$W_3 \cup W_2$ is a type three repair set of $\IA_{G_0}(T_0, t, q, d, 2c+1)$ with respect to $G$.

\end{lemma}
\begin{proof}
The correctness is obtained by Lemma~\ref{lem:type-three-set-construct}. 
We only bound the running time. 
By Lemma~\ref{lem:enumerate_non_simple_cut}, the running time of line 5 is $O(t^{c(c+1) + 1}\poly(c))$.
By Lemma~\ref{lem:enumerate_cut}, 
$U$ contains at most $(2t)^c$ realizable pairs for each $s \in S$. 
By Lemma~\ref{lem:data_structure_pre},  
Lemma~\ref{lem:cut_verification}, Lemma~\ref{lem:enumerate_cut} and Lemma~\ref{lem:elimination_algo}, the total running time of line 5-15 is \[O((2t)^{c^2 + 3c + 1} \poly(c)n^{o(1)})\]  for each $s \in S$.
Hence, the overall running time of algorithm $\typethreerepairset$ is \[O(|S|(2t)^{c^2 + 3c + 1} \poly(c)n^{o(1)}).\]
\end{proof}

\paragraph{Algorithm for Lemma~\ref{lem:repair_set_algorithm}}

Finally, we present the algorithm for Lemma~\ref{lem:repair_set_algorithm}.

\begin{algorithm}[H]
\caption{\repairsetalgo($\datastructure_1, \datastructure_2, \datastructure_3,\datastructure_4, S, c, t , q$)}
\SetKwProg{Fn}{Function}{:}{}
\SetKw{Continue}{continue}
\SetKw{Break}{break}
\SetKwFunction{enumcuts}{Enumerate-Cuts}
\SetKwFunction{stcut}{Find-$S$-$T$-Cut}

\Input{$\datastructure_1 = \datastructure(G, \emptyset),
\datastructure_2 = \datastructure(G, T_0 \cap V),\newline
\datastructure_3 = \datastructure(G \setminus (\IA_{G_0}(T_0,t, q_0, d, 2c) \big{|}_G), \emptyset)$, \newline
$\datastructure_4 = \datastructure(G \setminus (\IA_{G_0}(T_0, t, q, d, 2c+1)\big{|}_G), \emptyset)$, \newline
$S$: a set of terminals \newline $c, t, q$: parameters}
\Output{$W$: a set of edges that is a cut repair set}

\For{$s \in S$} {
    $\datastructure_1.\mathsf{InsertTerminal}(s)$\;
    $\datastructure_3.\mathsf{InsertTerminal}(s)$\;
    $\datastructure_4.\mathsf{InsertTerminal}(s)$\;
    $\datastructure_2.\textsf{Delete-Terminal}(s)$\;
}

$W \gets  \typeonerepairset(\datastructure_1, \datastructure_2, \datastructure_3,  c, t)$\;
$W \gets W \cup \typetworepairset(\datastructure_1, \datastructure_2, \datastructure_3, \datastructure_4, c, t, q)$\;
$W \gets W \cup \typethreerepairset(\datastructure_1, \datastructure_2, \datastructure_3, c, t)$\;
reverse all the changes made at line 2, 3, 4 and 5\;
\Return $W$\;
\end{algorithm}


\begin{lemma}
Let $G = (V, E)$ be a connected graph of at most $n$ vertices,  
$c, d, t, q_0, q$ be  
four positive integers such that $d \geq 2c + 1$ and $q \geq q_0 \geq t$, and $T, S \subset V$ be two sets.

Given access to $\datastructure_1 = \datastructure(G, \emptyset)$, 
$\datastructure_2 = \datastructure(G, T_0 \cap V)$, 
$\datastructure_3 = \datastructure(G \setminus (\IA_{G_0}(T_0, t, q_0, d, 2c)\big{|}_G), \emptyset)$ and
$\datastructure_4 = \datastructure(G \setminus (\IA_{G_0}(T_0, t, q, d, 2c + 1)\big{|}_G), \emptyset)$
such that the $\IA_{G_0}(T_0, t, q, d,$  $2c+1)$ set
that is derived from the $\IA_{G_0}(T_0, t, q_0, d, 2c)$ set, there is a $\repairsetalgo$ algorithm with 
$$O(|S|(2q)^{c^2 + 3c + 3} \poly(c)n^{o(1)})$$
running time to output a repair set of at most $|S|(16c^3 + 12c^2 + 2c)$ edges.

\end{lemma}

\begin{proof}
By the description of the algorithm, before execution of line 6, $\datastructure_1 = \datastructure(G, S)$, $\datastructure_2 = \datastructure(G, T)$, 
$\datastructure_3 = \datastructure(G \setminus (\IA_{G_0}(T_0,t, q_0, d, 2c) \big{|}_G), S)$, and $\datastructure_4 = \datastructure(G \setminus (\IA_{G_0}(T_0,t, q, d, 2c+1) \big{|}_G), S)$. 
    By  Lemma~\ref{lem:type_one_repair_set_algo},
    Lemma~\ref{lem:type_two_repair_set_algo} and  Lemma~\ref{lem:algo_type_three_repair_set}    
    the algorithm returns a repair set of size at most $|S|(16c^3 + 12c^2 + 2c)$.
    
    The running time of the algorithm is obtained by Lemma~\ref{lem:data_structure_pre}, Lemma~\ref{lem:type_one_repair_set_algo},
    Lemma~\ref{lem:type_two_repair_set_algo} and  Lemma~\ref{lem:algo_type_three_repair_set}.
\end{proof}

\paragraph{Cut Containment Set Update Algorithm}
With the implementation of the $\IA$ set update  algorithm, we obtain the following cut containment set update algorithm. 

\begin{algorithm}[H]
\caption{\textsc{Containment-Set-Update}($c, t_1, q_1, \ldots , t_{c^2 + 2c}, q_{c^2 + 2c}, S, \newline \datastructure_0,\datastructure_0', \datastructure_1, \datastructure_1', \ldots, \datastructure_{c^2 + 2c}, \datastructure_{c^2 + 2c}'$)}\label{alg:containment_set_update}
\SetKwProg{Fn}{Function}{:}{}
\SetKw{Continue}{continue}
\SetKw{Break}{break}
\SetKwFunction{enumcuts}{Enumerate-Cuts}
\SetKwFunction{stcut}{Find-$S$-$T$-Cut}
\SetKwProg{Begin}{for}{ do}{}

 \Input{

$\datastructure_0$, $\datastructure_1, \datastructure_1', \dots, \datastructure_{c^2 + 2c}, \datastructure_{c^2 + 2c}'$: dynamic graph data structures inside a $\oneleveldatastructure{}$\newline
$S$: a set of vertices\newline
$c, t$: parameters
}
\Output{Edge set $F_1, \dots, F_c$ }

$k_0\gets 0$, $S_1 \gets S$\;

\For {$i \gets 1$ to $c$ } {
    $k_i\gets k_{i-1} + 2(c-i+1) + 1$,
    $F_i\gets\emptyset$,
    $ID\gets \emptyset$\;
    \For{each $s\in S_i$} {
        $ID\gets ID\cup \set{ \datastructure_{k_{i-1}}.\textsf{ID}(s)}$\;
    }
    \For{each $id\in ID$} {
        $N\gets \{s\in S_i: \datastructure_{k_{i-1}}.\textsf{ID}(s) = id\}$\;
        $F_i\gets F_i\cup \repairsetalgo\wrap{\datastructure_{k_{i-1}}', \datastructure_{k_{i-1}}, \datastructure_{k_{i} - 1}', \datastructure_{k_{i}}', N, c-i+1, t_{k_{i-1}}, q_{k_i}\cdot (c^2 + 2c + 1)}$ ~\footnote{Although the data structures passed in are data structures for $G_0$, only the in that contains terminals in $N$ are involved in the executions of $\repairsetalgo$. Therefore, we assume that the input data structures are for graph $G$.}\label{alg13:line8}
    }
    $S_{i+1}\gets S_i$\;
    \For{$j = 1$ to $i$} {
        \For{each $(x, y)\in F_j$} {
            
            $\datastructure_{k_i}.\mathsf{Delete}((x, y))$, $\datastructure_{k_i}'.\mathsf{Delete}((x, y))$\label{alg13:line12}\;
            $S_{i+1}\gets S_{i+1}\cup \{x, y\}$\;
        }
    }
}
    Reverse all the changes made in line~\ref{alg13:line12}\;
\Return $F_1, \cdots F_c$
    \end{algorithm}
\begin{lemma}
Let $G = (V, E)$ be a graph.    Given access to \[\datastructure_0 ,\datastructure_0', \datastructure_1, \datastructure_1', \ldots, \datastructure_{c^2 + c}, \datastructure_{c^2 + c}'\]
in a $\oneleveldatastructure{}$
such that the cut containment set is recursively constructed with respect to parameters $t_1, q_1, \ldots,$ $ t_{c^2 + 2c}, q_{c^2 + 2c}$ satisfying Equation~\ref{equ:ia_composition_condition}, and a set $S$ of vertices, Algorithm~\textsc{Containment-Set-Update} can return $F_1, \ldots, F_c$ such that their union $F$ has $\abs S O(c)^{O(c)}$ edges and the union of 
$F$ and the cut containment set used to construct the $\oneleveldatastructure{}$ is the new cut containment set of graph $G$ after adding vertices in $S$ to the terminal set. The running time of Algorithm~\textsc{Containment-Set-Update} is
$$O(\abs S \wrap{q\cdot \text{poly}(c)}^{O(c^2)}n^{o(1)} )$$
\end{lemma}
\begin{proof}
 Since $\datastructure_0 ,\datastructure_0', \datastructure_1, \datastructure_1', \ldots, \datastructure_{c^2 + c}, \datastructure_{c^2 + c}'$ are in a $\oneleveldatastructure{}$, there exists a cut containment set that is recursively constructed as $\bigcup_{j = 1}^{c^2 + c} E_j$ such that the graph in $\datastructure_i$ and $\datastructure_i'$ is
$G\cut \wrap{\bigcup_{j=1}^i E_j}$. Let $T$ be the set of terminals in $G$ in $\oneleveldatastructure{}$. Line~\ref{alg13:line8} computes a repair set $F_i$ for $\bigcup_{j = k_{i-1} +1}^{k_i}E_{j}$. By Lemma~\ref{lem:cut_containment_update}, $F_i\cup \wrap{\bigcup_{j = k_{i-1} + 1}^{k_i} E_j}$ is a
$$\IA_{G\cut \wrap{\wrap{\bigcup_{j = 1}^{k_{i-1}} E_j}\cup \wrap{\bigcup_{j = 1}^{i-1} F_j}}}\wrap{T\cup \wrap{\endpoints\wrap{\bigcup_{j = 1}^{k_{i-1}} E_j}}\cup \wrap{S_i}, t_{k_{i-1}}, q_{k_i}\cdot (c^2 + 2c + 1), c-i+1, 1}$$
set, and the union of $F$ and $\bigcup_{j = 1}^{c^2 + c} E_j$ is a $(T\cup S, c)$-cut containment set. $\abs F$ is $\abs S O(c)^{O(c)}$.

To bound the running time, we observe that in total there are at most $O(\text{poly}(c))^{O(c)}$ calls to $\repairsetalgo$. Therefore, the running time of the algorithm is at most 
\[O(\abs S \wrap{q\cdot \text{poly}(c)}^{O(c^2)}n^{o(1)} )\qedhere\]
\end{proof}

\section{One-Level Data Structure Update Algorithm Implementation}\label{sec:online_batch_update_cut_partition}

In this section, we provide implementations of the algorithms in Section~\ref{sec:one_level_update}. Let $\oneleveldatastructure{}$ be a one level data structure with respect to graph $G$, partition $\mathcal P$, and cut containment set $\cc$. We use $\sparsifier\wrap{\oneleveldatastructure, \gamma}$ to denote $\sparsifier\wrap{G, \mathcal P, \cc, \gamma}$, which is defined in Definition~\ref{def:one_level_sparsifier}.

Since we need to update the graph and maintain the $c$-connectivity between terminal vertices, we want to design an efficient way to generate the update sequence for the sparsifier. To make this happen, we need to ensure the following properties for the update:
\begin{itemize}
\item Before the update, $\mathcal{P}$ is a $\phi$-expander decomposition, we want the partition $\mathcal P'$ after the update to be a good expander decomposition of the updated graph. 
\item Before the update, we have a $(\endpoints(\partial_G(\mathcal P)), c^2+2c)$-cut containment set for the graph. After the update, we want to have a $(\endpoints(\partial_G(\mathcal P')), c)$-cut containment set for the updated graph.
\item The update sequence for the corresponding sparsifier has bounded length.
\end{itemize}

We first update $\phi$-expander decomposition $\calP$ to a refined partition 
$\calP^\star$  of $G$ satisfying the following three properties:
\begin{enumerate}
\item Any vertex involved in the multigraph update sequence $\updateseq$ is a singleton of $\calP^\star$ if the vertex is already in $G$.
\item $\calP^\star$ is a $\phi / \factor$-expander decomposition of $\simple(G)$.
\item The number of distinct new intercluster edges is no more than $O(|\updateseq|)$.
\end{enumerate}
To achieve this goal, we first remove from $G$ all the incident edges to the vertices involved in $\updateseq$, 
and then  update the expander decomposition for the the resulting graph. 
Because vertices involved in the multigraph update sequence become isolated vertices in the resulting graph, 
 these vertices are  singletons in the expander decomposition. 
In the end, we add all the removed edges back. 
Since every removed edge becomes an intercluster edge with respect to the new partition, 
adding edges back does not affect the conductance of each cluster. 

We show that 
if $k$ edges are removed from a constant degree simple $\phi$-expander, then there is a 
$\phi/\factor$-expander decomposition of the resulted graph with $O(k)$ intercluster edges. 
The idea is to 
first use the expander pruning algorithm~\cite{saranurak2019expander} to 
remove a set of vertices with volume at most $O(k /\phi)$ from the expander such that 
the remaining vertices form an $O(\phi)$-expander, and 
the number of edges between the removed vertices and the remaining vertices is $O(k)$.
Then we run the deterministic expander decomposition~\cite{chuzhoy2019deterministic} with conductance parameter $\phi / \factor$  on each connected component of the induced subgraph of the removed vertices. 

With the updated partition $\calP^\star$, we further update $\cc$ to $\cc^\star$ that  is a $(\endpoints(\partial_G(\mathcal P^\star)), c)$-cut containment set by applying Lemma~\ref{lem:repair_set_algorithm} on every cluster of $\calP$ affected by the update, making use of the condition that $\calP^\star$ is a refinement of $\calP$. 
By Lemma~\ref{lem:repair_set_algorithm}, $\cc^\star \setminus \cc$ contains at most $|\updateseq|(10c)^{O(c)}$  distinct edges. By the properties of contracted graph, 
the total number of vertex and edge insertions and deletions that transform
 $\sparsifier(G, \calP,\cc, \gamma)$ into $\sparsifier(G, \calP^\star,\cc^\star, \gamma)$ is at most $O(|\updateseq|(10c)^{O(c)})$.  

At the end, we apply the update sequence $\updateseq$ to $G$ to obtain the updated graph $G'$, and
 update $\calP^\star$ to $\calP'$.


\begin{lemma}\label{lem:decomposition_phase_one}
Let $G = (V, E)$ be a graph and $\updateseq$ be a multigraph update sequence  that updates $G$ to 
$G' = (V', E')$ such that
both $G$ and $G'$ contains at most $m$ vertices and distinct edges, and 
every vertex of $G$ or $G'$ has at most $\Delta$ distinct neighbors.

Given access to data structure $\oneleveldatastructure$ for graph $G$
with respect to partition $\mathcal P$ and parameters $t_1, q_1, \dots,$ $ t_{c^2 + 2c}, q_{c^2 + 2c}$
such that $\calP$ is a $\phi$-expander decomposition of $\simple(G)$
and parameters $\gamma > c$, $t_1, q_1, \dots,$ $ t_{c^2 + 2c}, q_{c^2 + 2c}$ satisfying
 \begin{equation}t \leq t_1, t_i \leq q_i \text{ for all } 1 \leq i \leq c^2 + 2c, \text{ and } q_i \cdot ((c^2 + 2c)+2)^2 \leq t_{i+1} \text{ for all } 1 \leq i \leq c^2 + 2c - 1,\end{equation}
and  $\updateseq$,
there is a deterministic algorithm \oneshotupdate\ 
with running time 
\[O(|\updateseq| \Delta (20cq_{c^2 + 2c})^{c^2 + 5c+3} \poly(c)\polylog(m)) +  \widehat O(|\updateseq| \Delta  \log m/ \phi^3)\]
to update $\oneleveldatastructure$
to  $\oneleveldatastructure'$ for graph $G'$ with respect to  partition $\mathcal P'$ and parameters $t_1', q_1',\ldots, t_{c}', q_{c}'$.

\oneshotupdate\  also outputs an update sequence $\updateseq'$ of  \[O(\Delta |\updateseq|(10c)^{3c})\] multigraph update operations.

The following properties hold for $\mathcal P'$ and $\updateseq'$: 
\begin{enumerate}[itemsep=2pt]
\item 
$\calP'$ is a $(\phi / \truefactor)$-expander decomposition of $\simple(G')$
such that
\begin{enumerate}[itemsep=2pt]
\item Every vertex of $G'$ that is involved in the update sequence $\updateseq$ is a singleton in $\calP'$
\item Every $P \in \calP'$ which contains at least two vertices is a subset of some vertex set in $\calP$
\item
$|\partial_{\simple(G')}(\calP')| \leq|\partial_{\simple(G)}( \calP )| + O(|\updateseq| \Delta)$.
\end{enumerate}
\item $\updateseq'$ updates $\sparsifier(\oneleveldatastructure, \gamma)$ to
$\sparsifier(\oneleveldatastructure', \gamma)$.
\item 
$t_i' = t_{w_{i-1} + 1}, q_i' = q_{w_i} ((c^2 + 2c)+2)$, where 
$w_0 = 0, w_i = w_{i - 1} + 2(c - i) + 3$ for $1 \leq i \leq c$.
\end{enumerate}

\end{lemma}

We first give an update algorithm in the following decremental update setting: 
Given access to $\oneleveldatastructure$  for graph $G = (V, E)$ with respect to a partition $\mathcal P$ and a $(\endpoints(\partial_G(\mathcal P)), c^2 + 2c)$-cut containment set $\cc$ of $G[\mathcal P]$, and
a set of edges $R \subset E$
such that $\partial_G(\calP) \cup R=\partial_G(\calP^\star)$ for a vertex partition $\calP^\star$ that is a refinement of $\calP$, 
we want to update $\oneleveldatastructure$ to  $\oneleveldatastructure'$ for $G$ with respect to partition $\calP^\star$ and a $(\endpoints(\partial_G(\mathcal P^\star)), c)$-cut containment set $\cc'$ of $G[\mathcal P']$. \newpage

\begin{algorithm}[H]\label{alg:updatepartition}
\caption{\updatepartition($\oneleveldatastructure,R, t, c, \gamma, t_1, q_1, \dots t_{c^2 + 2c}, q_{c^2 + 2c}$)}
\SetKwProg{Fn}{Function}{:}{}
\SetKw{Continue}{continue}
\SetKw{Break}{break}
\SetKwFunction{enumcuts}{Enumerate-Cuts}
\SetKwFunction{stcut}{Find-$S$-$T$-Cut}
\SetKwProg{Begin}{for}{ do}{}

\Input{$\oneleveldatastructure = \{\datastructure,
\datastructure_0, \datastructure_0' \dots, \datastructure_{c^2 + 2c}, \datastructure_{c^2 + 2c}'\}$ with respect to parameters $t_i, q_i$ 
for $1 \leq i \leq c^2 + 2c$\newline 
$R$: a set of edges\newline
 $c, t, \gamma, t_i, q_i$: parameters
}
\Output{$\oneleveldatastructure'$: updated from $\oneleveldatastructure$\newline 
$\newupdateseq$: multigraph update sequence}
$h[0]\gets 0$\;
\For{$i = 1$ to $c$}{
    $h[i + 1] \gets h[i] + 2i+1$\;
}
$S\gets \partial_G(R)$\;
$F_1, \ldots, F_c \gets\textsf{Containment-Set-Update}(c, t, t_1, q_1, \cdots, t_{c^2 + 2c}, q_{c^2 + 2c}, S, \newline\phantom{fsdfsdsfsfsfsfssfdsfsfsfdfsf}\datastructure_0, \datastructure_0', \datastructure_1,\datastructure_1' \ldots, \datastructure_{c^2 + 2c},\datastructure_{c^2 + 2c}')$\;
    
    $\newupdateseq \gets \emptyset$, $\seqone\gets \emptyset$ \;
            \For { $i\gets 1$ to $c$} {
    \For {$(x, y)\in F_i$ } {
        $\datastructure_{h[c]}'.\mathsf{Delete}(x,y )$\;
    
        $\seqone \gets \seqone \circ \datastructure_{h[c]}.\textsf{Insert-Terminal}(x) \circ \datastructure_{h[c]}.\textsf{Insert-Terminal}(y) \circ \datastructure_{h[c]}.\mathsf{Delete}(x, y)$ \tcp*{$\circ$ denotes sequence concatenation}
    }
    }

    
   \For {every operation $\mathtt{op} \in \seqone$} {
            \If {$\mathtt{op}$ is an edge insertion}{
                append $\mathtt{op}$ to the end of $\newupdateseq$ with edge multiplicity $\gamma$\;
            }
            \Else {append $\mathtt{op}$ to the end of $\newupdateseq$}
            }
\For {every edge $(x, y)\in R$} {
    append $\textsf{insert}(x, y, \alpha)$ to the end of $\newupdateseq$, where $\alpha$ is the edge multiplicity of $(x, y)$ in the graph of $\datastructure$\;
}
\Return $\oneleveldatastructure' = \{\datastructure, \datastructure_{h[0]},\datastructure_{h[0]}', \dots, \datastructure_{h[c]}, \datastructure_{h[c]}'\}, \newupdateseq$\;
\end{algorithm}

\begin{lemma}\label{lem:update_partition_main}
Let $G = (V, E)$ be a graph with at most $m$ vertices and distinct edges
, and  $\oneleveldatastructure$ be a one-level data structure of $G$ with respect to partition $\mathcal P$ and a $(\endpoints(\partial_G(\mathcal P)), c^2 + 2c)$-cut containment set $\cc$ given by parameters $t_1, q_1, \dots, t_{c^2 + 2c}, q_{c^2 + 2c}$ satisfying 
Inequality~\ref{equ:ia_composition_condition}. 
Given access to $\oneleveldatastructure$, 
a set  $R \subset E$ of $k$ distinct edges
such that $\partial_G(\calP) \cup R$  is $\partial_G(\calP^\star)$ for a vertex partition $\calP^\star$ that is a refinement of $\calP$, and a parameter $\gamma > c$, 
algorithm \updatepartition, with running time $O(k(20\cdot c \cdot q_{c^2 + 2c})^{c^2 + 5c + 3} \poly(c)\polylog(m))$, updates $\oneleveldatastructure$ to 
another one-level  data structure $\oneleveldatastructure'$ of $G$ with respect partition $\mathcal P^\star$ and $(\endpoints(\partial_G(\mathcal P^\star)), c)$-cut containment set $\cc'$ given by parameters
\[t_i' = t_{w_{i-1} + 1}, q_i' = q_{w_i} ((c^2 + 2c)+2) \text{ for all } 1 \leq i \leq c,\]
where $w_0 = 0, w_i = w_{i - 1} + 2(c - i) + 3$ for $1 \leq i \leq c$,
and outputs an update sequence of length $O(k (10c)^{3c})$ to update
$\sparsifier(\oneleveldatastructure, \gamma)$
to 
$\sparsifier(\oneleveldatastructure', \gamma)$.
\end{lemma}
\begin{proof}
Let $\calP^\star_i$ be the vertex partition induced by connected components of $\datastructure_i$ of $\oneleveldatastructure'$.
By 
Lemma~\ref{lem:repair_set_algorithm}, 
in $\oneleveldatastructure'$, 
$\calP^\star_0 = \calP^\star$,
$\datastructure_0 = \datastructure(G[\calP^\star_0], \endpoints(\partial_G(\calP^\star_0)))$ and $\datastructure_0' = \datastructure(G[\calP^\star_0], \emptyset)$.
For $i \geq 1$,
\[\begin{split}
\datastructure_i = \datastructure(& G[\calP^\star_{i-1}] \setminus \IA_{G[\calP^\star_{i-1}]}(\endpoints(\partial_G(\calP^\star_{i-1})), t_i', q_i', c- i + 1, 1), \\
& \endpoints(\IA_{G[\calP^\star_{i-1}]}(\endpoints(\partial_G(\calP^\star_{i-1})), t_i', q_i', c- i + 1, 1))
)
\end{split}\]
and 
\[\datastructure_i' = \datastructure(G[\calP^\star_{i-1}] \setminus \IA_{G[\calP^\star_{i-1}]}(\endpoints(\partial_G(\calP^\star_{i-1})), t_i', q_i', c- i + 1, 1), \emptyset).\]
By Corollary~\ref{cor:ia_composition}, $\oneleveldatastructure'$ is a one-level data structure of $G$ with respect to partition $\mathcal P^\star$ and $(\endpoints(\partial_G(\mathcal P)), c)$-cut containment set $\cc'$ given by parameters $t_1', q_1', \dots, t_c', q_c'$.


Let $k_i$ denote the number of distinct edges of $R[i]$ for $0 \leq i \leq c$.
By Lemma~\ref{lem:repair_set_algorithm} and the algorithm, 
$k_c = k$, 
and for $0 \leq i < c$, 
\[k_i \leq |k_{i + 1}| + 2 \cdot |k_{i + 1}|\cdot (16c^3 + 12c^2 + 2c) < |k_{i + 1}| \cdot (10c)^3\] edges.
By induction, $R[0]$ contains at most  $k (10c)^{3c} $.

By Lemma~\ref{lem:data_structure_pre}, $\newupdateseq$ updates $\sparsifier(\oneleveldatastructure, \gamma)$ to $\sparsifier(\oneleveldatastructure', \gamma)$, and the number of operations of  $\newupdateseq$ is $O(|R| (10c)^{3c})$.

The running time of the algorithm is obtained by Lemma~\ref{lem:repair_set_algorithm} and  Lemma~\ref{lem:data_structure_pre}. 
\end{proof}

\begin{algorithm}[H]
\caption{\decsingleexpander($\datastructure(G), \phi, D$)}
\SetKwProg{Fn}{Function}{:}{}
\SetKw{Continue}{continue}
\SetKw{Break}{break}
\SetKwFunction{enumcuts}{Enumerate-Cuts}
\SetKwFunction{stcut}{Find-$S$-$T$-Cut}

\Input{$\datastructure(G)$: dynamic data structure for simple graph $G$ \newline
$\phi$: parameter
\newline
$D$: a set of edges
}
\Output{ $R$: a set of edges}

$R \gets \emptyset, \calP \gets \emptyset$\;
$m\gets \datastructure(G).\textsf{DistinctEdgeNumber}(x)$ for an arbitrary vertex $x$ in $G$\;
 \If {$|D| \leq m \phi / 10$} {
 $P \gets \pruning(G, D)$, $R \gets   \partial_G(P) \setminus D$\;
 }
 \Else {$P\gets V$}
 compute $H = G[P] \setminus D$\;
$\calP \gets \calP \cup  \expanderdecomposition(H[V'], \phi / 2^{O(\log^{1/3}m \log^{2/3} \log m)})$ for  every $V'$ that forms a connected component of $H$\;
\Return $R \cup \partial_H(\calP)$
\end{algorithm}

%
%
%


\begin{lemma}\label{lem:decomposition_single_graph_appendix}
Given access to $\datastructure(G)$ for a simple graph $G = (V, E)$ such that $G$ is a $\phi$-expander with $m$ edges, and the maximum degree of $G$ is at most $\Delta$, 
and an edge set $D \subseteq E$ of $k$ edges,
there is a deterministic algorithm $\decsingleexpander$ with running time $\widehat O(|D|\log m / \phi^3)$ 
to output an edge set $R$ from graph $G \setminus D$ such that $|R| = O(k)$ and $R$ is the set of intercluster edges of a vertex  partition $\calR$ for graph $G \setminus D$ such that $\calR$ is a $(\phi /\truefactor)$-expander decomposition of $G \setminus D$, where $\delta$ is the constant as Theorem~\ref{thm:expander_decomposition}.

\end{lemma}
\begin{proof}
Let $G' = (V, E \cut D)$. If $k \leq m \phi / 10$, by Theorem~\ref{thm:pruning}, we have 
\begin{enumerate}
\item  Every connected component of $G'[V\cut P]$ is a $\phi / 6$ expander;
\item $|\partial_G(P)| \leq 8k$; 
\item $\vol_G(P) \leq 8k / \phi$.
\end{enumerate}
If $k > m\phi / 10$, we have $P = V$, and $\vol_G(P) = \vol(G) \leq 2m < 20 k / \phi$.
Since $H = G'[P]$, for either case, $\vol(H) \leq 20k / \phi$.

For a $V'$ that forms a connected component of $H$, let $\alpha$ denote the number of edges in $H[V']$. 
The output of 
\[\expanderdecomposition(H[V'], \phi / \truefactor)\]
is a  $\phi / \truefactor$-expander decomposition of $H[V']$ with $O(\phi \alpha)$ intercluster edges  
  by Theorem~\ref{thm:expander_decomposition}. 
Since the total number of edges in $H$ is at most $20k / \phi$, 
sum over all the connected components of $H$, we obtain a $(\phi / \truefactor)$-expander decomposition of $H$ with $O(k)$ intercluster edges.

By the definition of algorithm $\decsingleexpander$, $R$ contains all the intercluster edges for a $(\phi / \truefactor)$-expander decomposition of  $G \setminus D$ with $O(k)$ intercluster edges. 

The running time is obtained by Theorem~\ref{thm:expander_decomposition} and Theorem~\ref{thm:pruning}.
\end{proof}

\begin{algorithm}[H]
\caption{\oneshotupdate$(\oneleveldatastructure, \updateseq, \phi, c, t, \gamma, t_1, q_1, \dots t_{c^2 + 2c}, q_{c^2 + 2c})$}
\SetKwProg{Fn}{Function}{:}{}
\SetKw{Continue}{continue}
\SetKw{Break}{break}
\SetKwFunction{enumcuts}{Enumerate-Cuts}
\SetKwFunction{stcut}{Find-$S$-$T$-Cut}

\Input{$\oneleveldatastructure = \{\datastructure,
\datastructure_0, \datastructure_0' \dots, \datastructure_{c^2 + 2c}, \datastructure_{c^2 + 2c}'\}$ with respect to parameters $t_i, q_i$ 
\newline
$\updateseq$: a multigraph update sequence \newline
$\phi, c, t, \gamma, t_1, q_1, \dots, t_{c^2 + 2c}, q_{c^2 + 2c}$: parameters
}
\Output{ $\oneleveldatastructure'$: updated from $\oneleveldatastructure$\newline
$\newupdateseq$: multigraph update sequence}


$R \gets \emptyset, I \gets \emptyset$\;
\For {vertex $x$ involved in $\updateseq$} {
    $I \gets I \cup \{\datastructure_0.\mathsf{ID}(x)\}$\;
}
\For {every $id \in I$} {
$W_{id} \gets \{v: v \text{ is involved in  }\updateseq \text{ and } \datastructure_0.\mathsf{ID}(v) = id\}$\;
let $D_{id}$ be the distinct edges of $ \{(u, v) \text{ in graph of } \datastructure_0 : u \in W_{id} \text{ and } \datastructure_0.\mathsf{ID}(v) = id\}$\;
$R_{id} \gets \decsingleexpander\left(
\simple\left(\datastructure_0 \big{|}_{G'}\right),  \phi, D_{id}\right)$, where $G'$ is the connected component with identification $id$ in $\datastructure_0$, and
$\simple\left(\datastructure_0 \big{|}_{G'}\right)$ means edge multiplicities are ignored in the subroutine for $\datastructure_0 \big{|}_{G'}$ \;
$R\gets R \cup \{(x, y) \text{ in graph of } \datastructure_0: R_i \cup D_i\}$\;
}
$\oneleveldatastructure', \newupdateseq \gets \updatepartition(\oneleveldatastructure, R, c, t, \gamma, t_1, q_1, \dots, t_{c^2 + 2c}, q_{c^2 + 2c})$\label{alg16:line9}\;
\For {every vertex $x$ involved in $\updateseq$ that is an isolated vertex in the graph of $\oneleveldatastructure'$} {
$\newupdateseq \gets \newupdateseq \circ \insertt(x)$\;
run $\datastructure_i.\textsf{Insert-Terminal}(x)$ for every $\datastructure_i$ in $\oneleveldatastructure'$\;
}

 $\newupdateseq \gets \newupdateseq\circ \updateseq$\;

 apply $\updateseq$ to $\datastructure$ in $\oneleveldatastructure'$\;

\For {every vertex $x$ involved in $\updateseq$ that is an isolated vertex in the graph of $\oneleveldatastructure'$} {
$\newupdateseq \gets \newupdateseq \circ \delete(x)$\;
run $\datastructure_i.\textsf{Delete-Terminal}(x)$ for every $\datastructure_i$ in $\oneleveldatastructure'$\;
}

 \Return $\oneleveldatastructure'$ and $\newupdateseq$\;

\end{algorithm}

\begin{proof}[Proof of 
Lemma~\ref{lem:decomposition_phase_one}]
    We make the following observations for each $id \in I$:
    (Let $G_{id}$ denote the connected component for the graph of $\datastructure_0$ with connected component identification $id$.)
    
\begin{itemize}
\item[(a).] The number of distinct edges of $D_{id}$ is at most $\Delta |W_{id}|$. 
\item[(b).] 
$R_{id}$ is the set of intercluster edges for a 
$(\phi / \truefactor)$-expander decomposition of\\ $\simple(G_{id})$ with at most $O(|D_{id}|)$ intercluster edges.
\item[(c).] For every vertex $v \in W_{id}$,
$v$ is an isolated vertex in  $G_{id}$ if all the edges of $R|_{G_{id}}$ are removed.
\end{itemize}
Property (a) is obtained by the definition of the algorithm and the fact that the maximum degree of $\simple(G)$ is at most $\Delta$.
Property (b) is obtained by Lemma~\ref{lem:decomposition_single_graph_appendix}.
Property (c) is obtained by the fact that $D_{id}$ contains all the distinct edges incident to $v$ in $G_{id}$ for every $v \in W_{id}$.

Hence, $\partial_G(\calP) \cup R$ is the set of intercluster edges of a $(\phi / \truefactor)$-expander decomposition for graph $G$ such that every vertex involved in the update sequence is a singleton in the new partition. Let $\calP^\star$ denote this new partition.

By Lemma~\ref{lem:update_partition_main}, after line \ref{alg16:line9},
$\oneleveldatastructure$  is updated to $\oneleveldatastructure'$ for graph $G$, and 
update sequence $\newupdateseq$ updates $\sparsifier(\oneleveldatastructure, \gamma)$ to $\sparsifier(\oneleveldatastructure', \gamma)$ for $\oneleveldatastructure'$. 
The length of $\newupdateseq$ is at most $O(\Delta |\updateseq| (10c)^{3c})$.

Since every vertex involved in the update sequence $\updateseq$ is a singleton in $\calP^\star$, 
applying update sequence $\updateseq$ to $\datastructure$ of $\oneleveldatastructure'$ updates the $(\calP^\star, t, c)$-edge connectivity  partition data structures for $G'$.
After adding necessary isolated vertices to the sparsifier (by 
appending vertex insertions for these vertices to the end of $\newupdateseq$), 
$\updateseq$ can also be applied to the sparsifier to obtain the sparsifier of graph $G'$.

The running time is obtained by Lemma~\ref{lem:update_partition_main} and Lemma~\ref{lem:decomposition_single_graph_appendix}.
\end{proof}

\section{Fully Dynamic Algorithm From Online-Batch Dynamic Algorithm}
\label{sec:online_batch_general}

Lemma~\ref{lem:fully_framework} was implicitly given in~\cite{nanongkai2017dynamic}.
For completeness, 
we prove 
Lemma~\ref{lem:fully_framework} in this section.
Assume we have the following algorithms for data structure $\mathfrak{D}$:
\begin{enumerate}
\item $\dsinitialize(G)$: a preprocessing algorithm with running time $\rti$ that initializes an instance of data structure $\mathfrak{D}$ for the input graph $G$,
\item $\dsupdate(G,  \mathfrak{D}, \updateseq)$: an 
update algorithm with 
 amortized update time  $\rtu$
with batch number $\parametertimes$ and sensitivity $\parameterlength$.
The input of the algorithm is the access to graph $G$ and data structure $\mathfrak{D}$ for graph $G$, 
and an update batch, which is a sequence of updates $\updateseq$ for $G$.
The algorithm updates $\mathfrak{D}$ so that 
the resulted data structure is for 
the resulted graph after applying $\updateseq$ on  $G$. 
\end{enumerate}
We assume $\rti$ and $\rtu$ are functions that map the upper bounds of some graph measures throughout the update, e.g. maximum number of edges, to non-negative numbers.
%

We define a multi-level structure. 
The total number of levels is $\parametertimes + 1$.
Let $s =  \lfloor (\parameterlength / 2)^{1/\parametertimes} \rfloor$.
For $0 \leq i \leq \parametertimes$ and $j \geq 0$, 
let 
\[d_i = s^{\parametertimes - i} \text{ and } t_{i, j} = j \cdot d_i.\] 
Without loss of generality, assume $t_{i, j} = 0$ if $j < 0$.




\begin{definition}\label{def:data_strcuture_online_batch}
Let $G_{i, j}$ denote the resulted graph of $G$ after applying first $j \cdot d_i$ updates. 
For a dynamic data structure $\mathfrak{D}$, let $\mathfrak{D}_{i, j}$ for every $0\leq i \leq \parametertimes$ and $j \geq 0$ as follows:
\begin{itemize}
\item For $i = 0$ or $j = 0$,  
$\mathfrak{D}_{i, j} = \dsinitialize(G_{0, j})$.
\item
For $1 \leq i < \parametertimes $, $j > 0$
\[\begin{split}
\mathfrak{D}_{i, j} = 
 \dsupdate\left( G_{i-1, \lceil j / s \rceil - 2}, \mathfrak{D}_{i-1, \lceil j / s \rceil - 2},\updateseq_{[t_{i-1, \lceil j / s \rceil - 2} + 1, t_{i, j}]}\right),
\end{split}
\]
where $\updateseq_{[a, b]}$ denotes the update sequence that contains $a $-th, $(a+1)$-th, $\dots$, $b$-th update of the entire update sequence. 
\end{itemize}


We say \emph{$\mathfrak{D}_{i, j}$ depends on $\mathfrak{D}_{i', j'}$} for some $i' < i$ if there exist $j_{i'}, \dots, j_{i}$ such that 
$j_{i'} = j'$, $j_{i} = j$ and 
$\mathfrak{D}_{h, j_h}$ is obtained by running $\dsupdate$ on $\mathfrak{D}_{h-1, j_{h-1}}$ for all the $i' < h \leq i$. 

\end{definition}

By Definition~\ref{def:data_strcuture_online_batch} and induction,
we have the following observation.
 \begin{claim}\label{lem:short_distance}
 Assume $s \geq 6$.
If $\mathfrak{D}_{i, j}$ depends on $\mathfrak{D}_{i', j'}$, then $t_{i, j} \leq t_{i', j' + 2} + d_{i'} / 2$. 
\end{claim}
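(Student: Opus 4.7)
The plan is to unroll the chain of dependencies from $(i,j)$ down to $(i',j')$, convert it into a simple numerical recurrence on the second indices, solve the recurrence, and finally translate the bound from $j$ back to $t_{i,j} = j \cdot s^{\parametertimes - i}$.

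First, I read off from Definition~\ref{def:data_strcuture_online_batch} that whenever $\mathfrak{D}_{h, j_h}$ is obtained from $\mathfrak{D}_{h-1, j_{h-1}}$ via one call to $\dsupdate$, we must have $j_{h-1} = \lceil j_h / s \rceil - 2$. In particular $j_h/s \le \lceil j_h/s \rceil = j_{h-1}+2$, which gives the recurrence
\[
j_h \;\le\; s\,(j_{h-1} + 2) \qquad \text{for every } h = i'+1, \ldots, i.
\]
So the hypothesis that $\mathfrak{D}_{i,j}$ depends on $\mathfrak{D}_{i',j'}$ yields a sequence $j_{i'}=j', j_{i'+1}, \ldots, j_i=j$ obeying this inequality at each step.

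Next I would solve the recurrence by the standard linear-substitution trick: setting $a_h = j_h + \tfrac{2s}{s-1}$, the inequality becomes $a_h \le s \cdot a_{h-1}$, so $a_i \le s^{i-i'} a_{i'}$, i.e.
\[
j \;\le\; s^{i-i'} j' \;+\; \frac{2s}{s-1}\bigl(s^{i-i'}-1\bigr).
\]
Multiplying both sides by $d_i = s^{\parametertimes-i}$ and using $d_{i'} = s^{\parametertimes-i'} = s^{i-i'}\,d_i$, this becomes
\[
t_{i,j} \;=\; j \cdot d_i \;\le\; j' \cdot d_{i'} \;+\; \frac{2s}{s-1}\,(d_{i'} - d_i)
\;\le\; t_{i',j'} + \frac{2s}{s-1}\,d_{i'}.
\]

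Finally, for $s \ge 6$ we have $\tfrac{2s}{s-1} \le \tfrac{12}{5} < \tfrac{5}{2}$, so the right-hand side is at most $t_{i',j'} + \tfrac{5}{2} d_{i'} = t_{i',j'} + 2\,d_{i'} + \tfrac{1}{2} d_{i'} = t_{i',j'+2} + d_{i'}/2$, which is exactly the claimed bound. The main (mild) obstacle is just keeping the off-by-one bookkeeping around $\lceil\cdot\rceil$ correct and verifying that the constant slack coming from the geometric series fits inside the target $d_{i'}/2$; the hypothesis $s \ge 6$ is precisely tuned to make this work.
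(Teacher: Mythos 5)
Your proof is correct. The paper states this claim with only the remark ``By Definition~\ref{def:data_strcuture_online_batch} and induction'' and does not write out any argument; what you have done is precisely spell out that induction. The recurrence $j_h \le s(j_{h-1}+2)$ is the right reading of $j_{h-1} = \lceil j_h/s\rceil - 2$, the substitution $a_h = j_h + \tfrac{2s}{s-1}$ linearizes it cleanly, and the translation $t_{i,j}=j\,d_i$ with $s^{i-i'}d_i = d_{i'}$ gives $t_{i,j}\le t_{i',j'} + \tfrac{2s}{s-1}(d_{i'}-d_i)$. Finally $\tfrac{2s}{s-1}\le \tfrac{5}{2}$ for $s\ge 5$, so $s\ge 6$ comfortably yields $t_{i,j}\le t_{i',j'}+2d_{i'}+\tfrac12 d_{i'} = t_{i',j'+2}+d_{i'}/2$. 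Every step checks out, and the bookkeeping around the ceiling is handled correctly.
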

The goal of the update algorithm is to provide  access to $\mathfrak{D}_{\parametertimes, j}$ at time $j$ (after obtaining the $j$-th update). 
Ideally, the  algorithm below is enough for our purpose, because 
if we only consider the running time of $\dsinitialize$ and $\dsupdate$ of the algorithm, then the overall running time is desirable.
However, we 
cannot afford to have multiple instances of each $\mathfrak{D}_{i, j}$. 


\begin{framed}
\begin{enumerate}
\item 
For $0 \leq i \leq \parametertimes - 1, j > 0$, compute $\mathfrak{D}_{i, j}$ by Definition~\ref{def:data_strcuture_online_batch} with work evenly distributed within time interval $[t_{i, j} + (d_i / 2) + 1, t_{i, j + 1}]$. 
     
\item 
Compute $\mathfrak{D}_{\parametertimes, j}$ by Definition~\ref{def:data_strcuture_online_batch} at time $j$.
\end{enumerate}
\end{framed}

Let $\mathfrak{P}$ denote the data structure that contains a graph $G$ and its corresponding data structure $\mathfrak{D}$. 
The fully dynamic algorithm maintains $O(2^{\parametertimes})$ copies of  $\mathfrak{P}$, 
such that at time $\tau$, there is a maintained data structure $\mathfrak{P}$ corresponding to $(G_{\parametertimes, \tau}, \mathfrak{D}_{\parametertimes, \tau})$. 

For every $0 \leq i\leq \parametertimes$, we  maintain 
 $2^{i+1}$ instances of  data structure $\mathfrak{P}$ 
 such that all the instances are  indexed by an $(i+1)$-bit binary number. 
Let 
 $b_{0, j} = j \mod 2$ for all the integer $j$, and \[b_{i, j} = b_{i-1, \lceil j/s\rceil  - 2} \circ (j \mod 2)\] 
for all $0 < i \leq \parametertimes$ and integer $j$, where $\circ$ is the concatenation of two binary strings. 
 
 For $0 \leq i \leq \parametertimes - 1$ and integer $j$, 
we update $\mathfrak{P}_{b_{i, j}}$ to be $(G_{i, j}, \mathfrak{D}_{i, j})$ within the time period $[t_{i, j} + (d_i / 2) + 1, t_{i, j + 1}]$, 
and keep 
$\mathfrak{P}_{b_{i, j}} = (G_{i, j}, \mathfrak{D}_{i, j})$
 within the time period 
$[t_{i, j + 1} + 1, t_{i, j + 2} + (d_i / 2)]$.

Now we are ready to give our preprocessing  algorithm and update algorithm.
In the preprocessing algorithm, we initialize  $\mathfrak{P}_{\beta}$ to be $(G, \mathfrak{D}_{0, 0})$ for all the binary number $\beta$ of length at most $\parametertimes + 1$. 
In the update algorithm, we amortize the updates for each $\mathfrak{P}_{b_{i, j}}$ within the time interval as described above.
\\

 \def\fdpreprocessing{\textsc{Fully-Dynamic-Preprocessing}}
 \def\fdupdate{\textsc{Fully-Dynamic-Update}}
 
\begin{algorithm}[H]
\caption{$\fdpreprocessing(G, \parametertimes)$}
\Input{ access to adjacency list of graph $G$, parameter $\xi$}
\Output{ $\mathfrak{P}_\beta$ for all the binary number $\beta$ of length at most $\parametertimes + 1$}
\For {every $0 \leq i \leq \parametertimes$ and every $(i+1)$-bit binary number $\beta$}{
$G_\beta\gets G$\;
$\mathfrak D_\beta\gets \dsinitialize{(G)}$\;
$\mathfrak{P}_\beta \gets (G_\beta, \mathfrak{D}_\beta)$
 }
\end{algorithm}


\begin{algorithm}[ht]
\caption{ $\fdupdate(\{\mathfrak{P}_\beta\}, \updateseq)$}

\smallskip

\Input{
access to $\mathfrak{P}_\beta = (G_\beta, \mathfrak{D}_\beta)$ for every binary number $\beta$ of at most $\parametertimes + 1$ bits;
access to the first $k$ multigraph updates of the graph update sequence $\updateseq$ at time $k$}

\Output{
access to $\mathfrak{D}_\beta$ for some $\parametertimes+1$ bit binary string $\beta$ such that $\mathfrak{D}_\beta= 
\mathfrak{D}_{\parametertimes, k}$ at time $k$ }

\SetKwFor{Distribute}{distribute}{}{}

\vspace{.2cm}
\Distribute {\textnormal{\textbf{work of the following process evenly  within time interval}} $[t_{0, j}+ d_0 / 2 + 1, t_{0, j+ 1}]$ \textnormal{\textbf{for  every}} $j > 0$:}{

\For{every pair $(G_\beta, \mathfrak{D}_\beta)$ with first bit of $\beta$ equal to $(j \bmod 2)$}{
apply $\updateseq_{[ t_{0, j - 2} + 1, t_{0, j}]}$ to  $G_\beta$\;
 $\mathfrak{D}_\beta \gets \dsinitialize(G_\beta)$\;}

}
\Distribute {\textnormal{\textbf{work of the following process evenly 
 within  time interval}} $[t_{i, j} + d_i / 2 + 1, t_{i, j + 1}]$
\textnormal{\textbf{for every}} $1 \leq i \leq \parametertimes - 1, j > 0$}
{

\For{every binary string $\beta$ of length at least $i + 1$
such that the first $i + 1$ bits of $\beta$ equals to $b_{i, j}$}{ $\beta'\gets$ the first $i$ bits binary number of $b_{i, j}$\;
 reverse $(G_\beta, \mathfrak{D}_\beta)$ back to status that is the same as $(G_{i- 1, \lceil j / s \rceil - 2}, \mathfrak{D}_{i - 1, \lceil j / s \rceil - 2})$\;
run $\dsupdate(G_\beta, \mathfrak{D}_\beta, \updateseq_{[t_{i-1, \lceil j / s \rceil - 2} + 1, t_{i, j}]})$ to update $ \mathfrak{D}_\beta$ to be the same as $\mathfrak{D}_{i, j}$\;
apply $\updateseq_{[t_{i-1, \lceil j / s \rceil - 2} + 1, t_{i, j}]}$ to $G_\beta$\; 
}
}

\Distribute{\textnormal{\textbf{work of the following process at time $j$}}}
{
$\beta'\gets $ the first $\parametertimes$ bits of $b_{\parametertimes, j}$\; 
reverse $(G_{b_{\parametertimes, j}}, \mathfrak{D}_{b_{\parametertimes, j}})$ back to the status that is the same as $(G_{\beta'}, \mathfrak{D}_{\beta'})$\;

run $\dsupdate(G_{b_{\parametertimes, j}}, \mathfrak{D}_{b_{\parametertimes, j}}, \updateseq_{[ t_{i-1, \lceil j / s \rceil - 2} + 1, j]})$
to update $\mathfrak{D}_{b_{\parametertimes, j}}$ to be the same as $\mathfrak{D}_{\parametertimes, j}$\;
apply $\updateseq_{[t_{i-1, \lceil j / s \rceil - 2} + 1, j]}$ to $G_{b_{\parametertimes, j}}$\;
output the access of $ \mathfrak{D}_{b_{\parametertimes, j}}$\;

}
\end{algorithm}

\vspace{.3cm} We use  $\fdpreprocessing$ and $\fdupdate$ to prove Lemma~\ref{lem:fully_framework}. 


\begin{proof}[Proof of Lemma~\ref{lem:fully_framework}]
%
By Claim~\ref{lem:short_distance}  and induction, 
 for $0 \leq i \leq \parametertimes - 1$, 
the pair $(G_{b_{i, j}}, \mathfrak{D}_{b_{i, j}})$ at level $i$ is the same as 
 $(G_{i, j}, \mathfrak{D}_{i, j})$ within the time period 
$[t_{i, j + 1} + 1, t_{i, j + 2} + (d_i / 2)]$.
Hence, the algorithm is correct.

We prove the running time of the preprocessing algorithm and the update algorithm.
Since the running time of $\dsinitialize$ is $\rti$,  the running time of $\fdpreprocessing$ is $O(2^\parametertimes \cdot \rti)$, 
and 
within the time period $[t_{0, j}+ d_0 / 2 + 1, t_{0, j+ 1}]$ for  every $j > 0$,
the total work of line 1-4 of  $\fdupdate$ for every $j> 0$ is 
$O(2^\parametertimes \cdot \rti )$. 
Since \[d_0 = s^{\parametertimes} = \lfloor (w / 2)^{1/\parametertimes}\rfloor^\parametertimes = \Omega(w / 2^{\parametertimes}),\] the amortized  running time for line 1-4 of  $\fdupdate$ is 
\[O\left(\frac{2^\parametertimes \rti }{d_0}\right) = O\left(4^\parametertimes \rti w^{- 1}\right).\]


For any $1 \leq i \leq \parametertimes - 1$, 
the total work of line 5-10 of  $\fdupdate$ for level $i$
within time period $[t_{i, j} + d_i / 2 + 1, t_{i, j + 1}]$ for any 
$j> 0$ is 
$O(2^{\parametertimes }\rtu d_{i - 1})$.
Since the total work is evenly distributed to a time period of length $\Omega(d_{i})$, 
the amortized running time of line 5-10 of  $\fdupdate$ for any $1 \leq i \leq \parametertimes - 1$,  $j> 0$ is 
\[O\left(\frac{2^{\parametertimes }\rtu d_{i - 1} }{ d_i}\right) = O(2^{\parametertimes }\rtu s) = O(2^{\parametertimes }\rtu w^{1 / \parametertimes}).\]
The overall amortized running time of line 5-10 of  $\fdupdate$ is $O( 2^{\parametertimes } \parametertimes\rtu  w^{1/\parametertimes})$. 

The running time of line 11-16 of $\fdupdate$ for every update  is \[O(\rtu d_{\parametertimes - 1}) = O(\rtu s) = O(\rtu w^{1/\parametertimes}).\]

Hence, the overall amortized running time of  $\fdupdate$ per update is \[O(4^\parametertimes \rti w^{- 1} + 2^{\parametertimes } \parametertimes \rtu w^{1/\parametertimes})
= O(4^\parametertimes ( \rti w^{- 1} +   \rtu w^{1/\parametertimes})).\]
%
\end{proof}


\end{document}